\renewcommand\@oddfoot{
	\hfil
	\rlap{%
		\vtop{%
			\vskip10mm
			\colorbox{lipicsYellow}
			{\@tempdima\evensidemargin
				\advance\@tempdima1in
				\advance\@tempdima\hoffset
				\hb@xt@\@tempdima{%
					\textcolor{lipicsGray}{\normalsize\sffamily
						\bfseries\quad
						\expandafter\textsolittle\expandafter{%
							arXiv.org}}%
					\strut\hss}}}}
}
\sffamily\DeclareFontShape{T1}{lmss}{bx}{sc} { <-> ssub * cmr/bx/sc }{}
\newcommand{\card}[1]{\|#1\|}
\newcommand{\assignment}{\sigma}
\newcommand{\assignmenttwo}{\tau}
\newcommand{\graphnumber}{\xi}
\newcounter{tempcounter}
\renewcommand{\setminus}{-}
\newcommand{\alternativetextsc}[1]{\textup{\textsf{\textsc{#1}}}}
\newcommand{\N}{\mathbb{N}}
\newcommand{\Z}{\mathbb{Z}}
\newcommand{\CNF}{\ensuremath{\alternativetextsc{CNF}}\xspace}
\newcommand{\ThreeCNF}{\ensuremath{\alternativetextsc{3CNF}}\xspace}
\newcommand{\EThreeCNF}{\ensuremath{\alternativetextsc{3CNF}}\xspace}
\newcommand{\Sat}{\ensuremath{\alternativetextsc{Sat}}\xspace}
\newcommand{\UnSat}{\ensuremath{\alternativetextsc{UnSat}}\xspace}
\newcommand{\ThreeSat}{\ensuremath{\alternativetextsc{3Sat}}\xspace}
\newcommand{\EThreeSat}{\ensuremath{\alternativetextsc{3Sat}}\xspace}
\newcommand{\ThreeUnSat}{\ensuremath{\alternativetextsc{3UnSat}}\xspace} 
\newcommand{\MinimalUnSat}{\ensuremath{\alternativetextsc{Minimal}\-\alternativetextsc{Un}\-\alternativetextsc{Sat}}\xspace} 
\newcommand{\MinimalThreeUnSat}{\ensuremath{\alternativetextsc{Minimal}\-\alternativetextsc{3UnSat}}\xspace} 
\newcommand{\ETwoSat}{\ensuremath{\alternativetextsc{2Sat}}\xspace}
\newcommand{\ThreeCol}{\ensuremath{\alternativetextsc{3Color}\-\alternativetextsc{ability}}\xspace}
\newcommand{\VertexMinimalThreeUnCol}{\ensuremath{\alternativetextsc{Vertex}\-\alternativetextsc{Minimal}\-\alternativetextsc{3UnColor}\-\alternativetextsc{ability}}\xspace}
\newcommand{\Criticality}{\ensuremath{\alternativetextsc{Criticality}}\xspace}
\newcommand{\VertexCriticality}{\ensuremath{\alternativetextsc{Vertex}\-\alternativetextsc{Criticality}}\xspace}
\newcommand{\Stability}{\ensuremath{\alternativetextsc{Stability}}\xspace}
\newcommand{\VertexStability}{\ensuremath{\alternativetextsc{Vertex}\-\alternativetextsc{Stability}}\xspace}
\newcommand{\Unfrozenness}{\ensuremath{\alternativetextsc{Unfrozenness}}\xspace}
\newcommand{\VertexUnfrozenness}{\ensuremath{\alternativetextsc{Vertex}\-\alternativetextsc{Unfrozenness}}\xspace}
\newcommand{\TwoWayStability}{\ensuremath{\alternativetextsc{TwoWayStability}}\xspace}
\newcommand{\VertexTwoWayStability}{\ensuremath{\alternativetextsc{Vertex}\-\alternativetextsc{TwoWayStability}}\xspace}
\newcommand{\betaStability}{\ensuremath{\beta\text{-}\alternativetextsc{Stability}}\xspace}
\newcommand{\betaVertexStability}{\ensuremath{\beta\text{-}\alternativetextsc{Vertex}\-\alternativetextsc{Stability}}\xspace}
\newcommand{\omegaVertexStability}{\ensuremath{\omega\text{-}\alternativetextsc{Vertex}\-\alternativetextsc{Stability}}\xspace}
\newcommand{\alphaVertexStability}{\ensuremath{\alpha\text{-}\alternativetextsc{Vertex}\-\alternativetextsc{Stability}}\xspace}
\newcommand{\StableUnSat}{\ensuremath{\alternativetextsc{Stable}\-\alternativetextsc{UnSat}}\xspace}
\newcommand{\StableThreeUnSat}{\ensuremath{\alternativetextsc{Stable}\-\alternativetextsc{3UnSat}}\xspace}
\newcommand{\FourCNF}{\ensuremath{\alternativetextsc{4CNF}}\xspace}
\newcommand{\SixCNF}{\ensuremath{\alternativetextsc{6CNF}}\xspace}
\newcommand{\StableCNF}{\ensuremath{\alternativetextsc{Stable}\-\alternativetextsc{CNF}}\xspace}
\newcommand{\StableThreeCNF}{\ensuremath{\alternativetextsc{Stable}\-\alternativetextsc{3CNF}}\xspace}
\newcommand{\StableFourCNF}{\ensuremath{\alternativetextsc{Stable}\-\alternativetextsc{4CNF}}\xspace}
\newcommand{\StableSixCNF}{\ensuremath{\alternativetextsc{Stable}\-\alternativetextsc{6CNF}}\xspace}
\newcommand{\coG}{\ensuremath{\overline{G}}}
\renewcommand{\P}{\ensuremath{\textnormal{P}}\xspace}
\newcommand{\NP}{\ensuremath{\textnormal{NP}}\xspace}
\newcommand{\coNP}{\ensuremath{\textnormal{coNP}}\xspace}
\newcommand{\DP}{\ensuremath{\textnormal{DP}}\xspace}
\newcommand{\coDP}{\ensuremath{\textnormal{coDP}}\xspace}
\newcommand{\ThetaTwo}{\ensuremath{\Theta_2^\textnormal{p}}\xspace}
\newcommand{\DeltaTwo}{\ensuremath{\Delta_2^\textnormal{p}}\xspace}
\newcommand{\ANDtwo}{\ensuremath{\textnormal{AND}_2}\xspace}
\newcommand{\ORtwo}{\ensuremath{\textnormal{OR}_2}\xspace}
\newcommand{\ANDomega}{\ensuremath{\textnormal{AND}_\omega}\xspace}
\newcommand{\ORomega}{\ensuremath{\textnormal{OR}_\omega}\xspace}
\newcommand{\coE}{\overline{E}}
\theoremstyle{plain}
\newtheorem{observation}[theorem]{Observation}
\title{Complexity of Stability}
\author{Fabian Frei}{Department of Computer Science, ETH Z\"{u}rich, Switzerland}{fabian.frei@inf.ethz.ch}{https://orcid.org/0000-0002-1368-3205}{}
\author{Edith Hemaspaandra}{Department of Computer Science, Rochester Institute of Technology, NY, USA}{eh@cs.rit.edu}{https://orcid.org/0000-0002-7115-626X}{Research done in part while on sabbatical at Heinrich-Heine-Universit\"at D\"usseldorf and  supported in part by NSF grant DUE-1819546 and a Renewed Research Stay grant from the Alexander von Humboldt Foundation.}
\author{J\"{o}rg Rothe}{Institut f\"{u}r Informatik, Heinrich-Heine-Universit\"{a}t D\"{u}sseldorf, Germany}{rothe@hhu.de}{https://orcid.org/0000-0002-0589-3616}{Research supported by DFG grants RO~1202/14-2 and RO~1202/21-1.}
\authorrunning{F. Frei, E. Hemaspaandra, and J. Rothe}
\keywords{Stability, Robustness, Complexity, Local Modifications, Colorability, Vertex Cover, Clique, Independent Set, Satisfiability, Unfrozenness, Criticality, DP, coDP, Parallel Access to NP}
\begin{document}
\maketitle
\begin{abstract}
Graph parameters such as the clique number, the chromatic number, and the
independence number are central in many areas, ranging from
computer networks to linguistics to computational neuroscience to
social networks.  In particular, the chromatic number of a graph
(i.e., the smallest number of colors needed to color all vertices such that no two
adjacent vertices are of the same color) can be applied in solving
practical tasks as diverse as pattern matching, scheduling jobs to
machines, allocating registers in compiler optimization, and even
solving Sudoku puzzles.  Typically, however, the underlying graphs are
subject to (often minor) changes.  To make these applications of graph
parameters robust, it is important to know which graphs are stable for
them in the sense that adding or deleting single edges or vertices
does not change them.  We initiate the study of stability of graphs
for such parameters in terms of their computational complexity.  We
show that, for various central graph parameters, the problem of
determining whether or not a given graph is stable is complete for
\ThetaTwo, a well-known complexity class in the second level of the
polynomial hierarchy, which is also known as ``parallel access to NP.''
\end{abstract}

\enlargethispage{-0.5\baselineskip}
\section{Introduction}
In this first section, we motivate our research topic, 
introduce the necessary notions and notation, 
and provide an overview of both the related work and our contribution.
\subsection{Motivation}
The following extends an example given by
Bollob\'as~\cite{bol:b:modern-graph-theory} (see also
\cite[Section~2.3.3]{jac:b:social-and-economic-networks}).
Consider a graph whose vertices represent the ISAAC-2020 attendees, 
and an edge
between any two vertices representing the wish of these two researchers
to attend each other's talks.  
The ISAAC-2020 organizers have elicited this information in advance, 
wishing to ensure that every participant can attend all 
the desired talks and also give their own
presentation.  
Therefore, they color this graph, 
where each color
represents a time slot 
(running parallel sessions within each time
slot).  
What is the smallest number of colors needed 
so that no two adjacent vertices have the same color, 
i.e., what is the chromatic
number of this graph?  
Suppose it is 12; so 12 time slots are enough
to make all the participants happy.  
Now, however, the ISAAC-2020 organizers receive messages 
from Professor Late and Professor Riser 
expressing their wish to attend each other's talks as well.  
Does this additional edge increase the chromatic number 
of the graph, requiring an additional time slot?  
Or does it always remain the same; that is, is this graph stable 
with respect to the chromatic number and adding edges?

Informally stated, a graph is 
\emph{stable with respect to some graph	parameter} 
(such as the chromatic number) if some type of small perturbation
of the graph 
(a local modification such as adding an edge or deleting a vertex) 
does not change the parameter.  
Other graph parameters we consider are the 
clique number, the independence number, and the vertex cover number.  
This notion of stability formalizes the robustness of graphs 
for these parameters, 
which is important in many applications.  
Typical applications of the chromatic number, for instance, include
coloring algorithms for complex networks such as social, economic,
biological, and information networks (see, e.g., Jackson's book on
social and economic networks~\cite{jac:b:social-and-economic-networks}
or Khor's work on applying graph coloring to biological
networks~\cite{kho:j:application-of-graph-coloring-to-biological-networks}).
In particular, social networks can be colored to find
roles~\cite{eve-bor:j:role-colouring-a-graph} or to study human
behavior in small controlled
groups~\cite{kea-sur-mon:j:experimental-study-of-coloring-problem-on-human-subject-networks,cha-gra-jam:c:network-coloring-game}.
In various applied areas of computer science, graph coloring has also
been used for register allocation in compiler
optimization~\cite{cha:j:register-allocation-spilling-via-graph-coloring},
pattern matching and pattern
mining~\cite{sun-tso-hok-fal-eli:j:two-heads-better-than-one},
and scheduling
tasks~\cite{lei:j:graph-coloring-algorithm-for-large-scheduling-problems}.
To ensure that these applications of graph parameters
are robust, graphs need to be stable for them with respect to certain
operations. We initiate a systematic study of stability of graphs
in terms of their computational complexity.

\subsection{Notions and Notation}
In this subsection, we define the core notions used in this paper and fix our notation. 
\subsubsection{Complexity Classes}
We begin with the relevant complexity classes. Besides \P, \NP, and \coNP, these are \DP, \coDP, and \ThetaTwo. 
The class \DP, introduced by Papadimitriou and Yannakakis~\cite{pap-yan:j:facets}, is the second level of the Boolean hierarchy over \NP; that is, $\DP=\NP\wedge \coNP=\{L_1\cap L_2\mid L_1\in\NP\wedge L_2\in \coNP\}$ is the set of all intersections of \NP languages with \coNP languages. Equivalently, it can be seen as the \emph{differences} of \NP languages, whence the name. 
An example of a trivially \DP-complete language is $\alternativetextsc{Sat-UnSat}=\Sat\times\UnSat$, where \UnSat is the set of all unsatisfiable \CNF-formulas. 
The complement class \coDP contains exactly the unions of \NP languages with \coNP languages. 

The class \ThetaTwo, whose name is due to Wagner~\cite{wag:j:bounded}, belongs to the second level of the polynomial hierarchy; it can be defined as $\ThetaTwo=\P^{\NP[\mathcal{O}(\log n)]}$, which is the class of problems that can be solved in polynomial time by an algorithm with access to an oracle that decides arbitrary instances for an \NP-complete problem -- with one instance per call and each such query taking constant time -- restricted to a logarithmic number of queries. (Without the last restriction, we would get the class $\DeltaTwo=\P^\NP$.)
Results due to Hemachandra~\cite[Thm.~4.10]{hem:j:sky} usefully characterize \ThetaTwo as $\P_\text{tt}^\text{p}$, the class of languages that are polynomial-time truth-table reducible to \NP. By definition, this is the same as $\P^\NP_\|$, the class of languages that are polynomial-time recognizable with \emph{unlimited parallel} access to an \NP oracle. 
\emph{Unlimited} means that an algorithm witnessing the membership of a problem in $\P^\NP_\|$ can query the oracle on as many instances of an \NP-complete problem as it wants -- which due the polynomial running-time means at most polynomially many -- while \emph{parallel} means that all queries need to be sent simultaneously. The characterization of \ThetaTwo as $\P^{\NP[\mathcal{O}(\log n)]}$, in contrast, allows the logarithmically many queries to be \emph{adaptive}; that is, they can be sent interactively, with one depending on the oracle's answers to the previous ones. Membership proofs for \ThetaTwo are usually easy; we will see a simple example of how to give one at the beginning of Section~\ref{sec:stability-for-colorability}. 

Note that the definitions immediately yield the inclusions $\NP\cup \coNP\subseteq \DP\subseteq \ThetaTwo\subseteq\DeltaTwo$. 

\subsubsection{Graphs and Graph Numbers}
Throughout this paper graphs are simple. Let $\mathcal{G}$ be the set of all (simple) graphs and $\N$ the set of natural numbers including zero. For any set $M$, we denote its \emph{cardinality} or \emph{size} by $\card{M}$. 
A map $\graphnumber\colon \mathcal{G}\to\N$ is called a \emph{graph number}. In this paper, we examine the prominent graph numbers $\alpha$, $\beta$, $\chi$, and $\omega$, which give the size of a maximum independent set, the size of a minimum vertex cover, the size of a minimum coloring (i.e., the minimum number of colors allowing for a proper vertex coloring), and the size a maximum clique, respectively. 

Let $V$, $E$, and $\coE$ be the functions that map a graph $G$ to its vertex set $V(G)$, its edge set $E(G)$, and its set of \emph{nonedges} $\coE(G)=\{\{u,v\}\mid u,v\in V(G)\wedge u\neq v\}\setminus E(G)$, respectively. 

Let $G$ and $H$ be graphs. 
We denote by $G\cup H$ the disjoint union and by $G+H$ the \emph{join}, which is $G\cup H$ with all \emph{join edges} -- i.e., the edges 
$\{v,w\}\in V(G)\times V(H)$ -- added to it.\footnote{We adopt the notation $G+H$ for the join from Harary's classical textbook on graph theory~\cite[p. 21]{har:graph-theory}.}

For $v\in V(G)$, $e\in E(G)$, and $e'\in \coE(G)$, we denote by $G-v$, $G-e$, and $G+e'$ the graphs that result from $G$ by deleting $v$, deleting $e$, and adding $e'$, respectively. 

For any $k\in \N$, we denote by $I_k$ 
and $K_k$ the empty (i.e., edgeless) and complete graph on $k$ vertices,  respectively. The graph $I_0=K_0$ without any vertices is called the \emph{null graph}. 
A vertex $v$ is \emph{universal} with respect to a graph $G$ if it is adjacent to all vertices $V(G)\setminus\{v\}$. 

\subsubsection{Stability}
Let $G$ be a graph. An edge $e\in E(G)$ is called \emph{stable} with respect to a graph number $\graphnumber$ (or $\graphnumber$-\emph{stable}, for short) if $\graphnumber(G)=\graphnumber(G-e)$, that is, deleting $e$ leaves $\graphnumber$ unchanged. 
Otherwise (that is, if the deletion of $e$ does change $\graphnumber$), $e$ is called $\graphnumber$-\emph{critical}. 
For a vertex $v\in V(G)$ instead of an edge $e\in E(G)$, stability and criticality are defined in the same way.  

A graph is called $\graphnumber$-stable if all of its edges are $\graphnumber$-stable. A graph whose vertices -- rather than edges -- are all $\graphnumber$-stable is called $\graphnumber$-vertex-stable. The notions of $\graphnumber$-criticality and $\graphnumber$-vertex-criticality are defined analogously. 
Note that each edge and vertex is either stable or critical, whereas a graph might be neither. 
An unspecified $\graphnumber$ defaults to the chromatic number $\chi$. \pagebreak

A traditional term for stability with respect to adding edges and vertices -- rather than deleting them -- is unfrozenness.\footnote{The notion of instance parts being either frozen or unfrozen has originally been introduced to the field of computational complexity in analogy to the physical process of  freezing~\cite{mon-zec-:j:mechanics-of-random-sat,mon-zec-etal:j:phase-transition}. 
The sudden shift from \P to \NP-hardness that can be observed when transitioning from \ETwoSat to \EThreeSat by allowing a larger and larger percentage of clauses of length 3 rather than 2, for example, mimics the phase transition from liquid to solid, with the former granting much higher degrees of freedom to the substance's constituents than the latter. 
Based on this general intuition, Beacham and Culberson~\cite{bea-cul:j:complexity-unfrozen} then more formally defined the notion of unfrozenness with regard to an arbitrary graph property that is downward monotone (meaning that a graph keeps the property when edges are deleted); they call a graph unfrozen if it also keeps the property when an arbitrary new edge is added. 
We naturally extend this notion to arbitrary graph numbers, which are not necessarily monotone.} 
Specifically, a nonedge $e\in\coE(G)$ is called \emph{unfrozen} if adding it to the graph $G$ leaves $\chi$ unchanged, and \emph{frozen} otherwise.
All of these notions extend naturally to vertices (where we can freely choose to which existing vertices a new vertex is adjacent, implying an exponential number of possibilities), to entire graphs, and to any graph number $\graphnumber$, as just seen for stability and criticality. 

We call a graph \emph{two-way stable} if it is both stable and unfrozen, everything with respect to the chromatic number and deleting an edge as the default choice.  Again, we have the analogous set of notions with respect to vertices and any graph number $\graphnumber$. 

Prefixing a natural number $k\in\N$ to any of these notions additionally requires the respective graph number to be exactly $k$. For example, a graph $G$ is \emph{$k$-critical} if and only if $\chi(G)=k$ and $\chi(G-e)\neq k$ for every $e\in E(G)$. 

The notion of stability can be naturally applied to Boolean formulas as well. We call a formula $\Phi$ in conjunctive normal form \emph{stable} if deleting an arbitrary clause $C$ does not change its satisfiability status -- that is, if it either is satisfiable (and of course stays so upon deletion of a clause) or if it and all its 1-clause-deleted subformulas $\Phi-C$ are unsatisfiable. 

\subsubsection{Languages} 
We denote by \CNF the set of formulas in conjunctive normal form and by \ThreeCNF, \FourCNF, and \SixCNF the set of \CNF-formulas with \emph{exactly} 3, 4, and 6 \emph{distinct} literals per clause, respectively.\footnote{In the literature, these set names are often prefixed by an \alternativetextsc{E}, emphasizing the exactness. This is notably not the case for a paper by Cai and Meyer~\cite{cai-mey:j:dp} that contains a construction crucially relying on this restriction. We will build upon this construction later on and are thus bound to the same constraint.} 
The sets \Sat and \ThreeSat contain the satisfiable, \UnSat and \ThreeUnSat the unsatisfiable formulas from \CNF and \ThreeCNF, respectively. 
Let $\StableUnSat=\{\Phi\in\UnSat\mid (\Phi-C)\in\UnSat\text{ for every clause }C\text{ of }\Phi\}$ be the set of stably unsatisfiable formulas. 
The set $\StableCNF=\Sat\cup\StableUnSat$ consists of the stable \CNF-formulas. 
Intersecting with \ThreeCNF yields the classes \StableThreeUnSat and \StableThreeCNF and so on. 

Let \Stability be the set of stable graphs and \Unfrozenness the set of unfrozen graphs, both with respect to the default graph number $\chi$. The set of two-way stable graphs is $\TwoWayStability=\Stability\cap\Unfrozenness$. 
Once more, these definitions extend naturally. For example, $4\text{-}\VertexStability$ is the set of (with respect to the default $\chi$) $4$-vertex-stable graphs and $\beta$-\TwoWayStability consists of the graphs for which the vertex-cover number $\beta$ remains unchanged upon deletion or addition of an edge. 

\subsubsection{AND Functions and OR Functions}
Following Chang and Kadin~\cite{cha-kad:j:closer}, we say that a language $L\subseteq\Sigma^\ast$ has \ANDtwo if there is a polynomial-time computable function $f\colon \Sigma^\ast\times \Sigma^\ast\to \Sigma^\ast$ such that 
for all $x_1,x_2\in\Sigma^\ast$, we have $x_1\in L\wedge x_2\in L\,\Longleftrightarrow\,f(x_1,x_2)\in L$. If this is the case, we call $f$ an \ANDtwo function for $L$. 
If there even is a polynomial-time computable function $f: \bigcup_{k=0}^\infty(\Sigma^\ast)^k\to\Sigma^\ast$ such that for every 
$k\in\N$ and for all $x_1,\ldots,x_k\in\Sigma^\ast$ we have $x_1\in L\wedge\cdots \wedge x_k\in L\,\Longleftrightarrow\,f(x_1,\ldots,x_k)\in L$, 
then we say that $L$ has \ANDomega. 
Replacing $\wedge$ with $\vee$, we get the analogous notions \ORtwo and \ORomega. 
Note that a language has \ANDtwo if and only if its complement has \ORtwo, with the analogous statement holding for \ANDomega and \ORomega. 

\subsection{Related Work}
Many interesting problems are suspected to be complete for either \DP or \ThetaTwo. While membership is usually trivial in these cases, matching lower bounds are rare and hard to prove. 
For example, Woeginger~\cite{woe:hedonic-coalition-formation} observes that determining whether a graph has a wonderfully stable partition is in \ThetaTwo, and leaves it as an open problem to settle the exact complexity. 
Wagner, who introduced the class name \ThetaTwo~\cite{wag:j:more-on-bh}, provided a number of hardness results for variants of standard problems such as Satisfiability, Clique and Colorability, which are designed to be complete for \DP or \ThetaTwo. 
For example, he proves the \DP-completeness of  $\alternativetextsc{ExactColorability}=\{(G,k)\in\mathcal{G}\times\N\mid \chi(G)=k\}$~\cite[Thm.~6.3.1 with $k=1$]{wag:j:more-on-bh} and the \ThetaTwo-completeness of $\alternativetextsc{OddVertexCover}=\{G\in\mathcal{G}\mid \beta(G)\text{ is odd}\}$~\cite[Thm.~6.1.2]{wag:j:more-on-bh}.\footnote{Note that Wagner originally derived his results with respect to the more restricted form of polynomial-time reducibility via Boolean formulas, indicated by the \emph{bf} in 
the class 
name. He later proved the resulting notions to be equivalent, however; that is, we have $\P_\text{bf}^\text{p}=\ThetaTwo$~\cite{wag:j:bounded}.} 
He obtains the analogous results for Colorability, Clique~\cite[Thm.~6.3]{wag:j:more-on-bh}, Independent Set instead of Vertex Cover~\cite[Thm.~6.4]{wag:j:more-on-bh} and points out~\cite[second-to-last paragraph]{wag:j:more-on-bh} that his proof techniques also yield the \ThetaTwo-completeness of the equality version of all of these problems -- for example, $\alternativetextsc{EqualVertexCover}=\{(G,H)\in\mathcal{G}^2\mid \beta(G)=\beta(H)\}$. 
The same holds true for the comparison versions such as $\alternativetextsc{CompareVertexCover}=\{(G,H)\in\mathcal{G}^2\mid \beta(G)\le\beta(H)\}$.\footnote{Spakowski and Vogel explicitly proved the \ThetaTwo-completeness of $\alternativetextsc{CompareVertexCover}$~\cite[Thm.~12]{spa-vog:c:theta-two-classic}, $\alternativetextsc{CompareClique}$ and $\alternativetextsc{CompareIndependentSet}$~\cite[Thm.~13]{spa-vog:c:theta-two-classic}. 
For other cases, see Appendix~\ref{app:CompareColorability}.
}
The \DP-completeness of \alternativetextsc{ExactColorability} has been extended to the subproblem of recognizing graphs with chromatic number $4$~\cite{rot:j:exact-four-colorability}.
Furthermore, a few election problems have been proved to be \ThetaTwo-complete by Hemaspaandra et al.~\cite{hem-hem-rot:j:dodgson, hem-hem-rot:j:online}, by Rothe et al.~\cite{rot-spa-vog:j:young}, and Hemaspaandra et al.~\cite{hem-spa-vog:j:kemeny}. 

\enlargethispage{1.5\baselineskip}
In general, establishing lower bounds proved to be difficult for many natural \DP-complete and particularly \ThetaTwo-complete problems. Consequently, hardness results remained rather rare in the area of criticality and stability, despite the great attention that these natural notions have garnered from graph theorists ever since the seminal paper by Dirac~\cite{dir:theorems-abstract-graphs} from 1952; see for example the classical textbooks by Harary~\cite[chapters 10 and 12]{har:graph-theory} and Bollob\'as~\cite[chapter IV]{bol:b:modern-graph-theory} -- the latter having a precursor dedicated exclusively to extremal graph theory~\cite[chapters I and V]{bol:b:extremal-graph-theory} -- and countless papers over the decades, of which we cite some selected examples from early to recent ones~\cite{
erd-gal:j:alpha-criticality, har-tho:j:anticritical-graphs, bau-har-etal:j:domination-alteration-sets, wes:criticality-graph-theory, gun-har-ral:j:is-stability, hay-bri-etal:j:insensitive-domination, des-hay-hen:j:total-domination-critical-and-stable, hen-krz:j:total-domination-stability, che-jin:j:critical-graph-structures}.
A pioneering complexity result by Papadimitriou and Wolfe~\cite[Thm.~1]{pap-wol:j:facets} establishes the \DP-completeness of \MinimalUnSat. (They call a formula minimally unsatisfiable if deleting an arbitrary clause renders it satisfiable, that is, if it is critical.) 
They also proved 
that determining, given a graph $G$ and a $k\in\N$, whether $G$ is $k$-$\omega$-vertex-critical is a
\DP-complete problem~\cite[Thm.~4]{pap-wol:j:facets}. 
Later, Cai and Meyer~\cite{cai-mey:j:dp} showed the \DP-completeness of  $k$-\VertexCriticality (which they call \alternativetextsc{Minimal}-$k$-\alternativetextsc{Uncolorability}) for all $k\ge 3$.  
Burjons et al.~\cite{bur-fre-etal:c:neighborly-help} recently extended this result to the more difficult case of edge deletion, showing that $k$-\Criticality is \DP-complete for all 
$k\ge 3$~\cite[Thm.~8]{bur-fre-etal:c:neighborly-help}. 
They also provided the first \ThetaTwo-hardness result for a criticality problem, namely for $\beta$-\VertexCriticality~\cite[Thm.~15]{bur-fre-etal:c:neighborly-help}. 
Note the drop in difficulty down to \DP
when fixing the graph number. This emerges as a general pattern, as evidenced by our results outlined 
in the contribution section below. 

Stability, in contrast to criticality,
has been sorely neglected by the computational complexity community, which is surprising in light of its apparent practical relevance -- for example in the design of infrastructure, where stability is a most desirable property. 
A very small exception to this are
Beacham and Culberson~\cite{bea-cul:j:complexity-unfrozen}, who proved a comparably easy variant of Unfrozenness, namely $\{(G,k)\mid \chi(G)\le k\text{ and }G\text{ is unfrozen}\}$, to be $\NP$-complete.
\vspace{-0.3\baselineskip}
\subsection{Contribution}
We choose four of the most prominent graph problems -- Colorability, Vertex Cover, Independent Set, and Clique -- to analyze the complexity of stability.   
We prove all of them to be \ThetaTwo-complete for the default case of edge deletion. For unfrozenness -- that is, stability with respect to edge addition -- we prove the same, with the one exception of Colorability. For this problem, we prove that the existence of a construction with a few simple properties would be sufficient to prove \ThetaTwo-completeness. Finally, we introduce the notion of two-way stability -- stability with respect to both deleting and adding edges -- and prove again \ThetaTwo-completeness for all four problems. Table~\ref{tab:OverviewTable} provides an overview of these results, showcasing surprising contrasts between some of the problems.  

We also derive several other useful results with broad appeal on their own, among these being the \coDP-completeness of \StableThreeCNF [Thm.~\ref{thm:StableThreeCNF}], the \DP-completeness of $k$-\Stability and $k$-\VertexStability for all $k\ge 4$ [Thm.~\ref{thm:k-stability}], general criteria for proving \DP-hardness [Lems.~\ref{lem:equal} and~\ref{lem:compare}], and finally constructions such as the edge-stabilizing gadget [Lem.~\ref{lem:StabilizingGadget}] that yields an \ANDomega function for \Stability [Cor.~\ref{cor:and-for-stability}] and has potential applications in various contexts such as reoptimization and general graph theory. 

\begin{table}[htbp]
\centering
\caption{An overview of our results regarding the complexity of different stability problems. See Section~\ref{sec:CliqueAndIS} for the results on Clique and Independent Set; almost all of them follow in analogy to the ones for Vertex Cover, with \alphaVertexStability and \omegaVertexStability being the exception. 
}
\label{tab:OverviewTable}
\setlength{\tabcolsep}{4pt}
\begin{tabular}{p{0.23\textwidth}cccccc}
\toprule
\multirow{3}{*}{\shortstack[l]{\ \\\textit{With respect to this}\\\textit{base problem and}\\\textit{graph number:}}}
&\multicolumn{6}{c}{\hfill}\\
& \multicolumn{2}{c}{Stability} 
& \multicolumn{2}{c}{Unfrozenness} 
& \multicolumn{2}{c}{Two-Way Stability} 
\\ 
\cmidrule(lr){2-3}\cmidrule(lr){4-5} \cmidrule(lr){6-7} 
 & Edge 
& Vertex& Edge 
& Vertex& Edge 
& Vertex 
\\  
\cmidrule(lr){2-7}
&[Thm.\,\ref{thm:beta-stability}]
&[Thm.\,\ref{thm:beta-vertex-stability}]
			
&[Thm.\,\ref{thm:beta-unfrozenness}]
&[Thm.\,\ref{thm:v-unfrozenness}]
&[Thm.\,\ref{thm:beta-twowaystability}]
&[Thm.\,\ref{thm:v-twowaystability}]\\
\textit{Vertex Cover, }$\beta$	
& $\ThetaTwo\text{-compl.}$
& $\P$   
			
& $\ThetaTwo\text{-compl.}$
&\P
& $\ThetaTwo\text{-compl.}$
&\P\\\cmidrule(lr){2-7}
			
\!\!\begin{tabular}{l}\textit{Independent Set,} $\alpha$\\\textit{and Clique, $\omega$
}\end{tabular}\!\!
& \!\!\begin{tabular}{l}$\ThetaTwo\text{-compl.}$\end{tabular}\!\!
& \!\!\begin{tabular}{l}$\ThetaTwo\text{-compl.}$\end{tabular}\!\!
& \!\!\begin{tabular}{l}$\ThetaTwo\text{-compl.}$\end{tabular}\!\!
& \!\!\begin{tabular}{l}$\P$\end{tabular}\!\!
& \!\!\begin{tabular}{l}$\ThetaTwo\text{-compl.}$\end{tabular}\!\!
& \!\!\begin{tabular}{l}$\P$\end{tabular}\!\!\\\cmidrule(lr){2-7}
			
\textit{Colorability, }$\chi$
& $\ThetaTwo\text{-compl.}$
& $\ThetaTwo\text{-compl.}$   	
& ?
&\P   	
& $\ThetaTwo\text{-compl.}$  
&\P\\
			
&[Thm.\,\ref{thm:stability}]
&[Thm.\,\ref{thm:v-stability}]			
&[Thm.\,\ref{thm:unfrozenness}]
&[Thm.\,\ref{thm:v-unfrozenness}]			
&[Thm.\,\ref{thm:twowaystability}]
&[Thm.\,\ref{thm:v-twowaystability}]\\								
\bottomrule
\end{tabular}

\end{table}

\vspace{-0.4\baselineskip}
\enlargethispage{\baselineskip}
\section{Basic Observations}
We begin with a few very basic and useful observations that will be used implicitly and, where appropriate, explicitly throughout the paper. The proofs are given in Appendix~\ref{app:observations}. 
\begin{observation}\label{obs:differsbyatmostone}
The deletion of an edge or of a vertex either decreases the chromatic number by exactly one or leaves it unchanged. 
\end{observation}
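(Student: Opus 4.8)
The plan is to sandwich $\chi(G-x)$, where $x$ denotes the deleted edge or the deleted vertex, between $\chi(G)-1$ and $\chi(G)$; since the chromatic number is integer-valued, this immediately forces $\chi(G-x)\in\{\chi(G)-1,\chi(G)\}$, which is exactly the claim. So there are really just two inequalities to establish.

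The easy direction is $\chi(G-x)\le\chi(G)$. Here I would simply observe that any proper coloring of $G$ with $\chi(G)$ colors, restricted to the vertices present in $G-x$, remains a proper coloring: deleting an edge only removes a constraint, while deleting a vertex removes that vertex together with all constraints incident to it. Hence an optimal coloring of $G$ directly witnesses $\chi(G-x)\le\chi(G)$.

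For the reverse inequality $\chi(G-x)\ge\chi(G)-1$ --- equivalently $\chi(G)\le\chi(G-x)+1$ --- I would take an optimal proper coloring of $G-x$ and show it can always be patched into a proper coloring of $G$ using at most one additional color. If $x=v$ is a vertex, assign $v$ a brand-new color not occurring in $G-v$; the result is a proper coloring of $G$ with $\chi(G-v)+1$ colors. If $x=e=\{u,v\}$ is an edge, inspect the colors of $u$ and $v$ in the optimal coloring of $G-e$: if they already differ, that coloring is proper for $G$ as well; if they coincide, recolor $v$ with a brand-new color, which is proper since $e$ was the only edge that could have been monochromatic, and this uses $\chi(G-e)+1$ colors. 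In every case $\chi(G)\le\chi(G-x)+1$.

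Combining the two bounds gives $\chi(G)-1\le\chi(G-x)\le\chi(G)$, hence the statement. I do not anticipate a genuine obstacle here; the argument is entirely elementary and relies only on the definitions of a proper coloring and of $\chi$. The one point deserving a moment's care is the degenerate case in which $G$ is a single vertex, so that $G-v$ is the null graph with $\chi=0$; this remains consistent with the claim, since $0=1-1$.
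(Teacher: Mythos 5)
Your proof is correct and follows essentially the same route as the paper: the upper bound $\chi(G-x)\le\chi(G)$ is immediate from restricting an optimal coloring, and the lower bound is obtained by re-inserting the deleted vertex or edge and spending at most one fresh color on the new vertex or on one endpoint of the new edge. The extra care you take with the single-vertex degenerate case and with the edge whose endpoints already receive distinct colors is harmless elaboration of the same idea.
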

\begin{observation}\label{obs:CriticalityFromEdges}
Let $e=\{u,v\}$ be a critical edge. Then $u$ and $v$ are critical as well. 
\end{observation}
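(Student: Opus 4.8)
The plan is to reduce everything to the two facts we already have: monotonicity of the chromatic number under deletion, and Observation~\ref{obs:differsbyatmostone}. First I would unpack the hypothesis. By definition, $e$ being $\chi$-critical means $\chi(G-e)\neq\chi(G)$, and by Observation~\ref{obs:differsbyatmostone} this forces $\chi(G-e)=\chi(G)-1$. So the real content is: whenever deleting the edge $e=\{u,v\}$ drops $\chi$, deleting either endpoint drops $\chi$ as well.

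Next I would establish the key inequality $\chi(G-u)\le\chi(G-e)$. This holds because $G-u$ is (isomorphic to) a subgraph of $G-e$: removing the vertex $u$ from $G$ also removes the edge $e$ together with all other edges incident to $u$, so $G-u=(G-e)-u$, and the chromatic number never increases when passing to a subgraph (any proper coloring of $G-e$ restricts to a proper coloring of $G-u$). Chaining this with the hypothesis gives $\chi(G-u)\le\chi(G)-1$.

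For the matching lower bound I would invoke Observation~\ref{obs:differsbyatmostone} again, this time for vertex deletion: $\chi(G-u)\ge\chi(G)-1$. Combining the two bounds yields $\chi(G-u)=\chi(G)-1\neq\chi(G)$, so $u$ is $\chi$-critical. The argument for $v$ is verbatim the same with the roles of $u$ and $v$ exchanged, which finishes the proof.

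I do not expect any genuine obstacle here; the only point requiring a word of care is making the subgraph/restriction step precise (so that the inequality $\chi(G-u)\le\chi(G-e)$ is justified rather than asserted), and noting that we are using Observation~\ref{obs:differsbyatmostone} in both its edge-deletion and vertex-deletion forms. One could alternatively phrase the whole thing without Observation~\ref{obs:differsbyatmostone} by a direct coloring argument, but routing through it keeps the proof to a couple of lines.
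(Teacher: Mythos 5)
Your proposal is correct and follows the same route as the paper: the key step in both is that $G-u$ (and $G-v$) is a subgraph of $G-e$, so $\chi(G-u)\le\chi(G-e)<\chi(G)$. Your extra appeal to Observation~\ref{obs:differsbyatmostone} for the matching lower bound pins down the exact value $\chi(G)-1$ but is not needed, since criticality only requires $\chi(G-u)\neq\chi(G)$.
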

\begin{observation}\label{obs:StabilityFromVertices}
Let $v$ be a stable vertex. Then all edges incident to $v$ are stable.  
\end{observation}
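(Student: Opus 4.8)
The plan is to prove the statement directly from the monotonicity of the chromatic number under deleting edges and vertices, combined with the trivial identity $(G-e)-v=G-v$ that holds exactly because $e$ is incident to $v$; no appeal to Observation~\ref{obs:differsbyatmostone} is needed.

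First I would fix a $\chi$-stable vertex $v$ together with an arbitrary incident edge $e=\{u,v\}\in E(G)$ and reduce the goal to showing $\chi(G-e)=\chi(G)$. One direction is immediate, since any proper coloring of $G$ is still a proper coloring of $G-e$, giving $\chi(G-e)\le\chi(G)$. For the other direction I would note that removing the vertex $v$ from $G-e$ also removes the (already missing) edge $e$, so $(G-e)-v=G-v$; as deleting a vertex cannot raise the chromatic number, this yields $\chi(G-v)=\chi\bigl((G-e)-v\bigr)\le\chi(G-e)$. Stability of $v$ means $\chi(G-v)=\chi(G)$, so the chain $\chi(G)=\chi(G-v)\le\chi(G-e)\le\chi(G)$ collapses to equality, and in particular $e$ is stable.

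I do not expect any genuine obstacle: the argument is just subgraph monotonicity, and the only place that demands attention is the identity $(G-e)-v=G-v$, which is precisely where the hypothesis ``$e$ incident to $v$'' is used and which fails without it. As a cross-check I would also point out that Observation~\ref{obs:StabilityFromVertices} is the contrapositive of Observation~\ref{obs:CriticalityFromEdges} once one recalls that every vertex and every edge is either stable or critical, so the two can be established together; and since only monotonicity under edge and vertex deletion is invoked, the very same chain proves the analogous statement for the vertex-cover number $\beta$ in place of $\chi$.
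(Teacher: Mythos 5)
Your proof is correct and is essentially the paper's argument: the paper derives the observation as the contrapositive of Observation~\ref{obs:CriticalityFromEdges}, whose proof rests on exactly the same monotonicity fact $\chi(G-v)\le\chi(G-e)$ (since $G-v$ is a subgraph of $G-e$) that drives your chain of inequalities. Your closing remark correctly identifies this equivalence, so the two write-ups differ only in presenting the direct versus the contrapositive form.
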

\begin{observation}\label{obs:CriticalVertexCharacterization}
Let $G$ be a graph. A vertex $v\in V(G)$ is critical if and only if there is an optimal coloring of $G$ that assigns $v$ a color with which no other vertex is colored. 
\end{observation}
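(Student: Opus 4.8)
The plan is to first reduce criticality to a purely numerical statement about chromatic numbers, and then establish the two directions of the equivalence by exhibiting explicit colorings. By Observation~\ref{obs:differsbyatmostone}, deleting $v$ either leaves $\chi$ unchanged or decreases it by exactly one; hence $v$ is critical if and only if $\chi(G-v)=\chi(G)-1$. It therefore suffices to show that $\chi(G-v)=\chi(G)-1$ holds precisely when some optimal coloring of $G$ places $v$ in a color class of its own.

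For the ``if'' direction I would take an optimal coloring of $G$ in which $v$ is the only vertex receiving its color. Restricting this coloring to $G-v$ leaves the color of $v$ entirely unused, so it is a proper coloring of $G-v$ with at most $\chi(G)-1$ colors, giving $\chi(G-v)\le\chi(G)-1$. Conversely, any proper coloring of $G-v$ extends to a proper coloring of $G$ by assigning $v$ a fresh color, so $\chi(G-v)\ge\chi(G)-1$; equality follows, so $v$ is critical.

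For the ``only if'' direction, assume $v$ is critical, i.e.\ $\chi(G-v)=\chi(G)-1$, and fix an optimal coloring of $G-v$ using $\chi(G)-1$ colors. Extending it by giving $v$ a brand-new color yields a proper coloring of $G$ with exactly $\chi(G)$ colors---hence an optimal one---in which $v$ alone carries that color, which is what the statement asks for.

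I do not expect any real obstacle here; this is a routine observation. The only point that warrants a moment's care is the appeal to Observation~\ref{obs:differsbyatmostone} at the very start: it is what rules out the possibility that deleting $v$ could lower $\chi$ by more than one, and thereby converts the definition of criticality into the clean condition $\chi(G-v)=\chi(G)-1$ on which both coloring constructions rely.
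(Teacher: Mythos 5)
Your proof is correct and follows essentially the same route as the paper: the "only if" direction extends an optimal coloring of $G-v$ by a fresh color for $v$, and the converse restricts a coloring where $v$ is uniquely colored; the paper merely states the converse as immediate where you spell it out. No issues.
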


\section{Connections between Clique, Vertex Cover, and Independent Set}
\label{sec:CliqueAndIS}
As is to be expected, the three problems of Clique, Vertex Cover, and Independent Set are so closely related that almost all stability results for one of them carry over to the other two in a straightforward way. We state the connections in Proposition~\ref{pro:CliqueAndIS}, proved in Appendix~\ref{app:CliqueAndIS}. 
\begin{proposition}\label{pro:CliqueAndIS}
Let $\coG$ denote the complement graph of $G$. We have the following equalities.
\begin{enumerate}
\item 
$\beta\text{-}\Stability=\alpha\text{-}\Stability=\{\coG\mid G\in \omega\text{-}\Unfrozenness\}$.
\item 
$\beta\text{-}\Unfrozenness=\alpha\text{-}\Unfrozenness=\{\coG\mid G\in \omega\text{-}\Stability\}$.
\item 
$\beta\text{-}\TwoWayStability=\alpha\text{-}\TwoWayStability=\{\coG\mid G\in \omega\text{-}\TwoWayStability\}$.
\item $\beta\text{-}\VertexStability=\{I_n\mid n\in\N\}$.
\item $\alpha\text{-}\VertexStability=\{\coG \mid G\in\omega\text{-}\VertexStability\}$. 
\item
$\beta\text{-}\VertexUnfrozenness=\beta\text{-}\VertexTwoWayStability=\{K_0\}$.
\item $\alpha\text{-}\VertexUnfrozenness=\alpha\text{-}\VertexTwoWayStability=\\
\omega\text{-}\VertexUnfrozenness=\omega\text{-}\VertexTwoWayStability=\emptyset$.
\end{enumerate} 
\end{proposition}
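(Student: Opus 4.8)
The plan is to derive all seven identities from three elementary facts: the Gallai relation $\alpha(G)+\beta(G)=\card{V(G)}$; the complement duality $\alpha(G)=\omega(\coG)$ (so also $\beta(G)=\card{V(G)}-\omega(\coG)$); and the observation that $G\mapsto\coG$ is an involution on $\mathcal{G}$ that fixes the vertex set, satisfies $\overline{G-v}=\coG-v$, and exchanges edge deletion with edge addition, i.e.\ $\overline{G-e}=\coG+e$ for $e\in E(G)$ and $\overline{G+e}=\coG-e$ for $e\in\coE(G)$. Everything then reduces to bookkeeping about how $\alpha$, $\beta$, $\omega$ react to single local modifications.

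\textbf{Parts 1--3 and 5.} Since adding or deleting an edge does not change $\card{V(G)}$, the Gallai relation shows that an edge of a graph is $\alpha$-stable exactly when it is $\beta$-stable and $\alpha$-unfrozen exactly when it is $\beta$-unfrozen; quantifying over all edges resp.\ all nonedges gives $\alpha\text{-}\Stability=\beta\text{-}\Stability$ and $\alpha\text{-}\Unfrozenness=\beta\text{-}\Unfrozenness$, and intersecting yields the two-way variant of Part~3. For the complement descriptions, fix a graph $H$: for $e\in E(H)$ we have $\overline{H-e}=\overline{H}+e$, hence $\alpha(H-e)=\omega(\overline{H}+e)$ while $\alpha(H)=\omega(\overline{H})$, so the edge $e$ is $\alpha$-stable in $H$ iff the nonedge $e$ of $\overline{H}$ is $\omega$-unfrozen; as the edges of $H$ are exactly the nonedges of $\overline{H}$ and complementation is a bijection, $\alpha\text{-}\Stability=\{\coG\mid G\in\omega\text{-}\Unfrozenness\}$. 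Swapping edge deletion for edge addition gives the analogue for Part~2, intersecting gives Part~3, and using $\overline{H-v}=\overline{H}-v$ together with $\alpha(H-v)=\omega(\overline{H}-v)$ gives the vertex-stability identity of Part~5.

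\textbf{Parts 4, 6, and 7.} Here I argue directly via monotonicity. For vertex deletion one has $\beta(G)-1\le\beta(G-v)\le\beta(G)$, and if $v$ lies in some minimum vertex cover $C$ then $C\setminus\{v\}$ covers $G-v$, so $\beta(G-v)\le\beta(G)-1$; any graph with an edge has a nonempty minimum vertex cover and hence such a vertex, while edgeless graphs are trivially $\beta$-vertex-stable, so $\beta\text{-}\VertexStability=\{I_n\mid n\in\N\}$, which is Part~4 (the null graph included vacuously). For Part~6, given $G$ with $V(G)\neq\emptyset$, add a new vertex $w$ adjacent to all of $V(G)$: a vertex cover of $G+w$ avoiding $w$ must contain all of $V(G)$, one containing $w$ has size at least $\beta(G)+1$, so $\beta(G+w)=\min(\card{V(G)},\beta(G)+1)$; since $\alpha(G)\ge1$ for nonempty $G$ forces $\beta(G)<\card{V(G)}$, this strictly exceeds $\beta(G)$, leaving $K_0$ as the only $\beta$-vertex-unfrozen graph, and intersecting with Part~4 gives $\beta\text{-}\VertexTwoWayStability=\{K_0\}$. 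For Part~7, adding an isolated new vertex raises $\alpha$ by exactly one for \emph{every} graph, and adding a universal new vertex raises $\omega$ by exactly one for every graph (for $K_0$ the unique admissible vertex addition does both), so $\alpha\text{-}\VertexUnfrozenness=\omega\text{-}\VertexUnfrozenness=\emptyset$, and the two corresponding two-way classes, being subsets, are empty as well.

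I do not expect a genuine obstacle. The only points demanding care are the degenerate cases---the null graph $K_0=I_0$ and one-vertex graphs---where one must be explicit about which vertex-addition operations are actually available, and the inequality $\beta(G)<\card{V(G)}$ (equivalently $\alpha(G)\ge1$) for nonempty $G$, which is what makes the universal-vertex trick in Part~6 bite.
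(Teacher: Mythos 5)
Your proof is correct and follows essentially the same route as the paper: the Gallai relation and complement duality for Parts~1--3 and~5, and the minimum-vertex-cover / universal-vertex / isolated-vertex observations for Parts~4, 6, and~7. The only cosmetic difference is that the paper outsources Parts~4 and~6 to Theorems~\ref{thm:beta-vertex-stability} and~\ref{thm:v-unfrozenness}, whose proofs are the same arguments you give inline.
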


An interesting inversion in this pattern occurs for the vertex deletion case. Here, switching from $\beta$ to $\alpha$ or $\omega$ in fact flips the stability problem to the criticality version and vice versa. 

\begin{proposition}\label{pro:StabilityAndCriticality}
We have the following equalities. 
\begin{enumerate}
\item $\betaVertexStability=\alpha\text{-}\VertexCriticality=\{\coG\mid G\in \omega\text{-}\VertexCriticality\}$.
\item $\beta\text{-}\VertexCriticality=\alpha\text{-}\VertexStability=\{\coG\mid G\in \omega\text{-}\VertexStability\}$.
\end{enumerate}
\end{proposition}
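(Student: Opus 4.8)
The plan is to push everything through the Gallai identity and then through complementation, reducing the proposition to an entirely vertex-local observation. First I would record two elementary facts. (i) \emph{Gallai's identity}: for every graph $G$, the complement of a minimum vertex cover is a maximum independent set, so $\alpha(G)+\beta(G)=\card{V(G)}$. (ii) Deleting a vertex never raises $\alpha$ and lowers it by at most one: any independent set of $G-v$ is one of $G$, and a maximum independent set of $G$ loses at most the vertex $v$ when passing to $G-v$. Since $\card{V(G-v)}=\card{V(G)}-1$, combining (i) and (ii) gives $\beta(G-v)\in\{\beta(G)-1,\beta(G)\}$, with the two cases corresponding precisely to $\alpha(G-v)=\alpha(G)-1$ and $\alpha(G-v)=\alpha(G)$, respectively.

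The key step is then the following vertex-local equivalence, valid for every graph $G$ and every $v\in V(G)$: the vertex $v$ is $\beta$-stable if and only if it is $\alpha$-critical, and hence $\beta$-critical if and only if $\alpha$-stable. Indeed, by the previous paragraph $\beta(G)=\beta(G-v)$ holds exactly when $\alpha(G-v)=\alpha(G)-1$, i.e.\ exactly when $\alpha(G)\neq\alpha(G-v)$. By definition a graph is $\graphnumber$-vertex-stable (respectively, $\graphnumber$-vertex-critical) iff every one of its vertices is $\graphnumber$-stable (respectively, $\graphnumber$-critical); applying the equivalence to each vertex in turn therefore yields $\betaVertexStability=\alpha\text{-}\VertexCriticality$ and $\beta\text{-}\VertexCriticality=\alpha\text{-}\VertexStability$, which are the first equalities in parts 1 and 2.

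It only remains to translate the $\alpha$-statements into $\omega$-statements by complementation. For any graph $G$ and any $v\in V(G)$ we have $\overline{G-v}=\coG-v$ (deleting a vertex commutes with complementation) and $\alpha(\coG)=\omega(G)$, hence also $\alpha(\coG-v)=\omega(G-v)$; consequently $v$ is $\alpha$-critical (respectively, $\alpha$-stable) in $\coG$ if and only if $v$ is $\omega$-critical (respectively, $\omega$-stable) in $G$. Quantifying over all vertices, $\coG\in\alpha\text{-}\VertexCriticality\iff G\in\omega\text{-}\VertexCriticality$; since complementation is an involution on $\mathcal{G}$, this says precisely $\alpha\text{-}\VertexCriticality=\{\coG\mid G\in\omega\text{-}\VertexCriticality\}$, and the same argument with ``stable'' in place of ``critical'' gives $\alpha\text{-}\VertexStability=\{\coG\mid G\in\omega\text{-}\VertexStability\}$. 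Chaining these with the equalities from the previous step completes both parts.

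I do not anticipate a real obstacle: the argument is a short chain of classical identities. The only mild subtlety is bookkeeping---tracking that $\alpha$ and $\beta$ can each only \emph{decrease} under vertex deletion and are tied by Gallai's identity, which is exactly what makes stability and criticality trade places---together with sanity-checking degenerate cases such as $G=I_n$, where every vertex is at once $\beta$-stable and $\alpha$-critical, in agreement with Proposition~\ref{pro:CliqueAndIS}(4).
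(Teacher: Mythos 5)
Your proposal is correct and follows essentially the same route as the paper: the paper's proof is exactly the equivalence chain ``$v$ is $\alpha$-stable $\iff \alpha(G-v)=\alpha(G) \iff \beta(G-v)+1=\beta(G) \iff v$ is $\beta$-critical,'' using Gallai's identity and the fact that $\beta$ drops by at most one under vertex deletion, with the $\omega$-statements handled by complementation as in Proposition~\ref{pro:CliqueAndIS}. No issues.
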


Proposition~\ref{pro:StabilityAndCriticality} is proved in Appendix~\ref{app:StabilityAndCriticality}. 
Using it, we directly obtain from the \ThetaTwo-hardness of $\beta$-\VertexCriticality~\cite{bur-fre-etal:c:neighborly-help} the same for $\alpha$-\VertexStability and, by complementing the graphs, $\omega$-\VertexStability. 
Unfortunately, $\beta$-\VertexCriticality is the only problem to yield any nontrivial result via the connection between stability and criticality. 

We now turn our attention to the remaining stability problems, for which the hardness proofs will require substantially more effort. 

\section{Stability and Vertex-Stability for Colorability}\label{sec:stability-for-colorability}
We will prove \ThetaTwo-completeness for both \Stability and \VertexStability. 
\begin{theorem}\label{thm:stability}
Determining whether a graph is stable is \ThetaTwo-complete.
\end{theorem}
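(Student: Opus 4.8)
The plan is to prove the two directions separately, with membership being the promised easy example and hardness the real work.

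\medskip

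\noindent\emph{Membership.} By Observation~\ref{obs:differsbyatmostone}, for every edge $e$ we have $\chi(G-e)\in\{\chi(G)-1,\chi(G)\}$, so $G$ is stable exactly when $\chi(G-e)=\chi(G)$ for all $e\in E(G)$. To place this in $\ThetaTwo$, I would ask the \NP\ oracle, in parallel, all queries ``$\chi(G)\le k$?'' and ``$\chi(G-e)\le k$?'' for every $k\in\{0,\dots,|V(G)|\}$ and every $e\in E(G)$; these are polynomially many instances of ordinary $k$-colorability, hence in \NP. From the answers one reads off $\chi(G)$ and every $\chi(G-e)$ in polynomial time and accepts iff they all coincide, so $\Stability\in\P^{\NP}_\|=\ThetaTwo$. (Equivalently: pin down $\chi(G)$ by binary search using $\compO{\log n}$ adaptive queries, then ask ``$\chi(G-e)\le\chi(G)-1$?'' for all $e$ in one parallel round and accept iff every answer is ``no''.)

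\medskip

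\noindent\emph{Hardness.} I would reduce from $\textsc{EqualColorability}=\{(G_1,G_2)\mid\chi(G_1)=\chi(G_2)\}$, which is $\ThetaTwo$-complete by Wagner's techniques (cf.\ the related-work discussion). Given $(G_1,G_2)$, put $H_i=G_i+K_2$ for $i\in\{1,2\}$ and output the disjoint union $G=H_1\cup H_2$; this is clearly polynomial-time computable. Two facts drive the argument: (i) joining with $K_2$ adds exactly two colors, so $\chi(H_i)=\chi(G_i)+2$ and hence $\chi(H_1)=\chi(H_2)\iff\chi(G_1)=\chi(G_2)$; and (ii) the edge $e_i^\ast$ joining the two $K_2$-vertices of $H_i$ is $\chi$-critical in $H_i$, since deleting it turns $K_2$ into $I_2$ and $\chi(H_i-e_i^\ast)=\chi(G_i+I_2)=\chi(G_i)+1=\chi(H_i)-1$.

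For correctness, suppose first $\chi(G_1)=\chi(G_2)$, so $\chi(H_1)=\chi(H_2)=t$ and $\chi(G)=t$. Any edge $e$ lies in some $H_j$, and $\chi(G-e)=\max\{\chi(H_j-e),t\}=t$ because $\chi(H_j-e)\le\chi(H_j)=t$; thus no edge is critical and $G\in\Stability$. Conversely, if $\chi(G_1)\neq\chi(G_2)$, then after possibly swapping indices $\chi(H_1)>\chi(H_2)$, so $\chi(G)=\chi(H_1)$, and by (ii) $\chi(G-e_1^\ast)=\max\{\chi(H_1)-1,\chi(H_2)\}=\chi(H_1)-1<\chi(G)$ (using $\chi(H_1)-1\ge\chi(H_2)$), so $e_1^\ast$ is critical and $G\notin\Stability$. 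Together with the membership part, this yields $\ThetaTwo$-completeness.

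\medskip

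\noindent I expect the only delicate point to be the ``stable'' direction: one must check that in the equal case \emph{every} edge of $G$ is non-critical, not merely the two dummy edges, and dually that each $H_i$ reliably \emph{carries} a critical edge no matter what $G_i$ looks like. The latter is precisely why the reduction joins with $K_2$ rather than with a single universal vertex: a single universal vertex $u$ would make the edge $\{u,w\}$ critical only when $w$ fails to be $\chi$-critical in $G_i$ (by Observation~\ref{obs:CriticalVertexCharacterization}), which breaks down exactly on vertex-critical $G_i$. Everything else---the chromatic-number identities for joins and disjoint unions, and the $\ThetaTwo$-completeness of \textsc{EqualColorability}---is standard or already available.
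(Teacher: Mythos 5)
Your proof is correct, but it takes a genuinely different and considerably more direct route than the paper. The paper inherits the lower bound for \Stability from \VertexStability via the self-join $G\mapsto G+G$ (Lemma~\ref{lem:v-stability-to-stability}), and then establishes the hardness of \VertexStability through the Chang--Kadin criterion (Corollary~\ref{cor:chang-kadin}): \coDP-hardness, proved via the \coDP-completeness of \StableThreeCNF and a modified Cai--Meyer construction, combined with an \ANDomega function given by the join. You instead reduce directly from $\textsc{EqualColorability}$ via $(G_1,G_2)\mapsto (G_1+K_2)\cup(G_2+K_2)$, exploiting two clean facts: in a disjoint union an edge can only be critical inside a component of strictly maximal chromatic number, and joining $K_2$ plants a guaranteed critical edge while shifting $\chi$ by exactly~$2$. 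Both cases of your correctness argument check out, including the edge cases of empty or null $G_i$. What your route costs is the \ThetaTwo-hardness of $\textsc{EqualColorability}$: the paper asserts it (following Wagner) and it does follow from Lemma~\ref{lem:equal} exactly as $\textsc{CompareColorability}$ follows from Lemma~\ref{lem:compare} in Appendix~\ref{app:CompareColorability}, so the citation is legitimate, but that step is itself nontrivial machinery. What the paper's longer route buys is that its intermediate results---the \coDP-completeness of \StableThreeCNF and the \ThetaTwo-hardness of \VertexStability---are headline theorems in their own right, which your reduction does not deliver (though it easily could: deleting either $K_2$-vertex is likewise critical, so the same map also handles \VertexStability). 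One minor slip in your closing remark: by Observation~\ref{obs:CriticalVertexCharacterization}, joining a single universal vertex $u$ makes $\{u,w\}$ critical exactly when $w$ \emph{is} a critical vertex of $G_i$, so that variant fails on vertex-\emph{stable} $G_i$, not on vertex-critical ones; this does not affect your argument, which correctly uses $K_2$.
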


\begin{theorem}\label{thm:v-stability}
Determining whether a graph is vertex-stable
is \ThetaTwo-complete. 
\end{theorem}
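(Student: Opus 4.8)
For membership, observe that a graph $G$ with $n\colonequals\lvert V(G)\rvert$ vertices lies in \VertexStability if and only if $\chi(G-v)=\chi(G)$ for every $v\in V(G)$; by Observation~\ref{obs:differsbyatmostone} it is enough to check that no $\chi(G-v)$ drops below $\chi(G)$. To decide this in $\ThetaTwo=\P^\NP_\|$, send out in parallel the $O(n^2)$ nonadaptive \NP-queries ``Is $H$ $j$-colorable?'' for every $j\in\{1,\dots,n\}$ and every graph $H$ among $G$ and the $n$ graphs $G-v$, read off $\chi(G)$ and all the $\chi(G-v)$ from the answers, and accept iff they all coincide. This is exactly the kind of easy membership argument the paper alludes to at the start of this section, so $\VertexStability\in\ThetaTwo$.

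For hardness the plan is to reuse the reduction that establishes Theorem~\ref{thm:stability}, which maps a \ThetaTwo-hard instance $x$ to a graph $G_x$ that is edge-stable precisely when $x$ is a yes-instance. The no-direction then comes essentially for free: if $x$ is a no-instance, $G_x$ has a $\chi$-critical edge $\{u,w\}$, and by Observation~\ref{obs:CriticalityFromEdges} both endpoints $u$ and $w$ are $\chi$-critical vertices, so $G_x\notin\VertexStability$. The real work is the yes-direction, because edge-stability of $G_x$ does not by itself forbid a critical vertex: one can have a critical vertex all of whose incident edges are stable (a vertex of suitable degree whose deletion makes a graph bipartite while every single edge deletion leaves an odd cycle behind). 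By Observation~\ref{obs:CriticalVertexCharacterization} we must guarantee that, in a yes-instance, no optimal coloring of $G_x$ assigns any vertex a color used nowhere else. Either the construction of Theorem~\ref{thm:stability} already has this property—in which case nothing more is needed—or one enforces it by a targeted local modification that leaves all relevant chromatic numbers and all edge-criticalities unchanged, for instance by splitting each ``dangerous'' vertex $v$ into a false-twin pair (two mutually non-adjacent copies with the same neighborhood) so that a color can never be wasted on a lone vertex, and then re-verifying that each formerly stable edge stays stable because its twin edge absorbs the perturbation.

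\textbf{The main obstacle} is precisely this last step. A blanket false-twin blow-up of $G_x$ is useless, since it makes \emph{every} graph vertex-stable and thereby wipes out the critical edge that witnesses no-instances; the doubling must be confined to exactly those vertices that would otherwise be critical in a yes-instance, and one must check that this surgery neither disturbs $\chi(G_x)$ or any $\chi(G_x-e)$ nor creates a new critical edge, while the no-instance witness survives unscathed. If this turns out to clash with the construction, the fallback is a direct reduction $\Stability\le_m^{\mathrm p}\VertexStability$: replace each edge $\{u,w\}$ of the input graph by a gadget containing one ``representative'' vertex whose deletion affects $\chi$ exactly as deleting $\{u,w\}$ does—realizing the constraint $u\neq w$ through an added vertex that is forced equal-colored to $w$, adjacent to $u$, and tied into a clique of the appropriate size, so that removing that one vertex cleanly lifts the constraint—together with enough redundancy across the gadgets that the representative vertices, the new clique vertices, and the original vertices are all non-critical in the don't-care cases. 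The difficulty there is symmetric: the redundancy must neutralize the auxiliary vertices without simultaneously neutralizing the representative vertices or the subgraph that actually carries the high chromatic number, since it is the interplay of these that must detect edge-criticality.
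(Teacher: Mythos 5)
Your membership argument is fine and matches the paper. The hardness part, however, has a genuine gap, and it starts with the very first step: you propose to ``reuse the reduction that establishes Theorem~\ref{thm:stability},'' i.e., a map $x\mapsto G_x$ from a \ThetaTwo-hard problem to \Stability. No such independent reduction exists in the paper for the chromatic number: the \ThetaTwo-hardness of \Stability is itself \emph{derived from} that of \VertexStability via Lemma~\ref{lem:v-stability-to-stability} ($G$ is vertex-stable iff $G+G$ is stable), so your plan is circular. Your fallback, a direct reduction $\Stability\le_m^{\mathrm p}\VertexStability$, presupposes the same missing starting point. (Note also that the direction of implication between the two notions is the opposite of what would make this cheap: by Observation~\ref{obs:StabilityFromVertices} vertex-stability implies edge-stability, so \VertexStability is the \emph{harder} property to certify, which is exactly why you correctly run into the ``critical vertex with all incident edges stable'' obstacle.) Even setting the circularity aside, you explicitly leave that obstacle unresolved: ``either the construction already has this property or one enforces it by a targeted local modification'' is a plan, not a proof, and you yourself observe that a blanket false-twin blow-up destroys the no-instances.

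The paper's route is structurally different and avoids needing any prior hardness result for \Stability. It first shows that $\StableThreeCNF$ is \coDP-complete (Theorem~\ref{thm:StableThreeCNF}), then reduces it to \VertexStability (Theorem~\ref{thm:cai-meyer}) by taking the Cai--Meyer graph $G_\Phi$ and replicating every vertex \emph{except} the clause vertices $t_{11},\dots,t_{m1}$; the replication lemma (Lemma~\ref{lem:doubling}) guarantees that only the unreplicated vertices can be critical, and Lemma~\ref{lem:cai-meyer} ties their criticality to the non-stability of $\Phi$. This yields only \coDP-hardness; the jump to \ThetaTwo-hardness is made by the Chang--Kadin criterion (Corollary~\ref{cor:chang-kadin}): a \coDP-hard problem with an $\ANDomega$ function is \ThetaTwo-hard, and the graph join is an $\ANDomega$ function for \VertexStability (Theorem~\ref{thm:JoinIsANDomega}). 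Your false-twin idea is essentially Lemma~\ref{lem:doubling}, so you have a kernel of the right mechanism, but the two load-bearing ingredients --- the choice of $\StableThreeCNF$ as the source problem with its \coDP-completeness proof, and the Chang--Kadin lift from \coDP to \ThetaTwo --- are absent, and without them the argument does not go through.
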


As is typical, the upper bounds are immediate: 
We can determine the chromatic number of a graph and all its 1-vertex-deleted and
1-edge-deleted subgraphs with a polynomial number of parallel queries to an oracle for the standard, \NP-complete colorability 
problem $\{(G,k)\in\mathcal{G}\times\N\mid \chi(G)\le k\}$. Specifically, the queries $(G,k)$, $(G-e,k)$, and $(G-v,k)$ for every $e\in E(G)$, every $v\in V(G)$, and every $k\in\{0,\ldots, \card{V(G)}\}$ suffice to determine
whether $G$ is stable and whether it
is vertex-stable. 
To prove the matching lower bounds,
we first note that the lower bound for
Theorem~\ref{thm:v-stability} implies the 
lower bound for Theorem~\ref{thm:stability}. 

\begin{lemma}\label{lem:v-stability-to-stability}
\VertexStability polynomial-time many-one reduces to \Stability.
\end{lemma}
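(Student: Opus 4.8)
The plan is to reduce \VertexStability to \Stability by transforming a graph $G$ into a graph $G'$ such that $G$ is vertex-stable if and only if $G'$ is (edge-)stable. The natural idea is to replace each vertex $v\in V(G)$ by a gadget---most simply, a path or small clique attached to a fresh pendant structure---so that the stability of some edge in the gadget encodes the stability of the vertex $v$. A cleaner route, which I would pursue first, exploits the subdivision operation: in the graph $G'$ obtained from $G$ by subdividing every edge once (inserting a new degree-$2$ vertex in the middle of each edge), deleting one of the two half-edges created from an original edge $e$ is combinatorially very close to deleting $e$ itself, and the chromatic number is only mildly affected (a subdivided graph with at least one edge is either bipartite or has the same chromatic number as $G$ in the relevant ranges). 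One must be careful because $\chi$ behaves badly under subdivision in general, so I expect the actual construction to instead attach, to each vertex $v$, a carefully chosen gadget whose internal edges are automatically stable (using Observation~\ref{obs:StabilityFromVertices}) and exactly one distinguished edge $e_v$ incident to a copy of $v$ whose stability mirrors that of $v$.

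Concretely, I would take the join of $G$ with a large clique $K_t$ (for a suitable $t$, say $t = \card{V(G)}$), giving a graph $H = G + K_t$ with $\chi(H) = \chi(G) + t$. In $H$, every vertex of the $K_t$ part is universal, so by Observations~\ref{obs:StabilityFromVertices} and~\ref{obs:CriticalityFromEdges} the join edges and the $K_t$-internal edges are all stable precisely when... this needs checking, but the point of adding a big clique is to make all the ``auxiliary'' edges trivially stable while faithfully copying the behaviour of $G$'s edges. That still only relates edge-stability of $H$ to edge-stability of $G$, not vertex-stability of $G$, so the real construction must encode vertex deletion. The trick: replace each $v\in V(G)$ by two adjacent twins $v_1,v_2$ (both with the same neighbourhood as $v$, plus the edge $\{v_1,v_2\}$); then deleting the edge $\{v_1,v_2\}$ in the new graph has the same effect on $\chi$ as... again, this requires care. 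The honest summary is that I would look for a local gadget---attach to each $v$ a fixed small graph sharing exactly one edge $e_v$ with the rest, chosen so that (i) all other edges of the gadget are stable regardless of $G$, and (ii) $\chi(G' - e_v) = \chi(G')$ iff $\chi(G - v) = \chi(G)$, using Observation~\ref{obs:differsbyatmostone} to control the $\pm 1$ slack---and then argue stability of the remaining original edges of $G'$ is equivalent to stability of the corresponding edges of $G$, which in a vertex-stable graph are automatically stable by Observation~\ref{obs:StabilityFromVertices}.

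The key steps, in order, would be: (1) define the gadget and the reduction $G\mapsto G'$, computable in polynomial time; (2) compute $\chi(G')$ in terms of $\chi(G)$; (3) show every edge of $G'$ not of the form $e_v$ is stable, handling separately the original edges of $G$ (stable in $G'$ because... they are incident to a vertex forced to be stable by the gadget, via Observation~\ref{obs:StabilityFromVertices}) and the internal gadget edges; (4) show $e_v$ is stable in $G'$ iff $v$ is stable in $G$; (5) conclude $G'\in\Stability \iff G\in\VertexStability$. The main obstacle I anticipate is step (4) combined with step (3): designing a single gadget that simultaneously forces all non-$e_v$ edges to be stable \emph{and} faithfully transfers the vertex-deletion behaviour, without the $\pm 1$ ambiguity of Observation~\ref{obs:differsbyatmostone} causing leaks---in particular ensuring that a \emph{critical} vertex $v$ in $G$ does not accidentally get ``healed'' by the gadget, and that the gadget does not itself introduce a critical edge when $G$ is vertex-stable. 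I would expect the gadget to be something like a triangle or $K_4$ hung off $v$ together with enough universal-ish vertices to pin the chromatic number, and the verification to hinge on Observation~\ref{obs:CriticalVertexCharacterization} (an optimal colouring giving $v$ a private colour) translated into an optimal colouring of $G'$ giving $e_v$'s endpoints a shared private colour.
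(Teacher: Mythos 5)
There is a genuine gap: you never actually commit to a construction. You float several candidate ideas (subdivision, joining with a large clique, blowing each vertex up into a pair of adjacent twins, an unspecified per-vertex gadget), repeatedly flag each one with ``this needs checking'' or ``this requires care,'' and end with a to-do list rather than a proof. None of the floated ideas survives scrutiny as stated: subdivision destroys the chromatic number (as you note yourself); joining with $K_t$ only relates edge-stability of $H$ to edge-stability of $G$ (as you also note); and replacing every vertex by a pair of adjacent true twins yields the lexicographic product $G[K_2]$, whose chromatic number is the $2$-fold chromatic number of $G$ and is \emph{not} $2\chi(G)$ in general (for $C_5$ it is $5$, not $6$), so the ``$\pm 1$ slack'' you worry about really does leak there. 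Since the whole content of the lemma is the construction, the proposal does not constitute a proof.

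The paper's actual reduction is a single global operation rather than a per-vertex gadget: map $G$ to its self-join $G+G$. If $G$ is vertex-stable, then $G+G$ is vertex-stable (vertex deletion commutes with the join and $\chi(G+G)=2\chi(G)$), hence edge-stable by Observation~\ref{obs:StabilityFromVertices}. If some vertex $v$ is critical in $G$, then coloring the two copies of $G-v$ with two disjoint sets of $\chi(G)-1$ colors and giving one shared fresh color to the two copies of $v$ properly colors $(G+G)$ minus the edge joining those two copies with $2\chi(G)-1$ colors, so that single join edge is critical and $G+G$ is not stable. This is exactly the ``shared private colour for the two endpoints of a distinguished edge'' mechanism you gesture at in your last sentence, but realized by the join of $G$ with itself, with the distinguished edge being the join edge between the two copies of each vertex --- no extra gadgetry, cliques, or case analysis needed.
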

It can be shown that mapping a graph $G$ to its self-join $G+G$ provides the required reduction.  Due to the space restrictions, the proof of Lemma~\ref{lem:v-stability-to-stability} is deferred to Appendix~\ref{app:v-stability-to-stability}. 

It remains to establish the lower bound of Theorem~\ref{thm:v-stability}, that is, 
to prove that determining whether a graph is vertex-stable
is \ThetaTwo-hard. 
Proving \ThetaTwo-hardness is not easy.
However, we will now argue that it suffices to show
that \VertexStability is \coDP-hard. 

Chang and Kadin~\cite[Thm.~7.2]{cha-kad:j:boolean-connectives}
show that a problem is \ThetaTwo-hard if
it is \DP-hard and has \ORomega.
Observing that \ThetaTwo is closed under complement, we obtain the following corollary.

\begin{corollary}\label{cor:chang-kadin}
If a \coDP-hard problem has \ANDomega, 
then it is \ThetaTwo-hard.
\end{corollary}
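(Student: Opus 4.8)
The plan is to derive this corollary from the cited theorem of Chang and Kadin~\cite[Thm.~7.2]{cha-kad:j:boolean-connectives} --- ``a problem that is \DP-hard and has \ORomega is \ThetaTwo-hard'' --- by a short duality argument, using two facts: polynomial-time many-one reducibility is self-dual under complementation (a function $f$ witnesses that $A$ many-one reduces to $B$ if and only if the very same $f$ witnesses that $\overline A$ many-one reduces to $\overline B$), and \ThetaTwo is closed under complement.

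Let $L$ be \coDP-hard and suppose $L$ has \ANDomega. First I would show that the complement $\overline L$ is \DP-hard. Given any $A\in\DP$, its complement $\overline A$ lies in \coDP; hence \coDP-hardness of $L$ provides a polynomial-time many-one reduction from $\overline A$ to $L$, and by self-duality the same function reduces $A$ to $\overline L$. Since $A\in\DP$ was arbitrary, $\overline L$ is \DP-hard. Second, as observed above (``a language has \ANDomega iff its complement has \ORomega''), the fact that $L$ has \ANDomega means $\overline L$ has \ORomega --- concretely, an \ANDomega function for $L$ is literally an \ORomega function for $\overline L$. Chang and Kadin's theorem, applied to $\overline L$, now yields that $\overline L$ is \ThetaTwo-hard.

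It remains to push hardness back from $\overline L$ to $L$. Let $M\in\ThetaTwo$ be arbitrary. Since \ThetaTwo is closed under complement, $\overline M\in\ThetaTwo$, so the \ThetaTwo-hardness of $\overline L$ gives a polynomial-time many-one reduction from $\overline M$ to $\overline L$; by self-duality the same function reduces $M$ to $L$. Hence $L$ is \ThetaTwo-hard, as claimed.

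I expect no genuine obstacle: the argument is purely syntactic, reusing the given \ANDomega function and the given reductions verbatim. The only points that require care are keeping the directions of the reductions consistent when complementing and making sure that the notion of hardness used throughout is the same (polynomial-time many-one hardness, as elsewhere in the paper). One can also package all of this as the single remark that the hypotheses ``\coDP-hard with \ANDomega'' and ``\DP-hard with \ORomega'' are complements of one another and that \ThetaTwo-hardness is preserved under complementation; writing $\overline L$ out explicitly just makes the bookkeeping transparent.
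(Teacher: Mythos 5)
Your proof is correct and is exactly the argument the paper intends: the paper derives the corollary from Chang--Kadin's theorem by noting that \ThetaTwo is closed under complement, and your write-up simply makes explicit the dualization steps (complementing the language, turning \ANDomega into \ORomega, and transferring hardness back via the self-duality of many-one reductions). No issues.
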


We note that the join is an \ANDomega function for \VertexStability; see Appendix~\ref{app:JoinIsANDomega}. 
Now, Theorem~\ref{thm:v-stability} follows from Corollary~\ref{cor:chang-kadin}
and the \coDP-hardness of \VertexStability.
\begin{theorem}\label{thm:JoinIsANDomega}
The join is an \ANDomega function for \VertexStability and \Unfrozenness.
\end{theorem}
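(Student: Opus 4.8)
The plan is to show that the join operation $(G_1,\dots,G_k)\mapsto G_1+\cdots+G_k$ serves simultaneously as an \ANDomega function for \VertexStability and for \Unfrozenness. Since both claims concern the behavior of $\chi$ under local modifications, the backbone of the argument is the standard fact that $\chi(G_1+\cdots+G_k)=\chi(G_1)+\cdots+\chi(G_k)$: in a proper coloring of a join, no color class can contain vertices from two different parts, so the color sets used on the parts are pairwise disjoint, and conversely one may color the parts independently with disjoint palettes. I would first record this additivity as the key lemma, together with its consequence that any local modification (deleting or adding an edge, or deleting or adding a vertex) performed \emph{inside} one part $G_i$ changes $\chi$ of the whole join by exactly the amount it changes $\chi(G_i)$, while leaving the contributions of the other parts untouched. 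Note also that every edge and every nonedge of the join either lies inside some part or is a join edge between two parts; the latter are irrelevant for \VertexStability (which only deletes vertices) but must be handled for \Unfrozenness.

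For the \VertexStability direction: a vertex $v$ of $G_1+\cdots+G_k$ lies in exactly one part, say $G_i$, and $(G_1+\cdots+G_k)-v=G_1+\cdots+(G_i-v)+\cdots+G_k$. By additivity, $v$ is $\chi$-stable in the join if and only if $\chi(G_i-v)=\chi(G_i)$, i.e.\ if and only if $v$ is $\chi$-stable in $G_i$. Hence the join is vertex-stable precisely when every part is vertex-stable, which is exactly the \ANDomega condition. For the \Unfrozenness direction: one must check that adding any nonedge keeps $\chi$ unchanged iff each part already has that property. A nonedge added inside a part $G_i$ is handled exactly as above via additivity. The extra case is a nonedge $\{v,w\}$ with $v\in V(G_i)$, $w\in V(G_j)$, $i\ne j$ --- but such a pair is already an edge of the join (all cross-part pairs are join edges), so there simply are no cross-part nonedges to worry about; every nonedge of the join is internal to a single part. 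Therefore the join is unfrozen iff every part is unfrozen, again the \ANDomega condition. One should also dispatch the degenerate cases $k=0$ (the empty join is the null graph, which is vacuously stable and unfrozen) and parts equal to the null graph, which contribute $0$ to $\chi$ and have no vertices or nonedges, so they do not affect either property; and one must observe that the join of polynomially many graphs is computable in polynomial time.

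The step I expect to require the most care is not any single computation but making the reduction conventions line up cleanly: an \ANDomega function must take an arbitrary finite tuple of strings over $\Sigma^\ast$ and return a string, so one has to fix how arbitrary strings are parsed as graphs (with malformed inputs mapped to a canonical non-member, and the empty tuple to a canonical member), and then verify the biconditional $x_1\in L\wedge\cdots\wedge x_k\in L\iff f(x_1,\dots,x_k)\in L$ holds on the nose in all these boundary situations. The genuinely mathematical content --- additivity of $\chi$ under joins and the part-wise decomposition of vertices and nonedges --- is short; the rest is bookkeeping, but it is the bookkeeping that makes it a legitimate \ANDomega function rather than merely a correct statement about chromatic numbers. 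With Theorem~\ref{thm:JoinIsANDomega} in hand, Corollary~\ref{cor:chang-kadin} then reduces the proof of Theorem~\ref{thm:v-stability} to establishing \coDP-hardness of \VertexStability, as the surrounding text indicates.
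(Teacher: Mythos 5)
Your proposal is correct and follows essentially the same route as the paper's proof: additivity of $\chi$ under joins, the fact that vertex deletion and nonedge addition commute with the join, and the observation that every nonedge of the join is internal to a single part. The extra bookkeeping about encodings and degenerate cases is fine but not needed beyond what the paper already does.
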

\begin{lemma}\label{lem:v-stability-coDP}
Determining whether a graph is vertex-stable is \coDP-hard.
\end{lemma}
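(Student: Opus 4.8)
The plan is to exhibit a polynomial-time many-one reduction from the canonically \coDP-complete language $\textsc{UnSat}\cup\StableThreeCNF$—more precisely, from $\UnSat\times\Sat$ read as a \DP-complete problem and then complemented, i.e. from the \coDP-complete set $\{(\Phi_1,\Phi_2)\mid \Phi_1\in\Sat\text{ or }\Phi_2\in\UnSat\}$—into \VertexStability. Actually, the cleanest route is to reduce from a \coDP-complete coloring problem rather than from satisfiability directly, since \VertexStability is itself a coloring problem and we want to reuse the classical machinery. The natural candidate is the complement of Cai and Meyer's \DP-complete problem $k$-\VertexCriticality for $k=3$ (which they call $\textsc{Minimal-}3\textsc{-Uncolorability}$): we reduce from $\overline{3\text{-}\VertexCriticality}$, the set of graphs that are \emph{not} $3$-vertex-critical, which is \coDP-complete. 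So the goal is: given a graph $G$, produce in polynomial time a graph $G'$ such that $G'$ is vertex-stable (with respect to $\chi$) if and only if $G$ is not $3$-vertex-critical.

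The key structural idea is to split the disjunction defining \coDP-hardness ($\chi(G)\ne 3$, i.e. $G$ is $3$-colorable \emph{or} $\chi(G)\ge 4$; together with: some vertex of $G$ is stable, i.e. $\chi(G-v)=\chi(G)$ for some $v$) into two gadget components whose disjoint-union-plus-join we can control. Concretely, I would first build, by a Cai--Meyer-style construction, a graph $H$ that encodes the input such that $H$ is vertex-stable exactly when $G$ is $3$-colorable; for this one uses that $3$-colorability is \NP-complete and that one can pad a $3$-colorability instance into a graph all of whose vertices are forced to be stable (e.g. by joining in a large clique $K_m$ so that $\chi$ jumps well above anything a single vertex deletion can affect, while preserving the "is it $3$-colorable" signal in a controlled sub-block). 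Second, I would build a graph $H'$ that is vertex-stable exactly when $\chi(G)\ge 4$, using the fact (Observation~\ref{obs:CriticalVertexCharacterization}) that a vertex is critical iff some optimal coloring gives it a private color, so that making $G$ "sub-$3$-critical" is the obstruction. The reduction then outputs a single graph assembled from $H$ and $H'$ via disjoint union (using that $\chi(A\cup B)=\max(\chi(A),\chi(B))$ and analysing when a vertex deletion in the larger component is absorbed) or via the join (using Theorem~\ref{thm:JoinIsANDomega}, that the join is an \ANDomega function for \VertexStability, so that $A+B$ is vertex-stable iff both $A$ and $B$ are)—whichever makes the "OR" come out right after taking complements.

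The main obstacle—and where the real work lies—is that \VertexStability, unlike criticality, is not monotone and a single "bad" vertex destroys stability, so the OR-structure of \coDP does not transfer to \VertexStability for free: \ANDomega (the join) gives us conjunctions, not disjunctions, of stability conditions. The trick will be to realize each disjunct as a \emph{failure} of vertex-stability in a dedicated gadget and then combine so that the overall graph is vertex-stable iff \emph{not both} gadgets fail, i.e. iff at least one disjunct of the original \coDP condition holds—this is exactly why we phrase the target as "$G$ is \emph{not} $3$-vertex-critical" and exploit that \ThetaTwo (hence the relevant closure) is closed under complement, as already noted before Corollary~\ref{cor:chang-kadin}. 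I expect the delicate points to be: (i) ensuring that vertex deletions in the padding clique $K_m$ are themselves stable (they are, provided $m$ is large enough that $\chi$ stays at $m-1+\text{const}$, but one must check this interacts correctly with the encoded sub-block), and (ii) arranging the chromatic numbers of the two gadgets to be equal (or suitably ordered) so that in the disjoint union neither gadget's instability can be "hidden" by the other. Once the gadget chromatic numbers are pinned down and the private-color characterization of critical vertices is invoked, verifying the biconditional should be a routine, if somewhat lengthy, case analysis, which I would defer to an appendix in the style of the paper's other proofs.
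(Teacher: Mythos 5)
There is a genuine gap at the heart of your plan: you need a construction $C(H,H')$ that is vertex-stable if and only if $H$ \emph{or} $H'$ is vertex-stable, and neither of the two combiners you offer can deliver it. The join is an \ANDomega function (Theorem~\ref{thm:JoinIsANDomega}), so it gives conjunction, as you note. But the disjoint union fails too: since $\chi(A\cup B)=\max\{\chi(A),\chi(B)\}$, if $\chi(A)=\chi(B)$ then \emph{every} vertex of $A\cup B$ is stable (deleting a vertex in one component cannot lower the max while the other component still needs $\chi(A)$ colors), and if $\chi(A)>\chi(B)$ then $A\cup B$ is vertex-stable exactly when $A$ is---a projection, not a disjunction. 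So "stable iff not both gadgets fail" is precisely the operation you do not have, and no amount of balancing the gadgets' chromatic numbers produces it. A secondary but also fatal point: your proposed stabilization by "joining in a large clique $K_m$" does the opposite of what you want, since $\chi(G+K_m)=\chi(G)+m$ and deleting any clique vertex drops this by one, so every vertex of $K_m$ is critical and $G+K_m$ is \emph{never} vertex-stable; moreover a vertex of $G$ remains critical in $G+K_m$ exactly if it was critical in $G$. The correct stabilization tool is vertex replication (Lemma~\ref{lem:doubling}), which you do not invoke.

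The paper avoids the OR-combination problem entirely by pushing the \coDP disjunction down to the formula level. It first shows that $\StableThreeCNF=\ThreeSat\cup\StableThreeUnSat$ is \coDP-complete (Theorem~\ref{thm:StableThreeCNF}), where the union is already packaged into the single predicate "$\Phi$ is stable" (satisfiable formulas are vacuously stable). It then reduces this one predicate to \VertexStability via the Cai--Meyer graph $G_\Phi$: Lemma~\ref{lem:cai-meyer} shows that $\Phi$ is not stable iff some designated vertex $t_{i1}$ is critical in $G_\Phi$, and replicating all other vertices (Lemma~\ref{lem:doubling}) makes them stable without changing $\chi$, so the only possible witnesses of non-vertex-stability are the $t_{i1}$. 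If you want to salvage your route via the complement of Cai--Meyer's minimal uncolorability problem, you would still have to express the disjunction "$G$ is $3$-colorable or some $G-v$ is not $3$-colorable" as a single uniform criticality condition inside one gadget family, which is essentially what Lemma~\ref{lem:cai-meyer} accomplishes---there is no shortcut through a generic OR of two separately built graphs.
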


To prove Lemma~\ref{lem:v-stability-coDP}, we show in Theorem~\ref{thm:StableThreeCNF} that $\StableThreeCNF=\ThreeSat\cup\StableThreeUnSat$ is \coDP-complete and then reduce it to \VertexStability in Theorem~\ref{thm:cai-meyer}. 
We will use twice 
the following lemma, whose straightforward proof is deferred to Appendix~\ref{app:SatToThreeSat}. 

\begin{lemma}\label{lem:SatToThreeSat}
	There is a polynomial-time many-one reduction from \Sat to \ThreeSat converting a \CNF-formula $\Phi$ into a \ThreeCNF-formula $\Psi$ such that 
	$\Phi$ is stable if and only if $\Psi$ is stable. 
\end{lemma}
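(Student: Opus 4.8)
The plan is to take the textbook clause-splitting reduction from \Sat to \ThreeSat and check that it behaves well with respect to deletion of a single clause. Recall the standard construction: given a \CNF-formula $\Phi$ with clauses $C_1,\ldots,C_m$, we process each clause $C_i$ separately. A clause with exactly three literals is kept; a clause with one or two literals is padded up to three by introducing fresh dummy variables and adding all clauses that force those dummies (e.g., $(\ell)$ becomes $(\ell\vee y_i\vee z_i)\wedge(\ell\vee \bar y_i\vee z_i)\wedge(\ell\vee y_i\vee \bar z_i)\wedge(\ell\vee\bar y_i\vee\bar z_i)$, and $(\ell_1\vee\ell_2)$ becomes $(\ell_1\vee\ell_2\vee y_i)\wedge(\ell_1\vee\ell_2\vee\bar y_i)$); a clause $(\ell_1\vee\cdots\vee\ell_k)$ with $k\ge 4$ is replaced by the usual chain $(\ell_1\vee\ell_2\vee y_{i,1})\wedge(\bar y_{i,1}\vee\ell_3\vee y_{i,2})\wedge\cdots\wedge(\bar y_{i,k-3}\vee\ell_{k-1}\vee\ell_k)$ on fresh variables $y_{i,1},\ldots,y_{i,k-3}$. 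Call the resulting \ThreeCNF-formula $\Psi$, and write $\Psi_i$ for the block of \ThreeCNF-clauses coming from $C_i$, so $\Psi=\bigwedge_i\Psi_i$. The key structural facts I would record first are the two standard ones: (i) $\Psi$ is satisfiable iff $\Phi$ is; and (ii) more locally, for each $i$, any assignment to the original variables that satisfies $C_i$ extends (by choosing the $y$'s appropriately) to one satisfying $\Psi_i$, while any assignment falsifying $C_i$ falsifies at least one clause of $\Psi_i$ no matter how the $y$'s are set. Fact (ii) is exactly the per-block correctness of the gadget and is what makes the stability argument go through.

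Next I would prove the equivalence ``$\Phi$ stable $\iff$ $\Psi$ stable'' by splitting on whether $\Phi$ is satisfiable. If $\Phi\in\Sat$, then $\Psi\in\Sat$ by (i), and both are stable by definition (a satisfiable \CNF-formula is trivially stable, since deleting a clause preserves satisfiability). So both directions hold in this case. The interesting case is $\Phi\in\UnSat$. Here I need: $\Phi\in\StableUnSat \iff \Psi\in\StableUnSat$. Note first $\Psi\in\UnSat$ by (i). For the forward direction, suppose $\Phi$ is stably unsatisfiable and let $D$ be any clause of $\Psi$; it lies in some block $\Psi_i$. I claim $\Psi-D$ is still unsatisfiable. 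Since $\Phi-C_i$ is unsatisfiable, take any assignment $\assignment$ to the original variables; it falsifies some clause $C_j$ with $j\ne i$, and since that block $\Psi_j$ is untouched by removing $D$, fact (ii) gives a clause of $\Psi_j\subseteq\Psi-D$ falsified by every extension of $\assignment$. Hence no extension satisfies $\Psi-D$, and as every assignment to $V(\Psi)$ restricts to some $\assignment$, $\Psi-D\in\UnSat$. For the converse, suppose $\Psi$ is stably unsatisfiable; I must show every $\Phi-C_i$ is unsatisfiable. Fix $i$ and pick any single clause $D\in\Psi_i$ (such $D$ exists, as every block is nonempty). Given an assignment $\assignment$ satisfying $\Phi-C_i$, extend it on each block $\Psi_j$, $j\ne i$, using (ii) to satisfy all of $\Psi_j$; on block $\Psi_i$ we cannot guarantee all clauses, but we can at least choose the $y$-variables of block $i$ so as to satisfy every clause of $\Psi_i$ except possibly $D$ — this requires a small combinatorial check that each gadget, after deleting any one of its clauses, is satisfiable under the existing assignment to the literals of $C_i$, which holds because the dummy clauses in each gadget are arranged so that dropping one frees up the corresponding $y$ to be set the other way. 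The resulting total assignment satisfies $\Psi-D$, contradicting stable unsatisfiability of $\Psi$; hence $\Phi-C_i\in\UnSat$, as desired.

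I expect the main obstacle — really the only place needing care — to be the converse direction just sketched, specifically verifying the claim that for each clause $D$ of a gadget block $\Psi_i$, and for each assignment of the literals of $C_i$ (in particular the all-false one, the worst case), the remaining clauses $\Psi_i-D$ can be satisfied by a suitable choice of the block's fresh $y$-variables. For the length-$1$ and length-$2$ padding gadgets this is immediate since the four (resp. two) dummy clauses cover all settings of the dummies and deleting one leaves a setting free. For the long-clause chain $(\ell_1\vee\ell_2\vee y_{1})\wedge(\bar y_{1}\vee\ell_3\vee y_{2})\wedge\cdots\wedge(\bar y_{k-3}\vee\ell_{k-1}\vee\ell_k)$, with all $\ell_j$ false, the whole chain is unsatisfiable but deleting the $t$-th clause makes it satisfiable: set $y_1=\cdots=y_{t-1}=1$ and $y_t=\cdots=y_{k-3}=0$. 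I would present this as a one-line observation per gadget type and then note that, since the blocks share no fresh variables and share original variables only through $\assignment$, these per-block choices combine into a global satisfying assignment of $\Psi-D$. One bookkeeping point to handle cleanly: a clause $C_i$ of length exactly $3$ contributes a single clause $\Psi_i=C_i$ with no fresh variables, so ``deleting the one clause of $\Psi_i$'' corresponds exactly to deleting $C_i$ from $\Phi$, and the argument degenerates correctly. Finally, polynomial time and the \ThreeCNF format (exactly three distinct literals per clause — using fresh dummies guarantees distinctness) are immediate from the construction, completing the proof.
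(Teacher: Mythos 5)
Your proposal is correct and follows essentially the same route as the paper's proof: the standard clause-splitting/padding reduction, together with the observation that deleting a single clause of the block $\Psi_i$ corresponds exactly (for satisfiability after deletion) to deleting $C_i$ from $\Phi$. The only cosmetic difference is that the paper performs the construction in two stages (first chaining long clauses, then padding the resulting short ones) and leaves the per-gadget checks as ``straightforward,'' whereas you do it in one stage and spell those checks out.
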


\begin{theorem}\label{thm:StableThreeCNF}
	\StableThreeCNF is \coDP-complete.
\end{theorem}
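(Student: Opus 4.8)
\textbf{Proof proposal for Theorem~\ref{thm:StableThreeCNF}.}

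The plan is to prove membership in \coDP and \coDP-hardness separately. For membership, recall that $\StableThreeCNF=\ThreeSat\cup\StableThreeUnSat$. The set $\ThreeSat$ is in \NP. The set $\StableThreeUnSat=\{\Phi\in\ThreeUnSat\mid (\Phi-C)\in\ThreeUnSat\text{ for every clause }C\}$ is in \coNP: a formula fails to be in it iff it is satisfiable or some single-clause-deleted subformula is satisfiable, and this ``failure'' condition is an \NP predicate (guess a satisfying assignment, possibly together with the clause to delete), so $\StableThreeUnSat$ itself is in \coNP. Hence $\StableThreeCNF$ is the union of an \NP set and a \coNP set, which is exactly \coDP by definition. (Equivalently, its complement $\ThreeUnSat\cap\overline{\StableThreeUnSat}$ is the intersection of a \coNP set and an \NP set, hence in \DP.)

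For hardness, I would reduce from a standard \coDP-complete problem. Since $\textsc{Sat-UnSat}=\Sat\times\UnSat$ is \DP-complete, its complement $\overline{\Sat\times\UnSat}=\UnSat\cup(\CNF\times\Sat)$ is \coDP-complete; phrased with pairs, $\{(\Phi_1,\Phi_2)\mid \Phi_1\in\UnSat\ \vee\ \Phi_2\in\Sat\}$ is \coDP-complete. First I would use Lemma~\ref{lem:SatToThreeSat} (applied to each component) to assume without loss of generality that $\Phi_1,\Phi_2$ are \ThreeCNF-formulas, and moreover by renaming variables that they share no variables. The goal is now to build, from $(\Phi_1,\Phi_2)$, a single \ThreeCNF-formula $\Psi$ that is stable iff $\Phi_1$ is unsatisfiable or $\Phi_2$ is satisfiable.

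The natural construction is to take $\Psi$ to be (a \ThreeCNF-ification of) a formula that is satisfiable exactly when $\Phi_2$ is satisfiable, and whose ``core'' unsatisfiable part, present when $\Phi_2$ is unsatisfiable, is $\Phi_1$-controlled so that it is \emph{stably} unsatisfiable precisely when $\Phi_1$ is unsatisfiable. Concretely, I would introduce fresh variables and combine the clauses of $\Phi_1$ and $\Phi_2$ so that: (i) if $\Phi_2\in\Sat$, then $\Psi\in\Sat$ (so trivially stable); (ii) if $\Phi_2\notin\Sat$, then $\Psi$ reduces in satisfiability behaviour to $\Phi_1$ in the sense that $\Psi\in\UnSat$ always, and $\Psi-C\in\UnSat$ for every clause $C$ iff $\Phi_1\in\UnSat$. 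A clean way to get (ii) is to arrange that the clauses of $\Psi$ coming from $\Phi_1$ are each ``protected'' (e.g.\ by padding with fresh literals or duplicating), so that deleting any single clause of $\Psi$ either leaves a copy of (an unsatisfiable instance built from) $\Phi_1$ intact — keeping $\Psi-C$ unsatisfiable — or, if $C$ is one of the few unprotected clauses, turns $\Psi-C$ satisfiable exactly when $\Phi_1$ is satisfiable. Then the overall stability of $\Psi$ captures ``$\Phi_1\in\UnSat$ or $\Phi_2\in\Sat$'' as desired. Throughout, after assembling the formula I apply Lemma~\ref{lem:SatToThreeSat} once more to convert back to \ThreeCNF while preserving stability, which is why that lemma is quoted as being used twice.

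The main obstacle I anticipate is the exactness constraint: we need \emph{exactly} three distinct literals per clause, and the gadgets that ``protect'' clauses of $\Phi_1$ and that couple in the satisfiability of $\Phi_2$ must be designed so that deleting a single clause cannot accidentally make an unintended subformula satisfiable (or leave one unsatisfiable). Getting the bookkeeping right — ensuring there is no single clause whose removal both destroys the $\Phi_1$-unsatisfiability witness and fails to be compensated by another copy, while simultaneously not introducing a clause whose removal is ``neutral'' in a way that breaks the iff — is where the care lies; the use of the stability-preserving \Sat-to-\ThreeSat reduction isolates the tedious padding and lets the core construction be stated at the \CNF level.
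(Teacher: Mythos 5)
Your membership argument is correct and complete: $\ThreeSat\in\NP$, $\StableThreeUnSat\in\coNP$, and their union is in \coDP by definition.

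The hardness half has a genuine gap: the reduction is never actually constructed. Reducing from $\{(\Phi_1,\Phi_2)\mid\Phi_1\in\UnSat\ \vee\ \Phi_2\in\Sat\}$ is a legitimate strategy, and it is morally what the paper does in modular form via the Chang--Kadin criterion (a set is \coDP-hard if it is \NP-hard, \coNP-hard, and has \ORtwo). But everything after ``the natural construction is to take $\Psi$ to be\dots'' is a list of properties the construction should have, not a construction, and three concrete gadgets are needed that you do not supply. (1)~A formula that is \emph{stable iff satisfiable}, to encode ``$\Phi_2\in\Sat$'': the paper gets this with a single kill-switch clause $(\overline{y}\vee\overline{y}'\vee\overline{y}'')$ attached to three variable-disjoint copies of $\Phi_2$, so that deleting that one clause always restores satisfiability, making stability equivalent to satisfiability. (2)~A formula that is \emph{stably unsatisfiable iff $\Phi_1$ is unsatisfiable}: the paper appends the eight clauses over fresh $x,y,z$ covering all sign patterns. (3)~Most importantly, an \ORtwo combiner merging the two into one formula whose stability is the disjunction: the paper uses the Papadimitriou--Wolfe clause product $\bigwedge_{i,j}(C_i\vee C_j')$, whose key feature is that deleting a single clause of the product removes ``one clause from each factor simultaneously.'' Your proposed mechanism---``protecting'' the $\Phi_1$-derived clauses by padding or duplication---does not obviously yield (3) and is problematic on its own terms: duplicating a clause makes its deletion harmless, so wholesale duplication pushes $\Psi$ toward being stable independently of $\Phi_1$; and the $\Phi_2$-derived clauses must also be guarded, since when $\Phi_2$ is unsatisfiable but some $\Phi_2-C$ is satisfiable, deleting the corresponding clause of $\Psi$ could create satisfiability and destroy stability even when $\Phi_1\in\UnSat$. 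These are exactly the issues you defer at the end as ``where the care lies''---but they are the entire content of the hardness proof.
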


\begin{proof}
It is immediate that \StableThreeCNF is in \coDP.
To show \coDP-hardness, we will show that
\StableThreeCNF is \coNP-hard, \NP-hard, and has \ORtwo. 
The \coDP-hardness then follows by applying an observation by Chang and Kadin~\cite[Lem. 5]{cha-kad:j:boolean-connectives} -- a set is \DP-hard if it is \NP-hard, \coNP-hard, and has an \ANDtwo function -- to the complement language. 
\begin{description}
\item[coNP-hardness.]
It is easy to see that the function $f\colon \Phi\mapsto \Phi \wedge (x\vee y \vee z) \wedge (x\vee y \vee \overline{z}) \wedge (x\vee \overline{y} \vee z) \wedge (x\vee \overline{y} \vee \overline{z}) \wedge (\overline{x}\vee y \vee z) \wedge (\overline{x}\vee y \vee \overline{z}) \wedge (\overline{x}\vee \overline{y} \vee z) \wedge (\overline{x}\vee \overline{y} \vee \overline{z}) $, where $x$, $y$, and $z$ are fresh variables 
not occurring in $\Phi$, reduces
\ThreeUnSat to \StableThreeCNF. 

\item[NP-hardness.]
We give a reduction from \ThreeSat to \StableFourCNF;  composing it with the reduction from Lemma~\ref{lem:SatToThreeSat} yields the desired reduction to \StableThreeCNF. 
Given a \ThreeCNF-formula $\Phi = C_1 \wedge \cdots \wedge C_m$
over $X=\{x_1,\ldots,x_n\}$, map it to
the \FourCNF-formula $\Psi = (C_1 \vee y) \wedge (C_1' \vee y') \wedge (C_1'' \vee y'') \wedge \cdots \wedge (C_m\vee y) \wedge (C_m'\vee y') \wedge (C_m''\vee y'') \wedge
(\overline{y} \vee \overline{y}' \vee \overline{y}'')$,
where the clauses $C'_i$ and $C''_i$ are just like the clauses $C_i$ but with a new copy of variables $X'=\{x'_1,\ldots,x'_n\}$ and $X''=\{x''_1,\ldots,x''_n\}$ instead of $X$, respectively, 
and $y$, $y'$, and $y''$ being three fresh variables as well.
Deleting the clause $(\overline{y} \vee \overline{y}' \vee \overline{y}'')$ renders $\Psi$
trivially satisfiable; 
any assignment that sets $y$, $y'$ and $y''$ to 1 will do.
Thus $\Psi$ is stable if and only if it is satisfiable.
It remains to prove the equisatisfiability of  
$\Phi$ and $\Psi$. 

First assume that $\Phi$ has a satisfying assignment $\assignment\colon X\to \{0,1\}$. Then $\Psi$ is satisfied by any assignment $\assignmenttwo$ with 
$\assignmenttwo(x_i')=\assignmenttwo(x_i'')=\assignment(x_i)$ for $i\in\{1,\ldots,n\}$ and 
$\assignmenttwo(y)=0$.
Now assume that $\Psi$ has a satisfying assignment $\assignmenttwo$.
Then $\Phi$ is satisfied by $\assignment: x_i\mapsto \assignmenttwo(x_i)$ if $\assignmenttwo(y)=0$, by $\assignment': x_i\mapsto \assignmenttwo(x_i')$ if $\assignmenttwo(y')=0$, and by $\assignment'': x_i\mapsto \assignmenttwo(x_i'')$  if $\assignmenttwo(y'')=0$.

\item[OR$_2$.] 
In their proof of \DP-completeness, Papadimitriou and Wolfe~\cite[Lem. 3 plus corollary]{pap-wol:j:facets} implicitly gave a simple \ANDtwo function for both \MinimalUnSat and \MinimalThreeUnSat (the sets of unsatisfiable
formulas that become satisfiable after deleting any clause). 
We make use of the same construction and defer the full proof to Appendix~\ref{app:or}. 
\end{description}
This concludes the proof that \StableThreeCNF is \coDP-complete. 
\end{proof}

All that is left to do is to reduce \StableThreeCNF to \VertexStability.
First, we consider the known reduction from 
\MinimalThreeUnSat to \VertexMinimalThreeUnCol 
by Cai and Meyer~\cite{cai-mey:j:dp}. 
It maps a formula $\Phi$ with $m$ clauses $C_1,\ldots,C_m$ 
to a graph $G_\Phi$, 
whose vertex set includes, among others, a vertex called $v_\textnormal{s}$ and, 
for every $i\in\{1,\ldots,m\}$, a vertex $t_{i1}$;  
see Figure~\ref{fig:caimeyertotal} in Appendix~\ref{app:caimeyertotal} for an example of the full construction, combining the single steps described in the original paper~\cite{cai-mey:j:dp}.
It comes as no surprise that this reduction 
does not work for us 
since, for example, $G_\Phi - v_\textnormal{s}$ 
is always 3-colorable, and thus 
$G_\Phi$ is never stable if $\Phi$ is not satisfiable.
However, careful checking reveals the following property of $G_\Phi$. 
\begin{lemma}\label{lem:cai-meyer}
A \ThreeCNF-formula $\Phi$ is not stable  
if and only if
$\chi(G_\Phi) > \chi(G_\Phi - t_{i1})$ for at least one $i\in\{1,\ldots,m\}$.
\end{lemma}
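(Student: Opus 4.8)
The plan is to analyze the Cai--Meyer construction $G_\Phi$ closely enough to pin down exactly how its chromatic number depends on the satisfiability of the subformulas $\Phi - C_i$, and then translate "stable" into a statement about $\chi(G_\Phi)$ versus $\chi(G_\Phi - t_{i1})$. First I would recall the key quantitative facts about the construction: there is a fixed integer $k$ (which for the Cai--Meyer reduction is $3$, though the argument only uses that it is a constant determined by the construction) such that $\chi(G_\Phi) = k$ if $\Phi$ is unsatisfiable and $\chi(G_\Phi) = k+1$ if $\Phi$ is satisfiable --- or the reverse convention; the point of the original reduction is precisely that $3$-colorability of $G_\Phi$ encodes unsatisfiability of $\Phi$. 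I would also need the local property that deleting the vertex $t_{i1}$ has the same effect on $G_\Phi$ as deleting the clause $C_i$ has on $\Phi$: concretely, $G_\Phi - t_{i1}$ behaves (as far as its chromatic number is concerned) like $G_{\Phi - C_i}$, so that $\chi(G_\Phi - t_{i1})$ is small iff $\Phi - C_i$ is unsatisfiable. These are the facts I would extract by "careful checking" of the construction, and stating them cleanly is the real content.

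Granting those facts, the proof is a short case analysis. Suppose first that $\Phi$ is not stable. Then either $\Phi$ is satisfiable --- but that contradicts unstability, since a satisfiable CNF formula stays satisfiable after deleting a clause and is therefore stable --- so $\Phi$ must be unsatisfiable, and moreover there is some clause $C_i$ with $\Phi - C_i$ satisfiable. By the construction facts, $\chi(G_\Phi)$ takes the value corresponding to "unsatisfiable" while $\chi(G_\Phi - t_{i1})$ takes the value corresponding to "$\Phi - C_i$ satisfiable", and since $\Phi$ unsatisfiable gives the larger chromatic value and $\Phi - C_i$ satisfiable gives the smaller one (this is the orientation I would verify), we get $\chi(G_\Phi) > \chi(G_\Phi - t_{i1})$. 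Conversely, suppose $\chi(G_\Phi) > \chi(G_\Phi - t_{i1})$ for some $i$. By Observation~\ref{obs:differsbyatmostone}-style reasoning the drop is exactly one, so $\chi(G_\Phi)$ is the "large" value, meaning $\Phi$ is unsatisfiable, and $\chi(G_\Phi - t_{i1})$ is the "small" value, meaning $\Phi - C_i$ is satisfiable; hence $\Phi$ is unsatisfiable with a satisfiable $1$-clause-deleted subformula, i.e., $\Phi$ is not stable.

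I would organize the write-up so that the two directions share the dictionary "$\chi(G_\Phi)$ large $\iff$ $\Phi$ unsatisfiable" and "$\chi(G_\Phi - t_{i1})$ large $\iff$ $\Phi - C_i$ unsatisfiable", both of which are asserted to hold in the Cai--Meyer paper or follow from inspecting Figure~\ref{fig:caimeyertotal}; the equivalence in the lemma is then just unwinding the definition of stability for a CNF formula (satisfiable, or unsatisfiable together with all $\Phi - C$ unsatisfiable). One subtlety worth flagging explicitly: I must make sure the vertex $t_{i1}$ is chosen so that its deletion cleanly corresponds to clause deletion and does not accidentally also kill some global gadget; this is exactly what "careful checking reveals" is doing, and it is where an error in the original construction's bookkeeping would bite.

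The main obstacle is the first step: faithfully reconstructing the Cai--Meyer gadget and verifying the precise claim that $\chi(G_\Phi - t_{i1}) = \chi(G_{\Phi - C_i})$ (up to the additive constant), as opposed to some weaker or differently-shifted relationship. The original reduction was engineered to certify $3$-colorability, not to behave well under single-vertex deletion, so it is genuinely not obvious a priori that deleting $t_{i1}$ simulates deleting $C_i$ rather than doing something more destructive; the bulk of the real work (deferred in spirit to the appendix material referenced around Figure~\ref{fig:caimeyertotal}) is this structural verification, after which the lemma itself is essentially a one-paragraph corollary.
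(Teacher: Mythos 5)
Your proposal matches the paper's proof: both reduce the lemma to the two Cai--Meyer facts that $G_\Phi$ is $3$-colorable iff $\Phi$ is satisfiable and that $G_\Phi - t_{i1}$ is $3$-colorable iff $\Phi - C_i$ is satisfiable, and then simply unwind the definition of stability (noting that an unstable formula must be unsatisfiable, so the chromatic numbers can only be $4$ versus $3$). One small correction: in the Cai--Meyer construction $3$-colorability encodes \emph{satisfiability}, not unsatisfiability as one of your opening sentences suggests --- though this is exactly the orientation you end up using correctly in your case analysis.
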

The proof of Lemma~\ref{lem:cai-meyer} is deferred to Appendix~\ref{app:cai-meyer}. 
What we need now is a way to enhance the construction such that the deletion of a vertex other than $t_{11},\ldots,t_{m1}$, for example $v_\textnormal{s}$, does not decrease the chromatic number. We achieve this by the following lemma. 

\begin{lemma}\label{lem:doubling}
Let $G$ be a graph and $v\in V(G)$. Let $\widehat{G}$ be the graph that results from replicating $v$; that is, $V(\widehat{G}) = V(G) \cup \{v'\}$ and
$E(\widehat{G}) = E(G) \cup \{ \{v',w\} \ | \ \{v,w\} \in E(G)\}$. 
Then $\chi(G) = \chi(\widehat{G}) = \chi(\widehat{G} - v) = \chi(\widehat{G} - v')$.
\end{lemma}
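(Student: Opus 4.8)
The plan is to establish the chain of equalities $\chi(G) = \chi(\widehat{G}) = \chi(\widehat{G} - v) = \chi(\widehat{G} - v')$ by noting that $v$ and $v'$ are ``twins'' (they have exactly the same neighborhood and are nonadjacent), so none of these four graphs is really different from the point of view of proper colorings. First I would observe that $\widehat{G} - v$ and $\widehat{G} - v'$ are both isomorphic to $G$: deleting $v'$ literally undoes the replication, so $\widehat{G} - v' = G$; and deleting $v$ instead leaves a graph in which $v'$ plays exactly the role $v$ did, so $\widehat{G} - v \cong G$ via the identity on $V(G) \setminus \{v\}$ together with $v' \mapsto v$. Since the chromatic number is isomorphism-invariant, this already gives $\chi(\widehat{G} - v) = \chi(\widehat{G} - v') = \chi(G)$.

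It remains to show $\chi(\widehat{G}) = \chi(G)$. For the inequality $\chi(\widehat{G}) \ge \chi(G)$, note that $G = \widehat{G} - v'$ is a subgraph of $\widehat{G}$, and the chromatic number is monotone under taking subgraphs. For the reverse inequality $\chi(\widehat{G}) \le \chi(G)$, take any optimal proper coloring $\coloring$ of $G$ using $\chi(G)$ colors and extend it to $\widehat{G}$ by setting $\coloring(v') \da \coloring(v)$. This is still proper: the only new vertex is $v'$, whose neighbors in $\widehat{G}$ are exactly the neighbors of $v$ in $G$ by construction of $E(\widehat{G})$; since $v'$ and $v$ are nonadjacent in $\widehat{G}$ there is no conflict between them, and for every neighbor $w$ of $v$ we have $\coloring(w) \ne \coloring(v) = \coloring(v')$ because $\coloring$ was proper on $G$. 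Hence $\widehat{G}$ is colorable with $\chi(G)$ colors, so $\chi(\widehat{G}) \le \chi(G)$, and the two inequalities combine to give equality.

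There is essentially no obstacle here; the statement is a routine ``twin vertices don't affect the chromatic number'' observation, and the only thing to be mildly careful about is bookkeeping the isomorphisms and the direction of the subgraph-monotonicity inequalities. If one wanted to be slightly slicker, one could phrase the whole argument via the single remark that a proper coloring of $\widehat{G}$ restricts to a proper coloring of $G$ (as $G \subseteq \widehat{G}$) and, conversely, any proper coloring of $G$ lifts to $\widehat{G}$ by copying the color of $v$ onto $v'$, so the sets of attainable color counts coincide; together with $\widehat{G} - v \cong \widehat{G} - v' = G$ this yields all four equalities at once.
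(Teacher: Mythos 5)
Your proof is correct and follows essentially the same route as the paper: the paper likewise notes that the only nontrivial step is $\chi(\widehat{G})\le\chi(G)$, shown by extending an optimal coloring of $G$ to $\widehat{G}$ by giving $v'$ the color of $v$, with the remaining equalities being immediate from $\widehat{G}-v'=G$ and $\widehat{G}-v\cong G$. Your version merely spells out the isomorphism and monotonicity bookkeeping that the paper leaves implicit.
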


\begin{proof}
The only nontrivial part is to show that $\chi(\widehat{G}) \leq \chi(G)$.
To see this, we start with an arbitrary optimal valid vertex coloring of $G$ and then color $v'$ with the same color
as $v$.
\end{proof}

Lemma~\ref{lem:doubling} is simple and yet very powerful in our context. It allows us to select a set of vertices whose removal will not influence the chromatic number, and thus will not influence whether or not the graph is vertex-stable.
We can use this to obtain the desired reduction.
\begin{theorem} \label{thm:cai-meyer}
\StableThreeCNF polynomial-time many-one reduces to \VertexStability.
\end{theorem}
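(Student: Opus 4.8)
The plan is to reduce \StableThreeCNF to \VertexStability by composing the Cai--Meyer reduction $\Phi \mapsto G_\Phi$ with the vertex-replication trick of Lemma~\ref{lem:doubling}, applied to every vertex of $G_\Phi$ \emph{except} the critical vertices $t_{11},\dots,t_{m1}$. More precisely, I would build a graph $G'_\Phi$ from $G_\Phi$ by replicating, in turn, each vertex $v \in V(G_\Phi) \setminus \{t_{11},\dots,t_{m1}\}$; by Lemma~\ref{lem:doubling} each such replication leaves the chromatic number unchanged and makes both the original vertex and its copy stable (since deleting either of a replicated pair does not change $\chi$). One subtlety is that replicating one vertex changes the graph in which the next replication happens, so I would phrase this as an iterative process and note that Lemma~\ref{lem:doubling} applies verbatim at each step to the current graph; alternatively, one can replicate all chosen vertices "simultaneously" and check directly that deleting any one of a twin-class still leaves a valid optimal coloring by copying a color. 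The map $\Phi \mapsto G'_\Phi$ is clearly polynomial-time computable.

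Next I would verify correctness, i.e.\ that $\Phi \in \StableThreeCNF \iff G'_\Phi \in \VertexStability$. The key point is that the replicated vertices and their copies are all $\chi$-stable in $G'_\Phi$ by construction (Lemma~\ref{lem:doubling}), so $G'_\Phi$ is vertex-stable if and only if each of the vertices $t_{11},\dots,t_{m1}$ is $\chi$-stable in $G'_\Phi$. For this I need that the relevant chromatic numbers are unaffected by the replications: $\chi(G'_\Phi) = \chi(G_\Phi)$ and $\chi(G'_\Phi - t_{i1}) = \chi(G_\Phi - t_{i1})$ for each $i$. The first equality is immediate from repeated application of Lemma~\ref{lem:doubling}. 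For the second, note that deleting $t_{i1}$ from $G'_\Phi$ yields a graph obtained from $G_\Phi - t_{i1}$ by the same sequence of vertex replications (the replicated vertices are distinct from $t_{i1}$), so Lemma~\ref{lem:doubling} again gives $\chi(G'_\Phi - t_{i1}) = \chi(G_\Phi - t_{i1})$. Combining these with Lemma~\ref{lem:cai-meyer}, we get: $G'_\Phi$ is vertex-stable $\iff$ $\chi(G'_\Phi) = \chi(G'_\Phi - t_{i1})$ for all $i$ $\iff$ $\chi(G_\Phi) = \chi(G_\Phi - t_{i1})$ for all $i$ $\iff$ $\Phi$ is stable. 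This is exactly the required equivalence.

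Finally, I would assemble the pieces: Theorem~\ref{thm:StableThreeCNF} says \StableThreeCNF is \coDP-complete, and the reduction just constructed shows \VertexStability is \coDP-hard (this is Lemma~\ref{lem:v-stability-coDP}); since the join is an \ANDomega function for \VertexStability (Theorem~\ref{thm:JoinIsANDomega}), Corollary~\ref{cor:chang-kadin} upgrades this to \ThetaTwo-hardness, and the easy membership argument given in the text completes Theorem~\ref{thm:v-stability}; Lemma~\ref{lem:v-stability-to-stability} then transfers the lower bound to Theorem~\ref{thm:stability}. But for the present statement, Theorem~\ref{thm:cai-meyer}, only the reduction and its correctness are needed.

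I expect the main obstacle to be bookkeeping rather than a deep idea: one must be careful about the order of operations (replication versus $t_{i1}$-deletion commute because the vertex sets involved are disjoint) and must confirm that the list of "safe" vertices $V(G_\Phi)\setminus\{t_{11},\dots,t_{m1}\}$ is exactly the set whose stability is not already guaranteed — i.e.\ that Lemma~\ref{lem:cai-meyer} genuinely isolates the $t_{i1}$ as the only vertices whose deletion can drop $\chi$. A secondary point worth stating explicitly is why it is legitimate \emph{not} to replicate the $t_{i1}$: replicating them would destroy the reduction, since a replicated critical vertex becomes stable, collapsing the distinction we are trying to detect. Everything else is a routine application of the two cited lemmas.
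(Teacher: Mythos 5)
Your proposal is correct and follows essentially the same route as the paper: replicate every vertex of $G_\Phi$ except $t_{11},\dots,t_{m1}$, use Lemma~\ref{lem:doubling} to see that all replicated vertices are stable and that the relevant chromatic numbers are preserved (replication commuting with deletion of the $t_{i1}$), and then invoke Lemma~\ref{lem:cai-meyer}. The iterative-replication subtlety you flag is handled the same way in the paper, by repeated application of Lemma~\ref{lem:doubling}.
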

\begin{proof}
Given a \ThreeCNF-formula $\Phi$, map it to $r({G}_\Phi)$, where $G_\Phi$ is the graph from the reduction by Cai and Meyer~\cite{cai-mey:j:dp} and $r$ denotes the replication of all vertices other than $t_{11},\dots,t_{m1}$. 

If $\Phi$ is not in \StableThreeCNF, then we have $\chi(G_\Phi) > \chi(G_\Phi - t_{i1})$ for some $i\in\{1,\dots,m\}$ by Lemma~\ref{lem:cai-meyer}. 
Furthermore, a repeated application of Lemma~\ref{lem:doubling} yields $\chi(r({G}_\Phi)) = \chi(G_\Phi)$ and
$\chi(r(G_\Phi) - t_{i1}) = \chi(r(G_\Phi - t_{i1})) =
\chi(G_\Phi - t_{i1})$.
Thus $r(G_\Phi)$ is not vertex-stable.
For the converse, suppose that $r(G_\Phi)$ is not vertex-stable.
Let $v\in V(r(G_\Phi))$ be a vertex such that $\chi(r(G_\Phi)) > \chi(r(G_\Phi) - v)$.
From Lemma~\ref{lem:doubling}, we can see that $v = t_{i1}$ for some $i\in\{1,\dots,m\}$. 
By Lemma~\ref{lem:cai-meyer}, this implies 
that $\Phi$ is not stable.
\end{proof}

This completes the proof of Theorem~\ref{thm:v-stability} -- stating that \VertexStability is \ThetaTwo-complete -- which in turn implies Theorem~\ref{thm:stability}, the \ThetaTwo-completeness of \Stability, by Lemma~\ref{lem:v-stability-to-stability}.
Now we briefly turn to some \DP-complete problems. 
Recall that by prefixing a number $k$ to the name of a stability property we additionally require the graph number to be exactly $k$. 

\begin{theorem}\label{thm:k-stability}
The problems $k$-\Stability and $k$-\VertexStability are \NP-complete for $k = 3$ and \DP-complete for $k\ge 4$. 
\end{theorem}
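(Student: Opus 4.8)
The plan is to split the statement into the easy $k=3$ case and the substantive $k\ge 4$ case, and in each case argue both membership and hardness. For the upper bounds, note that $k$-\Stability and $k$-\VertexStability both reduce to a constant number of parallel queries of the form ``$\chi(H)\le j$'' for $j\in\{k-1,k\}$ applied to $G$ and all its $1$-edge-deleted (resp.\ $1$-vertex-deleted) subgraphs: by Observation~\ref{obs:differsbyatmostone}, being $k$-stable is equivalent to $\chi(G)=k$ together with $\chi(G-e)\ge k$ (equivalently $\chi(G-e)\not\le k-1$) for all $e$, so at most two rounds of \NP and \coNP queries suffice; hence $k$-\Stability $\in\DP$, and for $k=3$ the single fact ``is $G$ $3$-colorable and remains so after deleting any edge'' collapses, as I explain next, to an \NP question. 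Concretely, for $k=3$: a $3$-chromatic graph $G$ always has a $3$-critical subgraph, but I would instead observe that deleting an edge from a $3$-chromatic graph can only drop $\chi$ to $2$, i.e.\ make it bipartite, so $G$ is $3$-stable iff $\chi(G)=3$ and $G-e$ is non-bipartite for every $e$; the point is that $G$ being $3$-chromatic but not $3$-stable means some $G-e$ is bipartite, and a graph all of whose edges are ``$3$-stable'' while $\chi(G)=3$ is exactly a $3$-chromatic graph with no edge whose removal makes it bipartite. For membership in \NP one guesses a proper $3$-coloring of $G$ and, for each edge $e=\{u,v\}$, a short certificate that $G-e$ is non-bipartite — but $G-e$ non-bipartite is a \coNP condition, so one cannot guess it directly. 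The right move is: $G-e$ is non-bipartite iff $G-e$ contains an odd cycle; since $G$ is already bipartite-free only because of odd cycles, one can instead guess a single odd closed walk and check it avoids using $e$ as its only ``rescue'', i.e.\ guess for each edge an odd cycle in $G-e$; this is still \NP. Thus $k$-\Stability $\in\NP$ for $k=3$; \NP-hardness for $k=3$ follows because deciding $3$-colorability is \NP-hard and one pads a graph to force $k$-stability when $3$-colorable (e.g.\ take disjoint unions with a rigidly $3$-stable gadget such as a large complete bipartite-plus graph, or observe directly that $3$-colorability many-one reduces to $3$-\Stability via a known gadget that is $3$-stable exactly when the input is $3$-colorable). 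The same padding handles $3$-\VertexStability.

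For the $k\ge 4$ case, membership in \DP is the routine upper bound just sketched. The heart is \DP-hardness. The natural route is to reduce from the \DP-complete problem \StableThreeCNF, or rather from a ``$4$-version'' of it, mirroring the structure already used in the paper: the reduction of Theorem~\ref{thm:cai-meyer} sends a \ThreeCNF-formula $\Phi$ to a graph whose vertex-stability encodes stability of $\Phi$, and by Lemma~\ref{lem:cai-meyer} combined with the Cai--Meyer construction the chromatic number of the image is controlled. I would take the Cai--Meyer graph $G_\Phi$ (with $\chi(G_\Phi)\in\{3,4\}$ depending on satisfiability of $\Phi$), apply the replication operator $r$ from Theorem~\ref{thm:cai-meyer} to kill spurious critical vertices, and then join a single fresh vertex (or a $K_1$) to force the chromatic number up by exactly one into $\{4,5\}$, so that ``$\chi=4$ and vertex-stable'' of the enlarged graph corresponds to ``$\Phi\in\StableThreeUnSat$'' — wait, we need the $k=4$ stratum to be hit by the \emph{unsatisfiable} stable instances; by adjusting how many universal vertices we add, we arrange that the image has $\chi=4$ exactly when $\Phi$ is unsatisfiable, and is $4$-vertex-stable exactly when additionally $\Phi$ is stably unsatisfiable, whereas when $\Phi$ is satisfiable the image has $\chi=3$ (hence trivially not $4$-anything) — but then satisfiable formulas map outside the target, which is wrong for a reduction from $\StableThreeCNF=\ThreeSat\cup\StableThreeUnSat$. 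The fix is to handle the two halves separately, exploiting that \DP-hardness only needs \NP-hardness, \coNP-hardness, and \ANDtwo (Chang--Kadin, as used in the proof of Theorem~\ref{thm:StableThreeCNF}).

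So the cleaner plan for \DP-hardness of $k$-\Stability (and $k$-\VertexStability), $k\ge 4$, is: (i) show \coNP-hardness by reducing \ThreeUnSat (or \UnSat): a formula is unsatisfiable iff the associated (Cai--Meyer-style, replicated, and universal-vertex-padded) graph has chromatic number $k$ and is $k$-vertex-stable — here one wants that when $\Phi$ is unsatisfiable the only vertices whose deletion could drop $\chi$ are ruled out by Lemma~\ref{lem:doubling} and the added universal vertices, and when $\Phi$ is satisfiable $\chi$ is $k-1$ so the graph is not $k$-stable at all; (ii) show \NP-hardness by a separate reduction from \ThreeSat, building a graph that is $k$-stable precisely when $\Phi$ is satisfiable, e.g.\ by combining a satisfiability gadget of chromatic number $k$ (when satisfiable) with enough rigidly $k$-stable padding (complete $(k-1)$-partite blow-ups, or joins of triangles with $K_{k-3}$) so that instability can only come from the satisfiability part; (iii) supply an \ANDtwo function — take disjoint union of the two graphs after first joining each with a common $K_{k-2}$-like gadget, or more simply use the join, which already behaves well for stability-type properties (cf.\ Theorem~\ref{thm:JoinIsANDomega}) and keeps the chromatic number equal to $k$ under the obvious bookkeeping; check that the union/join of two $k$-stable graphs is $k$-stable and conversely. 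Then Chang--Kadin~\cite[Lem.~5]{cha-kad:j:boolean-connectives} gives \DP-hardness. The \VertexStability variant is handled in parallel, using vertex deletions and Lemma~\ref{lem:doubling} exactly as in Theorem~\ref{thm:cai-meyer}; the edge version can alternatively be derived from the vertex version via the self-join trick of Lemma~\ref{lem:v-stability-to-stability} once one checks it preserves the exact value $k$ up to a predictable shift.

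I expect the main obstacle to be step (i)/(ii): engineering a gadget whose chromatic number is \emph{exactly} $k$ in the ``good'' case and whose $k$-stability (for every edge, or every vertex) is equivalent to the formula's stability, while ruling out \emph{all} other sources of critical edges/vertices. The replication Lemma~\ref{lem:doubling} is the key tool for neutralizing unwanted critical vertices, but controlling critical \emph{edges} is harder — a naive join of $K_{k-3}$ to the $\chi\in\{3,4\}$ Cai--Meyer graph introduces many edges inside the $K_{k-3}$ part and many join edges, each of which must be shown $k$-stable; the cleanest remedy is to replace the join by a ``padded disjoint union'': take $G_\Phi' := r(G_\Phi)$ and a disjoint copy of a graph $H_{k}$ with $\chi(H_k)=k$ that is rigidly $k$-stable and $k$-vertex-stable (which exists — e.g.\ a sufficiently symmetric complete multipartite graph minus a perfect matching, or the $(k{-}1)$-times iterated join of $C_5$ with itself truncated appropriately), and argue that the union is $k$-stable iff $H_k$'s contribution is untouched and $G_\Phi'$ contributes nothing critical, which reduces to Lemma~\ref{lem:cai-meyer}. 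Verifying that such an $H_k$ exists and is genuinely edge- and vertex-$k$-stable, and that the union's critical parts are confined to the $t_{i1}$ vertices, is the delicate bookkeeping I would spend the most care on; everything else (the \DP membership, the Chang--Kadin wrap-up, the $k=3$ \NP case) is routine.
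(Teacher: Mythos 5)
Your $k=3$ analysis is essentially fine (3-stability amounts to 3-colorability of $G$ together with non-bipartiteness of $G$ and of every $G-e$, the latter conditions being polynomial-time checkable, and disjoint-union padding gives \NP-hardness), but your plan for the $k\ge 4$ lower bound both misses the short argument and contains concrete errors. The paper's proof is a one-liner: reduce from \textsc{Exact}-$k$-\textsc{Colorability}, which is already known to be \NP-complete for $k=3$ and \DP-complete for $k\ge 4$, via $G\mapsto G\cup G$. Since $\chi(H\cup H')=\max\{\chi(H),\chi(H')\}$, the graph $G\cup G$ is automatically stable and vertex-stable (any single deletion touches only one copy, and the other copy keeps the maximum) and has chromatic number exactly $\chi(G)$. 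No Cai--Meyer machinery, no replication, and no Chang--Kadin argument are needed.

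The route you sketch instead has at least two gaps that would sink it as written. First, your ``cleanest remedy''---the disjoint union $r(G_\Phi)\cup H_k$ with $\chi(H_k)=k$---is vacuous: since $\chi(r(G_\Phi))\le 4\le k=\chi(H_k)$, the maximum is always realized by $H_k$, so no deletion inside the formula part can ever lower the chromatic number of the union; the union is $k$-stable if and only if $H_k$ is, independently of $\Phi$. (The join-based padding you propose first does shift $\chi$ correctly, but it creates critical join edges whenever both endpoints are critical vertices in their respective parts---exactly the problem you flag without resolving.) Second, the \ANDtwo function required for your Chang--Kadin wrap-up is never supplied: the join of two graphs with $\chi=k$ has chromatic number $2k$, violating the ``exactly $k$'' constraint built into $k$-\Stability, and the disjoint union is not an AND either, since a graph with $\chi<k$ (stable or not) is simply absorbed by the maximum. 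So your step~(iii) fails for both candidates. All of these difficulties disappear once one notices that stability comes for free from self-union and that the ``exactly $k$'' requirement is precisely the already-classified problem \textsc{Exact}-$k$-\textsc{Colorability}.
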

\begin{proof}
The membership proofs are immediate. For the lower bound we use that \alternativetextsc{Exact}-$k$-\alternativetextsc{Colorability} (the class of all graphs whose chromatic number is not merely at most, but exactly $k$) is $\NP$-complete for $k=3$ and \DP-complete for $k\ge 4$; see~\cite{rot:j:exact-four-colorability}. 
It suffices to check that mapping $G$ to $G\cup G$ reduces \alternativetextsc{Exact}-$k$-\alternativetextsc{Colorability} to $k$-\Stability and $k$-\VertexStability. 
Indeed, for any two graphs $H$ and $H'$, we have $\chi(H\cup H')= \max\{\chi(H),\chi(H')\}$, implying that $G\cup G$ is stable and vertex-stable with $\chi(G)=\chi(G\cup G)$.
\end{proof}

In the previous proof, we used the disjoint union of a graph with itself to render it stable without changing its chromatic number. 
Using a far more complicated construction, we can also ensure the stability of an arbitrary set of edges of a graph while keeping track of how exactly this changes the chromatic number. 
We state this result in the following theorem. 

\begin{lemma}\label{lem:StabilizingGadget}
There is a polynomial-time algorithm that, given any graph $G$ plus a nonempty subset $S\subseteq E(G)$ of its edges, adds a fixed gadget to the graph and then substitutes for every $e\in S$ some gadget that depends on $G$ and $e$, yielding a graph $\widehat{G}$ with the following properties:
\begin{enumerate}
\item 
$\chi(\widehat{G})=\chi(G)+2$.
\item 
All edges in $E(\widehat{G})\setminus (E(G)\setminus S)$ are stable. 
\item Each one of the remaining edges in $E(G)\setminus S$ is stable in $\widehat{G}$ exactly if it is stable in $G$.
\end{enumerate}
\end{lemma}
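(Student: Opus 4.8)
The plan is to assemble $\widehat{G}$ from two independent layers. The \emph{fixed} layer is a join with a small $2$-chromatic graph that is itself stable, say the $4$-cycle $C_4$ (any bipartite graph with no isolated vertex all of whose edges are stable would serve equally well). The \emph{per-edge} layer, applied to each $e\in S$, keeps the edge $e$ and attaches a small ``weak inequality gadget'' on fresh vertices. Writing $k=\chi(G)$ (so $k\ge 2$, since $S\neq\emptyset$) and letting $G'$ be $G$ with all the per-edge gadgets added, I would set $\widehat{G}=G'+C_4$. Since $\chi(A+B)=\chi(A)+\chi(B)$, property~1 reduces to showing $\chi(G')=k$, and the fixed layer already takes care of part of property~2: every edge of the attached $C_4$ is stable because deleting an edge leaves $C_4$ bipartite, and every join edge is stable because, after deleting one of them, three of the four $C_4$-vertices still stay adjacent to all of $G'$ while two of these three remain joined by an edge, so they occupy two colors that cannot be used anywhere in $G'$ and force the total back up to $\chi(G')+2$.

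The heart of the construction is the per-edge gadget, and here the point is a deliberate balancing act. For $e=\{u,v\}\in S$ and $k\ge 3$ I keep $e$ and add a fresh clique $K_{k-1}$ on vertices $a_1,\dots,a_{k-1}$, making $u$ adjacent to $a_1,\dots,a_{k-2}$ and $v$ adjacent to $a_2,\dots,a_{k-1}$ (for $k=2$ a single fresh disjoint edge suffices and there is nothing to force). This gadget has chromatic number exactly $k-1$, so adding it does \emph{not} raise $\chi$ and, crucially, does \emph{not} anchor $\chi(G')$ robustly at $k$; yet in any coloring that uses at most $k-1$ colors the clique exhausts all of them, $u$ is pushed onto the unique color absent from $a_1,\dots,a_{k-2}$, namely $c(a_{k-1})$, and $v$ onto $c(a_1)\neq c(a_{k-1})$, so $c(u)\neq c(v)$ is forced. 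The gadget thus enforces $c(u)\neq c(v)$ ``one color short of anchoring''; and because we also retained $e$, the inequality is enforced a second, edge-disjoint time, which is what will make the gadget's own edges stable.

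The verifications are then routine. We have $\chi(G')\ge\chi(G)=k$ since $G\subseteq G'$, and $\chi(G')\le k$ because any optimal $k$-coloring of $G$ (which separates the endpoints of every $e\in S$) extends over each gadget; this gives property~1. For the remaining cases of property~2: deleting the retained copy of some $e\in S$ leaves its gadget intact, so a $(k-1)$-coloring of $G'-e$ would restrict to a $(k-1)$-coloring of $G$, which is impossible, whence $\chi(G'-e)=k$; deleting any other gadget edge $g$ leaves $e$ in place, so $G\subseteq G'-g$ and again $\chi(G'-g)=k$; in both cases $\chi(\widehat{G}-g)=k+2=\chi(\widehat{G})$. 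For property~3, take $f\in E(G)\setminus S$: each gadget is $(k-1)$-colorable for \emph{any} prescribed pair of distinct terminal colors, so an optimal coloring of $G-f$ (which still separates every $e\in S$) extends over all gadgets, giving $\chi(G'-f)\le\chi(G-f)$, while $G-f\subseteq G'-f$ gives the reverse inequality (here one also uses that, by Observation~\ref{obs:differsbyatmostone}, $\chi(G-f)\ge k-1$, which is exactly the gadget's own chromatic number, so the gadgets cause no trouble); hence $\chi(\widehat{G}-f)=\chi(G-f)+2$, which equals $\chi(\widehat{G})=\chi(G)+2$ precisely when $f$ is stable in $G$.

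The main obstacle is designing the per-edge gadget, and in particular resisting the obvious but fatal temptation to force $c(u)\neq c(v)$ with a clique $K_k$ or any gadget of chromatic number $\ge k$: such a gadget would sit as a subgraph of $\widehat{G}$ alongside the joined $C_4$ and force $\chi(\widehat{G}-f)\ge k+2$ for \emph{every} $f\in E(G)\setminus S$, wrongly turning all edges of $G$ into stable edges of $\widehat{G}$ and killing property~3. The fix — a gadget that separates the terminals using exactly $k-1$ colors, with robustness supplied instead by the retained copy of $e$ — is the real content; the remaining points (the forced-inequality claim for $(k-1)$-colorings, the $k=2$ special case, and the join-edge stability argument) are standard but need to be checked with care.
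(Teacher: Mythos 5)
Your gadget design is combinatorially sound, and the balancing act you identify---forcing $c(u)\neq c(v)$ with a structure of chromatic number exactly $\chi(G)-1$, one color short of anchoring, while the retained copy of $e$ supplies the robustness---is genuinely the right idea; the colorability verifications you sketch all go through. The fatal problem is that your construction is not a polynomial-time algorithm, and that is the whole difficulty of the lemma. Your per-edge gadget is a clique $K_{k-1}$ with $k=\chi(G)$, so building it requires computing the chromatic number of the input graph, which is \NP-hard. Nor is there any slack to exploit: as you yourself observe, a clique of size at least $\chi(G)$ destroys property~3 (it anchors $\chi(\widehat{G}-f)$ at $\chi(G)+2$ for every $f\in E(G)\setminus S$), while a clique of size less than $\chi(G)-1$ no longer exhausts the palette in a $(\chi(G)-1)$-coloring, so the forcing of $c(u)\neq c(v)$ fails and the retained copy of $e$ becomes critical. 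The clique size must be exactly $\chi(G)-1$, hence $\chi(G)$ must be known exactly. Since the lemma is used inside polynomial-time reductions, this gap cannot be waved away.

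The paper's proof is built around evading precisely this obstacle: instead of a clique calibrated to $\chi(G)$, it uses a fresh copy $G'_e$ of $G$ itself (suitably joined to the endpoints of $e$ via new vertices and then partially replicated) as the color-exhausting structure, because a copy of $G$ trivially has chromatic number exactly $\chi(G)$ and is constructible in polynomial time without that number ever being computed. The resulting gadget $H_e$ behaves like an edge under $(\chi(G)+1)$-colorings and like a nonedge under $(\chi(G)+2)$-colorings, which is the same ``one color short of anchoring'' phenomenon you are after. To rescue your argument you would need to replace $K_{k-1}$ by a polynomial-time constructible graph playing the same role, which in effect leads you back to a construction of that kind.
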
 
Owing to space constraints, the proof of Lemma~\ref{lem:StabilizingGadget} is deferred to Appendix~\ref{app:StabilizingGadget}. 
However, we can at least provide a brief sketch of the construction for the case where only one edge is stabilized, that is $S=\{e\}$, omitting the verification of the properties. Join a new edge 
$\{w_1',w_2'\}$ 
to the given $G$, remove $e$, join one of its endpoints to $G_e'$, an initially disjoint copy of $G$, and the other one to a new vertex, $u_e'$
which is then in turn joined to $G_e'$. Finally replicate all vertices outside of $G$, yielding in particular an edge-free copy $G_e''$ of $G$. Figure~\ref{fig:StabilizingGadget} in Appendix~\ref{app:StabilizingGadget} displays the relevant parts of the construction for a simple example with a singleton $S=\{e\}$. 

Note that this construction allows us to reduce the problem of deciding whether in a given selection of edges all of them are stable to \Stability by stabilizing all other edges. 
Moreover, it yields the following \ANDomega function for \Stability. This is stated in the following corollary, whose quite straightforward proof is deferred to Appendix~\ref{app:and-for-stability} due to the space constraints. 

\begin{corollary}\label{cor:and-for-stability}
Mapping $k$ graphs $G_1,\ldots,G_k$ to $G_1+\cdots+G_k$ with all join edges stabilized using the construction from Lemma~\ref{lem:StabilizingGadget} is an \ANDomega function for \Stability.
\end{corollary}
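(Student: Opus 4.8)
The plan is to set $G=G_1+\cdots+G_k$, let $S$ be the set of all join edges of $G$, and feed the pair $(G,S)$ to the algorithm of Lemma~\ref{lem:StabilizingGadget}, obtaining a graph $\widehat{G}$. Computing the join and running the stabilizing construction both take polynomial time, so the map $(G_1,\ldots,G_k)\mapsto\widehat{G}$ is polynomial-time computable. One caveat: Lemma~\ref{lem:StabilizingGadget} requires $S$ to be nonempty, which fails exactly when fewer than two of the $G_i$ have a vertex; in that degenerate case (which subsumes $k\le 1$) I would instead output $G$ itself, since then $G$ coincides with the unique nonempty $G_i$ (or with the null graph) and is $\chi$-stable precisely when every $G_j$ is $\chi$-stable --- each edgeless $G_j$ being vacuously stable. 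So from now on assume $S\neq\emptyset$.

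The edge set of the join partitions into $S$ and the internal edges $E(G_1)\cup\cdots\cup E(G_k)$, which is exactly $E(G)\setminus S$. By property~2 of Lemma~\ref{lem:StabilizingGadget}, every edge of $\widehat{G}$ lying outside $E(G)\setminus S$ --- that is, every fixed-gadget edge and every edge substituted for a former join edge --- is stable in $\widehat{G}$. Hence $\widehat{G}$ is $\chi$-stable if and only if each internal edge $e\in E(G)\setminus S$ is stable in $\widehat{G}$, and by property~3 this holds if and only if each such $e$ is stable in $G$ itself. (Property~1, which fixes $\chi(\widehat{G})=\chi(G)+2$, is not needed here.)

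It remains to relate stability of an internal edge in the join to its stability in the component it comes from. Fix $i$ and $e\in E(G_i)$. Deleting $e$ affects only the $i$-th factor, so $G-e=G_1+\cdots+(G_i-e)+\cdots+G_k$; combining this with the standard identity $\chi(H+H')=\chi(H)+\chi(H')$ gives $\chi(G)-\chi(G-e)=\chi(G_i)-\chi(G_i-e)$, so $e$ is stable in $G$ if and only if it is stable in $G_i$. Stringing the equivalences together, $\widehat{G}$ is $\chi$-stable if and only if every edge of every $G_i$ is stable in $G_i$, i.e. if and only if every $G_i$ is $\chi$-stable --- which is exactly what it means for this map to be an \ANDomega function for \Stability.

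I expect the only real work to be the bookkeeping around Lemma~\ref{lem:StabilizingGadget}: checking that taking the join together with its join edges as $S$ really does stabilize all and only the ``unwanted'' edges, and cleanly disposing of the degenerate case so that $S$ is genuinely nonempty whenever the lemma is invoked. The chromatic-number arithmetic for joins is routine, and polynomial-time computability of the whole map is immediate.
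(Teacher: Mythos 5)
Your proof is correct and follows essentially the same route as the paper's: apply Lemma~\ref{lem:StabilizingGadget} with $S$ equal to the set of all join edges, and use $\chi(G_1+\cdots+G_k)=\chi(G_1)+\cdots+\chi(G_k)$ to transfer stability of each internal edge between $G_i$ and the join. Your explicit treatment of the degenerate case where $S$ is empty is a small point the paper's proof glosses over, but it does not change the argument.
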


\section{\boldmath The Complexity of 
\texorpdfstring{\betaStability}{Vertex-Cover Stability} and
\texorpdfstring{\betaVertexStability}{Vertex-Cover Vertex-Stability}}
We will now examine the complexity of stability with respect to the vertex-cover number $\beta$. 

First, note that \betaVertexStability is trivially in \P as it consists of the empty graphs. 
\begin{theorem}\label{thm:beta-vertex-stability}
Only the empty graphs are $\beta$-vertex-stable.
\end{theorem}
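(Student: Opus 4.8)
The plan is to prove the equivalence ``$G$ is $\beta$-vertex-stable $\iff$ $G$ is edgeless'' by establishing the two directions separately, both of which are elementary.

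First I would handle the easy inclusion: every empty graph $I_n$ is $\beta$-vertex-stable. Since the empty set is a vertex cover of an edgeless graph, $\beta(I_n)=0$; and $I_n-v=I_{n-1}$ is again edgeless, so $\beta(I_n-v)=0=\beta(I_n)$ for every $v\in V(I_n)$. The null graph $I_0$ is vertex-stable vacuously, as it has no vertices.

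For the converse I would show that any graph $G$ containing at least one edge has a $\beta$-critical vertex, so it is not $\beta$-vertex-stable. The key observation is the standard one-sided monotonicity of minimum vertex covers under vertex deletion: let $C$ be a minimum vertex cover of $G$. Because $G$ has an edge, $C\neq\emptyset$, so we may pick some $v\in C$. Every edge of $G-v$ is an edge of $G$ not incident to $v$, hence covered by a vertex of $C\setminus\{v\}$; thus $C\setminus\{v\}$ is a vertex cover of $G-v$, giving $\beta(G-v)\le\card{C}-1=\beta(G)-1<\beta(G)$. Hence $v$ is $\beta$-critical and $G$ fails to be $\beta$-vertex-stable.

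Putting the two directions together yields the statement. There is no real obstacle here; the only points requiring a moment of care are the boundary case of the null graph and the remark that a graph with an edge has a nonempty minimum vertex cover, which is exactly what guarantees the existence of the vertex $v\in C$ used in the argument.
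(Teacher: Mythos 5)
Your proof is correct and follows essentially the same route as the paper's: the empty case is handled by noting the empty set covers both $G$ and $G-v$, and for a graph with an edge one removes a vertex $v$ belonging to a minimum vertex cover $X$ and observes that $X\setminus\{v\}$ covers $G-v$, making $v$ $\beta$-critical. The only cosmetic difference is that the paper selects $v$ as an endpoint of a specific edge contained in $X$, whereas you take an arbitrary element of the (necessarily nonempty) cover; the argument is identical.
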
 
The easy proof is deferred to Appendix~\ref{app:beta-vertex-stability}. 
Turning to the smaller change of deleting only an edge instead of a vertex, the situation changes radically. 
We will prove with Theorem~\ref{thm:beta-stability} that determining whether a graph is $\beta$-stable is \ThetaTwo-complete.
An important ingredient to the proof is the following analogue to Lemma~\ref{lem:doubling}, which shows how to $\beta$-stabilize an arbitrary edge of a given graph.
The proof is deferred to Appendix~\ref{app:BetaStabilizingGadgetProof} due to the space constraints.

\begin{lemma}\label{lem:BetaStabilizingGadget}
Let $G$ be a graph and $\{v_1,v_2\}\in E(G)$ one of its edges. Create from $G$ a new graph $G'$ by replacing the edge $\{v_1,v_2\}$ by the gadget that consists of four new vertices $u_1$, $u_2$, $u_3$, and $u_4$ with edges $\{u_1,u_2\}$, $\{u_2,u_3\}$, $\{u_3,u_4\}$, and $\{u_4,u_1\}$ (i.e., a new rectangle) and additionally the edges $\{v_1,u_1\}$, $\{v_1,u_3\}$, $\{v_2,u_2\}$, and $\{v_2,u_4\}$. 
(This gadget is displayed in Figure~\ref{fig:BetaStabilizingGadgetB} in Appendix~\ref{app:BetaStabilizingGadgetProof}.)
Then we have $\beta(G')	= \beta(G)+2$, all edges of the gadget are stable in $G'$, and the remaining edges are stable in $G'$ if and only if they are stable in $G$.
\end{lemma}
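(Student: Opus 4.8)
The plan is to reduce the lemma to two facts and then observe that the third claim comes for free. The two facts are: (i)~$\beta(G')=\beta(G)+2$; and (ii)~each of the eight gadget edges --- the four rectangle edges $\set{u_1,u_2},\set{u_2,u_3},\set{u_3,u_4},\set{u_4,u_1}$ and the four attachment edges $\set{v_1,u_1},\set{v_1,u_3},\set{v_2,u_2},\set{v_2,u_4}$ --- is $\beta$-stable in $G'$. To see that the third claim follows, fix an edge $e\in E(G)\setminus\set{\set{v_1,v_2}}$; since $\set{v_1,v_2}$ is still an edge of $G-e$, applying the very same gadget replacement to $G-e$ produces precisely the graph $G'-e$, so fact~(i) applied to $G-e$ yields $\beta(G'-e)=\beta(G-e)+2$. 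Together with $\beta(G')=\beta(G)+2$ this gives $\beta(G'-e)=\beta(G')$ if and only if $\beta(G-e)=\beta(G)$, i.e., $e$ is stable in $G'$ exactly when it is stable in $G$.

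For fact~(i), the upper bound is immediate: a minimum vertex cover $S$ of $G$ contains $v_1$ or $v_2$ (as $\set{v_1,v_2}\in E(G)$), and adjoining $\set{u_2,u_4}$ in the former case or $\set{u_1,u_3}$ in the latter yields a vertex cover of $G'$ of size $\beta(G)+2$. For the lower bound, let $S'$ be an arbitrary vertex cover of $G'$. It contains at least two of $u_1,\dots,u_4$ because a $4$-cycle has vertex cover number $2$. A brief case analysis on $S'\cap\set{v_1,v_2}$ shows that $S'\cap V(G)$, after possibly adjoining one of $v_1,v_2$, is a vertex cover of $G$: whenever $v_i\notin S'$ the two attachment edges at $v_i$ force the two gadget-neighbours of $v_i$ into $S'$, and if $S'$ misses both $v_1$ and $v_2$ then all four $u_j$ lie in $S'$. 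In every case $\card{S'\cap V(G)}\le\card{S'}-2$, so $\beta(G)\le\card{S'}-2$; choosing $S'$ minimum gives $\beta(G')\ge\beta(G)+2$, and combined with the upper bound this is an equality.

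For fact~(ii) I would treat each gadget edge $e^\ast$ by the same scheme. Suppose $e^\ast$ is \emph{not} stable; then $G'-e^\ast$ has a vertex cover $T$ with $\card{T}\le\beta(G')-1=\beta(G)+1$, and $T$ contains neither endpoint of $e^\ast$ (otherwise $T$ would be a vertex cover of $G'$ smaller than $\beta(G')$). Chasing the edges still incident to the two excluded endpoints forces vertices into $T$: for a rectangle edge, say $\set{u_1,u_2}$, one obtains $\set{v_1,v_2,u_3,u_4}\subseteq T$; for an attachment edge, say $\set{v_1,u_1}$, one obtains $\set{u_2,u_3,u_4}\subseteq T$ and, because in $G'$ the neighbourhood of $v_1$ is $(N_G(v_1)\setminus\set{v_2})\cup\set{u_1,u_3}$, also $N_G(v_1)\setminus\set{v_2}\subseteq T$. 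In either case $T$ contains at least two gadget vertices, and $T\cap V(G)$ --- after adjoining $v_2$ in the attachment case to repair the single lost edge $\set{v_1,v_2}$, and without any change in the rectangle case --- is a vertex cover of $G$ of size at most $\card{T}-2\le\beta(G)-1$, a contradiction. The remaining six gadget edges are handled identically; the gadget automorphisms fixing $V(G)$ (swapping $u_1\leftrightarrow u_3$ or $u_2\leftrightarrow u_4$), together with the symmetric roles of $v_1$ and $v_2$, make these literal repetitions.

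I expect the only genuinely delicate point to be the propagation step for the attachment edges $\set{v_i,u_j}$ in fact~(ii): one must argue carefully that removing $v_i$ from $T$ forces \emph{all} of $v_i$'s original neighbours except $v_{3-i}$ into $T$, and that adjoining $v_{3-i}$ is exactly what is needed to make $T\cap V(G)$ cover $E(G)$. Everything else --- the two vertex-cover bounds in fact~(i) and the edge-chasing for the rectangle edges --- is routine bookkeeping, and the complete verification of all eight cases will be carried out in detail in the full proof.
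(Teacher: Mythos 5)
Your proposal is correct and follows essentially the same route as the paper's proof: the same two vertex-cover bounds for $\beta(G')=\beta(G)+2$, the same observation that the gadget substitution commutes with deleting any edge $e\in E(G)\setminus\{\{v_1,v_2\}\}$, and the same reduction by gadget symmetry to one rectangle edge and one attachment edge. The only cosmetic difference is that the paper argues constructively (turning a vertex cover of $G'-e^\ast$ into an equally small one of $G'$ by swapping gadget vertices) where you argue by contradiction from a too-small cover; your lower-bound case analysis in fact~(i) is, if anything, slightly more explicit than the paper's.
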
 

\begin{theorem}\label{thm:beta-stability}
Determining whether a graph is $\beta$-stable is \ThetaTwo-complete.
\end{theorem}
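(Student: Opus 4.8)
The plan is to establish the two bounds separately, with the lower bound following the same blueprint used for Colorability.

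Membership in \ThetaTwo is immediate. Using the \NP oracle $\{(H,k)\mid \beta(H)\le k\}$, we ask in parallel all queries $(G,k)$ and $(G-e,k)$ for $e\in E(G)$ and $0\le k\le\card{V(G)}$; this determines $\beta(G)$ and every $\beta(G-e)$, and $G$ is $\beta$-stable iff $\beta(G-e)=\beta(G)$ for all $e\in E(G)$. (Deleting an edge never raises $\beta$ and lowers it by at most one, so a polynomial number of parallel queries suffices.)

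For the lower bound we invoke Corollary~\ref{cor:chang-kadin}: it is enough to exhibit an \ANDomega function for \betaStability and to show that \betaStability is \coDP-hard. The \ANDomega function is simply the disjoint union. Indeed, $\beta(G_1\cup\dots\cup G_k)=\beta(G_1)+\dots+\beta(G_k)$, and for an edge $e$ lying in the copy of $G_i$ we have $\beta((G_1\cup\dots\cup G_k)-e)=\beta(G_i-e)+\sum_{j\ne i}\beta(G_j)$, which equals $\beta(G_1\cup\dots\cup G_k)$ precisely when $e$ is $\beta$-stable in $G_i$; since every edge of the union lies in exactly one copy, the union is $\beta$-stable iff all of $G_1,\dots,G_k$ are.

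It remains to prove that \betaStability is \coDP-hard, which we plan to do by reducing the \coDP-complete problem \StableThreeCNF (Theorem~\ref{thm:StableThreeCNF}) to it, in close analogy with the reduction of \StableThreeCNF to \VertexStability in the colorability section. There, Cai and Meyer's construction supplied a graph $G_\Phi$ with designated vertices $t_{i1}$ capturing clause $C_i$ (Lemma~\ref{lem:cai-meyer}), and vertex replication (Lemma~\ref{lem:doubling}) neutralized all other vertices. Here the plan is to start from a suitable Karp-style reduction from Satisfiability to Vertex Cover, arranged so that each clause $C_i$ is equipped with its own ``removable'' edge $e_i$ whose deletion makes $C_i$ vacuously satisfiable --- for instance, first rewriting each $C_i$ as $(C_i\vee y_i)\wedge\overline{y_i}$ with a fresh variable $y_i$, so that the unit clause $\overline{y_i}$, and hence its incident gadget edge $e_i$, is exactly what forces $C_i$. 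One then verifies an analogue of Lemma~\ref{lem:cai-meyer}, namely that the resulting graph $H_\Phi$ satisfies $\beta(H_\Phi-e_i)<\beta(H_\Phi)$ for at least one $i$ if and only if $\Phi$ is not stable (that is, $\Phi$ is unsatisfiable while $\Phi-C_i$ is satisfiable for at least one $i$). Finally one applies Lemma~\ref{lem:BetaStabilizingGadget} once to every edge of $H_\Phi$ other than $e_1,\dots,e_m$: each application $\beta$-stabilizes one edge, raises $\beta$ by exactly two, and leaves the stability status of all remaining edges --- in particular of $e_1,\dots,e_m$ --- unchanged, so the final graph $G_\Phi$ has all of its structural edges stable and each $e_i$ stable iff it was stable in $H_\Phi$. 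Hence $G_\Phi$ is $\beta$-stable iff no $e_i$ is critical iff $\Phi\in\StableThreeCNF$, and Corollary~\ref{cor:chang-kadin} then gives \ThetaTwo-hardness.

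The main obstacle is the design and verification in the middle step. The naive Karp reduction makes $\beta(H_\Phi)$ track the maximum number of simultaneously satisfiable clauses rather than mere satisfiability, and this is too coarse: for a stably unsatisfiable $\Phi$ it can happen that dropping some clause $C_i$ leaves this maximum unchanged, which would make $e_i$ critical and wrongly certify $\Phi\notin\StableThreeCNF$. Avoiding this requires choosing the variable and clause gadgets carefully --- presumably passing through a bounded-occurrence variant of Satisfiability, as is standard to rule out a variable occurring in every clause, and tailoring the clause gadget --- so that deleting $e_i$ strictly decreases the minimum vertex cover exactly when $\Phi-C_i$ is satisfiable while $\Phi$ is not, and so that stably unsatisfiable formulas are mapped to graphs whose minimum vertex cover is genuinely larger, leaving no critical edge. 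Establishing these equivalences, and checking that the repeated application of Lemma~\ref{lem:BetaStabilizingGadget} composes as claimed, is the technical heart of the proof; the \ANDomega step and the \ThetaTwo upper bound are routine by comparison.
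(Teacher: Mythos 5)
Your upper bound and your \ANDomega function (the disjoint union) are both correct, and the overall strategy of invoking Corollary~\ref{cor:chang-kadin} is sound in principle. The problem is that the entire burden of the argument then rests on the \coDP-hardness of \betaStability, i.e., on the reduction from \StableThreeCNF, and this is exactly the piece you do not supply. You correctly diagnose why the naive Karp reduction from \ThreeSat to \DecVC fails---the vertex cover number of the standard construction tracks the maximum number of simultaneously satisfiable clauses rather than satisfiability, so for a stably unsatisfiable $\Phi$ a designated edge $e_i$ can still be critical---but you then defer the fix to an unspecified ``carefully chosen'' gadget and a bounded-occurrence detour, and you explicitly label the verification of the vertex-cover analogue of Lemma~\ref{lem:cai-meyer} as the technical heart without carrying it out. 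As it stands this is a plan whose central construction is missing, not a proof; there is no a priori guarantee that a clause gadget with the required property (deleting $e_i$ strictly decreases $\beta$ exactly when $\Phi-C_i$ is satisfiable while $\Phi$ is not) exists, and nothing in the paper's toolbox hands it to you.

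The paper avoids this difficulty altogether by taking a different route: it reduces directly from the \ThetaTwo-hard comparison problem $\{(G,H)\in\mathcal{G}^2\mid \beta(G)>\beta(H)\}$ rather than going through \coDP-hardness plus \ANDomega. Concretely, it uses Lemma~\ref{lem:BetaStabilizingGadget} to replace every edge of $G$ by the stabilizing gadget (making that side $\beta$-stable with a controlled increase of $\beta$), attaches a disjoint $K_2$ to $H$ to plant a single $\beta$-critical edge, pads both graphs so that they have the same number of vertices $c$ and the same additive offset to their original vertex cover numbers, and finally forms the join with all join edges stabilized. Since a join of two graphs on $c$ vertices each has vertex cover number $c+\min\{\beta(\cdot),\beta(\cdot)\}$ plus the fixed gadget contribution, the join acts as a selector: the planted critical edge on the $H$-side matters precisely when that side attains the minimum, so the stability of the final graph encodes the comparison between $\beta(G)$ and $\beta(H)$. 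To complete your proof you would either have to actually build and verify the clause-indexed vertex-cover construction you sketch, or switch to this comparison-based reduction, for which Lemma~\ref{lem:BetaStabilizingGadget} already does all the heavy lifting.
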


\begin{proof}
We reduce from $\{(G,H)\in\mathcal{G}^2\mid \beta(G)>\beta(H)\}$, which is \ThetaTwo-hard~\cite[Thm.~12]{spa-vog:c:theta-two-classic}. (Note that this language is essentially the complement of $\alternativetextsc{CompareVertexCover}$ and that \ThetaTwo is closed under taking the complement.) 
Let $G$ and $H$ be given graphs. 
Replace each edge $e\in E(G)$ by a copy of the stabilizing gadget described in Lemma~\ref{lem:BetaStabilizingGadget}.  
Call the resulting graph $G'$. Clearly, we have $\card{V(G')} = \card{V(G)} + 4\card{E(G)}$. By Lemma~\ref{lem:BetaStabilizingGadget}, $G'$ is $\beta$-stable and $\beta(G') = \beta(G) + 2\card{E(G)}$. 
Moreover, let $H' = H \cup K_2$. The edge in $K_2$ ensures that $H'$ is not $\beta$-stable. Moreover, we have $\beta(H') = \beta(H)+1$ and $\card{V(H')}=\card{V(H)}+2$. 

Now, let
$G'' = G'$, just for consistent notation, and $H'' = H' \cup K_{2\card{E(G)}}$.  
Since $\beta(K_n)=n-1$ for $n\ge 1$, this implies $\beta(G'')-\beta(G) = \card{E(G)} = \beta(H'')-\beta(H)$. 
We finish the construction by adding isolated vertices to either $G''$ or $H''$ such that we achieve an equal number of vertices without changing the vertex cover number; that is, we let 
$G''' ={} G'' \cup I_{\max\{0,\card{V(H'')}-\card{V(G'')}\}}\text{ and }
H''' ={} H'' \cup I_{\max\{0,\card{V(G'')}-\card{V(H'')}\}}$.
Let $c = \card{V(G''')} = \card{V(H''')}$ and $d = \beta(G''')-\beta(G) = \beta(H''')-\beta(H)$. Note that $G'''$ is $\beta$-stable since we stabilized $G'$ with the gadget substitutions and then only added isolated vertices but no more edges. Moreover, $H'''$ is not $\beta$-stable due to the $\beta$-critical edge of $K_2$.

Let $S$ be the join $G'''+ H'''$ with all join edges stabilized, again by the gadget substitution described in
Lemma~\ref{lem:BetaStabilizingGadget}. 
It is easy to see from the proof of Lemma~\ref{lem:BetaStabilizingGadget} that the gadget as a whole behaves just like the edge it replaces, in the sense that an optimal vertex cover of the whole graph contains, without loss of generality, either $v_1$ or $v_2$ or both. Therefore, an optimal vertex cover of $S$ consists of either an optimal vertex cover of $G'''$ and all vertices of $H'''$ or of an optimal vertex cover of $H'''$ and all vertices of $G'''$ plus, in both cases, a constant number $k$ of vertices for covering the gadget edges -- namely two for each former join edge, that is, $k = 2 \cdot \card{V(G''')} \cdot \card{V(H''')}$. 
In the first case, we obtain an optimal vertex cover for $S$ of size $\beta(G''')=\beta(G)+d+c+k$, in the second case one of size $\beta(H''')=\beta(H)+d+c+k$. 

Assume first that $\beta(G)>\beta(H)$. It follows that $\beta(G''')>\beta(H''')$ and thus any optimal vertex cover for $S$ consists of all vertices $V(H''')$, an optimal vertex cover for $G'''$, and $k$ vertices for the gadgets. Since we ensured that $G'''$ is $\beta$-stable, $S$ is $\beta$-stable.
Now, assume that $\beta(G)\le \beta(H)$. Then there is an optimal vertex cover that consists of all vertices of $G'''$, an optimal vertex cover of $H'''$, and again $k$ vertices due to the gadgets. Since $H'''$ not  $\beta$-stable, as pointed out above, $S$ is not $\beta$-stable either. 
We conclude that $S$ is $\beta$-stable exactly if $\beta(G)>\beta(H)$, thus proving that $\beta$-stability is \ThetaTwo-hard and therefore \ThetaTwo-complete. 
\end{proof}

\section{Unfrozenness}
We begin with the observation that both for Colorability and for Vertex Cover adding a vertex is too generous a modification to be interesting. The trivial proof is found in Appendix~\ref{app:v-unfrozenness}.
\begin{theorem}\label{thm:v-unfrozenness}
There is no vertex-unfrozen graph and only one $\beta$-vertex-unfrozen graph, namely the null graph (i.e., the graph with the empty vertex set).
\end{theorem}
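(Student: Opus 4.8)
The plan is to exploit the fact that adding a \emph{universal} vertex---one made adjacent to every existing vertex---raises the chromatic number of \emph{any} graph by exactly one, and likewise raises the vertex-cover number of any \emph{nonempty} graph. Both halves of the theorem then follow from this single observation together with a quick check of the degenerate null-graph cases.

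For Colorability, let $G$ be arbitrary and let $H$ be obtained from $G$ by adding one new vertex $v$ adjacent to all of $V(G)$ (so $H = G + K_1$ in the join notation; if $G$ is the null graph, $v$ is simply isolated). In any proper coloring of $H$, the color of $v$ occurs on no vertex of $G$, so $\chi(H) \ge \chi(G) + 1$; conversely, extending an optimal coloring of $G$ by a fresh color for $v$ shows $\chi(H) \le \chi(G) + 1$. Hence $\chi(H) = \chi(G) + 1 \neq \chi(G)$, so $G$ is not vertex-unfrozen. As $G$ was arbitrary---including the null graph, where $\chi$ jumps from $0$ to $1$---no graph is vertex-unfrozen.

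For Vertex Cover, I would first observe that the null graph \emph{is} $\beta$-vertex-unfrozen: its only one-vertex extension is $K_1$, and $\beta(K_1) = 0 = \beta(K_0)$. Now let $G$ be any nonempty graph and again let $H$ be $G$ with an added universal vertex $v$. We have $\beta(G) \le \card{V(G)} - 1$, because $V(G) \setminus \{x\}$ is a vertex cover of $G$ for any choice of $x \in V(G)$ (equivalently, invoke the Gallai identity $\card{V(G)} = \alpha(G) + \beta(G)$ with $\alpha(G) \ge 1$). Let $C$ be any vertex cover of $H$. If $v \notin C$, then $C$ must contain every neighbor of $v$, i.e., $C \supseteq V(G)$, whence $\card{C} \ge \card{V(G)} \ge \beta(G) + 1$. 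If $v \in C$, then $C \setminus \{v\}$ covers all edges of $G$, so $\card{C} \ge \beta(G) + 1$. In both cases $\beta(H) \ge \beta(G) + 1 > \beta(G)$, so $G$ is not $\beta$-vertex-unfrozen. Combined with the null-graph case, this shows the null graph is the unique $\beta$-vertex-unfrozen graph.

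There is no genuine obstacle here---the statement is essentially an exercise---and the only points demanding a little attention are the degenerate null-graph cases and, in the $\beta$-part, the remark that a cover omitting the universal vertex is forced to contain all of $V(G)$ and hence cannot be smaller than $\beta(G)+1$.
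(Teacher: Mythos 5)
Your proof is correct and follows the same approach as the paper, which likewise argues that adding a universal vertex increases both $\chi$ and $\beta$ by exactly one except in the degenerate case of the null graph, where $\beta(K_1)=0=\beta(K_0)$ but $\chi(K_1)=1\neq 0$. You merely spell out in more detail what the paper states in one sentence.
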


Both problems are far more interesting in the default setting, that is, for adding edges. 
The \ThetaTwo-completeness of deciding whether a given graph is $\beta$-unfrozen can be obtained by a method similar to the one we used to establish Theorem~\ref{thm:beta-stability}; see Appendix~\ref{app:beta-unfrozenness} for the proof. 
\begin{theorem}\label{thm:beta-unfrozenness}
Determining whether a graph is $\beta$-unfrozen is \ThetaTwo-complete.
\end{theorem}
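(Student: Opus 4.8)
The plan is to mimic the structure of the proof of Theorem~\ref{thm:beta-stability}, replacing edge \emph{deletion} by edge \emph{addition} throughout. Membership in \ThetaTwo is immediate: with parallel queries to an oracle for the standard vertex-cover problem $\{(G,k)\mid\beta(G)\le k\}$ we can compute $\beta(G)$ and $\beta(G+e')$ for every nonedge $e'\in\coE(G)$, and $G$ is $\beta$-unfrozen exactly if all these values agree with $\beta(G)$. For hardness we again reduce from $\{(G,H)\in\mathcal{G}^2\mid\beta(G)>\beta(H)\}$, which is \ThetaTwo-hard by~\cite[Thm.~12]{spa-vog:c:theta-two-classic}, and the goal is to build a single graph $S$ that is $\beta$-unfrozen if and only if $\beta(G)>\beta(H)$.

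The first ingredient I need is a \emph{complementary} gadget to Lemma~\ref{lem:BetaStabilizingGadget}: a polynomial-time construction that takes a graph and a nonedge $e'=\{v_1,v_2\}$ and replaces (or augments) it so that $e'$ becomes $\beta$-unfrozen while adding a fixed constant to $\beta$ and leaving the unfrozenness status of every other nonedge unchanged; this is presumably the content of a lemma proved in Appendix~\ref{app:beta-unfrozenness}, and I would invoke it the same way Theorem~\ref{thm:beta-stability} invokes Lemma~\ref{lem:BetaStabilizingGadget}. Applying it to every nonedge of $G$ produces a graph $G'$ that is $\beta$-unfrozen with $\beta(G')=\beta(G)+c_1\card{\coE(G)}$ for the per-gadget constant $c_1$. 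On the $H$ side I need an $H'$ that is \emph{not} $\beta$-unfrozen and whose $\beta$ is controlled: taking $H'=H\cup I_2$, say, makes the single nonedge between the two new isolated vertices frozen (adding it forces $\beta$ up by one), while $\beta(H')=\beta(H)$; then pad with a suitable clique and with isolated vertices so that $H''$ and $G''$ have the same vertex count and the same offset $d$ between their $\beta$-value and that of $H$ resp.\ $G$, exactly as in the proof of Theorem~\ref{thm:beta-stability}.

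The heart of the argument, as before, is to form $S=G'''+H'''$ with all join edges made $\beta$-unfrozen by the new gadget, and then analyze $\beta(S)$ and the effect of adding a single nonedge of $S$. Because every vertex of $G'''$ is already adjacent to every vertex of $H'''$ in the join, the only nonedges of $S$ lie \emph{inside} $G'''$, \emph{inside} $H'''$, or \emph{inside} a join gadget. An optimal vertex cover of $S$ takes, up to the $k$ vertices needed for the gadgets, all of $V(H''')$ together with an optimal cover of $G'''$, or symmetrically all of $V(G''')$ with an optimal cover of $H'''$, whichever is cheaper; so adding an internal nonedge of the side that is \emph{not} fully taken does nothing, while adding one on the fully-taken side is automatically harmless too. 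Hence the unfrozenness of $S$ is governed solely by the \emph{lighter} of the two sides: if $\beta(G)>\beta(H)$ then $G'''$ is covered optimally and $H'''$ is taken in full, so $S$ inherits the $\beta$-unfrozenness we built into $G'''$ (and the join gadgets); if $\beta(G)\le\beta(H)$ then some optimal cover of $S$ takes all of $G'''$ and an optimal cover of $H'''$, and the frozen nonedge planted in $H'''$ stays frozen in $S$.

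The main obstacle I anticipate is constructing and verifying the edge-\emph{adding} analogue of Lemma~\ref{lem:BetaStabilizingGadget}. Stabilizing against deletion is easy — one just makes the edge redundant by routing the cover obligation through a small cycle, as in the rectangle gadget — but forcing a nonedge to be \emph{unfrozen} means guaranteeing that the two endpoints are never \emph{both} outside some optimal vertex cover, i.e.\ that at least one of $v_1,v_2$ lies in every optimal cover regardless of whether we add $\{v_1,v_2\}$; and this must hold \emph{locally}, without depending on the rest of $G$, while simultaneously not spoiling the frozen/unfrozen status of the other nonedges. I would look for a gadget that attaches pendant structures to $v_1$ and $v_2$ forcing each of them into every optimal cover (for instance, each gets its own private triangle or private $K_2$), so that $\{v_1,v_2\}$ is trivially covered whether present or not; the bookkeeping is then to check that this raises $\beta$ by a fixed amount and that the newly introduced nonedges are themselves unfrozen. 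Once that lemma is in hand, the rest of the reduction is the by-now-routine padding-and-join argument, so I expect the write-up to be short modulo that appendix.
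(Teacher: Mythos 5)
Your high-level architecture---pad $G$ and $H$ to equal vertex counts, join them, and exploit $\beta(A+B)=c+\min\{\beta(A),\beta(B)\}$ for $\card{V(A)}=\card{V(B)}=c$---is exactly the paper's, but the proof has two genuine problems. The first is the ingredient you yourself flag as the main obstacle: a local gadget that unfreezes a prescribed nonedge while adding a \emph{fixed} constant to $\beta$. No such gadget appears in the paper, and your sketched candidates do not work. Attaching a private pendant edge (or triangle) to a vertex $v$ does guarantee that \emph{some} optimal cover contains $v$, but the resulting increment to $\beta$ is $\beta_v(G)-\beta(G)\in\{0,1\}$ (respectively $1$ or $2$), i.e.\ it depends on whether $G$ already has an optimal cover through $v$---so it is not a fixed constant. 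Worse, unfreezing \emph{every} nonedge of $G$ this way means attaching a pendant to every vertex, which forces $\beta(G')=\card{V(G)}$ regardless of $\beta(G)$ and erases exactly the quantity the reduction must preserve. The paper sidesteps gadgets entirely: it takes $G'=(G\cup I_h)+(G\cup I_h)$. Every nonedge of this self-join lies inside one copy of $G\cup I_h$, and adding it cannot raise $\beta(G')=\beta(G)+g+h$ because an optimal cover may simply take all of the \emph{other} copy plus an optimal cover of the unmodified one. That one construction delivers, for free, everything you were hoping to get from a per-nonedge gadget.

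The second problem is that your final case analysis is inverted. For $S=G'''+H'''$ with equal vertex counts, the optimal cover takes \emph{all} of the side with the \emph{larger} $\beta$ and an optimal cover of the side with the \emph{smaller} $\beta$; the unfrozenness of $S$ is therefore governed by the internal structure of the lighter side (you state this correctly in one sentence and then apply it backwards). If $\beta(G)>\beta(H)$, the lighter side is $H'''$, so the frozen nonedge you planted there via $I_2$ survives into $S$ (adding it raises $\min\{\beta(G'''),\beta(H'''+e)\}$ by one), and $S$ is \emph{not} unfrozen; if $\beta(G)\le\beta(H)$, the lighter side is the unfrozen $G'''$ and $S$ \emph{is} unfrozen. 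So your construction, were the gadget available, would decide $\beta(G)\le\beta(H)$, not $\beta(G)>\beta(H)$. Since \ThetaTwo is closed under complement this is repairable by relabeling, but as written the correctness argument fails. (A minor further point: ``stabilizing the join edges'' is vacuous here---join edges are edges, and only nonedges are relevant to unfrozenness; the plain join has no cross nonedges, so nothing needs to be done to it.)
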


Now, we would like to show the analogous result that \Unfrozenness is \ThetaTwo-complete as well. This turns out to be a very difficult task, however. 
There are many clues suggesting the hardness of \Unfrozenness, which exhibits a far richer structure than all of the problems listed in Table~\ref{tab:OverviewTable} as easy. The latter problems are either empty or singletons or consist of all independent sets or all cliques, while \Unfrozenness contains large classes of different graphs. We can even produce arbitrarily many new complicated unfrozen graphs using the graph join. There are no clearly identifiable characteristics to these unfrozen graphs to be leveraged. 
Instead, we give a sufficient condition for the \ThetaTwo-completeness of \Unfrozenness, namely the existence of a polynomial-time computable construction that turns arbitrary graphs into unfrozen ones without changing their chromatic number in an intractable way. 
\begin{theorem}\label{thm:unfrozenness}
Assume that there are polynomial-time computable functions  $f\colon \mathcal{G}\to \mathcal{G}$ and $g\colon \mathcal{G}\to \Z$ such that for any graph $G$ we have that $f(G)$ is unfrozen and $\chi(f(G))=\chi(G)+g(G)$. 
Then \Unfrozenness is \ThetaTwo-complete.
\end{theorem}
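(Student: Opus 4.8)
The plan is to get the \ThetaTwo upper bound for free, exactly as for \Stability and \VertexStability, and to put all the work into \ThetaTwo-hardness. Membership follows by querying an \NP oracle in parallel on $(\Gamma,k)$ and on $(\Gamma+e,k)$ for every nonedge $e$ of $\Gamma$ and every $k\le\card{V(\Gamma)}$, and then checking in polynomial time that $\chi(\Gamma+e)=\chi(\Gamma)$ for all $e$. For the hardness I would reduce from $\{(G,H)\in\mathcal{G}^2\mid\chi(G)>\chi(H)\}$, which is \ThetaTwo-hard because it is, up to the routine handling of malformed inputs, the complement of \textsc{CompareColorability} (shown \ThetaTwo-complete in Appendix~\ref{app:CompareColorability}) and \ThetaTwo is closed under complement. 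The idea is to use the hypothetical ``unfreezer'' $f$ to build from $(G,H)$ a single graph carrying exactly one potentially frozen nonedge, placed so that it is frozen precisely when $\chi(G)\le\chi(H)$; this mirrors the way Theorem~\ref{thm:beta-stability} inserts a single $\beta$-critical edge.

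Given $(G,H)$, first compute $a=g(G)$ and $b=g(H)$; note that $g$ returns polynomially bounded values, since $g(G)=\chi(f(G))-\chi(G)$ with $0\le\chi(f(G))\le\card{V(f(G))}$, so all cliques introduced below have polynomial size. Set $N=\abs{a}+\abs{b}+3$, $G^{*}=f(G)+K_{N-a+1}$, and $H^{*}=f(H)+K_{N-b}$. Joining an unfrozen graph with a complete graph yields an unfrozen graph---every nonedge of $A+K_m$ lies inside $A$, and $\chi((A+e)+K_m)=\chi(A+e)+m$---so $G^{*}$ and $H^{*}$ are unfrozen, with $\chi(G^{*})=\chi(G)+N+1$ and $\chi(H^{*})=\chi(H)+N$, both at least $2$; moreover $\chi(G)>\chi(H)$ iff $\chi(G^{*})\ge\chi(H^{*})+2$. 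Next let $F=H^{*}+I_{2}$, where the two new vertices $p,q$ of the $I_{2}$ are joined to all of $H^{*}$ but not to each other; then $\chi(F)=\chi(H^{*})+1$, and, because $H^{*}$ is unfrozen, the only frozen nonedge of $F$ is $e_{0}=\{p,q\}$, for which $\chi(F+e_{0})=\chi(H^{*}+K_{2})=\chi(F)+1$. Finally output $\Gamma=G^{*}\cup F$.

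The correctness check then reduces to inspecting the nonedges of $\Gamma$, using Observation~\ref{obs:differsbyatmostone} and $\chi(\Gamma)=\max\{\chi(G^{*}),\chi(F)\}\ge 2$. Adding a nonedge inside $G^{*}$, or a nonedge of $F$ other than $e_{0}$, leaves that component's chromatic number---and hence $\chi(\Gamma)$---unchanged; adding an edge between the two components never increases $\chi(\Gamma)$, since a proper $\chi(\Gamma)$-coloring can be repaired by permuting the colors within one component (there are at least two colors to work with); and adding $e_{0}$ yields $\chi(\Gamma+e_{0})=\max\{\chi(G^{*}),\chi(H^{*})+2\}$, which strictly exceeds $\chi(\Gamma)=\max\{\chi(G^{*}),\chi(H^{*})+1\}$ exactly when $\chi(G^{*})<\chi(H^{*})+2$, i.e., exactly when $\chi(G)\le\chi(H)$. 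Hence $\Gamma$ is unfrozen iff $\chi(G)>\chi(H)$; the map $(G,H)\mapsto\Gamma$ invokes only $f$, $g$, joins, and a disjoint union and is therefore polynomial-time computable; so \Unfrozenness is \ThetaTwo-hard and, with the membership above, \ThetaTwo-complete.

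I expect the crux to be isolating the ``$+I_{2}$'' gadget that contributes a single conditionally frozen nonedge with a chromatic-number jump of exactly one, together with the precise clique paddings (the $N-a+1$ versus $N-b$) that make ``$e_{0}$ is frozen'' line up with ``$\chi(G)\le\chi(H)$'' rather than with an off-by-one variant; verifying that unfrozenness survives a join with a clique, and the standard behaviour of $\chi$ under disjoint unions and joins, should be routine. It is worth noting that this argument in fact produces, for every pair $(G,H)$, an explicit graph whose unfrozenness encodes $\chi(G)>\chi(H)$, so the only missing ingredient for an unconditional \ThetaTwo-completeness proof for \Unfrozenness really is a single construction $f$ with the two stated properties.
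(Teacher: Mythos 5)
Your proof is correct and follows essentially the same strategy as the paper's: plant a single conditionally frozen nonedge via a join with $I_2$, calibrate the chromatic numbers with clique paddings, and take a disjoint union so that the component with the larger chromatic number masks that nonedge exactly when the comparison goes the right way. The only differences are mirror images of each other: you reduce from $\{(G,H)\mid\chi(G)>\chi(H)\}$ rather than from $\textsc{CompareColorability}$ and consequently must apply $f$ to \emph{both} input graphs (since in your ``unfrozen'' case the dominant component is the one derived from $G$, whose own nonedges are not masked), whereas the paper applies $f$ only to $H$ and lets the masking by $H''$ handle every nonedge of the $G$-side.
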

The proof of Theorem~\ref{thm:unfrozenness} is deferred to Appendix~\ref{app:unfrozenness}. It is similar in flavor to the one of Theorem~\ref{thm:beta-unfrozenness} and reduces from  $\alternativetextsc{CompareColorability}$, whose \ThetaTwo-hardness is stated now. 
\begin{theorem}\label{thm:CompareColorability}
$\alternativetextsc{CompareColorability}=\{(G,H)\in\mathcal{G}^2\mid \chi(G)\le\chi(H)\}$ is \ThetaTwo-hard. 
\end{theorem}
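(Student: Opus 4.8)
The plan is to reduce from a known \ThetaTwo-hard problem about comparing chromatic numbers after combining two graphs. The natural starting point is Wagner's \ThetaTwo-hardness results, which give the \ThetaTwo-completeness of problems of the form ``given graphs $G_1,\ldots,G_{2k}$ with $\chi(G_1)\ge\chi(G_2)$, $\chi(G_3)\ge\chi(G_4)$, etc., is the number of indices $i$ with $\chi(G_{2i-1})>\chi(G_{2i})$ odd?'' More directly, one can use the \ThetaTwo-completeness of $\textsc{OddChromaticNumber}=\{G\mid \chi(G)\text{ is odd}\}$, which follows from Wagner's techniques (the analogue for Colorability of the \textsc{OddVertexCover} result cited in the introduction). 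I would first reduce \textsc{OddChromaticNumber}, or an appropriate ``weighted'' variant of it, to \textsc{CompareColorability}; the standard trick is that $\chi(G)\le\chi(H)$ can encode a single NP-query comparison, and by Wagner's padding/join machinery, chaining logarithmically many such comparisons into a single comparison instance captures all of \ThetaTwo.

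Concretely, I would proceed as follows. First, recall the building block: using the disjoint union, $\chi(G\cup H)=\max\{\chi(G),\chi(H)\}$, and using the join, $\chi(G+H)=\chi(G)+\chi(H)$. These let me add and take maxima of chromatic numbers in polynomial time, and in particular let me shift a graph's chromatic number by a known constant (join with $K_t$ adds $t$) and combine a family of graphs disjunctively. Second, I would invoke the ``mind-change'' characterization of \ThetaTwo: a language is \ThetaTwo-hard iff it is hard for the class of problems decidable by asking a sorted sequence of NP-queries $q_1,\ldots,q_m$ (with the guarantee that a ``yes'' answer to $q_{j}$ implies a ``yes'' to $q_{j-1}$) and outputting the parity of the number of ``yes'' answers. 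Using the standard Colorability encoding, each $q_j$ becomes a pair $(A_j,B_j)$ with $\chi(A_j)\le\chi(B_j)$ iff $q_j$ is a ``yes'' instance, and the sorted guarantee can be enforced by taking unions so that $\chi(B_1)\le\chi(B_2)\le\cdots$ in the relevant sense. Third, I would bundle these $m$ comparisons into a single comparison: scale the $j$-th comparison by attaching it (via join with suitable $K_t$'s) at a distinct ``weight level,'' combine via disjoint union on one side and via a carefully staggered construction on the other, so that the parity of mind-changes is read off from whether $\chi(G^\ast)\le\chi(H^\ast)$ for the two resulting graphs $G^\ast,H^\ast$. This is exactly the construction Wagner uses for \textsc{OddVertexCover} and its Colorability analogue, transcribed to the comparison form, so I would cite \cite{wag:j:more-on-bh} for the core combinatorial identity and only verify the chromatic-number arithmetic.

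The main obstacle is the bookkeeping in the third step: making the weights large enough that the contributions of different comparison levels do not interfere, while keeping everything polynomial-size, and ensuring that the single final inequality $\chi(G^\ast)\le\chi(H^\ast)$ flips exactly when the parity of mind-changes flips. The subtlety is that $\max$ and $+$ behave differently from the Boolean connectives, so one has to choose the scaling (geometric gaps between weight levels) with care and argue that no ``carry'' between levels occurs. I expect this to be routine given Wagner's template but tedious; the cleanest exposition is probably to first establish \ThetaTwo-hardness of the equality version $\{(G,H)\mid\chi(G)=\chi(H)\}$ (which Wagner explicitly notes follows from his techniques, as quoted in the related-work section), and then derive the $\le$-version from it by a short padding argument. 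Since \ThetaTwo is closed under complement, hardness of \textsc{CompareColorability} and of its complement are equivalent, which gives some flexibility in choosing which direction of the inequality to target.
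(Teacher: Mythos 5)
Your overall strategy---invoke Wagner's odd-counting criterion for \ThetaTwo-hardness, use the standard reduction $h$ from \ThreeSat to \ThreeCol as the atomic building block, and use the join identity $\chi(G_1+\cdots+G_k)=\chi(G_1)+\cdots+\chi(G_k)$ to turn counts of satisfiable formulas into chromatic numbers---is exactly the paper's. The gap sits in the step you yourself flag as the main obstacle: how the parity of the number of ``yes'' answers is read off from a single inequality $\chi(G^\ast)\le\chi(H^\ast)$. The mechanism you propose (distinct ``weight levels,'' geometric gaps between them, a ``staggered'' construction, worrying about carries) is both unnecessary and unlikely to work as stated: the parity of $m$ bits is not a threshold function of any weighted sum of those bits, so no choice of scaling lets one comparison compute it---unless the chain condition is exploited, and you never say how it enters.

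The actual resolution is a two-line observation, not a weighting scheme. Under Wagner's hypothesis $x_1\in D\Leftarrow x_2\in D\Leftarrow\cdots\Leftarrow x_{2k}\in D$, the ``yes'' instances form a prefix of length $j=\card{\{x_1,\ldots,x_{2k}\}\cap D}$, so $\card{\{x_1,x_3,\ldots,x_{2k-1}\}\cap D}-\card{\{x_2,x_4,\ldots,x_{2k}\}\cap D}=\lceil j/2\rceil-\lfloor j/2\rfloor$, which is $1$ if $j$ is odd and $0$ if $j$ is even; hence ``$j$ even'' is equivalent to the two half-counts being equal, and also to the first being at most the second. This reduces the criterion to comparing two plain counts, each realized directly---no weights, no levels---as the chromatic number of the join of the corresponding graphs $h(\cdot)$, up to an additive term depending only on $k$ that is the same on both sides; the direction of the inequality is handled by the closure of \ThetaTwo under complement. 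The paper packages this interleaving once and for all as a general comparison criterion (Lemma~\ref{lem:compare}, a corrected form of a lemma of Spakowski and Vogel) and then the reduction is literally your two joined graphs with a single inequality between them. Your fallback of first proving the equality version and then padding is also viable (it is Lemma~\ref{lem:equal}), but it buys nothing, since both versions drop out of the same observation.
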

Theorem~\ref{thm:CompareColorability} is proved essentially in the same way as Wagner~\cite[Thm.~6.3.2]{wag:j:more-on-bh} proves the \ThetaTwo-hardness of \alternativetextsc{OddColorability}. 
As he suggests~\cite[page~79]{wag:j:more-on-bh}, it is 
rather straightforward to translate the hardness result 
for \alternativetextsc{OddColorability} into one for \alternativetextsc{EqualColorability}. 
This holds true for \alternativetextsc{CompareColorability} as well. 
The method for obtaining these results is easily generalized to yield 
two sufficient criteria for \ThetaTwo-hardness, stated as Lemmas~\ref{lem:equal} and~\ref{lem:compare} 
in Appendix~\ref{app:CompareColorability}. 
We use the latter lemma -- stated in a somewhat flawed form by 
Spakowski and Vogel~\cite[Lem.~9]{spa-vog:c:theta-two-classic} -- to 
prove Theorem~\ref{thm:CompareColorability}. 
See Appendix~\ref{app:CompareColorability} for all details.

Note that an analogue to Lemma~\ref{lem:StabilizingGadget} for unfreezing instead of stabilizing edges would be sufficient to satisfy the assumption of Theorem~\ref{thm:unfrozenness}. However, based on our efforts we suspect that a suitable gadget -- if one exists -- must be of significantly higher complexity than the one in Figure~\ref{fig:StabilizingGadget} in Appendix~\ref{app:StabilizingGadget}. 

\section{Two-Way Stability}
A graph is two-way stable if it is stable with respect to both the deletion and addition of an edge. 
First, we note that the analogous problem with respect to vertices is trivial for both Colorability and Vertex Cover. 
The following is an immediate consequence of Theorem~\ref{thm:v-unfrozenness}. 

\begin{theorem}\label{thm:v-twowaystability}
There is no vertex-two-way-stable graph and only one $\beta$-vertex-two-way-stable graph, namely the null graph with the empty vertex set.
\end{theorem}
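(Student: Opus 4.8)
The plan is to derive the statement directly from Theorem~\ref{thm:v-unfrozenness} together with the definition of two-way stability, exactly as the sentence preceding the theorem announces. By definition, a graph is $\graphnumber$-vertex-two-way-stable precisely when it is both $\graphnumber$-vertex-stable and $\graphnumber$-vertex-unfrozen; that is, $\graphnumber\text{-}\VertexTwoWayStability = \graphnumber\text{-}\VertexStability \cap \graphnumber\text{-}\VertexUnfrozenness$. So the first step is to record this intersection characterization, and the second is simply to intersect the sets on the right-hand side, which Theorem~\ref{thm:v-unfrozenness} already pins down for the two graph numbers in question.

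For the chromatic number, Theorem~\ref{thm:v-unfrozenness} gives $\VertexUnfrozenness = \emptyset$, so the intersection is empty and no graph is vertex-two-way-stable. For the vertex-cover number, the same theorem gives $\beta\text{-}\VertexUnfrozenness = \{K_0\}$, so the null graph $K_0$ is the only possible $\beta$-vertex-two-way-stable graph. It then remains only to check that $K_0$ is indeed $\beta$-vertex-stable; this is immediate, since $K_0$ has no vertices and hence the requirement that every vertex be $\beta$-stable holds vacuously. Therefore $K_0 \in \beta\text{-}\VertexTwoWayStability$, and it is the unique element, matching the claim. (Alternatively one could just quote Proposition~\ref{pro:CliqueAndIS}(6), where $\beta\text{-}\VertexTwoWayStability = \{K_0\}$ is already recorded, but carrying out the one-line argument here keeps the section self-contained.)

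I do not expect a genuine obstacle here: the whole content is the definitional identity $\graphnumber\text{-}\VertexTwoWayStability = \graphnumber\text{-}\VertexStability \cap \graphnumber\text{-}\VertexUnfrozenness$ plus a vacuous verification. The only point that needs a word of care is making explicit that intersecting with $\beta\text{-}\VertexUnfrozenness$ gives an \emph{upper} bound $\{K_0\}$ on the candidates, so that the ``only one'' claim is fully justified only after the trivial $\beta$-vertex-stability check on $K_0$ is performed.
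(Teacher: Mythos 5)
Your proposal is correct and matches the paper's own treatment, which simply declares the theorem an immediate consequence of Theorem~\ref{thm:v-unfrozenness} via the definitional identity $\graphnumber\text{-}\VertexTwoWayStability=\graphnumber\text{-}\VertexStability\cap\graphnumber\text{-}\VertexUnfrozenness$. Your explicit verification that $K_0$ is (vacuously) $\beta$-vertex-stable is the one detail the paper leaves implicit, and it is exactly the right check to make the ``only one'' claim airtight.
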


The default case of edge deletion is more interesting. We begin with Colorability. 

\begin{theorem}\label{thm:twowaystability}
The problem \TwoWayStability is \ThetaTwo-complete.
\end{theorem}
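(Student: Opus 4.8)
The plan is to do the (routine) \ThetaTwo upper bound and then a lower‑bound reduction from \textsc{CompareColorability}, in the spirit of the proofs of Theorems~\ref{thm:beta-stability} and~\ref{thm:beta-unfrozenness}, while being honest about the one place where the colorability case is genuinely harder than the vertex‑cover case.

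For membership, note that $\TwoWayStability=\Stability\cap\Unfrozenness$ and that \ThetaTwo is closed under intersection; both sides lie in \ThetaTwo by the same parallel‑query argument used for Theorems~\ref{thm:stability} and~\ref{thm:v-stability}: ask the \NP oracle for $\{(G,k):\chi(G)\le k\}$ the queries $(G,k)$, $(G-e,k)$ for all $e\in E(G)$, and $(G+e',k)$ for all $e'\in\coE(G)$, ranging over all $k\le\card{V(G)}$, all at once. For hardness I would reduce from \textsc{CompareColorability}, which is \ThetaTwo-hard by Theorem~\ref{thm:CompareColorability}; since \ThetaTwo is closed under complement it suffices to map $(G,H)$ to a graph $S$ that is two‑way stable exactly if $\chi(G)>\chi(H)$. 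Following the $\beta$-template, $S$ would be assembled from: a version $G^{\bullet}$ of $G$ that is $\chi$-stable by construction --- attach a couple of disjoint copies of $K_2$ (harmless once $\chi(G)\ge 2$) and then stabilize every edge using the gadget of Lemma~\ref{lem:StabilizingGadget}, so that $\chi(G^{\bullet})=\chi(G)+c$ for a fixed constant $c$; a version $H^{\bullet}$ of $H$ carrying a small \emph{spoiler}, i.e.\ a tiny attached subgraph contributing both a $\chi$-critical edge and a $\chi$-frozen nonedge, with $\chi(H^{\bullet})=\chi(H)+c$ for the same $c$ (offsets equalized by disjoint cliques); and isolated‑vertex padding, which changes neither $\chi$ nor the stability/unfrozenness status. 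Using $\chi(A\cup B)=\max\{\chi(A),\chi(B)\}$, the ``no'' direction is the easy one: if $\chi(G^{\bullet})\le\chi(H^{\bullet})$ then the spoiler is no longer $\chi$-dominated, so either its critical edge survives (when $\chi(G^{\bullet})<\chi(H^{\bullet})$) or --- in the boundary case of equal chromatic numbers, where two equal‑$\chi$ components protect each other's critical edges but \emph{not} each other's frozen nonedges --- its frozen nonedge survives, and $S$ is not two‑way stable.

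The serious obstacle is the ``yes'' direction, and it is precisely the missing edge‑\emph{unfreezing} counterpart to Lemma~\ref{lem:StabilizingGadget}: a naive disjoint union $S=G^{\bullet}\cup H^{\bullet}$ does \emph{not} work, because when $G^{\bullet}$ strictly dominates, a nonedge $e'$ with $\chi(G^{\bullet}+e')=\chi(G^{\bullet})+1$ still forces $\chi(S+e')=\chi(S)+1$, so $S$ fails to be unfrozen unless $G^{\bullet}$ itself is unfrozen --- which the stabilizing gadget does not guarantee, and which, if it could be guaranteed with control over $\chi$, would already settle Theorem~\ref{thm:unfrozenness}. The way out is that the component determining $\chi(S)$ in the positive case must be made not merely $\chi$-stable but genuinely two‑way stable, and built \emph{without} a handle on its chromatic number so that Theorem~\ref{thm:unfrozenness} still does not apply. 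Cone/blow‑up constructions are the natural source of such cores: for any vertex‑stable unfrozen $J$ the cone $J+K_1$ is two‑way stable (its apex is critical, yet every edge is $\chi$-stable and every nonedge $\chi$-unfrozen), and $J\cup J$ is always vertex‑stable and $\chi$-stable. The plan is therefore to route the chromatic number of $H$ through such a core --- e.g.\ make the dominant part of $S$ a carefully patched cone over two stabilized disjoint copies of $H$ --- and to choose the offsets so that in the ``yes'' case this two‑way‑stable part strictly dominates $G^{\bullet}$, neutralizing every potentially frozen gadget nonedge, while in the ``no'' case the spoiler of the other part is no longer dominated. Verifying that the core really is two‑way stable for \emph{arbitrary} $H$, and that the chromatic arithmetic aligns the two sides, is where essentially all of the work lives. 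Should the core construction prove unavailable, the fallback is the \coDP-hardness route used for \VertexStability: reduce \StableThreeCNF to \TwoWayStability via a Cai--Meyer‑style construction tailored to output two‑way‑stable graphs, supply an \ANDomega function for \TwoWayStability --- here one must be careful, since unlike for \VertexStability the plain join is \emph{not} an \ANDomega function for \TwoWayStability (e.g.\ $(K_m\cup K_m)+K_1$ is two‑way stable but not vertex‑stable) --- and then invoke Corollary~\ref{cor:chang-kadin}.
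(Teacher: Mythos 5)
Your proposal is a plan rather than a proof: the entire lower bound hinges on a ``two-way-stable core'' whose chromatic number tracks $\chi(H)$, and you never construct it---you say yourself that verifying the core is two-way stable for arbitrary $H$ is ``where essentially all of the work lives.'' The candidate cores you name do not combine into such a construction: $J\cup J$ is always stable and vertex-stable but inherits every frozen nonedge of $J$ (if $\chi(J+e)=\chi(J)+1$ for a nonedge $e$ inside one copy, then $\chi((J\cup J)+e)=\chi(J)+1$ as well), and the cone $J+K_1$ is two-way stable only under the hypothesis that $J$ is already unfrozen (and vertex-stable). So every route you sketch needs unfrozenness as an input somewhere, and manufacturing an unfrozen graph with controlled chromatic number is precisely the open problem behind Theorem~\ref{thm:unfrozenness}. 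The same objection hits your fallback: a Cai--Meyer-style reduction to \TwoWayStability must output graphs all of whose nonedges are unfrozen, which again presupposes the missing unfreezing construction. As written, neither branch of your proposal can be completed with the tools available in the paper.

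The paper takes a different and far shorter route that you did not consider: it reduces from \Unfrozenness itself via $G\mapsto G\cup G$. Since $\chi(G_1\cup G_2)=\max\{\chi(G_1),\chi(G_2)\}$, the disjoint self-union is automatically stable; a nonedge inside one copy is unfrozen in $G\cup G$ exactly if it is unfrozen in $G$; and a nonedge between the two copies is always unfrozen (recolor one copy by a permutation of the colors). Hence $G\cup G$ is two-way stable if and only if $G$ is unfrozen, so \Unfrozenness many-one reduces to \TwoWayStability---no unfreezing gadget is needed because the reduction only has to \emph{stabilize} without disturbing frozenness, which disjoint self-union does for free. Be aware, though, that this argument yields \ThetaTwo-hardness of \TwoWayStability only modulo the \ThetaTwo-hardness of \Unfrozenness, which the paper itself establishes only conditionally (Theorem~\ref{thm:unfrozenness}) and marks as open in its overview table. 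So while you missed the reduction the paper actually uses, your diagnosis that an unconditional hardness proof must somewhere produce unfrozenness out of thin air is accurate, and the paper's proof of this theorem does not discharge that dependency either.
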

To prove this, it suffices to check that mapping a graph $G$ to $G\cup G$ reduces \Unfrozenness to \TwoWayStability; see Appendix~\ref{app:twowaystability} for the details. 
We are able to prove the analogous result for $\beta$-\TwoWayStability via Lemma~\ref{lem:quadripartite-gadget}; the proof is deferred to Appendix~\ref{app:quadripartite-gadget}.
\enlargethispage{\baselineskip}
\begin{lemma}\label{lem:quadripartite-gadget}
	Let a nonempty graph $G$ and an edge $e\in V(G)$ be given. Construct from $G$ a graph $G'$ by substituting for $e$ the constant-size gadget that consists of a clique on the new vertex set $\{u_1,u_2,u_3,u_4,u_1',u_2',u_3',u_4'\}$, with the four edges $\{u_i,u_i'\}$ for $i\in\{1,2,3,4\}$ removed and the four edges $\{v,u_1\}$, $\{v,u_2\}$, $\{v',u_3\}$, and $\{v',u_4\}$ added. (This gadget is displayed in Figure~\ref{fig:TwoWayGadgetB} in Appendix~\ref{app:quadripartite-gadget}.) 
	The graph $G'$ has the following properties. 
\begin{enumerate}
	\item 
	$\beta(G')=\beta(G)+6$, 
	\item 
	every edge $e'\in E(G)\setminus\{e\}$ is $\beta$-stable in $G$ exactly if it is in $G'$, 
	\item 
	all remaining edges 
	of $G'$ 
	are $\beta$-stable,
	\item 
	every nonedge $e'\in\coE(G)$ is $\beta$-unfrozen in $G$ exactly if it is in $G'$, and
	\item 
	all remaining nonedges 
	$e'\in\coE(G')\setminus\coE(G)$ 
	of $G'$ 
	are $\beta$-unfrozen. 
\end{enumerate} 
\end{lemma}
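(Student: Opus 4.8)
The plan is to show that the quadripartite gadget \emph{simulates} the edge $e=\{v,v'\}$ it replaces. Its eight new vertices, with the four edges $\{u_i,u_i'\}$ absent, induce the complete $4$-partite graph $K_{4\times 2}$ with parts $\{u_i,u_i'\}$; the only independent sets of size two in $K_{4\times 2}$ are these four parts, so its minimum vertex covers are exactly the four sets obtained by dropping one part, each of size six. Hence every vertex cover of $G'$ contains at least six gadget vertices, and one that contains exactly six omits some part $\{u_i,u_i'\}$ and is therefore forced, through the connecting edge at $u_i$, to contain the corresponding one of $v,v'$ --- precisely the constraint that $e$ imposes in $G$. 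Using this I would prove part~1. For the upper bound $\beta(G')\le\beta(G)+6$: start from an optimal cover of $G$, which contains $v$ or $v'$ since $e\in E(G)$ --- say $v$ --- and adjoin the six gadget vertices that omit the part $\{u_1,u_1'\}$, whose connecting edge $\{v,u_1\}$ is already covered by $v$. For the lower bound $\beta(G')\ge\beta(G)+6$: split a minimum cover $C$ of $G'$ into its gadget part $C_g$ and its $G$-part $C_G$; if $|C_g|\ge 7$ then $C_G$ covers $G-e$ and $|C|\ge(\beta(G)-1)+7$, while if $|C_g|=6$ then the forcing property makes $C_G$ cover $e$ as well, so $C_G$ covers $G$ and $|C|\ge\beta(G)+6$.

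Parts~2 and~4 then follow formally. Deleting an edge $e'\in E(G)\setminus\{e\}$ or adding a nonedge $e'\in\coE(G)$ does not touch the gadget at $e$, so $(G-e')'=G'-e'$ and $(G+e')'=G'+e'$; applying part~1 to $G-e'$ and to $G+e'$ gives $\beta(G'-e')=\beta(G-e')+6$ and $\beta(G'+e')=\beta(G+e')+6$, which, compared with $\beta(G')=\beta(G)+6$, shows that $e'$ is $\beta$-stable (respectively, $\beta$-unfrozen) in $G'$ exactly when it is in $G$. For parts~3 and~5 I would instead argue directly that a single local modification of the gadget leaves the simulation intact: deleting a gadget edge or a connecting edge keeps $\beta$ equal to $\beta(G)+6$, because the split argument still shows that $\beta$ cannot fall below $\beta(G)+6$; and adding a new nonedge incident to a gadget vertex --- in particular reinserting $e$ itself, which is the nonedge $\{v,v'\}\in\coE(G')\setminus\coE(G)$ --- keeps $\beta$ at $\beta(G)+6$ because one of the minimum covers of $G'$ already covers the new edge. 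In each case the work is to classify the size-two independent sets of the modified gadget --- a deleted clique edge contributes its two endpoints as a new such pair --- and check that dropping any of them still leaves a cover forcing $v$ or $v'$, or, for additions, to exhibit a suitable existing minimum cover of $G'$.

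The main obstacle is this robustness check for part~3. The subtle point is that removing an edge of $K_{4\times 2}$ creates a new independent pair and hence a new candidate minimum cover of size six; one has to make sure that it, too, forces one of $v,v'$ into the cover, for otherwise a size-six (or smaller) gadget cover avoiding both $v$ and $v'$ would reattach the $G$-part to $G-e$ instead of to $G$ and shift $\beta$ by one. This requirement is exactly what dictates which clique edges are deleted and where the connecting edges are attached, and checking it amounts to a finite case analysis over the minimum covers of $K_{4\times 2}$ and their interaction with the four connecting edges, cut down to a handful of representative cases by the symmetries $u_1\leftrightarrow u_2$, $u_3\leftrightarrow u_4$ of the construction and the $v\leftrightarrow v'$ symmetry of the gadget. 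Once Lemma~\ref{lem:quadripartite-gadget} is established, it plays, for two-way stability, the role that Lemma~\ref{lem:BetaStabilizingGadget} plays in Theorem~\ref{thm:beta-stability}: substituting it for every edge, and then for every join edge, in a suitably padded copy of a vertex-cover comparison instance yields a graph that is $\beta$-two-way-stable precisely in the yes-case, establishing that determining $\beta$-two-way-stability is \ThetaTwo-complete.
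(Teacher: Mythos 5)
There is a genuine gap, and it originates in how you read the construction. You take ``substituting for $e$'' literally and delete the edge $\{v,v'\}$, so that the gadget must \emph{simulate} $e$; the paper's actual construction (see the proof in Appendix~\ref{app:quadripartite-gadget} and Figure~\ref{fig:TwoWayGadgetB}, where the edge between $v$ and $v'$ is drawn, and the case ``$e'=e$'' in the proof of property~3) keeps $e$ and attaches the gadget in parallel to it. With $e$ deleted, property~3 is false, and the ``robustness check'' you defer to a finite case analysis would reveal this rather than confirm it: delete the gadget edge $\{u_1',u_2'\}$. The new independent pair $\{u_1',u_2'\}$ consists of two vertices with \emph{no} connecting edge to $v$ or $v'$, so its complement $\{u_1,u_2,u_3,u_4,u_3',u_4'\}$ within the gadget is a six-vertex set covering all gadget edges and all four connecting edges while forcing neither $v$ nor $v'$. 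Hence $\beta(G'-\{u_1',u_2'\})\le\beta(G-e)+6$, which equals $\beta(G)+5<\beta(G')$ whenever $e$ is $\beta$-critical in $G$ (e.g.\ for $G=K_2$). So the forcing requirement you correctly identify as the crux of part~3 is \emph{not} met by the construction once $e$ is removed, and no case analysis over the symmetries will make it so.

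Keeping $e$ repairs everything and in fact makes part~3 much easier than your plan suggests: for any deleted gadget or connecting edge $e'\ne e$, the trace of a cover of $G'-e'$ on $V(G)$ still covers all of $G$ (including $e$), and one only needs that the gadget minus one edge has no independent set of size three, so at least six gadget vertices remain in any cover. The forcing property (a size-six gadget cover omits some part $\{u_i,u_i'\}$, and $u_i$ is adjacent to $v$ or $v'$) is then needed only for the lower bound in part~1 and for the single case $e'=e$ in part~3. Your treatment of parts~1, 2 and~4 (the commutation of the construction with modifications away from $e$) and your strategy for part~5 match the paper's and go through once the construction is corrected.
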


An iterated application of this lemma allows us to stabilize an arbitrary set of edges of an arbitrary graph without introducing any new unfrozen edges. 
The \ThetaTwo-hardness of $\beta$-\TwoWayStability is now an easy consequence of Lemma~\ref{lem:quadripartite-gadget};  see Appendix~\ref{app:beta-twowaystability}. 
\begin{theorem}\label{thm:beta-twowaystability}
	The problem $\beta$-\TwoWayStability is \ThetaTwo-complete.
\end{theorem}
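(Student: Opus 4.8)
The plan is to combine two ingredients that are already in place: the \ThetaTwo-hardness machinery of the proof of Theorem~\ref{thm:beta-stability}, which reduces from a comparison problem for the vertex-cover number, and the stronger, genuinely two-way gadget provided by Lemma~\ref{lem:quadripartite-gadget}. Membership in \ThetaTwo is routine: a polynomial number of parallel queries to the \NP-complete problem $\{(G,k)\in\mathcal{G}\times\N\mid\beta(G)\le k\}$ lets us determine $\beta(G)$, the value $\beta(G-e)$ for every $e\in E(G)$, and the value $\beta(G+e')$ for every $e'\in\coE(G)$, and these numbers decide whether $G$ is $\beta$-two-way-stable.

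For the lower bound I would reduce from $\{(G,H)\in\mathcal{G}^2\mid\beta(G)>\beta(H)\}$, which is \ThetaTwo-hard~\cite[Thm.~12]{spa-vog:c:theta-two-classic} (and recall that \ThetaTwo is closed under complement), following the construction of Theorem~\ref{thm:beta-stability} step by step but performing \emph{every} edge substitution with the \emph{quadripartite} gadget of Lemma~\ref{lem:quadripartite-gadget} in place of the rectangle gadget of Lemma~\ref{lem:BetaStabilizingGadget}. Iterating the quadripartite substitution over all edges of the intended ``stable side'' produces a graph all of whose edges are $\beta$-stable, all of whose newly created nonedges are $\beta$-unfrozen, and which leaves the $\beta$-stability status of every surviving edge and the $\beta$-unfrozenness status of every surviving nonedge untouched (this is exactly the content of the five listed properties, applied repeatedly as indicated after the lemma's statement). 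On the other side we plant a single $\beta$-critical edge via a disjoint $K_2$; then, just as in Theorem~\ref{thm:beta-stability}, we balance the two sides' vertex-cover offsets and their vertex counts and take the join of the two sides, again replacing every join edge by a quadripartite gadget so that all join-incident edges are $\beta$-stable and all join-incident nonedges are $\beta$-unfrozen. As in Theorem~\ref{thm:beta-stability}, an optimal vertex cover of the resulting graph $S$ is forced to take all vertices of one side, an optimal vertex cover of the other side, and a fixed number of gadget vertices, so the ``active'' side---the one whose internal vertex cover enters an optimal cover of $S$---is pinned down by whether $\beta(G)>\beta(H)$ or not; $S$ is $\beta$-stable exactly when the active side is the fully gadget-stabilized one (so the planted critical edge is inactive), and it is in addition $\beta$-unfrozen exactly when the active side carries no frozen nonedge.

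The only step needing genuine care---and the point I expect to be the main obstacle---is the padding used to equalize the two sides. Theorem~\ref{thm:beta-stability} balances vertex counts with isolated vertices, but an isolated vertex is not $\beta$-unfrozen, since joining it to any existing vertex raises $\beta$ by one; copying that step verbatim would therefore plant frozen nonedges and wreck two-way stability. I would get around this by the elementary observation that in a join $A+B$ every nonedge lying inside whichever side attains $\min(|V(A)|+\beta(B),\,|V(B)|+\beta(A))$---call it the inactive side---is automatically $\beta$-unfrozen, because inserting it can only raise the vertex-cover number of that side and hence cannot lower the minimum; after the gadget substitution on the join edges the same holds in $S$. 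Consequently it suffices to route all of the cheap isolated-vertex padding onto the side that is inactive in the ``yes'' case, and to pad the active side---if it needs padding at all---only with constant-size blocks that are themselves $\beta$-two-way-stable (for instance disjoint copies of $K_2$, each with its edge replaced by a quadripartite gadget, which contribute a fixed, known amount to the vertex count and to $\beta$), keeping the offset bookkeeping exactly as in Theorem~\ref{thm:beta-stability}. With the padding arranged in this way, the $\beta$-stability half of the argument is verbatim Theorem~\ref{thm:beta-stability} and the $\beta$-unfrozenness half follows from the guarantees of Lemma~\ref{lem:quadripartite-gadget} together with the join observation above, so $S$ is $\beta$-two-way-stable if and only if $\beta(G)>\beta(H)$, which proves \ThetaTwo-hardness and, with the membership remark, \ThetaTwo-completeness.
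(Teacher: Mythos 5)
Your membership argument is fine, but the hardness reduction has a genuine gap that your own bookkeeping already exposes. Lemma~\ref{lem:quadripartite-gadget} stabilizes edges, but it does \emph{not} unfreeze the surviving nonedges: by its property~4, every nonedge $e'\in\coE(G)$ keeps in the gadgetized graph exactly the frozenness status it had in $G$. So after you replace every edge of $G$ by the quadripartite gadget, the resulting ``stable side'' $G'$ is indeed $\beta$-stable, but it still carries a $\beta$-frozen nonedge whenever $G$ does---and an arbitrary input $G$ to the comparison problem typically does. Now run your own join analysis: in the ``yes'' case the $G$-side is precisely the \emph{active} side (the one contributing its internal optimal vertex cover), and you correctly observe that a nonedge inside the active side is frozen in $S$ exactly if it is frozen in that side; the join-edge gadgets preserve this again by property~4. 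Hence in the ``yes'' case $S$ inherits every frozen nonedge of $G$ and fails to be $\beta$-unfrozen. What your construction actually decides is ``$\beta(G)$ compares favorably to $\beta(H)$ \emph{and} $G$ is $\beta$-unfrozen,'' which is not a reduction from the comparison problem. Your careful treatment of the isolated-vertex padding addresses a real but secondary issue; the primary obstruction sits inside $G$ itself, and neither Lemma~\ref{lem:quadripartite-gadget} nor the join observation removes it. (To salvage your route you would first have to replace $G$ by a $\beta$-unfrozen graph with controlled $\beta$, e.g.\ via the $(G\cup I_h)+(G\cup I_h)$ construction from the proof of Theorem~\ref{thm:beta-unfrozenness}, before gadgetizing.)

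The paper avoids all of this by reducing from a different, already-established \ThetaTwo-hard problem: $\beta$-\Unfrozenness itself (Theorem~\ref{thm:beta-unfrozenness}). Replacing \emph{every} edge of the input graph $G$ by the quadripartite gadget yields a graph $\widehat{G}$ in which all edges are $\beta$-stable (properties~2 and~3, iterated), all new nonedges are $\beta$-unfrozen (property~5), and the old nonedges are $\beta$-unfrozen exactly if they were in $G$ (property~4). Thus $\widehat{G}\in\beta\text{-}\TwoWayStability$ if and only if $G\in\beta\text{-}\Unfrozenness$, with no comparison instance, no second side, no padding, and no balancing of offsets. The lesson is that once $\beta$-\Unfrozenness is known to be hard, the two-way problem needs only the ``stabilize without unfreezing or freezing anything'' guarantee of the gadget, not a fresh run of the Theorem~\ref{thm:beta-stability} machinery.
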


\bibliography{mybib}

\begin{thebibliography}{10}

\bibitem{bau-har-etal:j:domination-alteration-sets}
Douglas Bauer, Frank Harary, Juhani Nieminen, and Charles~L. Suffel.
\newblock Domination alteration sets in graphs.
\newblock {\em Discrete Mathematics}, 47:153--161, 1983.

\bibitem{bea-cul:j:complexity-unfrozen}
Adam Beacham and Joseph~C. Culberson.
\newblock On the complexity of unfrozen problems.
\newblock {\em Discrete Applied Mathematics}, 153(1-3):3--24, 2005.

\bibitem{bol:b:modern-graph-theory}
Béla Bollob{\'a}s.
\newblock {\em Modern Graph Theory}.
\newblock Springer-Verlag, 1998.

\bibitem{bol:b:extremal-graph-theory}
Béla Bollobás.
\newblock {\em Extremal Graph Theory}.
\newblock London Mathematical Society Monographs. Oxford University Press,
  London, 1978.

\bibitem{bur-fre-etal:c:neighborly-help}
Elisabet Burjons, Fabian Frei, Edith Hemaspaandra, Dennis Komm, and David
  Wehner.
\newblock Finding optimal solutions with neighborly help.
\newblock In {\em Proceedings of the 44th International Symposium on
  Mathematical Foundations of Computer Science (MFCS 2019)}, volume 138 of {\em
  Leibniz International Proceedings in Informatics (LIPIcs)}, pages
  78:1--78:14. Schloss Dagstuhl, 2019.

\bibitem{cai-mey:j:dp}
Jin-Yi Cai and Gabriele~E. Meyer.
\newblock Graph minimal uncolorability is {$\rm D^P$}-complete.
\newblock {\em SIAM Journal on Computing}, 16(2):259--277, 1987.

\bibitem{cha:j:register-allocation-spilling-via-graph-coloring}
Gregory~J. Chaitin.
\newblock Register allocation and spilling via graph coloring.
\newblock {\em ACM SIGPLAN Notices~-- Proceedings of the 1982 SIGPLAN Symposium
  on Compiler Construction}, 17(6):98--101, 1982.

\bibitem{cha-kad:j:closer}
R.~Chang and J.~Kadin.
\newblock The {Boolean} hierarchy and the polynomial hierarchy: A closer
  connection.
\newblock {\em SIAM Journal on Computing}, 25(2):340--354, 1996.

\bibitem{cha-kad:j:boolean-connectives}
Richard Chang and Jim Kadin.
\newblock On computing boolean connectives of characteristic functions.
\newblock {\em Mathematical Systems Theory}, 28(3):173--198, 1995.

\bibitem{cha-gra-jam:c:network-coloring-game}
Kamalika Chaudhuri, Fan Chung, and Mohammad~Sh. Jamall.
\newblock A network coloring game.
\newblock In {\em Proceedings of the 4th International Workshop On Internet And
  Network Economics}, pages 522--530. Springer-Verlag {\it Lecture Notes in
  Computer Science \#5385}, December 2008.

\bibitem{che-jin:j:critical-graph-structures}
Guantao Chen and Guangming Jing.
\newblock Structural properties of edge-chromatic critical multigraphs.
\newblock {\em Journal of Combinatorial Theory, Series B}, 139:128--162, 2019.

\bibitem{des-hay-hen:j:total-domination-critical-and-stable}
Wyatt~J. Desormeaux, Teresa~W. Haynes, and Michael~A. Henning.
\newblock Total domination critical and stable graphs upon edge removal.
\newblock {\em Discrete Applied Mathematics}, 158(15):1587--1592, 2010.

\bibitem{dir:theorems-abstract-graphs}
Gabriel~A. Dirac.
\newblock Some theorems on abstract graphs.
\newblock {\em Proceedings of the London Mathematical Society}, s3-2(1):69--81,
  1952.

\bibitem{erd-gal:j:alpha-criticality}
Paul Erd{\H{o}}s and Tibor Gallai.
\newblock On the minimal number of vertices representing the edges of a graph.
\newblock {\em Publications of the Mathematical Institute of the {Hungarian}
  Academy of Sciences}, 6(1--2):181--203, 1961.

\bibitem{eve-bor:j:role-colouring-a-graph}
Martin~G. Everett and Steve Borgatti.
\newblock Role colouring a graph.
\newblock {\em Mathematical Social Sciences}, 21(2):183--188, 1991.

\bibitem{fre-hem-rot:c:complexity-of-stability}
Fabian Frei, Edith Hemaspaandra, and J{\"o}rg Rothe.
\newblock Complexity of stability.
\newblock In Yixin Cao, Siu-Wing Cheng, and Minming Li, editors, {\em 31st
  International Symposium on Algorithms and Computation (ISAAC 2020)}, volume
  181 of {\em Leibniz International Proceedings in Informatics (LIPIcs)}, pages
  19:1--19:14. Schloss Dagstuhl -- Leibniz-Zentrum f{\"{u}}r Informatik, 2020.

\bibitem{gun-har-ral:j:is-stability}
Georg Gunther, Bert Hartnell, and Douglas~F. Rall.
\newblock Graphs whose vertex independence number is unaffected by single edge
  addition or deletion.
\newblock {\em Discrete Applied Mathematics}, 46:167--172, 1993.

\bibitem{har:graph-theory}
Frank Harary.
\newblock {\em Graph Theory}.
\newblock Addison-Wesley, 1969.

\bibitem{har-tho:j:anticritical-graphs}
Frank Harary and Carsten Thomassen.
\newblock Anticritcal graphs.
\newblock {\em Mathematical Proceedings of the Cambridge Philosophical
  Society}, 79:11--18, 1976.

\bibitem{hay-bri-etal:j:insensitive-domination}
Teresa~W. Haynes, Robert~C. Brigham, and Ronald~D. Dutton.
\newblock Extremal graphs domination insensitive to the removal of \emph{k}
  edges.
\newblock {\em Discrete Applied Mathematics}, 44(1-3):295--304, 1993.

\bibitem{hem:j:sky}
Lane~A. Hemachandra.
\newblock The strong exponential hierarchy collapses.
\newblock {\em Journal of Computer and System Sciences}, 39(3):299--322, 1989.

\bibitem{hem-hem-rot:j:dodgson}
Edith Hemaspaandra, Lane~A. Hemaspaandra, and J{\"o}rg Rothe.
\newblock Exact analysis of {D}odgson elections: {L}ewis {C}arroll's 1876
  voting system is complete for parallel access to {NP}.
\newblock {\em Journal of the ACM}, 44(6):806--825, 1997.

\bibitem{hem-hem-rot:j:online}
Edith Hemaspaandra, Lane~A. Hemaspaandra, and J{\"o}rg Rothe.
\newblock The complexity of online manipulation of sequential elections.
\newblock {\em Journal of Computer and System Sciences}, 80(4):697--710, 2014.

\bibitem{hem-spa-vog:j:kemeny}
Edith Hemaspaandra, Holger Spakowski, and Jörg Vogel.
\newblock The complexity of {Kemeny} elections.
\newblock {\em Theoretical Computer Science}, 349(3):382--391, 2005.

\bibitem{hen-krz:j:total-domination-stability}
Michael~A. Henning and Marcin Krzywkowski.
\newblock Total domination stability in graphs.
\newblock {\em Discrete Applied Mathematics}, 236:246--255, 2018.

\bibitem{jac:b:social-and-economic-networks}
Matthew~O. Jackson.
\newblock {\em Social and Economic Networks}.
\newblock Princeton University Press, 2008.

\bibitem{kea-sur-mon:j:experimental-study-of-coloring-problem-on-human-subject-networks}
Michael Kearns, Siddharth Suri, and Nick Montfort.
\newblock An experimental study of the coloring problem on human subject
  networks.
\newblock {\em Science}, 313(5788):824--827, 2006.

\bibitem{kho:j:application-of-graph-coloring-to-biological-networks}
Susan Khor.
\newblock Application of graph coloring to biological networks.
\newblock {\em IET Systems Biology}, 4(3):185--192, 2010.

\bibitem{lei:j:graph-coloring-algorithm-for-large-scheduling-problems}
Frank~Th. Leighton.
\newblock A graph coloring algorithm for large scheduling problems.
\newblock {\em Journal of Research of the National Bureau of Standards},
  84(6):489--506, 1979.

\bibitem{gar-joh-sto:j:some-simplified-np-complete-graph-problems}
Garey Michael~R.\, Johnson David~S.\, and Stockmeyer Larry~J.\.
\newblock Some simplified {NP}-complete graph problems.
\newblock {\em Theoretical Computer Science}, 1:237--267, 1976.

\bibitem{mon-zec-:j:mechanics-of-random-sat}
Rémi Monassen and Riccardo Zecchina.
\newblock Statistical mechanics of the random $k$-satisfiability model.
\newblock {\em The American Physical Society}, 56(2):1357--1370, 1997.

\bibitem{mon-zec-etal:j:phase-transition}
Rémi Monassen, Riccardo Zecchina, Scott Kirkpatrick, Bart Selman, and Lidror
  Troyansky.
\newblock Determining computational complexity from characteristic `phase
  transitions'.
\newblock {\em Nature}, 400:133--137, 1998.

\bibitem{pap-wol:j:facets}
Christos~H. Papadimitriou and David Wolfe.
\newblock The complexity of facets resolved.
\newblock {\em Journal of Computer and System Sciences}, 37(1):2--13, 1988.

\bibitem{pap-yan:j:facets}
Christos~H. Papadimitriou and Mihalis Yannakakis.
\newblock The complexity of facets (and some facets of complexity).
\newblock {\em Journal of Computer and System Sciences}, 28(2):244--259, 1984.

\bibitem{rot:j:exact-four-colorability}
J{\"o}rg Rothe.
\newblock Exact complexity of {E}xact-{F}our-{C}olorability.
\newblock {\em Information Processing Letters}, 87(1):7--12, 2003.

\bibitem{rot-spa-vog:j:young}
J{\"o}rg Rothe, Holger Spakowski, and J{\"o}rg Vogel.
\newblock Exact complexity of the winner problem for {Young} elections.
\newblock {\em Theory of Computing Systems}, 36(4):375--386, 2003.

\bibitem{spa-vog:c:theta-two-classic}
H.~Spakowski and J.~Vogel.
\newblock $ {\rm {\Theta}}_2^p$-completeness: {A} classical approach for new
  results.
\newblock In {\em Proceedings of the 20th Conference on Foundations of Software
  Technology and Theoretical Computer Science}, volume 1974 of {\em Lecture
  Notes in Computer Science}, pages 348--360. Springer-Verlag, December 2000.

\bibitem{sun-tso-hok-fal-eli:j:two-heads-better-than-one}
Jimeng Sun, Charalampos~E. Tsourakakis, Evan Hoke, Christos Faloutsos, and Tina
  Eliassi-Rad.
\newblock Two heads better than one: {Pattern} discovery in time-evolving
  multi-aspect data.
\newblock {\em Data Mining and Knowledge Discovery}, 17(1):111--128, 2008.

\bibitem{wag:j:more-on-bh}
Klaus~W. Wagner.
\newblock More complicated questions about maxima and minima, and some closures
  of {NP}.
\newblock {\em Theoretical Computer Science}, 51(1--2):53--80, 1987.

\bibitem{wag:j:bounded}
Klaus~W. Wagner.
\newblock Bounded query classes.
\newblock {\em SIAM Journal on Computing}, 19(5):833--846, 1990.

\bibitem{wes:criticality-graph-theory}
Walter Wessel.
\newblock Criticity with respect to properties and operations in graph theory.
\newblock In László~Lovász András~Hajnal and Vera~T. Sós, editors, {\em
  Finite and Infinite Sets. (6th Hungarian Combinatorial Colloquium, Eger,
  1981)}, volume~2 of {\em Colloquia Mathematica Societatis Janos Bolyai},
  pages 829--837. North-Holland, 1984.

\bibitem{woe:hedonic-coalition-formation}
Gerhard~J. Woeginger.
\newblock Core stability in hedonic coalition formation.
\newblock In {\em Proceedings of the 39th International Conference on Current
  Trends in Theory and Practice of Computer Science}, volume 7741 of {\em
  Lecture Notes in Computer Science}, pages 33--50. Springer-Verlag, 2013.

\end{thebibliography}
\newpage
\begin{appendix}
	
\section{Proofs of All Observations}\label{app:observations}
We restate and prove our four  Observations~\ref{obs:differsbyatmostone}, \ref{obs:CriticalityFromEdges}, \ref{obs:StabilityFromVertices}, and \ref{obs:CriticalVertexCharacterization}.

\setcounter{tempcounter}{\value{theorem}}
\setcounterref{theorem}{obs:differsbyatmostone}
\addtocounter{theorem}{-1}
\begin{observation}
	The deletion of an edge or of a vertex either decreases the chromatic number by exactly one or leaves it unchanged. 
\end{observation}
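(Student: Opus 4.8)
The plan is to handle the two cases --- edge deletion and vertex deletion --- separately, and in each case to establish two inequalities that together pin the change in the chromatic number down to $0$ or $1$. Throughout I would work directly from the definition: a proper $c$-coloring of a graph is a map from its vertex set to $\{1,\dots,c\}$ that is nonconstant on every edge, and $\chi$ is the least such $c$.

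First I would treat edge deletion. Fix $e=\{u,v\}\in E(G)$. Since every proper coloring of $G$ is in particular a proper coloring of $G-e$ (deleting an edge only drops a constraint), we immediately get $\chi(G-e)\le\chi(G)$. For the reverse inequality I would take an optimal proper coloring $\coloring$ of $G-e$ using $c=\chi(G-e)$ colors. If $\coloring(u)\neq\coloring(v)$, then $\coloring$ is already proper for $G$, so $\chi(G)\le c$. Otherwise $\coloring(u)=\coloring(v)$, and recoloring $v$ with a fresh color not used by $\coloring$ produces a proper coloring of $G$ with $c+1$ colors, so $\chi(G)\le c+1$. Combining, $\chi(G-e)\le\chi(G)\le\chi(G-e)+1$, hence $\chi(G)-\chi(G-e)\in\{0,1\}$, which is exactly the claim in this case.

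Then I would treat vertex deletion in the same spirit. For $v\in V(G)$, the graph $G-v$ is an induced subgraph of $G$, so restricting any proper coloring of $G$ to $V(G)\setminus\{v\}$ gives $\chi(G-v)\le\chi(G)$. Conversely, starting from an optimal proper coloring of $G-v$ with $c=\chi(G-v)$ colors and assigning $v$ a brand-new color yields a proper coloring of $G$ with $c+1$ colors, so $\chi(G)\le\chi(G-v)+1$. Again the difference lies in $\{0,1\}$, completing the proof.

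There is no real obstacle here; the only point that merits a moment's care is the lower-bound half of the edge-deletion case, where one must observe that the discrepancy can be repaired by recoloring a single endpoint with one fresh color, so at most one extra color is ever needed. Everything else follows from the definition of a proper coloring and the monotonicity of $\chi$ under passing to subgraphs.
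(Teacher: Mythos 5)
Your proof is correct and follows essentially the same route as the paper's: monotonicity of $\chi$ under passing to subgraphs gives one inequality, and repairing an optimal coloring of the deleted graph with a single fresh color (on one endpoint of the edge, or on the restored vertex) gives the other. You merely spell out in more detail what the paper states in one sentence.
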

\setcounter{theorem}{\value{tempcounter}}
\begin{proof}
	It is clear that deleting an edge or vertex cannot increase the chromatic number. 
	To see that this cannot decrease it by more than one, it suffices to 
	note that introducing a new edge or vertex can increase it by at
	most one since we can assign one new, unique color to the new vertex
	or to one of the two vertices of the inserted edge.
\end{proof}
\setcounter{tempcounter}{\value{theorem}}
\setcounterref{theorem}{obs:CriticalityFromEdges}
\addtocounter{theorem}{-1}
\begin{observation}
	Let $e=\{u,v\}$ be a critical edge. Then $u$ and $v$ are critical as well. 
\end{observation}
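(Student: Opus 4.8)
The plan is to unwind the definitions and exploit Observation~\ref{obs:differsbyatmostone}. Write $k=\chi(G)$. Since $e=\{u,v\}$ is critical, we have $\chi(G-e)\neq k$, and Observation~\ref{obs:differsbyatmostone} forces $\chi(G-e)=k-1$. The goal is to show $\chi(G-u)=k-1$ (and, symmetrically, $\chi(G-v)=k-1$), which by definition means $u$ (resp.\ $v$) is critical. Again by Observation~\ref{obs:differsbyatmostone} applied to the vertex deletion $G\to G-u$, we already know $\chi(G-u)\in\{k-1,k\}$, so it suffices to establish the upper bound $\chi(G-u)\le k-1$.

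First I would fix an optimal coloring $\coloring$ of $G-e$ using exactly $k-1$ colors. The key observation is that $\coloring$ must assign $u$ and $v$ the same color: if it assigned them distinct colors, then $\coloring$ would be a proper coloring of $G$ as well (the only edge of $G$ missing from $G-e$ is $\{u,v\}$, whose endpoints are properly colored by assumption), contradicting $\chi(G)=k>k-1$. Hence $\coloring(u)=\coloring(v)$.

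Next I would restrict $\coloring$ to the vertex set $V(G)\setminus\{u\}$, obtaining a coloring of $G-u$. Every edge of $G-u$ is an edge of $G$ not incident to $u$, hence an edge of $G-e$ as well, so the restriction is still proper; this gives $\chi(G-u)\le k-1$, and combined with the lower bound above, $\chi(G-u)=k-1\neq k$, so $u$ is critical. The argument for $v$ is verbatim the same with the roles of $u$ and $v$ exchanged (which is legitimate since $\coloring(u)=\coloring(v)$ is symmetric). There is no real obstacle here; the only point that needs a moment's care is the ``same color'' step, which is exactly where the hypothesis that $e$ is \emph{critical} (rather than merely an edge) gets used.
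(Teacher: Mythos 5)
Your proof is correct and ultimately rests on the same fact as the paper's one-line argument: $G-u$ (and $G-v$) is a subgraph of $G-e$, so $\chi(G-u)\le\chi(G-e)=\chi(G)-1$. Note that your ``same color'' step, while true, is never actually used — the restriction of the $(k-1)$-coloring of $G-e$ to $V(G)\setminus\{u\}$ is proper regardless of whether $\coloring(u)=\coloring(v)$ — so it can be deleted; the criticality of $e$ enters only through $\chi(G-e)=k-1$.
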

\setcounter{theorem}{\value{tempcounter}}
\begin{proof} 
	Since $G-u$ and $G-v$ are subgraphs of $G-e$, both $\chi(G-u)$ and $\chi(G-v)$ are at most $\chi(G-e)$, which is less than $\chi(G)$ because $e$ is critical. Thus $u$ and $v$ are critical.
\end{proof}
\setcounter{tempcounter}{\value{theorem}}
\setcounterref{theorem}{obs:StabilityFromVertices}
\addtocounter{theorem}{-1}
\begin{observation}
	Let $v$ be a stable vertex. Then all edges incident to $v$ are stable.  
\end{observation}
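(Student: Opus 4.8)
The plan is to argue by contradiction, leaning on Observation~\ref{obs:differsbyatmostone}. Let $v$ be a stable vertex, so $\chi(G-v)=\chi(G)$, and let $e=\{v,w\}$ be an arbitrary edge incident to $v$. Deleting an edge cannot raise the chromatic number, so $\chi(G-e)\le\chi(G)$, and by Observation~\ref{obs:differsbyatmostone} it can fall by at most one; hence the only possibility to exclude is $\chi(G-e)=\chi(G)-1$.

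So I would suppose for contradiction that $\chi(G-e)=\chi(G)-1=:k$ and fix a proper $k$-coloring $\coloring$ of $G-e$. Since $G$ and $G-e$ differ only in the edge $e=\{v,w\}$, if $\coloring(v)\neq\coloring(w)$ then $\coloring$ is already a proper coloring of $G$ with $k<\chi(G)$ colors, a contradiction; hence $\coloring(v)=\coloring(w)$. Now I would restrict $\coloring$ to $V(G)-\{v\}$: because $e$ is incident to $v$, we have $G-v=(G-e)-v$, so this restriction is a proper coloring of $G-v$ using at most $k$ colors. This gives $\chi(G-v)\le k=\chi(G)-1<\chi(G)$, contradicting the stability of $v$. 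Therefore $\chi(G-e)=\chi(G)$, i.e., $e$ is stable; as $e$ was an arbitrary edge incident to $v$, all such edges are stable.

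The argument is entirely elementary, and I do not expect a real obstacle. The only mild point worth spelling out is the identity $G-v=(G-e)-v$, valid precisely because $e$ is incident to $v$; this is what lets a coloring witnessing $\chi(G-e)<\chi(G)$ be turned into one witnessing $\chi(G-v)<\chi(G)$ once we know the two endpoints of $e$ share a color.
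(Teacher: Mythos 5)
Your argument is correct and is in essence the paper's own: the paper simply invokes the contrapositive of Observation~\ref{obs:CriticalityFromEdges}, whose proof rests on the same fact you use, namely that $G-v$ is a subgraph of $G-e$ (equivalently, $G-v=(G-e)-v$), so $\chi(G-v)\le\chi(G-e)$. Your intermediate step deducing $\coloring(v)=\coloring(w)$ is superfluous, since the restriction of any proper coloring of $G-e$ to $V(G)\setminus\{v\}$ already properly colors $G-v$ regardless of the colors of $v$ and $w$.
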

\setcounter{theorem}{\value{tempcounter}}
\begin{proof}
	This follows immediately from the contrapositive of Observation~\ref{obs:CriticalityFromEdges}.
\end{proof}
\setcounter{tempcounter}{\value{theorem}}
\setcounterref{theorem}{obs:CriticalVertexCharacterization}
\addtocounter{theorem}{-1}
\begin{observation}
	Let $G$ be a graph. A vertex $v\in V(G)$ is critical if and only if there is an optimal coloring of $G$ that assigns $v$ a color with which no other vertex is colored. 
\end{observation}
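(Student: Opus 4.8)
The plan is to prove the two implications separately, relying on Observation~\ref{obs:differsbyatmostone} to turn criticality of a vertex into the precise statement $\chi(G-v)=\chi(G)-1$.

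For the ``if'' direction, I would take an optimal coloring of $G$ in which $v$ is assigned a color $c$ that no other vertex receives. Restricting this coloring to $G-v$ yields a proper coloring of $G-v$ that uses every color of the original coloring except $c$, hence at most $\chi(G)-1$ colors in total. Therefore $\chi(G-v)\le \chi(G)-1<\chi(G)$, which is exactly the definition of $v$ being critical.

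For the ``only if'' direction, I would start from a critical $v$, so that $\chi(G-v)=\chi(G)-1$ by Observation~\ref{obs:differsbyatmostone}. Fix an optimal coloring of $G-v$ using $\chi(G)-1$ colors and extend it to $G$ by giving $v$ a brand-new color not among those $\chi(G)-1$ colors. This extension is a proper coloring of $G$ (adjacency constraints involving $v$ are automatically satisfied, since $v$'s color is unused elsewhere) and it uses exactly $\chi(G)$ colors, so it is optimal; moreover $v$ is, by construction, the only vertex of its color.

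I do not expect any real obstacle here: the argument is a short direct double implication, and the only points that need a word of care are that the extended coloring in the second direction is genuinely proper and genuinely optimal, both of which are immediate from the fresh-color construction.
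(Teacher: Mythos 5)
Your proposal is correct and follows essentially the same route as the paper: the ``only if'' direction extends an optimal coloring of $G-v$ by a fresh color on $v$, and the ``if'' direction (which the paper dismisses as immediate) restricts a coloring with $v$ uniquely colored to $G-v$. You merely spell out both directions in more detail than the paper does.
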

\setcounter{theorem}{\value{tempcounter}}
\begin{proof}
	Given a critical $v\in V(G)$, consider an arbitrary optimal coloring of $G-v$. Since $v$ is critical, it uses one fewer color than the optimal colorings of $G$. We therefore obtain an optimal coloring of $G$ by assigning $v$ a new color. 
	The converse is immediate. 
\end{proof}

\section{Proof of Proposition~\ref{pro:CliqueAndIS}}
\label{app:CliqueAndIS}
We detail how to derive results for Clique and Independent Set summarized in Table~\ref{tab:OverviewTable} from the results for Vertex Cover and vice versa. 
We restate and prove Proposition~\ref{pro:CliqueAndIS}.
\setcounter{tempcounter}{\value{theorem}}
\setcounterref{theorem}{pro:CliqueAndIS}
\addtocounter{theorem}{-1}
\begin{proposition}
Let $\coG$ denote the complement graph of $G$. We have the following equalities.
\begin{enumerate}
	\item 
	$\beta\text{-}\Stability=\alpha\text{-}\Stability=\{\coG\mid G\in \omega\text{-}\Unfrozenness\}$.
	\item 
	$\beta\text{-}\Unfrozenness=\alpha\text{-}\Unfrozenness=\{\coG\mid G\in \omega\text{-}\Stability\}$.
	\item 
	$\beta\text{-}\TwoWayStability=\alpha\text{-}\TwoWayStability=\{\coG\mid G\in \omega\text{-}\TwoWayStability\}$.
	\item $\beta\text{-}\VertexStability=\{I_n\mid n\in\N\}$.
	\item $\alpha\text{-}\VertexStability=\{\coG \mid G\in\omega\text{-}\VertexStability\}$. 
	\item
	$\beta\text{-}\VertexUnfrozenness=\beta\text{-}\VertexTwoWayStability=\{K_0\}$.
	\item $\alpha\text{-}\VertexUnfrozenness=\alpha\text{-}\VertexTwoWayStability=\\
	\omega\text{-}\VertexUnfrozenness=\omega\text{-}\VertexTwoWayStability=\emptyset$. 
\end{enumerate} 
\end{proposition}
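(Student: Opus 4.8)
The plan is to derive all seven identities from two classical facts together with some elementary bookkeeping about complementation. The classical facts are the Gallai identity $\alpha(G)+\beta(G)=\card{V(G)}$---equivalently, $S\subseteq V(G)$ is independent exactly if $V(G)\setminus S$ is a vertex cover---and $\omega(G)=\alpha(\overline{G})$. The bookkeeping: complementation is an involution on $\mathcal{G}$ that swaps edges with nonedges, so that $\overline{G-e}=\overline{G}+e$ for $e\in E(G)$, $\overline{G+e}=\overline{G}-e$ for $e\in\coE(G)$, and $\overline{G-v}=\overline{G}-v$ for $v\in V(G)$; moreover, a single edge deletion or addition never changes $\card{V(G)}$. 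Once these are in hand, every item reduces to a one-line computation.

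For the edge versions (items 1--3), I would first establish $\beta\text{-}\mathrm{X}=\alpha\text{-}\mathrm{X}$ for $\mathrm{X}\in\{\Stability,\Unfrozenness,\TwoWayStability\}$: since $\alpha(G)=\card{V(G)}-\beta(G)$ and $\card{V(G)}$ is unaffected by deleting or adding an edge, an edge $e$ satisfies $\beta(G)=\beta(G-e)$ iff $\alpha(G)=\alpha(G-e)$ (and likewise for adding a nonedge), so a graph is $\beta$-stable/unfrozen exactly if it is $\alpha$-stable/unfrozen. Next, to obtain $\alpha\text{-}\Stability=\{\overline{G}\mid G\in\omega\text{-}\Unfrozenness\}$, I compute for a graph $H$ and an edge $e\in E(H)$ that $\alpha(H-e)=\omega(\overline{H-e})=\omega(\overline{H}+e)$, while the edges of $H$ are precisely the nonedges of $\overline{H}$; hence $H$ is $\alpha$-stable iff $\omega(\overline{H})=\omega(\overline{H}+e')$ for every nonedge $e'$ of $\overline{H}$, i.e., iff $\overline{H}$ is $\omega$-unfrozen, which is item~1 after substituting $H=\overline{G}$. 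The same calculation with deletion and addition interchanged gives item~2, and item~3 then follows because two-way stability is the intersection of stability and unfrozenness and complementation, being a bijection on $\mathcal{G}$, commutes with intersection.

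For the vertex versions (items 4--7), item~5 is immediate from $\alpha(H-v)=\omega(\overline{H}-v)$. For item~4, Gallai shows that $\beta(G)=\beta(G-v)$ holds iff $\alpha(G-v)=\alpha(G)-1$, i.e., iff every maximum independent set of $G$ contains $v$; requiring this for all $v$ forces $V(G)$ itself to be a maximum independent set, hence $G$ edgeless, and conversely every $I_n$ is plainly $\beta$-vertex-stable. For items~6 and~7 I would exhibit, for each graph $G$, one vertex addition that shifts the parameter: adjoining a vertex adjacent to all of $V(G)$ makes $\beta$ jump from $\beta(G)$ to $\beta(G)+1$ whenever $V(G)\neq\leer$ (using $\beta(G)\le\card{V(G)}-1$ in that case), while the null graph instead stays at $0$, singling it out as the unique $\beta$-vertex-unfrozen graph; adjoining an isolated vertex raises $\alpha$ by one whenever $V(G)\neq\leer$ and turns $K_0$ into $I_1$, so $\alpha\text{-}\VertexUnfrozenness$ is empty; and adjoining a vertex adjacent to a maximum clique raises $\omega$ by one for every $G$, including $K_0$. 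Intersecting with the vertex-stability classes (or simply invoking emptiness) then yields the two-way-stability statements. None of this is deep; the one place that demands care is the edge/nonedge correspondence in items 1--3, where one must verify that passing to the complement genuinely trades ``delete an edge'' for ``add a nonedge.''
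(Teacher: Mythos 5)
Your proposal is correct and follows essentially the same route as the paper: complementation exchanging cliques with independent sets (and edges with nonedges) for items 1--3 and 5, the Gallai identity $\alpha(G)+\beta(G)=\card{V(G)}$ combined with the invariance of $\card{V(G)}$ under edge operations for $\beta\text{-X}=\alpha\text{-X}$, and the universal-vertex/isolated-vertex additions for items 6 and 7. The only cosmetic difference is item 4, which the paper proves directly (any optimal vertex cover of a graph with an edge contains a critical vertex) while you route it through Gallai and the observation that $\alpha$-vertex-stability of every vertex forces $V(G)$ to be independent; both are one-line arguments.
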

\setcounter{theorem}{\value{tempcounter}}
\begin{proof}
	For the second equality of the first three items it suffices to note that an independent set of a graph is a clique of its complement graph and vice versa. 
	The first equality of the first three items follows from the fact that, on the one hand, for any graph on $n$ vertices, the complement of a vertex cover of size $k$ is an independent set of size $n-k$ and, on the other hand, adding or deleting edges obviously does not change the number of vertices. 
	For the remaining items, we add or delete vertices, so this argument does not hold anymore. Item 4 is exactly  Theorem~\ref{thm:beta-vertex-stability}.  
    For item 5, we simply use that a clique is an independent set in the complement graph and vice versa. Item 6 combines Theorems~\ref{thm:v-unfrozenness} and~\ref{thm:v-twowaystability}. Item 7 finally follow from the fact that adding an isolated vertex increases $\alpha$, while adding a universal vertex increases $\omega$.
\end{proof}

\section{Proof of Proposition~\ref{pro:StabilityAndCriticality}}
\label{app:StabilityAndCriticality}
We restate and prove Proposition~\ref{pro:StabilityAndCriticality}.
\setcounter{tempcounter}{\value{theorem}}
\setcounterref{theorem}{pro:StabilityAndCriticality}
\addtocounter{theorem}{-1}
\begin{proposition}
We have the following equalities. 
\begin{enumerate}
	\item $\betaVertexStability=\alpha\text{-}\VertexCriticality=\{\coG\mid G\in \omega\text{-}\VertexCriticality\}$. 
	\item $\beta\text{-}\VertexCriticality=\alpha\text{-}\VertexStability=\{\coG\mid G\in \omega\text{-}\VertexStability\}$. 
\end{enumerate}
\end{proposition}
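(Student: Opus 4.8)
The plan is to derive both items from a single elementary fact about vertex deletion and then pass between $\alpha$ and $\omega$ by complementation. First I recall the Gallai-type identity $\alpha(H)+\beta(H)=\card{V(H)}$, which holds for every graph $H$ because the complement of a maximum independent set is a minimum vertex cover. Fix a graph $G$ and a vertex $v\in V(G)$. Applying the identity to $G$ and to $G-v$ and subtracting yields $\bigl(\alpha(G)-\alpha(G-v)\bigr)+\bigl(\beta(G)-\beta(G-v)\bigr)=1$. Both summands are nonnegative integers: the first because every independent set of $G-v$ is one of $G$, the second because deleting $v$ from a minimum vertex cover of $G$ leaves a vertex cover of $G-v$. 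Hence exactly one of the two summands equals $1$ and the other equals $0$, so $v$ is $\alpha$-critical in $G$ if and only if it is $\beta$-stable in $G$, and $v$ is $\alpha$-stable in $G$ if and only if it is $\beta$-critical in $G$.

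Quantifying these vertex-wise equivalences over all $v\in V(G)$ immediately gives the first equality of each item: $G$ is $\beta$-vertex-stable iff every vertex is $\beta$-stable iff every vertex is $\alpha$-critical iff $G$ is $\alpha$-vertex-critical, so $\betaVertexStability=\alpha\text{-}\VertexCriticality$, and symmetrically $\beta\text{-}\VertexCriticality=\alpha\text{-}\VertexStability$.

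For the remaining two equalities I would use $\alpha(H)=\omega(\overline{H})$ for every graph $H$, together with the observation that $\overline{G-v}=\overline{G}-v$. These give $\alpha(G)=\omega(\overline{G})$ and $\alpha(G-v)=\omega(\overline{G}-v)$, so $v$ is $\alpha$-critical (respectively $\alpha$-stable) in $G$ exactly when it is $\omega$-critical (respectively $\omega$-stable) in $\overline{G}$; quantifying over vertices and using that complementation is an involution turns this into $\alpha\text{-}\VertexCriticality=\{\coG\mid G\in\omega\text{-}\VertexCriticality\}$ and the analogous identity for vertex-stability, which together with the previous paragraph completes the proposition. I do not anticipate a genuine obstacle here; the only points requiring a word of care are the degenerate null graph (for which all four membership conditions hold vacuously, consistently with the identities) and being explicit that both $\alpha$ and $\beta$ are nonincreasing under vertex deletion, since that is exactly what legitimizes the ``exactly one summand is $1$'' step.
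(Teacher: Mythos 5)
Your proposal is correct and follows essentially the same route as the paper: the paper also derives the vertex-wise equivalence from the identity $\alpha(H)+\beta(H)=\card{V(H)}$ (via an equivalence chain rather than your ``exactly one summand is $1$'' packaging) and handles the $\omega$ versions by graph complementation. Your write-up is marginally more explicit about the $\alpha$/$\omega$ complementation and the monotonicity facts that justify the final step, but the underlying argument is identical.
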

\setcounter{theorem}{\value{tempcounter}}
\begin{proof}
It suffices to prove that a vertex $v$ of a graph $G$ is $\alpha$-stable if and only if it is $\beta$-critical. 
In the forth step of the following equivalence chain we use that every minimum vertex cover is the complement of a maximum independent set and vice versa. 

\begin{align*}
v\text{ is $\alpha$-stable}\ 
&\Longleftrightarrow\ \alpha(G-v)=\alpha(G)\\
&\Longleftrightarrow\ \card{V(G)}-\alpha(G-v)=\card{V(G)}-\alpha(G)\\
&\Longleftrightarrow\ \card{V(G-v)}-\alpha(G-v)+1=\card{V(G)}-\alpha(G)\\
&\Longleftrightarrow\ \beta(G-v)+1=\beta(G)\\
&\Longleftrightarrow\ \beta(G-v)\neq\beta(G)\\
&\Longleftrightarrow\ v\text{ is $\beta$-critical}. 
\end{align*}
\end{proof}

\section{Proof of Lemma~\ref{lem:v-stability-to-stability}}
\label{app:v-stability-to-stability}
We restate and prove Lemma~\ref{lem:v-stability-to-stability}.
\setcounter{tempcounter}{\value{theorem}}
\setcounterref{theorem}{lem:v-stability-to-stability}
\addtocounter{theorem}{-1}
\begin{lemma}
	\VertexStability polynomial-time many-one reduces to \Stability.
\end{lemma}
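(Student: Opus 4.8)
The plan is to show that the self-join map $G\mapsto G+G$ already announced in the text is the required reduction; since it is plainly polynomial-time computable, the whole task reduces to proving that $G$ is vertex-stable if and only if $G+G$ is stable. I would start from the standard fact that $\chi(G+H)=\chi(G)+\chi(H)$ for joins (a proper coloring of $G+H$ must use disjoint color sets on the two parts, and conversely such colorings always exist), so that $\chi(G+G)=2\chi(G)$. Then I would distinguish the two types of edges of $G+G$: those lying inside one of the two copies of $G$, and the join edges connecting the two copies.

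For an edge $e$ inside a copy, one has $(G+G)-e \cong (G-e)+G$, hence $\chi((G+G)-e)=\chi(G-e)+\chi(G)$, which is strictly smaller than $\chi(G+G)=2\chi(G)$ precisely when $e$ is critical in $G$. So $G+G$ has a critical edge of this first type if and only if $G$ has a critical edge.

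The interesting case is a join edge $\{u,w\}$ with $u$ in the first copy and $w$ in the second. The key point is that $\{u,w\}$ is the \emph{only} non-edge of $G+G$ between the two copies, so in any proper coloring of $(G+G)-\{u,w\}$ a color can appear in both copies only by being the color of $u$ and of $w$ and of no other vertex of either copy. Writing such a coloring with color sets $A$ used only in copy one, $B$ used only in copy two, and $C$ used in both, one gets $|C|\le 1$, and when $|C|=1$ that single color is private to $u$ within copy one and private to $w$ within copy two. Using Observation~\ref{obs:CriticalVertexCharacterization} (a vertex is critical exactly when some optimal coloring gives it a private color) together with the inequalities $|A|+|C|\ge\chi(G)$ and $|B|+|C|\ge\chi(G)$, I would show that $\chi((G+G)-\{u,w\})$ equals $2\chi(G)-1$ when both $u$ and $w$ are critical in $G$ and equals $2\chi(G)$ otherwise; thus a join edge is critical if and only if both of its endpoints are critical vertices of $G$. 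I expect this color-counting argument --- deciding exactly when the value $2\chi(G)-1$ is attainable and when it is not --- to be the main obstacle.

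To finish, I would assemble the cases: $G+G$ fails to be stable if and only if it has a critical edge inside a copy or a critical join edge, i.e.\ if and only if $G$ has a critical edge or $G$ has a critical vertex (a single critical vertex $v$ of $G$ already produces the critical join edge between the two copies of $v$). By Observation~\ref{obs:CriticalityFromEdges} the endpoints of a critical edge are themselves critical vertices, so the first alternative is absorbed into the second, and $G+G$ is unstable if and only if $G$ has a critical vertex, i.e.\ if and only if $G$ is not vertex-stable. This is exactly the claimed many-one reduction.
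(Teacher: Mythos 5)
Your proof is correct, and the reduction map $G\mapsto G+G$ is the same one the paper uses, but your argument for why it works is genuinely different and does strictly more than the paper's. The paper dispatches the direction ``$G$ vertex-stable $\Rightarrow$ $G+G$ stable'' in one line: vertex deletion commutes with the join and $\chi(G+H)=\chi(G)+\chi(H)$, so $G+G$ is vertex-stable, and a vertex-stable graph is automatically (edge-)stable by Observation~\ref{obs:StabilityFromVertices}. This sidesteps entirely the case analysis of join edges, which is where you invest most of your effort. For the converse the paper only needs one explicit coloring: given a critical vertex $v$ of $G$, color the two copies of $G-v$ with disjoint palettes of $\chi(G)-1$ colors each and give the two copies of $v$ one shared fresh color, exhibiting a $(2\chi(G)-1)$-coloring of $G+G$ minus the join edge between the two copies of $v$. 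That is exactly the ``attainability'' half of your join-edge analysis, specialized to $u=w=v$. What your route buys in exchange for the extra work is a complete characterization of the critical edges of $G+G$ --- a within-copy edge is critical iff it is critical in $G$, and a join edge $\{u,w\}$ is critical iff both $u$ and $w$ are critical vertices of $G$ --- of which only the easy halves are needed for the reduction. Your color-counting argument (at most one color shared between the copies, each restriction being a proper coloring of $G$, hence $\chi((G+G)-\{u,w\})\ge 2\chi(G)-1$ with equality forcing private colors on $u$ and $w$ via Observation~\ref{obs:CriticalVertexCharacterization}) is sound, and your final assembly via Observation~\ref{obs:CriticalityFromEdges} is exactly right; just be aware that the ``hard'' direction of your join-edge claim, and the within-copy case when $G$ is vertex-stable, can both be replaced by the single appeal to Observation~\ref{obs:StabilityFromVertices}.
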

\setcounter{theorem}{\value{tempcounter}}
\begin{proof}
	We show that $G$ is vertex-stable if and only if $G + G$ is stable.
	It is immediate that if $G$ is vertex-stable, then
	$G + G$ is vertex-stable, and thus $G + G$ is stable by Observation~\ref{obs:StabilityFromVertices}. 
	For the converse, suppose that $G$ is not vertex-stable.
	Then there is a vertex $v$ with $\chi(G - v) = \chi(G) - 1$. 
	Fix any optimal coloring of $G-v$. 
	We now color both copies of $G-v$ in $G + G$ according to it, but using two disjoint sets of $\chi(G)-1$ colors. 
	Finally, we assign one additional new color 
	to the two vertices corresponding to $v$. 
	This colors the graph $G + G$ from which the edge between the two copies of $v$ has been deleted
	with $2(\chi(G) - 1) + 1 = \chi(G + G) - 1$ colors, proving that 
	$G + G$ is not stable.
\end{proof}

\section{Proof of Theorem~\ref{thm:JoinIsANDomega}}
\label{app:JoinIsANDomega}
We restate and prove Theorem~\ref{thm:JoinIsANDomega}.
\setcounter{tempcounter}{\value{theorem}}
\setcounterref{theorem}{thm:JoinIsANDomega}
\addtocounter{theorem}{-1}
\begin{theorem}
	The join is an \ANDomega function for \VertexStability and \Unfrozenness.
\end{theorem}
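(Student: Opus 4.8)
The plan is to reduce everything to the single structural fact that the chromatic number is additive over the join: $\chi(G+H)=\chi(G)+\chi(H)$. This holds because in $G+H$ every vertex of $G$ is adjacent to every vertex of $H$, so any proper coloring of $G+H$ uses disjoint color sets on $G$ and on $H$ (giving $\chi(G+H)\ge\chi(G)+\chi(H)$), while conversely any two proper colorings with disjoint palettes combine to one of $G+H$. By induction this yields $\chi(G_1+\cdots+G_k)=\sum_{i=1}^{k}\chi(G_i)$. The map $f\colon(G_1,\dots,G_k)\mapsto G_1+\cdots+G_k$ is plainly polynomial-time computable, so it remains only to check the two biconditionals.

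For \VertexStability, I would first note that the vertex set of the join is the disjoint union of the $V(G_i)$, and that deleting a vertex commutes with the join: if $v\in V(G_i)$, then $(G_1+\cdots+G_k)-v=G_1+\cdots+(G_i-v)+\cdots+G_k$. Combining this with additivity of $\chi$, the vertex $v$ is stable in the join exactly when $\chi(G_i-v)+\sum_{j\neq i}\chi(G_j)=\sum_{j}\chi(G_j)$, i.e., exactly when $\chi(G_i-v)=\chi(G_i)$, i.e., exactly when $v$ is stable in $G_i$. Quantifying over all vertices, the join is vertex-stable iff each $G_i$ is vertex-stable, which is precisely the defining property of an \ANDomega function for \VertexStability.

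The argument for \Unfrozenness is entirely parallel, with nonedges in place of vertices. The point is that, since every cross pair between two different factors is already a join edge, every nonedge $e'$ of $G_1+\cdots+G_k$ lies inside a single factor $G_i$; hence $\coE(G_1+\cdots+G_k)$ is the disjoint union of the $\coE(G_i)$, and $(G_1+\cdots+G_k)+e'=G_1+\cdots+(G_i+e')+\cdots+G_k$. As before, additivity of $\chi$ shows $e'$ is unfrozen in the join iff it is unfrozen in $G_i$, so the join is unfrozen iff every $G_i$ is unfrozen; thus $f$ is an \ANDomega function for \Unfrozenness as well.

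There is no genuine obstacle here — the content is just the additivity of $\chi$ over joins plus the observation that vertices and nonedges of a join localize to a single factor. The only point requiring a word of care is the degenerate cases, which I would dispatch in one sentence: for $k\le 1$ the claim is immediate (with $f(G_1)=G_1$, and the empty join being the null graph, which is vacuously vertex-stable and unfrozen, matching the empty conjunction), and a factor equal to $K_0$ contributes no vertices and no nonedges and leaves the join unchanged, so it may be ignored, consistently with $K_0$ lying in both \VertexStability and \Unfrozenness.
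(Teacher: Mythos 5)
Your proof is correct and follows essentially the same route as the paper's: establish additivity of $\chi$ over the join, observe that vertex deletion and nonedge addition localize to a single factor and commute with the join operation, and read off the two biconditionals. The only difference is your explicit (and harmless) treatment of the degenerate cases, which the paper omits.
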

\setcounter{theorem}{\value{tempcounter}}
\begin{proof}
	Let 
	$G_1,\ldots,G_n$ 
	be a finite number of graphs.  
	Consider $G_1+\cdots+G_n$. 
	For every $i\in\{1,\ldots,n\}$, the join edges force the vertices $V(G_i)$ to have colors that are different from the colors of all remaining vertices. This implies $\chi(G_1+\cdots+G_n)=\chi(G_1)+\cdots+\chi(G_n)$.
	Moreover, vertex deletion and edge addition commute with joining: 
	For every $v\in V(G_1+\cdots+G_n)=V(G_1)\cup\cdots\cup V(G_n)$, 
	there is an $i$ such that 
	$(G_1+\cdots+G_n)-v=G_1+\cdots+G_{i-1}+(G_i-v)+G_{i+1}+\cdots+G_n$. Analogously, for every nonedge  $e\in\coE(G_1+\cdots+G_n)=\coE(G_1)\cup \cdots\cup \coE(G_n)$, there is an $i$ such that  $(G_1+\cdots+G_n)+e=G_1+\cdots+G_{i-1}+(G_i+e)+G_{i+1}+\cdots+G_n$. 
	The claim of the theorem follows immediately. 
\end{proof}

\section{Proof of Lemma~\ref{lem:SatToThreeSat}}
\label{app:SatToThreeSat}
We restate and prove Lemma~\ref{lem:SatToThreeSat}.
\setcounter{tempcounter}{\value{theorem}}
\setcounterref{theorem}{lem:SatToThreeSat}
\addtocounter{theorem}{-1}
\begin{lemma}
	There is a polynomial-time many-one reduction from \Sat to \ThreeSat converting a \CNF-formula $\Phi$ into a \ThreeCNF-formula $\Psi$ such that 
	$\Phi$ is stable if and only if $\Psi$ is stable. 
\end{lemma}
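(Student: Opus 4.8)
The plan is to use the textbook reduction from \Sat to \ThreeSat that replaces each clause by a conjunction of $3$-clauses, but to apply it in a way that respects the clause structure relevant for stability. Recall that a formula is stable exactly if it is satisfiable or if it \emph{and} all of its $1$-clause-deleted subformulas are unsatisfiable. So I need the reduction $\Phi\mapsto\Psi$ to satisfy two properties simultaneously: (i) $\Phi$ is satisfiable if and only if $\Psi$ is, and (ii) for each clause $C$ of $\Phi$ there is a clause $C'$ of $\Psi$ such that $\Phi-C$ is satisfiable if and only if $\Psi-C'$ is, and conversely every clause of $\Psi$ arises this way (or its deletion leaves satisfiability unchanged). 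The standard gadget handles (i); the work is to arrange (ii).

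Concretely, I would process $\Phi=D_1\wedge\cdots\wedge D_m$ clause by clause. A clause of length $1$ or $2$ is padded up to length $3$ using one or two fresh variables together with the $2^k$ clauses over those fresh variables that force them, in the manner of the \coNP-hardness gadget in the proof of Theorem~\ref{thm:StableThreeCNF}; a clause $D_i=(\ell_1\vee\cdots\vee\ell_k)$ with $k\ge 4$ is replaced by the usual chain $(\ell_1\vee\ell_2\vee z_{i,1})\wedge(\overline{z_{i,1}}\vee\ell_3\vee z_{i,2})\wedge\cdots\wedge(\overline{z_{i,k-3}}\vee\ell_{k-1}\vee\ell_k)$ with fresh variables $z_{i,j}$. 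The key point for stability is that deleting \emph{any single clause} of the gadget replacing $D_i$ makes that gadget satisfiable no matter how the $\ell$'s are set: in the chain, dropping one link lets the $z_{i,j}$'s be set to satisfy everything to the left and right of the gap; in the padding gadgets, dropping any clause frees up a fresh variable. Hence for the purpose of condition (ii) we may as well regard the whole gadget for $D_i$ as ``the clause $D_i$'': deleting a clause \emph{inside} gadget $i$ from $\Psi$ yields a satisfiable formula (so $\Psi-C'$ is stable-irrelevant, always in \Sat), while the effect of deleting \emph{all of $D_i$} from $\Phi$ corresponds to deleting gadget $i$ from $\Psi$, which by the same correctness argument as in (i) is equisatisfiable with $\Phi-D_i$.

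With these two observations in hand the equivalence is routine bookkeeping. If $\Phi$ is satisfiable, so is $\Psi$, hence both are stable. If $\Phi$ is unsatisfiable but some $\Phi-D_i$ is satisfiable, then $\Psi$ is unsatisfiable, $\Psi$ has a clause $C'$ (any clause of gadget $i$) with $\Psi-C'$ satisfiable, so $\Psi$ is not stable; conversely, if $\Phi$ is stably unsatisfiable, then $\Psi$ is unsatisfiable and every clause $C'$ of $\Psi$ lies in some gadget $i$, so $\Psi-C'$ is equisatisfiable either with $\Phi-D_i$ (unsatisfiable, when $C'$ is the only clause of a length-gadget) or is outright satisfiable in the multi-clause case\,---\,wait, that last case would make $\Psi$ not stable, so I must instead choose the chain encoding so that gadget $i$ has a designated clause whose deletion simulates deleting $D_i$; I do this by letting gadget $i$ for a long clause be the chain \emph{plus} nothing extra, and noting that for a stably unsatisfiable $\Phi$ every one-clause deletion from $\Psi$ keeps it unsatisfiable because the chain minus one link is still at least as strong as $D_i$ minus nothing together with the rest of $\Phi$ being unsatisfiable\,---\,this is precisely the delicate point. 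The main obstacle, then, is exactly this: ensuring that deleting a single clause from a \emph{multi-clause} gadget does \emph{not} accidentally make $\Psi$ satisfiable when $\Phi$ is stably unsatisfiable, i.e.\ that the gadget is ``robust'' in the right direction. The cleanest fix is to avoid long clauses needing multi-clause gadgets by first observing that we may preprocess $\Phi$ so that all clauses have length $\le 3$ via a padding-only transformation (introducing fresh variables does not change satisfiability of $\Phi$ or of any $\Phi-C$), and only afterwards pad short clauses up to length exactly $3$; since padding is a one-clause-to-several transformation only for the \emph{added} unit-forcing clauses, and those added clauses are always individually deletable into satisfiability, condition (ii) holds trivially. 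I would therefore structure the proof as: (1) reduce to the case of clause length $\le 3$ by a clause-splitting step, checking stability is preserved; (2) pad to length exactly $3$; (3) verify (i) and (ii) by the elementary arguments above.
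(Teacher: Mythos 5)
There is a genuine gap, and it sits exactly where you yourself flag ``the delicate point.'' Your construction is the right one (and is essentially the paper's: split long clauses with the standard chain of fresh chain variables, then pad short clauses up to length exactly $3$), but your proposal never actually establishes the direction you worry about, and the ``fix'' you offer is circular: you propose to avoid multi-clause splitting gadgets by ``first observing that we may preprocess $\Phi$ so that all clauses have length $\le 3$ via a padding-only transformation,'' yet padding only lengthens clauses---the only way to bring a clause of length $k\ge 4$ down to length $3$ is precisely the clause-splitting chain whose behaviour under single-clause deletion you declared problematic. Your final outline then reinstates that splitting step as item (1) with ``checking stability is preserved'' left as an assertion.

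The worry itself stems from conflating two different statements. Deleting one link of the chain replacing $D_i$ makes \emph{that gadget} satisfiable regardless of the original literals (set the chain variables to $1$ left of the gap and to $0$ right of it), but it does \emph{not} make the whole formula $\Psi$ satisfiable: the chain variables of gadget $i$ are fresh, so $\Psi-C'$ is satisfiable if and only if the remaining gadgets are, i.e., if and only if $\Phi-D_i$ is satisfiable. There is no case in which ``$\Psi-C'$ is outright satisfiable'' independently of $\Phi$. With this single observation both directions close at once: if $\Phi-D_i$ is satisfiable, extend its satisfying assignment to the chain variables of the other gadgets as usual and use the gap-filling assignment on gadget $i$ minus any one link, so $\Psi$ is not stable; if $\Phi$ is stably unsatisfiable, then any assignment satisfying $\Psi-C'$ with $C'$ in gadget $i$ would restrict to one satisfying $\Phi-D_i$, a contradiction, so $\Psi$ is stable. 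This one-line equisatisfiability argument is exactly how the paper's proof handles the splitting step; your proposal contains all the ingredients but does not assemble them into a valid proof, and instead replaces the missing verification with a non-fix. (Your padding of short clauses via per-clause forcing gadgets over fresh variables would also work, though the paper's padding of a $2$-clause by $(\ell_1\vee\ell_2\vee z)\wedge(\ell_1\vee\ell_2\vee\overline{z})$ is simpler and verifies the same equisatisfiability property directly.)
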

\setcounter{theorem}{\value{tempcounter}}
\begin{proof}
	The standard clause-size reducing reduction maps a \CNF-formula $\Phi=C_1\wedge\cdots\wedge C_m$ to $\Psi'=F_1\wedge\cdots\wedge F_m$ by splitting any clause $C=(\ell_1\vee\cdots\vee\ell_k)$ with $k$ literals for a $k\ge 3$ into the $k$ clauses of the subformula $F=(\ell_1\vee y_1)\wedge(\overline{y}_1\vee \ell_2\vee y_2)\wedge\cdots\wedge(\overline{y}_{k-2}\vee \ell_{k-1}\vee y_{k-1})\wedge(\overline{y}_{k-1}\vee \ell_k)$ with fresh variables $y_i$ that do not occur elsewhere. 
	
	We prove that $\Phi$ is stable if and only if $\Psi'$ is. 
	We already know that $\Phi$ is satisfiable exactly if $\Psi'$ is. 
	Thus it suffices to show that $\Phi$ is satisfiable after deleting $C$ if and only if $\Psi'$ is satisfiable after deleting some clause in $F$. 
	Let $\alpha$ be a satisfying assignment for $\Phi-C$. 
	We then obtain a satisfying assignment $\beta$ for $\Psi'-(\ell_1\vee y_1)$ by setting $\beta(y_1)=\cdots=\beta(y_{k-1})=0$ and the remaining variables as usual. 
	For the converse, we simply observe that the restriction of an assignment satisfying $\Psi'$ with an arbitrary clause from $F$ deleted satisfies $\Phi-C$. 
	
	We now transform $\Psi'$ into a formula $\Psi\in\ThreeCNF$ that is stable exactly if $\Psi'$ is. We do this by substituting for any two-literal clause $(\ell_1\vee \ell_2)$ the subformula $(\ell_1\vee \ell_2\vee z)\wedge (\ell_1\vee \ell_2\vee \overline{z})$ and for any one-literal clause $(\ell_1)$ the subformula $(\ell_1\vee z_1\vee z_2)\wedge (\ell_1\vee z_1\vee \overline{z}_2)\wedge (\ell_1\vee \overline{z}_1\vee z_2)\wedge (\ell_1\vee \overline{z}_1\vee \overline{z}_2)$, where $z$, $z_1$, and $z_2$ are, for each substitution, new variables that do not occur anywhere else. It is now straightforward to check that $\Psi$ has all the desired properties. 
\end{proof}

\section{Proof That \texorpdfstring{\StableThreeCNF}{Stable3CNF} Has an OR\texorpdfstring{\textsubscript{2}}{2} Function}\label{app:or}
\begin{proof}
Let $\Phi  = C_1 \wedge \cdots \wedge C_m$ and  
$\Phi' = C'_1 \wedge \cdots \wedge C'_{m'}$ be two given \ThreeCNF-formulas. 
Without loss of generality, $\Phi$ and $\Phi'$ have disjoint variable sets. 
Let $\Psi = \bigwedge_{1\le i\le m,1\le j\le m'} (C_i \vee C'_j)$. Note that $\Psi$ is in \SixCNF and equivalent to $\Phi \vee \Phi'$. 
We will show that $\Psi \in  \StableSixCNF$ if and only if $\Phi \in \StableThreeCNF$ or
$\Phi' \in \StableThreeCNF$.
Setting the clause length of $\Psi$ to exactly $3$ by applying Lemma~\ref{lem:SatToThreeSat} then yields the desired \ORtwo-reduction.

First assume that neither $\Phi$ nor $\Phi'$ is in $\StableThreeCNF$. Then we have $\Phi, \Phi' \notin \ThreeSat$
and there are $\hat{\imath}\in\{1,\dots,m\}$ and $\hat{\jmath}\in\{1,\dots,m'\}$ and assignments $\assignment$ and
$\assignment'$ such that $\assignment$ satisfies $\Phi - C_{\hat{\imath}}$ and
$\assignment'$ satisfies
$\Phi' - C'_{\hat{\jmath}}$. Then we have $\Psi \notin \Sat$ and $(\assignment,\assignment')$
satisfies $(C_i \vee C'_j)$ for all
$(i,j) \neq (\hat{\imath},\hat{\jmath})$.
It follows that 
$\Psi - (C_{\hat{\imath}} \vee C'_{\hat{\jmath}})$ is satisfiable,
and thus $\Psi \notin \StableSixCNF$.

Now assume that $\Psi \notin \StableSixCNF$. Then $\Psi \notin \Sat$, and hence 
$\Phi, \Phi' \notin \ThreeSat$. 
There are indices $\hat{\imath}$ and $\hat{\jmath}$ such that
$\Psi - (C_{\hat{\imath}} \vee C'_{\hat{\jmath}})$ is satisfiable,
say by assignment $\assignmenttwo$. This $\assignmenttwo$ satisfies 
$(C_i \vee C'_j)$ for all $(i,j) \neq (\hat{\imath},\hat{\jmath})$.
In particular, $\assignmenttwo$ satisfies $(C_i \vee \Phi')$ for all 
$i \neq \hat{\imath}$ and $(\Phi \vee C'_j)$ for all $j \neq \hat{\jmath}$.
Since $\Phi, \Phi' \notin \ThreeSat$, this implies that $\assignmenttwo$ satisfies
$C_i$ for all $i \neq \hat{\imath}$ and $C'_j$ for all
$j \neq \hat{\jmath}$. It follows that $\Phi, \Phi' \notin \StableThreeCNF$.
\end{proof}

\section{Proof of Lemma~\ref{lem:cai-meyer}}
\label{app:cai-meyer}
We restate and prove Lemma~\ref{lem:cai-meyer}.

\setcounter{tempcounter}{\value{theorem}}
\setcounterref{theorem}{lem:cai-meyer}
\addtocounter{theorem}{-1}
\begin{lemma}
	A \ThreeCNF-formula $\Phi$ is not stable  
	if and only if
	$\chi(G_\Phi) > \chi(G_\Phi - t_{i1})$ for at least one $i\in\{1,\ldots,m\}$.
\end{lemma}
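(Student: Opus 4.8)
Write $\Phi=C_1\wedge\cdots\wedge C_m$. My plan is to reduce the statement to two properties of the Cai--Meyer graph $G_\Phi$~\cite{cai-mey:j:dp} (assembled in Figure~\ref{fig:caimeyertotal} of Appendix~\ref{app:caimeyertotal}): \textbf{(a)} for every \ThreeCNF-formula $\Psi$ we have $\chi(G_\Psi)\le 4$, with $\chi(G_\Psi)=3$ if $\Psi$ is satisfiable and $\chi(G_\Psi)=4$ if $\Psi$ is unsatisfiable---this is essentially the colorability correspondence underlying Cai and Meyer's reduction from \MinimalThreeUnSat to \VertexMinimalThreeUnCol; and \textbf{(b)} the construction is \emph{clause-modular}: $G_\Psi$ is built from a fixed kernel containing $v_\textnormal{s}$ together with a triangle, one gadget per variable, and---attached only to these common parts---one gadget per clause, and $t_{i1}$ is the unique vertex of the $i$-th clause gadget through which that gadget influences the rest of the graph. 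By (b), deleting the $i$-th clause gadget from $G_\Phi$ yields an isomorphic copy of $G_{\Phi-C_i}$; hence $G_{\Phi-C_i}$ is an induced subgraph of both $G_\Phi$ and $G_\Phi-t_{i1}$, and the kernel triangle survives the deletion of any $t_{i1}$, so $\chi(G_\Phi-t_{i1})\ge 3$.

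\textbf{Key step: $\chi(G_\Phi-t_{i1})=\chi(G_{\Phi-C_i})$ for every $i$.} The inequality ``$\ge$'' is immediate because $G_{\Phi-C_i}$ is an induced subgraph of $G_\Phi-t_{i1}$. For ``$\le$'' I would verify the gadget-level fact that the $i$-th clause gadget with $t_{i1}$ removed admits a proper $3$-coloring compatible with \emph{every} coloring of the handful of vertices joining it to the kernel and the variable gadgets; intuitively, removing $t_{i1}$ switches off clause $C_i$. Granting this, any proper coloring of $G_{\Phi-C_i}$ extends---without using a new color, since at least three are already present---to a proper coloring of $G_\Phi-t_{i1}$, giving $\chi(G_\Phi-t_{i1})\le\chi(G_{\Phi-C_i})$. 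Combining the equality with (a) applied to the \ThreeCNF-formula $\Phi-C_i$ yields that $\chi(G_\Phi-t_{i1})=3$ iff $\Phi-C_i$ is satisfiable and $\chi(G_\Phi-t_{i1})=4$ iff $\Phi-C_i$ is unsatisfiable.

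\textbf{Assembling the equivalence.} Recall that a \ThreeCNF-formula $\Phi$ fails to be stable exactly when $\Phi\notin\ThreeSat$ and $\Phi-C_i\in\ThreeSat$ for some $i$. If $\Phi$ is not stable, then $\chi(G_\Phi)=4$ by (a) and $\chi(G_\Phi-t_{i1})=3$ for the witnessing $i$ by the key step, so $\chi(G_\Phi)>\chi(G_\Phi-t_{i1})$. Conversely, if $\chi(G_\Phi)>\chi(G_\Phi-t_{i1})$ for some $i$, then, since $\chi(G_\Phi-t_{i1})\ge 3$, we get $\chi(G_\Phi)\ge 4$, hence $\chi(G_\Phi)=4$ and $\Phi$ is unsatisfiable by (a); and by Observation~\ref{obs:differsbyatmostone} this forces $\chi(G_\Phi-t_{i1})=3$, so $\Phi-C_i$ is satisfiable by the key step, making $\Phi$ not stable. (Consistency check: if $\Phi$ is stable, then either $\Phi$ is satisfiable---whence $\chi(G_\Phi)=3=\chi(G_\Phi-t_{i1})$ for all $i$---or $\Phi$ and every $\Phi-C_i$ are unsatisfiable---whence $\chi(G_\Phi)=4=\chi(G_\Phi-t_{i1})$ for all $i$; either way the right-hand side of the claimed equivalence is false.)

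\textbf{Main obstacle.} The one non-routine ingredient is the ``$\le$'' part of the key step, i.e.\ confirming both halves of (b) for Cai and Meyer's actual gadgets: that the clause gadgets interact with the rest of $G_\Phi$ only through the vertices $t_{i1}$, and that each clause gadget with its $t_{i1}$ deleted is $3$-colorable from an arbitrary boundary coloring. This is precisely the ``careful checking'' the paper alludes to; it amounts to unfolding the construction of~\cite{cai-mey:j:dp} step by step (as in Figure~\ref{fig:caimeyertotal}) and tracing which colorings of the literal and kernel vertices the truncated clause gadget can accommodate. Everything else is bookkeeping on top of Cai and Meyer's colorability analysis and Observation~\ref{obs:differsbyatmostone}.
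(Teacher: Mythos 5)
Your proposal is correct and follows essentially the same route as the paper's proof: both directions rest on the Cai--Meyer correspondences ``$\Phi$ satisfiable iff $G_\Phi$ is $3$-colorable'' and ``$\Phi-C_i$ satisfiable iff $G_\Phi-t_{i1}$ is $3$-colorable,'' combined with the bounds $3\le\chi(G_\Phi-t_{i1})$ and $\chi(G_\Phi)\le 4$. The only difference is presentational: where the paper cites Cai and Meyer directly for the second correspondence, you re-derive it via the modularity claim $\chi(G_\Phi-t_{i1})=\chi(G_{\Phi-C_i})$ and a greedy $3$-coloring of the truncated clause gadget, which is exactly the ``careful checking'' the paper delegates to the cited construction.
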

\setcounter{theorem}{\value{tempcounter}}
\begin{proof}[Proof of Lemma~\ref{lem:cai-meyer}]
	As stated by Cai and Meyer~\cite[Lem.~2.2]{cai-mey:j:dp}, $\Phi$ is satisfiable if and only if $G_\Phi$ is 3-colorable. 
	Note that a $\Phi$ that is not stable is not satisfiable. 
	If suffices to note $\Phi - C_i$ is satisfiable if and only if $G_\Phi - t_{i1}$ is 3-colorable. The mentioned paper proves the implication by picture~\cite[Figure~2.12]{cai-mey:j:dp} and states the converse in the second-to-last paragraph of is second section. 
\end{proof}

\section{Illustration of the Reduction From \EThreeSat to \ThreeCol}
\label{app:caimeyertotal}
In Lemma~\ref{lem:cai-meyer} and Theorem~\ref{thm:cai-meyer}, we rely on an extended version of the standard reduction from \ThreeSat to \ThreeCol. The construction is due to Cai and Meyer~\cite{cai-mey:j:dp}, who describe it in separate construction steps. 
For convenience, we provide in Figure~\ref{fig:caimeyertotal}~\cite[Figure 1, slightly modified]{bur-fre-etal:c:neighborly-help} an overview of the resulting, complete construction for a generic example.

\begin{figure}[h]
	\begin{tikzpicture}[x=0.9cm,y=0.9cm,vertex/.style={draw,circle, inner sep=.0pt, minimum size=0.6cm},font=\footnotesize]
	\node[vertex] (vc) at (6.5,2.8) {$v_\textnormal{c}$};
	\node[vertex] (vs) at (6.5,-4) {$v_\textnormal{s}$};
	
	\node[vertex] (x_1) at (1.2*3-1+1.2*1+2.75-5.5,1.3) {$x_1$};
	\draw (x_1) edge [in=185, out=28] (vc);
	\node[vertex] (nx_1) at (1.2*3-1+1.2*1+2.75-4.5,1.3) {$\overline{x}_1$};
	\draw (nx_1) edge [in=195, out=25] (vc);
	\draw (nx_1) -- (x_1);
	\node at (1.2*3-1+1.2*1+2.75-3.25,1.3) {$\cdots$};
	
	\node[vertex] (x_i) at (1.2*3-1+1.2*1+2.75-2.25,1.3) {$x_i$};
	\draw (x_i) edge [in=210, out=40] (vc);
	\node[vertex] (nx_i) at (1.2*3-1+1.2*1+2.75-1.25,1.3) {$\overline{x}_i$};
	\draw (nx_i) edge [in=220, out=60] (vc);
	\draw (nx_i) -- (x_i);
	\node at (1.2*3-1+1.2*1+2.75,1.3) {$\cdots$};
	
	\node[vertex] (x_j) at (1.2*3-1+1.2*1+2.75+1.25,1.3) {$x_j$};
	\draw (x_j) edge [in=-40, out=120] (vc);
	\node[vertex] (nx_j) at (1.2*3-1+1.2*1+2.75+2.25,1.3) {$\overline{x}_j$};
	\draw (nx_j) edge [in=-30, out=140] (vc);
	\draw (nx_j) -- (x_j);
	\node at (1.2*3-1+1.2*1+2.75+3.25,1.3) {$\cdots$};
	
	\node[vertex] (x_n) at (1.2*3-1+1.2*1+2.75+4.5,1.3) {$x_n$};
	\draw (x_n) edge [in=-15, out=155] (vc);
	\node[vertex] (nx_n) at (1.2*3-1+1.2*1+2.75+5.5,1.3) {$\overline{x}_n$};
	\draw (nx_n) edge [in=-5, out=152] (vc);
	\draw (nx_n) -- (x_n);            
	
	\node[vertex] (a_11) at (2.4*1-3,-2) {$a_{11}$};
	\draw (a_11) edge[in=175,out=-30] (vs);
	\node[vertex] (b_11) at (2.4*1-2,-2) {$b_{11}$};
	\draw (b_11) edge[in=166,out=-30]  (vs);
	\draw (b_11) -- (a_11);
	
	\node[vertex] (a_21) at (2.4*2-3,-2) {$a_{12}$};
	\draw (a_21) edge[in=157,out=-30] (vs);
	\node[vertex] (b_21) at (2.4*2-2,-2) {$b_{12}$};
	\draw (b_21) edge[in=148,out=-30]  (vs);
	\draw (b_21) -- (a_21);
	
	\node[vertex] (a_31) at (2.4*3-3,-2) {$a_{13}$};
	\draw (a_31) edge[in=139,out=-50] (vs);
	\node[vertex] (b_31) at (2.4*3-2,-2) {$b_{13}$};
	\draw (b_31) edge[in=130,out=-60]  (vs);
	\draw (b_31) -- (a_31);
	
	\node[vertex] (a_12) at (2.4*1 +5.5,-2) {$a_{m1}$};
	\draw (a_12) edge[in=50,out=180+60] (vs);
	\node[vertex] (b_12) at (2.4*1 +6.5,-2) {$b_{m1}$};
	\draw (b_12) edge[in=41,out=180+50] (vs);
	\draw (b_12) -- (a_12);
	
	\node[vertex] (a_22) at (2.4*2 +5.5,-2) {$a_{m2}$};
	\draw (a_22) edge[in=32,out=180+30] (vs);
	\node[vertex] (b_22) at (2.4*2 +6.5,-2) {$b_{m2}$};
	\draw (b_22) edge[in=23,out=180+30] (vs);
	\draw (b_22) -- (a_22);
	
	\node[vertex] (a_32) at (2.4*3 +5.5,-2) {$a_{m3}$};
	\draw (a_32) edge[in=14,out=180+30] (vs);
	\node[vertex] (b_32) at (2.4*3 +6.5,-2) {$b_{m3}$};
	\draw (b_32) edge[in=5,out=180+30] (vs);
	\draw (b_32) -- (a_32);
	
	\node[vertex] (t11) at (3,-0.25) {$t_{11}$};
	\draw (t11)--(b_11);
	\node[vertex] (t12) at (3.5,-1.116) {$t_{12}$};
	\draw (t12)--(b_21);
	\draw (t12)--(t11);
	\node[vertex] (t13) at (4,-0.25) {$t_{13}$};
	\draw (t13)--(b_31);
	\draw (t12)--(t13);
	\draw (t13)--(t11);
	\node[vertex] (t21) at (11,-0.25) {$t_{m1}$};
	\draw (t21)--(b_12);
	\node[vertex] (t22) at (11.5,-1.116) {$t_{m2}$};
	\draw (t22)--(b_22);
	\draw (t22)--(t21);
	\node[vertex] (t23) at (12,-0.25) {$t_{m3}$};
	\draw (t23)--(b_32);
	\draw (t22)--(t23);
	\draw (t23)--(t21);
	
	\node at (1.2*3-1+1.2*1+2.75,-1.2) {$\cdots$};
	
	\draw (x_1)--(a_11);
	\draw (x_i) edge [in=80, out=190](a_21);
	\draw (x_j)--(a_12);
	\draw (x_n)edge [in=90, out=-120](a_22);
	\draw (nx_j)edge[bend right=10](a_31);
	\draw (nx_n)edge [in=90, out=-60](a_32);
	
	\draw (vc)edge [in=70, out=-70, looseness=.8](vs);
	
	\draw[dashed, rounded corners] (-1.2, -2.5) rectangle (5.8, .3) {};
	\draw[dashed, rounded corners] (7.3, -2.5) rectangle (14.3, .3) {};
	\node[font=\normalsize] at (-.75,-.12) {$C_1$};
	\node[font=\normalsize] at (13.8,-.12) {$C_m$};
	\end{tikzpicture}
	
	\caption{The graph $G_\Phi$ for a \EThreeCNF-formula with 
		$C_1 = x_1 \vee x_i \vee \overline{x}_j$ and 
		$C_m = x_j \vee x_n \vee \overline{x}_n$.}\label{fig:caimeyertotal}
\end{figure}

\section{Proof of Lemma~\ref{lem:StabilizingGadget}}
\label{app:StabilizingGadget}

We restate and prove Lemma~\ref{lem:StabilizingGadget}. 
\setcounter{tempcounter}{\value{theorem}}
\setcounterref{theorem}{lem:StabilizingGadget}
\addtocounter{theorem}{-1}
\begin{lemma}
	There is a polynomial-time algorithm that, given any graph $G$ plus a nonempty subset $S\subseteq E(G)$ of its edges, adds a fixed gadget to the graph and then substitutes for every $e\in S$ some gadget that depends on $G$ and $e$, yielding a graph $\widehat{G}$ with the following properties:
	\begin{enumerate}
		\item 
		$\chi(\widehat{G})=\chi(G)+2$.
		\item 
		All edges in $E(\widehat{G})\setminus (E(G)\setminus S)$ are stable. 
		\item Each one of the remaining edges in $E(G)\setminus S$ is stable in $\widehat{G}$ exactly if it is stable in $G$.
	\end{enumerate}
\end{lemma}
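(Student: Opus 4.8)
The plan is to make the construction sketched above precise and then verify the three properties, the second being the crux. I would build $\widehat{G}$ in three stages. Stage~1, the fixed gadget: add a fresh edge $\{w_1,w_2\}$ and join both of its endpoints to all of $V(G)$. Stage~2, the edge substitutions: for each $e=\{v_1,v_2\}\in S$, delete $e$, add a fresh disjoint copy $G_e'$ of $G$, join $v_1$ to all of $V(G_e')$, add a fresh vertex $u_e'$, and insert the edge $\{v_2,u_e'\}$ together with all edges from $u_e'$ to $V(G_e')$. Stage~3: replicate, in the sense of Lemma~\ref{lem:doubling}, every vertex lying outside $V(G)$. Writing $\widehat{G}_0$ for the graph after Stages~1 and~2, the graph $\widehat{G}$ is $\widehat{G}_0$ with all vertices outside $V(G)$ replicated; by Lemma~\ref{lem:doubling}, iterated once per replicated vertex, $\chi(\widehat{G})=\chi(\widehat{G}_0)$ and likewise $\chi(\widehat{G}-e')=\chi(\widehat{G}_0-e')$ for every $e'\in E(G)\setminus S$, so every chromatic-number computation can be carried out on $\widehat{G}_0$.

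For property~1 I would establish $\chi(\widehat{G}_0)=\chi(G)+2$ by two matching bounds. For the upper bound, extend an optimal $\chi(G)$-coloring $c_0$ of $G$ by coloring $w_1$ and $w_2$ with the two fresh colors $\chi(G)+1$ and $\chi(G)+2$; on each $G_e'$ use a copy of $c_0$ relabeled so as to avoid the color $c_0(v_1)$, which is possible because $G_e'$ needs only $\chi(G)$ of the available $\chi(G)+2$ colors, after which $u_e'$ still has a color left since its only neighbors are $V(G_e')$ and $v_2$. For the lower bound, suppose $c$ is a proper coloring of $\widehat{G}_0$ using at most $\chi(G)+1$ colors; the fixed gadget then forces $\card{c(V(G))}\le\chi(G)-1$. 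Now pick any $e\in S$ --- this is where $S\neq\emptyset$ is used: since $G_e'$ is a copy of $G$ it uses at least $\chi(G)$ colors, and since $v_1$ is joined to all of $V(G_e')$ it must carry the unique color absent from $c(V(G_e'))$; the same color is then forced on $u_e'$, so $c(v_2)\neq c(v_1)$. Hence $c$ restricted to $V(G)$ respects every edge of $E(G)$, giving $\card{c(V(G))}\ge\chi(G)$, a contradiction.

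Property~2 rests on a short self-contained observation: if a vertex $x$ of a graph has a twin $x'$, meaning a vertex with the same neighborhood, then every edge $\{x,y\}$ incident to $x$ is stable, because in any optimal coloring of the graph with $\{x,y\}$ deleted one may recolor $x$ with the color of $x'$ --- which differs from every color on $N(x)\ni y$ --- to obtain a proper coloring of the whole graph using no additional color. After Stage~3 every vertex outside $V(G)$ has a twin, as does every twin vertex added in Stage~3, its partner being the original; and every edge of $\widehat{G}$ that is not one of the retained edges $E(G)\setminus S$ is incident to such a vertex, so all of them are stable. For property~3 I would prove that $\chi(\widehat{G}-e')=\chi(G-e')+2$ for every retained $e'\in E(G)\setminus S$; together with property~1 this gives that $e'$ is stable in $\widehat{G}$ exactly when it is stable in $G$. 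Both inequalities mirror those for property~1 with $G$ replaced by $G-e'$; the one subtlety is that the copies $G_e'$ are still copies of the \emph{full} graph $G$, so one splits into the cases $\chi(G)=\chi(G-e')$ and $\chi(G)=\chi(G-e')+1$ and checks in each that enough colors remain for $G_e'$ and then for $u_e'$.

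The step I expect to be the main obstacle is property~2, hand in hand with making the replication stage mesh cleanly with the counting arguments: one must verify that replicating \emph{all} of the many vertices outside $V(G)$ --- those of every $G_e'$, every $u_e'$, and $w_1$ and $w_2$ --- neither disturbs the chromatic number, which is handled by iterating Lemma~\ref{lem:doubling}, nor interferes with the color counting for properties~1 and~3, which is handled by always reducing to $\widehat{G}_0$ before counting. The remaining work, namely tracking which colors are available to each $u_e'$ and whether $c_0(v_2)$ lies among the colors used on $G_e'$, I expect to be routine but to require careful bookkeeping.
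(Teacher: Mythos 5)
Your construction is exactly the one in the paper (a new edge joined to $V(G)$; for each $e=\{v_1,v_2\}\in S$ a fresh copy $G_e'$ of $G$ joined to $v_1$ and to a new vertex $u_e'$ that is also adjacent to $v_2$; delete $e$; replicate everything outside $V(G)$), and your verification — the two-color penalty from the joined edge, the ``unique absent color'' argument forcing $c(u_e')=c(v_1)\neq c(v_2)$ in any $(\chi(G)+1)$-coloring, stability of all new edges via replicated (twin) vertices, and the case split on $\chi(G-e')$ for property~3 — matches the paper's proof step for step. The proposal is correct and takes essentially the same approach.
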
 
\setcounter{theorem}{\value{tempcounter}}
	
\begin{figure}[htbp]
	\begin{center}
		\begin{subfigure}[c]{0.3\textwidth}
			\centering
			\begin{tikzpicture}[
			vertex/.style={draw,circle, inner sep=.0pt, minimum size=0.53cm},font=\scriptsize]
			\draw[white] (0,0.7) ellipse (1cm and 2.5cm); 
			\draw (.5,0) ellipse (1.8cm and .6cm);
			
			\begin{scope}[shift={(.1,0)}]	
			\node[vertex] (v1) at (-.6,0) {$v_1$};
			\node[vertex] (v2) at (.6,0) {$v_2$};
			
			\draw (v1) -- (v2);
			\begin{scope}
			\draw[dashed] (0,0) ellipse (1.2cm and .4cm);
			\clip (0,0) ellipse (1.2cm and .4cm);
			
			\draw[] (v1) -- +(170:1);
			
			\draw[] (v2) -- +(-40:1);
			\draw[] (v2) -- +(30:1);
			\end{scope}
			\end{scope}
			\node (G) at (1.8,0) {$G$};
			\end{tikzpicture}
			\captionsetup{width=.9\linewidth}
			\subcaption{An example of a simple given graph $G$ on five vertices with only a single edge $e=\{v_1,v_2\}$ to be stabilized. The vertices outside the dashed ellipse and edge parts leading to them are omitted.}
			\label{fig:StabilizingGadgetA}
		\end{subfigure}
		\begin{subfigure}[c]{0.68\textwidth}
			\centering
			\begin{tikzpicture}[
			xscale=1.05,yscale=.99,
			vertex/.style={draw,circle, inner sep=.0pt, minimum size=0.53cm},font=\scriptsize]
			
			\coordinate (h1) at (-3.3,0.2);
			\coordinate (h4) at (4,1);
			\coordinate (h5) at (3.5,-.5);
			\node[vertex] (w2') at (-.3,-2.5) {$w_2'$};
			\node[vertex] (w1') at (-.1,-1.9) {$w_1'$};
			\node[vertex] (w1'') at (2.1,-1.9) {$w_1''$};
			\node[vertex] (w2'') at (2.3,-2.5) {$w_2''$};
			\draw (w1') -- (h1) -- (w2');
			\draw (w1') -- (h4) -- (w2');
			\draw (w1') -- (h5) -- (w2');
			
			\draw (w1'') -- (h1) -- (w2'');
			\draw (w1'') -- (h4) -- (w2'');
			\draw (w1'') -- (h5) -- (w2'');
			
			\draw[fill=white] (1,0) ellipse (4.5cm and 1.4cm);
			
			\node[vertex] (v1) at (-2.5,0) {$v_1$};
			\node[vertex] (v2) at (3.5,0) {$v_2$};
			
			\begin{scope}
			\draw[dashed] (.5,0) ellipse (3.5cm and 1cm);
			\clip (.5,0) ellipse (3.5cm and 1cm);
			
			\draw[] (v1) -- +(170:1);
			
			\draw[] (v2) -- +(-40:1);
			\draw[] (v2) -- +(30:1);
			\end{scope}
			
			\coordinate (h'1) at (-1.4,.4);
			\coordinate (h'3) at (1.3,-.5);
			\coordinate (h'4) at (1.7,-.8);
			
			\begin{scope}[shift={(-.7,2.4)},rotate=30]
			\coordinate (h''1) at (-1.4,.4);
			\coordinate (h''3) at (1.3,-.5);
			\coordinate (h''4) at (1.7,-.6);
			\end{scope}
			
			\node[vertex] (u') at (2.0,.6) {$u_e'$};
			\node[vertex] (u'') at (1.7,2) {$u_e''$};
			\draw[] (u') -- (h'1) -- (v1);
			\draw[] (u') -- (h'3) -- (v1);
			\draw[] (u') -- (h'4) -- (v1);
			
			\draw[] (u'') -- (h'1);
			\draw[] (u'') -- (h'3);
			\draw[] (u'') -- (h'4);
			
			\begin{scope}[shift={(.2,-.1)},rotate=-18]
			\draw[fill=white] (0,0) ellipse (1.8cm and .6cm);
			
			\begin{scope}[shift={(-.4,0)}]		
			\node[vertex] (v1') at (-.6,0) {$v_{1,e}'$};
			\node[vertex] (v2') at (.6,0) {$v_{2,e}'$};
			\draw (v1') -- (v2');
			\draw (v1') -- (u'');
			\draw (v2') -- (u'');
			\begin{scope}
			\draw[dashed] (0,0) ellipse (1.1cm and .4cm);
			\clip (0,0) ellipse (1.1cm and .4cm);
			
			\draw[] (v1') -- +(170:1);
			
			\draw[] (v2') -- +(-40:1);
			\draw[] (v2') -- +(30:1);
			\end{scope}	
			\end{scope}
			\end{scope}	
			
			\begin{scope}[shift={(0,2.5)}]
			\draw[] (u') -- (h''1) -- (v1);
			\draw[] (u') -- (h''3) -- (v1);
			\draw[] (u') -- (h''4) -- (v1);	
			
			\begin{scope}[shift={(0.2,-.2)}]
			\begin{scope}[shift={(-1.2,-.1)},rotate=18]
			\draw[fill=white] (.5,0) ellipse (1.8cm and .6cm);
			
			\begin{scope}[shift={(.1,0)}]	
			\node[vertex] (v1'') at (-.6,0) {$v_{1,e}''$};
			\node[vertex] (v2'') at (.6,0) {$v_{2,e}''$};
			
			\begin{scope}
			\draw[dashed] (0,0) ellipse (1.1cm and .4cm);
			\clip (0,0) ellipse (1.1cm and .4cm);		
			
			\end{scope}
			\end{scope}
			\end{scope}
			\end{scope}
			\end{scope}	
			
			\draw (u') -- (v2);
			\draw (u'') -- (v2);
			\draw (u') -- (v1') -- (v1);
			\draw (u') -- (v2') -- (v1);
			\draw (u') -- (v1'') -- (v1);
			\draw (u') -- (v2'') -- (v1);
			\draw (v1') -- (v2'');
			\draw (v2') -- (v1'');
			
			\begin{scope}
			\clip (0,2.5) ellipse (2cm and .6cm);
			\draw (u') -- (v2'');
			\end{scope}
			
			\node (G) at (5,0) {$G$};
			\node (G') at (1.4,-.5) {$G_e'$};
			\node (G'') at (0.7,2.7) {$G_e''$};	
			
			\draw (w1') -- (w2');
			\draw (w1'') -- (w2'');
			\draw (w1') -- (w2'');
			\draw (w1'') -- (w2');
			
			\draw (v1) -- (w1') -- (v2);
			\draw (v1) -- (w2') -- (v2);
			\draw (v1) -- (w1'') -- (v2);
			\draw (v1) -- (w2'') -- (v2);
			\end{tikzpicture}
			\captionsetup{width=1\linewidth}
			\subcaption{The graph $\widehat{G}$, satisfying all properties stated in Lemma~\ref{fig:StabilizingGadget}. In the simple case of our example that $S$ is only a singleton $\{e\}$, the construction contains three partly modified copies of $G$, with the same vertices and edge parts as in Figure~\ref{fig:StabilizingGadgetA} omitted from the picture. 
			}
		\end{subfigure}
		\caption{Example of the construction of Lemma~\ref{lem:StabilizingGadget}, rendering arbitrary subsets of edges stable.
		}\label{fig:StabilizingGadget}
	\end{center}
\end{figure}
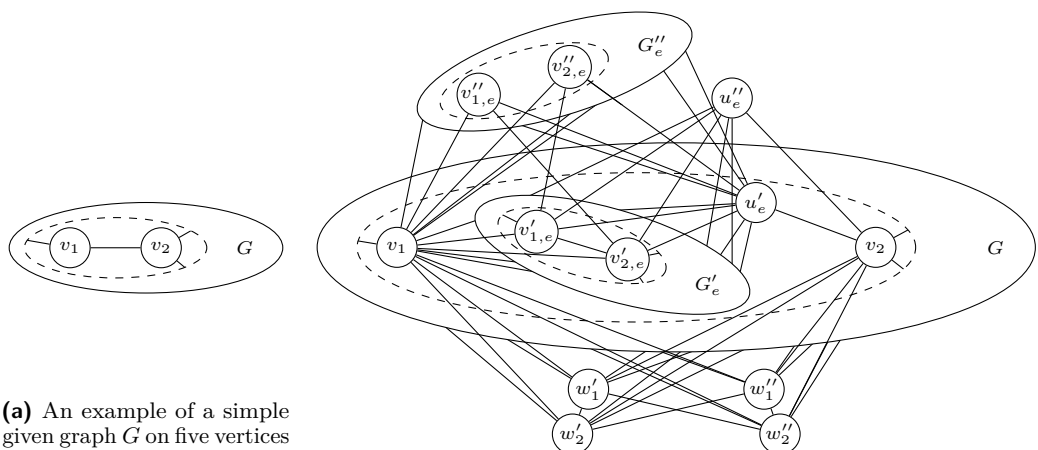    

\begin{proof}
	Let a graph $G$ and a subset $S\subseteq E(G)$ of edges in it be given. 
	We first describe the construction of $\widehat{G}$ in detail. Figure~\ref{fig:StabilizingGadget} exemplifies the full construction for a simple graph $G$ on five vertices and an $S$ that contains exactly one edge $e=\{v_1,v_2\}$. 
	
	We begin by adding a cycle on four new vertices $w_1'$, $w_2'$, $w_1''$, $w_2''$ and joining it to $G$. 
	Then we complete the procedure described in the following paragraph for each edge $e\in S$. 
	
	We add in disjoint union a copy $G'_e$ of the original graph $G$. We distinguish the vertices of $G'_e$ from the ones of $G$ by adding a prime and the subindex $e$ to them. In the example, where the edge $e=\{v_1,v_2\}$ is to be stabilized, there will thus be two adjacent vertices $v_{1,e}',v_{2,e}'\in V(G'_e)$, as shown in Figure~\ref{fig:StabilizingGadget}. 
	Now we join two new vertices to $G_e'$. One of them we merge with $v_1$; the other one we call $u_e'$ and connect it to $v_2$. 
	We then replicate $u_e'$ and every $v'\in V(G_e')$, marking the replicas with a second prime. 
	Excluding $u''$, these replicas constitute another, empty copy of $G$, which we denote $G_e''$. 
	Finally, we delete the edge $e=\{v_1,v_2\}$. 
	
	This completes the construction; we call the resulting graph $\widehat{G}$.  
	For each edge $e=\{v_1,v_2\}\in S$, we denote the induced subgraph of $\widehat{G}$ on the vertices $V(G_e')\cup V(G_e'')\cup\{u_e',u_e'',v_1,v_2\}$ by $H_e$. 
	We now examine the induced subgraph $H_e$ as a gadget that depends on $G$ and substitutes $e$. 
	
	In the following paragraph, we prove an essential property of the graph $H_e$, namely that it behaves exactly like the deleted edge $e$ as far as $(\chi(G)+1)$-colorability is concerned, whereas it acts like a nonedge with regards to $(\chi(G)+2)$-colorability. 
	
	Assume that $\chi(G)+1$ colors are available. First, let $v_1$ and $v_2$ be colored by two different ones of them. We can then extend this to a $(\chi(G)+1)$-coloring of $H_e$ in the following way. We assign to $u_e'$ and $u_e''$ the color of $v_1$, choose an arbitrary coloring of $G$ using the remaining $\chi(G)$ colors, and then assign it to both $G_e'$ and $G_e''$. 
	Now, pick instead an arbitrary color and let both $v_1$ and $v_2$ be colored by it. While this can be extended to a $(\chi(G)+2)$-coloring of $H_e$ immediately -- assign a second color to $u_e'$ and $u_e''$ and then color $G_e'$ and $G_e''$ with the remaining $\chi(G)$ colors -- it is impossible to extend it to a $(\chi(G)+1)$-coloring of $H_e$ as we show now. Seeking contradiction, assume there were such a coloring. It must assign to $u_e''$ a color different from the one color assigned to both $v_1$ and $v_2$. Since the vertices of $G_e'$ are all adjacent to both $v_1$ and $u_e''$, they must be colored with the remaining $\chi(G)-1$ colors, yielding the desired contradiction to $\chi(G_e')=\chi(G)$. 
	
	Returning to the entire graph, it remains to show that $\widehat{G}$ has the three stated properties.  
	\begin{enumerate}
		\item 
		Let a $\chi(G)$-coloring of $G$ be given. We describe how to extend it to a $(\chi(G)+2)$-coloring of $\widehat{G}$. First, we consider the coloring that results from the given coloring of $G$ by swapping out the two colors assigned to $v_1$ and $v_2$ for two new ones wherever they occur. We then color, for every $e\in S$, the copies $G_e'$ and $G_e''$ of $G$ according to the original $\chi(G)$-coloring. 
		We now assign to $w_1'$, $w_1''$, and, for every $e\in S$, to $u_e'$ and $u_e''$ the former color of $v_1$ and to $w_2'$ and $w_2''$ the former color of $v_2$. We can check that this yields a valid $(\chi(G)+2$)-coloring of $\widehat{G}$, 
		proving that $\chi(\widehat{G})\le \chi(G)+2$. 
		
		For the reverse inequality, assume by contradiction that $\widehat{G}$ has a $(\chi(G)+1)$-coloring. We observe two properties of this coloring. On the one hand, it uses at most $\chi(G)-1$ colors for the vertices $V(G)$ since they are all adjacent to the $2$-clique $\{w_1',w_2'\}$. On the other hand, the restriction to $V(H_e)$, for any $e\in S$, is a $\chi(G)+1$ coloring of $H_e$, which implies -- by the edge-like behavior of $H_e$ under this circumstance proved above -- that $v_1$ and $v_2$ are assigned different colors. Combining these two insights, we see that the restriction of our coloring to $V(G)$ is a $(\chi(G)-1)$-coloring of $G$, yielding the desired contradiction. 
		
		\item 
		First note that all edges in $S$ are deleted during the described construction of $\widehat{G}$, implying $E(\widehat{G})\setminus E(G)=E(\widehat{G})\setminus (E(G)\setminus S)$. 
		It now suffices to check that the vertices $V(\widehat{G})\setminus V(G)$ are all replicated or replicas and thus stable. Each edge $e$ in $E(\widehat{G})\setminus E(G)$ is adjacent to such a stable vertex and therefore stable itself by Observation~\ref{obs:StabilityFromVertices}. 
		\item 
		Note that the described construction does not commute with the deletion of an arbitrary edge $e\in E(G)$. (That is, denoting the construction by $f\colon G\mapsto \widehat{G}$, we do not have $f(G-e)=f(G)-e$.) This stands in contrast to the situation with the other stabilizing constructions in this paper -- where commutativity of the construction immediately yields the corresponding property -- necessitating an independent proof in this case. 
		
		Let $e'\in E(G)\setminus S$. 
		Assume first that $e$ is critical in $G$. Then there is a $(\chi(G)-1)$-coloring of $G-e'$. 
		We can extend it to a $(\chi(G)+1)$-coloring of $\widehat G-e'$ by assigning one of the two new colors to $w_1'$ and $w_2'$, the other one to $w_2'$ and $w_2''$, and then use the already proven fact that, for each $e\in S$, the graph $H_e$ can be colored with $\chi(G)+1$ colors whenever two different colors are prescribed for the two endpoints of $e$. This proves $\chi(\widehat G-e')\neq\chi(\widehat{G})$ and thus the criticality of $e'$ in $\widehat{G}$. 
		
		Now we start with the assumption that $e'$ is critical in $\widehat{G}$. By the first property of Lemma~\ref{lem:StabilizingGadget} we have $\chi(\widehat{G}-e')=\chi(G)+1$.
		Pick an arbitrary $(\chi(G)+1)$-coloring of $\widehat{G}-e'$. 
		Since the induced 4-cycle on $\{w_1',w_2',w_1'',w_2''\}$ has chromatic number 2 and is joined to the induced subgraph on $V(G)$, the induced coloring on $G-e'$ uses at most $\chi(G)-1$ of these colors, implying $\chi(G-e')\le\chi(G)-1$ and thus the criticality of $e'$ in $G$.  
		We deduce via the contrapositive that $e'$ is stable in $G$ if and only if it is stable in $\widehat{G}$. 
	\end{enumerate}
This concludes the proof of Lemma~\ref{lem:StabilizingGadget}. 
\end{proof}

\section{Proof of Corollary~\ref{cor:and-for-stability}}\label{app:and-for-stability}
We restate and prove Corollary~\ref{cor:and-for-stability}.
\setcounter{tempcounter}{\value{theorem}}
\setcounterref{theorem}{cor:and-for-stability}
\addtocounter{theorem}{-1}
\begin{corollary}
	Mapping $k$ graphs $G_1,\ldots,G_k$ to $G_1+\cdots+G_k$ with all join edges stabilized using the construction from Lemma~\ref{lem:StabilizingGadget} is an \ANDomega function for \Stability.
\end{corollary}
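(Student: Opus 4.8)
The plan is to let $f(G_1,\ldots,G_k)$ be the graph $\widehat{G}$ obtained from the join $G=G_1+\cdots+G_k$ by applying the construction of Lemma~\ref{lem:StabilizingGadget} with $S$ equal to the set of all join edges; if there happen to be no join edges (i.e., $S=\emptyset$), we fall back to the trivial branch described below. Since the join and the construction of Lemma~\ref{lem:StabilizingGadget} are both polynomial-time computable, $f$ is as well, and it remains only to verify the defining equivalence for an \ANDomega function: that $\widehat{G}$ is stable exactly when every $G_i$ is stable.

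I would first dispose of the degenerate case $S=\emptyset$, which occurs precisely when at most one of the $G_i$ has a vertex. Then $G$ coincides with that unique nonempty graph $G_{i_0}$ (or with the null graph $K_0$ if all $G_i$ are empty), and since $K_0$ is vacuously stable, we may simply set $f(G_1,\ldots,G_k)=G_{i_0}$; the conjunction $\bigwedge_i(G_i\text{ stable})$ then collapses to the single nontrivial conjunct ``$G_{i_0}$ stable'', giving the equivalence at once. This also covers $k\le 1$.

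For the main case $S\neq\emptyset$, I would chain together two ingredients. First, edge deletion commutes with the join and $\chi$ is additive over it, so for an edge $e\in E(G_i)$, regarded as an edge of $G$, we have $G-e=G_1+\cdots+(G_i-e)+\cdots+G_k$ and hence $\chi(G-e)=\chi(G)$ if and only if $\chi(G_i-e)=\chi(G_i)$; that is, $e$ is stable in $G$ exactly when it is stable in $G_i$ (the same bookkeeping as in the proof of Theorem~\ref{thm:JoinIsANDomega}). Second, Lemma~\ref{lem:StabilizingGadget} identifies the stable edges of $\widehat{G}$: every edge lying outside $E(G)\setminus S$ -- every gadget edge and every edge of the fixed added gadget -- is stable by property~2, while every $e\in E(G)\setminus S=\bigcup_i E(G_i)$ is stable in $\widehat{G}$ if and only if it is stable in $G$ by property~3. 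Stringing these together, $\widehat{G}$ is stable $\iff$ every edge of $\widehat{G}$ is stable $\iff$ every $e\in\bigcup_i E(G_i)$ is stable in $G$ $\iff$ every such $e$ is stable in the corresponding $G_i$ $\iff$ every $G_i$ is stable, which is exactly the \ANDomega condition.

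I do not expect a genuine obstacle here: all the real work sits inside Lemma~\ref{lem:StabilizingGadget}, in particular its property~3, which is the one part of that lemma requiring an argument beyond commutativity. The only points demanding a little care are the bookkeeping with the chromatic number under joins and the degenerate cases ($S=\emptyset$, or some $G_i$ edgeless), both of which become routine once one notes that an edgeless graph -- and in particular $K_0$ -- counts as stable.
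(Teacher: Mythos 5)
Your proposal is correct and follows essentially the same route as the paper's proof: additivity of $\chi$ over the join (so that an edge of $G_i$ is stable in $G_1+\cdots+G_k$ exactly when it is stable in $G_i$) combined with properties~2 and~3 of Lemma~\ref{lem:StabilizingGadget} applied once with $S$ equal to the set of all join edges. Your explicit treatment of the degenerate case $S=\emptyset$ is a small but legitimate refinement the paper glosses over, since the lemma formally requires $S$ to be nonempty.
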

\setcounter{theorem}{\value{tempcounter}}
\begin{proof}
	We know that $\chi(G_1+\cdots+G_k)=\chi(G_1)+\cdots+\chi(G_k)$. This implies that, for any $i\in\{1,\ldots,k\}$, an edge $e\in E(G_i)$ is stable in $G_i$ exactly if it is stable in $G_1+\cdots+G_k$. The graph with all join edges stabilized is thus stable exactly if all graphs $G_1,\ldots,G_k$ are. 
	Note that the more complicated formulation of Lemma~\ref{lem:StabilizingGadget} that allows for the stabilization of arbitrary subsets of edges rather than just a single chosen edge is crucial for the derivation of this corollary. For since the construction of Lemma~\ref{lem:StabilizingGadget} more than doubles the number of vertices -- even if $S$ is only a singleton -- at most a logarithmic number of iterated applications are possible in polynomial time. 
\end{proof}

\section{Proof of Theorem~\ref{thm:beta-vertex-stability}}\label{app:beta-vertex-stability}

We restate and prove Theorem~\ref{thm:beta-vertex-stability}.
\setcounter{tempcounter}{\value{theorem}}
\setcounterref{theorem}{thm:beta-vertex-stability}
\addtocounter{theorem}{-1}
\begin{theorem}
	Only the empty graphs are $\beta$-vertex-stable.
\end{theorem}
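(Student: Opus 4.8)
\section{Proof Proposal for Theorem~\ref{thm:beta-vertex-stability}}

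The plan is to prove both inclusions of the claimed equality $\beta\text{-}\VertexStability=\{I_n\mid n\in\N\}$ directly, the whole argument hinging on a single observation about minimum vertex covers.

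First I would handle the easy direction: if $G$ is an empty graph $I_n$, then $\beta(G)=0$, and deleting any vertex leaves an empty graph $I_{n-1}$, which again has vertex-cover number $0$. Hence every vertex of $I_n$ is $\beta$-stable, so $I_n\in\beta\text{-}\VertexStability$.

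For the converse, suppose $G$ has at least one edge; I would show that $G$ has a $\beta$-critical vertex and is therefore not $\beta$-vertex-stable. Fix a minimum vertex cover $C$ of $G$; since $G$ has an edge, $C\neq\emptyset$, so we may pick some $v\in C$. Then $C\setminus\{v\}$ covers every edge of $G-v$ (each such edge was covered by $C$, and no edge of $G-v$ is incident to $v$), so $\beta(G-v)\le\card{C}-1=\beta(G)-1<\beta(G)$. Thus $v$ is $\beta$-critical, and $G\notin\beta\text{-}\VertexStability$. Taking the contrapositive of this second part, together with the first part, yields the theorem.

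There is essentially no obstacle here: the only point worth stating carefully is that removing a vertex of a minimum vertex cover strictly decreases $\beta$, which is immediate from the covering property. Deletion cannot increase $\beta$, so no separate bound in the other direction is needed.
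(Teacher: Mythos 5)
Your proof is correct and follows essentially the same route as the paper: both arguments pick a vertex $v$ in a minimum vertex cover (nonempty because $G$ has an edge) and observe that removing $v$ from the cover yields a vertex cover of $G-v$ of size $\beta(G)-1$, so $v$ is $\beta$-critical. The easy direction for empty graphs is likewise handled identically.
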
 
\begin{proof}
	Let $G$ be a graph. If $G$ is empty, it is $\beta$-vertex-stable since the empty set is a minimum vertex cover for both $G$ and $G-v$ for every $v\in V(G)$. 
	If $G$ has an edge $\{u,v\}$, every vertex cover contains either $u$ or $v$ or both. Consider any optimal vertex cover $X$ of $G$ and assume, without loss of generality, that $v\in X$. Then $v$ is a critical vertex since $X\setminus\{v\}$ is a vertex cover of size $\card{X}-1$ of $G-v$. 
\end{proof}
\setcounter{theorem}{\value{tempcounter}}

\section{Proof of Lemma~\ref{lem:BetaStabilizingGadget}}
\label{app:BetaStabilizingGadgetProof}
We restate and prove Lemma~\ref{lem:BetaStabilizingGadget}.
\setcounter{tempcounter}{\value{theorem}}
\setcounterref{theorem}{lem:BetaStabilizingGadget}
\addtocounter{theorem}{-1}
\begin{lemma}
	Let $G$ be a graph and $\{v_1,v_2\}\in E(G)$ one of its edges. Create from $G$ a new graph $G'$ by replacing the edge $\{v_1,v_2\}$ by the gadget that consists of four new vertices $u_1$, $u_2$, $u_3$, and $u_4$ with edges $\{u_1,u_2\}$, $\{u_2,u_3\}$, $\{u_3,u_4\}$, and $\{u_4,u_1\}$ (i.e., a new rectangle) and additionally the edges $\{v_1,u_1\}$, $\{v_1,u_3\}$, $\{v_2,u_2\}$, and $\{v_2,u_4\}$. 
	(This gadget is displayed in Figure~\ref{fig:BetaStabilizingGadgetB}.)
	Then we have $\beta(G')	= \beta(G)+2$, all edges of the gadget are stable in $G'$, and the remaining edges are stable in $G'$ if and only if they are stable in $G$.
\end{lemma}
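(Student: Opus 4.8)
The plan is to control minimum vertex covers of $G'$ through a \emph{profile} decomposition. For a vertex cover $Y$ of $G'$, write $Y_0=Y\cap V(G)$ and $Y_u=Y\cap\{u_1,u_2,u_3,u_4\}$; these sets are disjoint. Then $Y$ is a vertex cover of $G'$ if and only if (i) $Y_0$ covers every edge of $E(G)\setminus\{v_1,v_2\}$, and (ii) $Y_u$ together with whichever of $v_1,v_2$ lies in $Y_0$ covers the eight gadget edges. Condition (ii) depends on $Y_0$ only through the pair $b=(b_1,b_2)$ recording which of $v_1,v_2$ belongs to $Y_0$; let $g(b_1,b_2)$ denote the least size of a $Y_u\subseteq\{u_1,u_2,u_3,u_4\}$ making (ii) hold. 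A short direct inspection of the six-vertex gadget gives $g(1,1)=g(1,0)=g(0,1)=2$ (two opposite rim vertices suffice) and $g(0,0)=4$ (the four spokes $\{v_1,u_1\},\{v_1,u_3\},\{v_2,u_2\},\{v_2,u_4\}$ force all four rim vertices). Hence $\beta(G')$ equals the minimum, over all $Y_0$ covering $E(G)\setminus\{v_1,v_2\}$, of $\card{Y_0}+g(b_1,b_2)$.

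For the first assertion, I would split this minimum according to whether $Y_0$ meets $\{v_1,v_2\}$. If it does, then since $\{v_1,v_2\}\in E(G)$ the set $Y_0$ is already a vertex cover of $G$, so the contribution is $\card{Y_0}+2\ge\beta(G)+2$, with equality attained by any minimum vertex cover of $G$. If $Y_0$ avoids both $v_1$ and $v_2$, then $Y_0\cup\{v_1\}$ is a vertex cover of $G$, so $\card{Y_0}\ge\beta(G)-1$ and the contribution is at least $(\beta(G)-1)+4=\beta(G)+3$. Taking the overall minimum yields $\beta(G')=\beta(G)+2$.

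The second assertion uses that deleting an edge never increases $\beta$, so it suffices to show $\beta(G'-f)\ge\beta(G)+2$ for every gadget edge $f$. The same profile decomposition applies verbatim to $G'-f$, with $g$ replaced by the analogous function $g_f$ for the gadget with $f$ removed. By the automorphism group of the gadget, which acts transitively on the rim edges and on the spokes, only two cases need checking: $f$ a rim edge, say $\{u_1,u_2\}$, and $f$ a spoke, say $\{v_1,u_1\}$. In each I would verify by inspection that $g_f(1,1),g_f(1,0),g_f(0,1)\ge2$ and $g_f(0,0)\ge3$. Running the same split as before then gives a contribution of at least $\beta(G)+2$ whenever $Y_0$ meets $\{v_1,v_2\}$ and at least $(\beta(G)-1)+3=\beta(G)+2$ otherwise, so $\beta(G'-f)\ge\beta(G)+2$, i.e., $f$ is $\beta$-stable.

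The third assertion is then essentially free, because the construction commutes with deleting an edge $e'\in E(G)\setminus\{v_1,v_2\}$: the graph $G'-e'$ is exactly what the gadget substitution produces from $G-e'$ (whose edge $\{v_1,v_2\}$ is untouched). Applying the first assertion to $G-e'$ gives $\beta(G'-e')=\beta(G-e')+2$, and combined with $\beta(G')=\beta(G)+2$ this shows $\beta(G'-e')=\beta(G')$ if and only if $\beta(G-e')=\beta(G)$, i.e., $e'$ is $\beta$-stable in $G'$ exactly if it is in $G$. The only genuine computation here is the case analysis pinning down the values of $g$ and of the $g_f$ on the small gadget; the one point to stay alert to is the branch where $Y_0$ misses $\{v_1,v_2\}$, where the argument closes with the right constant precisely because $g(0,0)=4$ (resp. $g_f(0,0)\ge3$) buys back the unit lost to $\card{Y_0}\ge\beta(G)-1$.
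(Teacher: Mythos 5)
Your proof is correct; all the gadget computations check out ($g(1,1)=g(1,0)=g(0,1)=2$, $g(0,0)=4$, and the bounds $g_f(\cdot)\ge 2$ resp.\ $g_f(0,0)\ge 3$ for a deleted rim edge or spoke), and the symmetry reduction to two cases of $f$ is legitimate since the gadget's automorphism group is transitive on rim edges and on spokes, with the $(b_1,b_2)$-swap absorbed by the symmetry of your claimed bounds. The underlying idea is the same as the paper's---a local analysis of how many gadget vertices a cover must contain, conditioned on which of $v_1,v_2$ it uses---but your organization differs in the part about gadget-edge stability: the paper argues by explicit cover surgery (given a vertex cover of $G'-f$, it swaps individual vertices, e.g.\ replacing $u_3$ by $u_2$, to produce a cover of $G'$ of the same size), whereas you prove the lower bound $\beta(G'-f)\ge\beta(G)+2$ uniformly through the interface function $g_f$ and the same two-branch split used for the first assertion. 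Your version is arguably tidier: it makes explicit the $b=(0,0)$ branch, where the loss of one unit from $\card{Y_0}\ge\beta(G)-1$ must be recouped by $g_f(0,0)\ge 3$---a corner the paper's lower-bound argument for $\beta(G')\ge\beta(G)+2$ passes over rather quickly (its ``removing all of them leaves us with a vertex cover for $G$'' silently assumes the restricted set still covers $\{v_1,v_2\}$). The third assertion is handled identically in both proofs, via commutativity of the substitution with deletion of a non-gadget edge. One cosmetic remark: you should write $E(G)\setminus\{\{v_1,v_2\}\}$ for the set of non-gadget edges, since $\{v_1,v_2\}$ denotes a single edge, not a set of two edges.
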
 
\setcounter{theorem}{\value{tempcounter}}

\begin{figure}[htbp]
	\begin{center}
\begin{subfigure}[c]{0.49\textwidth}
	\begin{tikzpicture}[
	xscale=1,yscale=1,
	vertex/.style={draw,circle, inner sep=.0pt, minimum size=0.53cm},font=\scriptsize]

	\node[vertex] (v1) at (-1.41,0) {$v_1$};
	\node[vertex] (v2) at (1.41,0) {$v_2$};
	
	\draw (v1) -- (v2);
	\draw[white] (0,0) ellipse (1cm and 1.5cm); 

	\draw[dashed] (0,0) ellipse (2cm and 1cm);
	\clip (0,0) ellipse (2cm and 1cm);
	
	\draw[] (v1) -- (-2,.5);
	\draw[] (v1) -- (-2.2,-.2);
	\draw[] (v1) -- (-1.2,-1.5);
	
	\draw[] (v2) -- (2,1.3);
	\draw[] (v2) -- (2,-.3);
	\end{tikzpicture}
\captionsetup{width=1\linewidth}
\subcaption{Example of the relevant section of $G$.}
\end{subfigure}
\begin{subfigure}[c]{0.49\textwidth}
	\begin{tikzpicture}[
xscale=1,yscale=1,
vertex/.style={draw,circle, inner sep=.0pt, minimum size=0.53cm},font=\scriptsize]
	\node[vertex] (v1) at (-2.41,0) {$v_1$};
	\node[vertex] (u1) at (-1,1) {$u_1$};
	\node[vertex] (u2) at (1,1) {$u_2$};
	\node[vertex] (u3) at (-1,-1) {$u_3$};
	\node[vertex] (u4) at (1,-1) {$u_4$};
	\node[vertex] (v2) at (2.41,0) {$v_2$};

	\draw (v1) -- (u1) -- (u2) -- (v2);
	\draw (v1) -- (u3) -- (u4) -- (v2);
	\draw (u1) -- (u4);
	\draw (u2) -- (u3);
	
	\draw[dashed] (0,0) ellipse (3cm and 1.5cm);
	\clip (0,0) ellipse (3cm and 1.5cm);
	
	\draw[] (v1) -- (-3,.5);
	\draw[] (v1) -- (-3.2,-.2);
	\draw[] (v1) -- (-2.2,-1.5);
	
	\draw[] (v2) -- (3,1.3);
	\draw[] (v2) -- (3,-.3);
	\end{tikzpicture}
\captionsetup{width=1\linewidth}
\subcaption{The same section in $G'$ after the substitution.}
\label{fig:BetaStabilizingGadgetB}
\end{subfigure}
\caption{Illustration of the substitution of an edge $\{v_1,v_2\}$
  by the gadget mentioned in
  Lemma~\ref{lem:BetaStabilizingGadget}.
We remark in passing that the gadget used here is the smallest one with the desired properties.
}
	\end{center}
\end{figure}
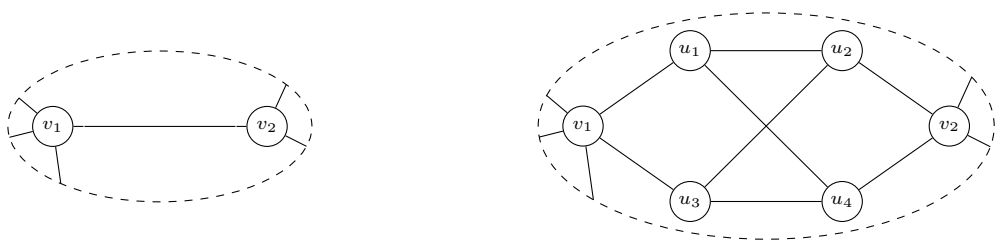

\begin{proof}
	Let $X$ be a vertex cover for $G$. Due to $\{v_1,v_2\}\in E(G)$, we have $v_1\in X$ or $v_2\in X$. We obtain a vertex cover for $G'$ by adding $u_2$ and $u_4$ to $X$ in the first case and $u_1$ and $u_3$ in the second case.  This shows $\beta(G') \le \beta(G)+2$. For the reverse inequality observe that out of the four vertices $\{u_1,u_2,u_3,u_4\}$ inducing a $4$-cycle every vertex cover for $G'$ contains at least two and removing all of them leaves us with a vertex cover for $G$. 
	
	Now, let $e$ be an arbitrary edge of $G$ other than $\{v_1,v_2\}$. We can check that the argument above still holds true for $G-e$ and $G'-e$ instead of $G$ and $G'$, and hence $\beta(G'-e) = \beta(G-e)+2$. 
	It follows that $\beta(G)-\beta(G-e)=\beta(G')-\beta(G'-e)$; 
	that is, $e$ is stable in $G$ exactly if it is stable in $G'$. 
	
	Finally, we prove $\beta$-stability for all the gadget edges. Let $e$ be such an edge. It suffices to show that, for every vertex cover of $G'-e$, there is a vertex cover of the same size for $G'$. 
	We show this for $e=\{u_1,u_2\}$ and $e=\{v_1,u_1\}$; the remaining cases follow immediately by symmetry. 
	The following argumentation is illustrated in
	Figure~\ref{fig:BetaStabilizingGadgetProof}. 
	Let $X$ be a vertex cover of $G'-\{u_1,u_2\}$. If $X$ contains $u_1$ or $u_2$, it is a vertex cover for $G'$ as well. Otherwise, we have $\{v_1,u_3,u_4,v_2\}\subseteq X$ since the edges incident to $u_1$ and $u_2$ need to be covered -- see Figure~\ref{fig:BetaStabilizingGadgetProof}a -- and $\{u_2\}\cup X\setminus\{u_3\}$ is a vertex cover for $G'$; see Figure~\ref{fig:BetaStabilizingGadgetProof}b. 
	Analogously, let $Y$ be vertex cover of $G'-\{v_1,u_1\}$. If $Y$ contains $v_1$ or $u_1$, then it is a vertex cover of $G'$ as well. Otherwise, $Y$ contains the vertices $\{u_2,u_3,u_4\}$ since they are neighbors of either $v_1$ or $u_1$; see Figure~\ref{fig:BetaStabilizingGadgetProof}c. Then, $\{u_1,v_2\}\cup Y\setminus\{u_2,u_4\}$ is a vertex cover for $G'$ that is either -- see Figure~\ref{fig:BetaStabilizingGadgetProof}d -- of the same size as $Y$ or, if $v_2\in Y$, smaller by one. 
\end{proof}

\setcounter{tempcounter}{\value{figure}}
\begin{figure}[htbp]
	\begin{center}
		\begin{subfigure}[c]{0.49\textwidth}
			\begin{tikzpicture}[xscale=1,yscale=1,
			vertex/.style={draw,circle, inner sep=.0pt, minimum size=0.53cm},font=\scriptsize]
			\node[vertex,fill=black!15] (v1) at (-2.41,0) {$v_1$};
			\node[vertex] (u1) at (-1,1) {$u_1$};
			\node[vertex] (u2) at (1,1) {$u_2$};
			\node[vertex,fill=black!15] (u3) at (-1,-1) {$u_3$};
			\node[vertex,fill=black!15] (u4) at (1,-1) {$u_4$};
			\node[vertex,fill=black!15] (v2) at (2.41,0) {$v_2$};
			
			\draw (v1) -- (u1);
			\draw (u2) -- (v2);
			\draw (v1) -- (u3) -- (u4) -- (v2);
			\draw (u1) -- (u4);
			\draw (u2) -- (u3);
			\end{tikzpicture}
			\captionsetup{width=1\linewidth}
			\subcaption{A vertex cover $X$ for $G'-\{u_1,u_2\}$.\\ Only the gadget part is shown.}
		\end{subfigure}
		\begin{subfigure}[c]{0.49\textwidth}
			\begin{tikzpicture}[xscale=1,yscale=1,
			vertex/.style={draw,circle, inner sep=.0pt, minimum size=0.53cm},font=\scriptsize]
			\node[vertex,fill=black!15] (v1) at (-2.41,0) {$v_1$};
			\node[vertex] (u1) at (-1,1) {$u_1$};
			\node[vertex,fill=black!15] (u2) at (1,1) {$u_2$};
			\node[vertex] (u3) at (-1,-1) {$u_3$};
			\node[vertex,fill=black!15] (u4) at (1,-1) {$u_4$};
			\node[vertex,fill=black!15] (v2) at (2.41,0) {$v_2$};
			
			\draw (v1) -- (u1) -- (u2) -- (v2);
			\draw (v1) -- (u3) -- (u4) -- (v2);
			\draw (u1) -- (u4);
			\draw (u2) -- (u3);
			\end{tikzpicture}
			\captionsetup{width=1\linewidth}
			\subcaption{Removing $u_3$ and adding $u_2$ yields\\ a vertex cover of the same size for $G'$.}
		\end{subfigure}
		\begin{subfigure}[c]{0.49\textwidth}
			\begin{tikzpicture}[xscale=1,yscale=1,
			vertex/.style={draw,circle, inner sep=.0pt, minimum size=0.53cm},font=\scriptsize]
			\node[vertex] (v1) at (-2.41,0) {$v_1$};
			\node[vertex] (u1) at (-1,1) {$u_1$};
			\node[vertex,fill=black!15] (u2) at (1,1) {$u_2$};
			\node[vertex,fill=black!15] (u3) at (-1,-1) {$u_3$};
			\node[vertex,fill=black!15] (u4) at (1,-1) {$u_4$};
			\node[vertex] (v2) at (2.41,0) {$v_2$};
			
			\draw (u1) -- (u2) -- (v2);
			\draw (v1) -- (u3) -- (u4) -- (v2);
			\draw (u1) -- (u4);
			\draw (u2) -- (u3);
			\end{tikzpicture}
			\captionsetup{width=1\linewidth}
			\subcaption{A vertex cover $Y$ for $G'-\{v_1,u_1\}$.}
		\end{subfigure}
		\begin{subfigure}[c]{0.49\textwidth}
			\begin{tikzpicture}[xscale=1,yscale=1,
			vertex/.style={draw,circle, inner sep=.0pt, minimum size=0.53cm},font=\scriptsize]
			\node[vertex] (v1) at (-2.41,0) {$v_1$};
			\node[vertex,fill=black!15] (u1) at (-1,1) {$u_1$};
			\node[vertex] (u2) at (1,1) {$u_2$};
			\node[vertex,fill=black!15] (u3) at (-1,-1) {$u_3$};
			\node[vertex] (u4) at (1,-1) {$u_4$};
			\node[vertex,fill=black!15] (v2) at (2.41,0) {$v_2$};
			
			\draw (v1) -- (u1) -- (u2) -- (v2);
			\draw (v1) -- (u3) -- (u4) -- (v2);
			\draw (u1) -- (u4);
			\draw (u2) -- (u3);
			\end{tikzpicture}
			\captionsetup{width=1\linewidth}
			\subcaption{A vertex cover of the same size for $G'$.}
		\end{subfigure}
	\end{center}
	\caption{An illustration of the proof of
          Lemma~\ref{lem:BetaStabilizingGadget}: All edges of the gadget
          are $\beta$-stable in $G'$.}\label{fig:BetaStabilizingGadgetProof}
\end{figure}
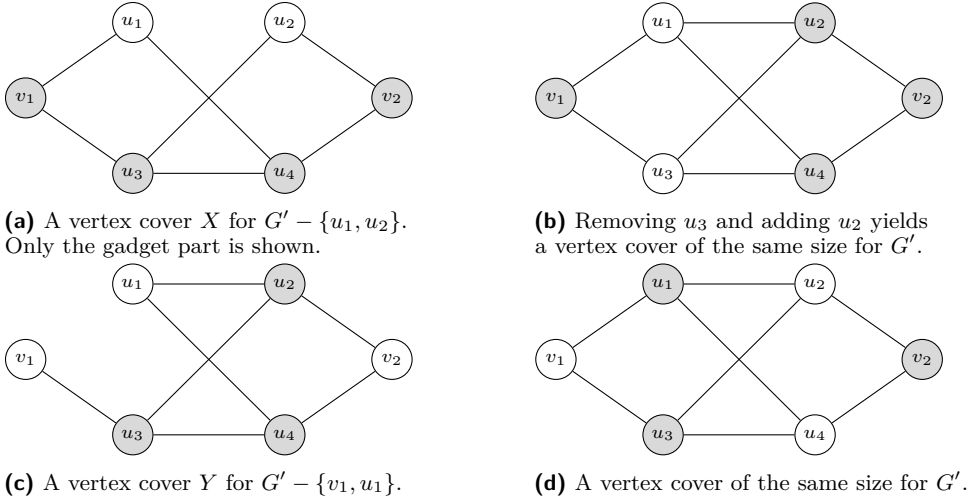

\section{Proof of Theorem~\ref{thm:v-unfrozenness}}
\label{app:v-unfrozenness}
We restate and prove Theorem~\ref{thm:v-unfrozenness}.
\setcounter{tempcounter}{\value{theorem}}
\setcounterref{theorem}{thm:v-unfrozenness}
\addtocounter{theorem}{-1}
\begin{theorem}
	There is no vertex-unfrozen graph and only one $\beta$-vertex-unfrozen graph, namely the null graph (i.e., the graph with the empty vertex set).
\end{theorem}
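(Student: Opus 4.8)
The plan is to treat the two assertions of the theorem in turn, both via the single device of adjoining a new vertex that is adjacent to \emph{all} existing vertices (a universal vertex), which is one of the admissible vertex-addition modifications. For any graph $G$, write $G^{+}$ for the result of adding such a universal vertex $v$ to $G$.

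\emph{No graph is vertex-unfrozen.} Here I would show $\chi(G^{+})=\chi(G)+1$ for every $G$. For ``$\le$'' take an optimal coloring of $G$ and give $v$ a fresh color. For ``$\ge$'' observe that in any proper coloring of $G^{+}$ the color of $v$ occurs on no other vertex, since $v$ is adjacent to all of $V(G)$; hence the restriction to $V(G)$ is a proper coloring of $G$ using at least one fewer color, so $\chi(G)\le\chi(G^{+})-1$. (One could instead get ``$\le$'' from Observation~\ref{obs:differsbyatmostone} applied to deleting $v$.) Since $\chi(G^{+})\ne\chi(G)$, the added vertex $v$ is frozen, so $G$ is not vertex-unfrozen. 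The same computation applies when $G$ is the null graph, where $v$ is vacuously universal and $\chi(K_1)=1=\chi(K_0)+1$; so not even $K_0$ is vertex-unfrozen.

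\emph{The null graph is the only $\beta$-vertex-unfrozen graph.} First, $K_0$ is $\beta$-vertex-unfrozen because the only graph obtainable by adding a vertex is the edgeless $K_1$, and $\beta(K_1)=0=\beta(K_0)$. Next, for a graph $G$ with $V(G)\ne\emptyset$ I would again add a universal vertex $v$ and show $\beta(G^{+})=\beta(G)+1$, so that $v$ is $\beta$-frozen. For ``$\le$'', a minimum vertex cover $X$ of $G$ together with $v$ covers $G^{+}$. For ``$\ge$'', let $Y$ be any vertex cover of $G^{+}$: if $v\in Y$ then $Y\setminus\{v\}$ covers $G$, so $\card{Y}\ge\beta(G)+1$; if $v\notin Y$ then $V(G)\subseteq Y$ is forced in order to cover the edges incident to $v$, and since $V(G)\setminus\{u\}$ is a vertex cover of $G$ for any $u\in V(G)$ we get $\card{V(G)}\ge\beta(G)+1$, hence again $\card{Y}\ge\beta(G)+1$. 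Thus $\beta(G^{+})=\beta(G)+1\ne\beta(G)$.

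There is no genuine obstacle in this argument; the only care needed is in the boundary behavior of the null graph, which is $\beta$-vertex-unfrozen but not vertex-unfrozen, and in phrasing the ``$v\in Y$ versus $v\notin Y$'' dichotomy cleanly. Combining the two parts yields exactly the statement of the theorem.
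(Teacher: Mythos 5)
Your proposal is correct and uses the same key device as the paper: adding a universal vertex increases $\chi$ by exactly one for every graph (including $K_0$) and increases $\beta$ by exactly one for every nonempty graph, while the null graph is the lone $\beta$-vertex-unfrozen exception since $\beta(K_1)=0=\beta(K_0)$. You merely spell out the cover-counting details that the paper's one-line proof leaves implicit.
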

\begin{proof}
	Adding a universal vertex increases both $\chi$ and $\beta$ by exactly one, with the exception of the null graph $K_0$, for which $\chi(K_0)=\beta(K_0)=0$ but $\chi(K_1)=1\neq\beta(K_1)=0$.  
\end{proof}
\setcounter{theorem}{\value{tempcounter}}

\section{Proof of Theorem~\ref{thm:beta-unfrozenness}}
\label{app:beta-unfrozenness}
We restate and prove Theorem~\ref{thm:beta-unfrozenness}.
\setcounter{tempcounter}{\value{theorem}}
\setcounterref{theorem}{thm:beta-unfrozenness}
\addtocounter{theorem}{-1}
\begin{theorem}
	Determining whether a graph is $\beta$-unfrozen is \ThetaTwo-complete.
\end{theorem}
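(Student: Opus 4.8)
The plan is to mirror the proof of Theorem~\ref{thm:beta-stability}. The upper bound is routine: with one parallel round of queries to the \NP-complete problem $\{(G,k)\mid\beta(G)\le k\}$ we can compute $\beta(G)$ together with $\beta(G+e)$ for every nonedge $e\in\coE(G)$, and $G$ is $\beta$-unfrozen exactly when all of these agree; hence \betaUnfrozenness lies in \ThetaTwo $=\P^\NP_\|$. For the lower bound I would reduce from $\{(G,H)\in\mathcal{G}^2\mid\beta(G)>\beta(H)\}$, which is \ThetaTwo-hard~\cite[Thm.~12]{spa-vog:c:theta-two-classic} (it is essentially the complement of \textsc{CompareVertexCover}, and \ThetaTwo is closed under complement). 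As in Theorem~\ref{thm:beta-stability}, the reduction builds a join $S=A+B$ of two suitably padded graphs, one of them $\beta$-unfrozen by construction and the other carrying a fixed frozen nonedge, set up so that which side of the join is fully contained in an optimal vertex cover tracks whether $\beta(G)>\beta(H)$.

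Here the analogue of the stabilizing gadget (Lemma~\ref{lem:BetaStabilizingGadget}) is essentially free: for every graph $H$, the self-join $H+H$ is $\beta$-unfrozen and $\beta(H+H)=\beta(H)+\card{V(H)}$. This follows from the behavior of vertex covers under joins already used in the proof of Theorem~\ref{thm:beta-stability}: $\beta(X+Y)=\min(\card{V(X)}+\beta(Y),\card{V(Y)}+\beta(X))$, so $\beta(H+H)=\card{V(H)}+\beta(H)$, and for a nonedge $e$ lying inside one of the two copies, $\beta((H+H)+e)=\min(\card{V(H)}+\beta(H),\card{V(H)}+\beta(H+e))=\card{V(H)}+\beta(H)$. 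The same formula gives the bookkeeping rule for $S$: if $\card{V(A)}=\card{V(B)}$ and $\beta(A)>\beta(B)$, then every optimal vertex cover of $A+B$ contains all of $V(A)$, so every nonedge of $A$ is $\beta$-unfrozen in $A+B$, while a nonedge $e$ of $B$ is $\beta$-unfrozen in $A+B$ iff $\beta(B+e)=\beta(B)$ (and symmetrically with the roles of $A$ and $B$ swapped); and if $\beta(A)=\beta(B)$, then every nonedge of $A+B$ is $\beta$-unfrozen.

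The reduction itself: from $G$ put $B_0=(G+G)\cup I_2$, so that $\beta(B_0)=\beta(G)+\card{V(G)}$ and $B_0$ is not $\beta$-unfrozen, because adding the nonedge inside its $I_2$-component raises $\beta$ by one; from $H$ put $A_0=H+H$, which is $\beta$-unfrozen with $\beta(A_0)=\beta(H)+\card{V(H)}$. Then pad $A_0$ by disjoint union with $\beta$-unfrozen building blocks (cliques $K_m$ with $m\ge 2$, complete bipartite graphs $K_{N,N}$, and at most one isolated vertex; all of these are $\beta$-unfrozen and have every vertex in some minimum vertex cover, so disjoint unions of them with $A_0$ stay $\beta$-unfrozen), and pad $B_0$ by disjoint union with cliques and isolated vertices, choosing all sizes so that the resulting $A$ and $B$ satisfy $\card{V(A)}=\card{V(B)}$ and $\beta(B)-\beta(A)=\beta(G)-\beta(H)-1$. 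Set $S=A+B$. Then $\beta(A)>\beta(B)\iff\beta(G)\le\beta(H)$, $\beta(A)=\beta(B)\iff\beta(G)=\beta(H)+1$, and $\beta(B)>\beta(A)\iff\beta(G)\ge\beta(H)+2$; feeding this into the bookkeeping rule—using that $A$ stays $\beta$-unfrozen, that $B$ still contains the frozen $I_2$-nonedge, and that this nonedge stays frozen in $S$—shows that $S$ is $\beta$-unfrozen exactly when $\beta(G)>\beta(H)$. The main obstacle I expect is the padding arithmetic: simultaneously matching $\card{V(\cdot)}$ and $\beta(\cdot)$ with blocks that keep $A$ $\beta$-unfrozen, and arranging that the tie case $\beta(G)=\beta(H)+1$ lands on the $\beta$-unfrozen side; the conceptual core, namely that self-joining is a free "$\beta$-unfreezing" operation, is straightforward.
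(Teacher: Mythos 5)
Your proposal is correct and follows essentially the same route as the paper: reduce from (the complement of) \textsc{CompareVertexCover}, exploit that the self-join $X+X$ is automatically $\beta$-unfrozen with $\beta(X+X)=\beta(X)+\card{V(X)}$, plant a frozen nonedge via an $I_2$-style component on the other side, pad to equal vertex counts with a controlled $\beta$-offset, and read off unfrozenness of the final join from $\beta(A+B)=\min\{\card{V(A)}+\beta(B),\card{V(B)}+\beta(A)\}$. The padding arithmetic you flag as the main obstacle is genuinely routine and the paper dispatches it more cleanly by folding it into the constructions $(G\cup I_h)+(G\cup I_h)$ and $(H+K_{g+h})\cup I_g$, which directly give both sides $2(g+h)$ vertices and the same $\beta$-offset $g+h$, so no biclique/clique balancing is needed.
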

\setcounter{theorem}{\value{tempcounter}}
\begin{proof}
	The upper bound is again immediate. For hardness, we describe a reduction from $\alternativetextsc{CompareVertexCover}=\{(G,H)\in\mathcal{G}^2\mid \beta(G) \le  \beta(H)\}$ -- known to be \ThetaTwo-hard~~\cite[Thm.~12]{spa-vog:c:theta-two-classic} -- to $\beta$-\Unfrozenness. 
	Let $G$ and $H$ be the two given graphs. 
	We show how to construct in polynomial time from $G$ and $H$ a graph $J$ that is $\beta$-unfrozen if and only if $\beta(G) \le  \beta(H)$. 
	Denote $g = \card{V(G)}$ and $h=\card{V(H)}$.
	Assume without loss of generality that $g >  1$. 
	Let $G' = (G \cup I_h) + (G \cup I_h)$.
	Note that
	$G'$ is $\beta$-unfrozen, with
	$\beta(G') = \beta(G) + g + h$, and that
	$\card{V(G')} = 2(g + h)$.
	Let $H' = (H + K_{g+h}) \cup I_g$ (i.e., join a $(g+h)$-clique to $H$ and then add $g$ isolated vertices). 
	Note that $H'$ is
	not $\beta$-unfrozen -- due to the $\beta$-frozen edges that can be added between any two of the $g \ge  2$ vertices of $I_g$ -- with
	$\beta(H')$ = $\beta(H + K_{g+h})$ = $\beta(H) + g + h$ and that
	$\card{V(H')} = (h + (g + h)) + g = 2(g + h)$.
	Let $c = 2(g + h)$.
	We conclude that  
	$G'$ is $\beta$-unfrozen, that 
	$\beta(G') \le  \beta(H') \Longleftrightarrow \beta(G) \le \beta(H)$, that
	$\card{V(G')} = \card{V(H')} = c$, and that
	$H'$ is $\beta$-frozen.
	Now, let $J = G' + H'$. Clearly, $J$ can be constructed in polynomial time.
	We will prove that $J$ is $\beta$-unfrozen if and only if $\beta(G) \le \beta(H)$.
	For both directions, note that due to $\card{V(G')} = \card{V(H')} = c$, we have $\beta(J) = \min\{\beta(G'),\beta(H')\} + c$.
	
	First, assume that $\beta(G) \le \beta(H)$. Then $\beta(G') \le \beta(H')$ and thus $\beta(J) = \beta(G') + c$. 
	We prove that all nonedges $e\in \coE(J)$ are $\beta$-unfrozen in $J$.
	Such an $e$ is either adjacent to two vertices of $G'$ or to two vertices of $H'$.
	For the first case, note that $\beta(G'+e) = \beta(G')$ since $G'$ is $\beta$-unfrozen.
	Thus we have $\beta(J+e) = \min\{\beta(G'+e), \beta(H')\} + c
	= \beta(J)$.
	In the second case we have $\beta(J+e) = \min\{\beta(G'), \beta(H' + e)\} + c = \beta(G') + c = \beta(J)$, where the second equality follows from $\beta(G') \le \beta(H') \le \beta(H'+e)$.
	Thus $J$ is $\beta$-unfrozen if $\beta(G) \le \beta(H)$.
	
	For the converse, assume that $\beta(G) > \beta(H)$, implying $\beta(G') > \beta(H')$ and thus $\beta(J) = \beta(H') + c$.
	Since $H'$ is $\beta$-frozen, there is a $\beta$-frozen nonedge $e$ that can be added to $H'$, yielding $\beta(H'+e) = \beta(H') + 1 \le  \beta(G)$. Hence, we have $\beta(J+e) = \min\{\beta(G'), \beta(H'+e)\} + c = \beta(H') + 1 + c > \beta(J)$. This shows that $e$ is $\beta$-frozen for $J$ as well, concluding the proof. 
\end{proof}
\setcounter{theorem}{\value{tempcounter}}

\section{Proof of Theorem~\ref{thm:unfrozenness}}
\label{app:unfrozenness}
We restate and prove Theorem~\ref{thm:unfrozenness}.
\setcounter{tempcounter}{\value{theorem}}
\setcounterref{theorem}{thm:unfrozenness}
\addtocounter{theorem}{-1}
\begin{theorem}
	Assume that there are polynomial-time computable functions  $f\colon \mathcal{G}\to \mathcal{G}$ and $g\colon \mathcal{G}\to \Z$ such that for any graph $G$ we have that $f(G)$ is unfrozen and $\chi(f(G))=\chi(G)+g(G)$. 
	Then \Unfrozenness is \ThetaTwo-complete.
\end{theorem}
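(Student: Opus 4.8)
The plan has the standard two parts: membership in \ThetaTwo is routine and does not use the hypothesis, while for hardness I would reduce from \textsc{CompareColorability}, which is \ThetaTwo-hard by Theorem~\ref{thm:CompareColorability}. For membership, submit in one parallel round the queries $(G,k)$ and $(G+e,k)$ for every nonedge $e\in\coE(G)$ and every $k\in\{0,\dots,\card{V(G)}\}$ to the standard \NP-complete colorability problem; from the answers one reads off $\chi(G)$ and all $\chi(G+e)$ and decides unfrozenness, so \Unfrozenness is solvable with polynomially many parallel \NP-queries and hence lies in \ThetaTwo.

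For the lower bound, given an instance $(G,H)$ of \textsc{CompareColorability}, I would first turn $G$ and $H$ into \emph{unfrozen} graphs whose chromatic numbers track $\chi(G)$ and $\chi(H)$ up to one common, explicitly known offset. Put $D=\max\{g(G),g(H)\}$ and let $G_1=f(G)+K_{D-g(G)}$ and $H_1=f(H)+K_{D-g(H)}$. Joining a clique of size $t$ shifts the chromatic number by exactly $t$ and preserves unfrozenness (every nonedge of $A+K_t$ lies inside $A$, and adding it gives $(A+e)+K_t$, whose chromatic number is $\chi(A+e)+t=\chi(A)+t$ when $A$ is unfrozen), so $G_1$ and $H_1$ are unfrozen with $\chi(G_1)=\chi(G)+D$ and $\chi(H_1)=\chi(H)+D$. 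Next I attach a single, controlled defect on the $G$-side: let $G_2=G_1+I_2$ be obtained by joining two new, mutually non-adjacent vertices $p$ and $q$ to $G_1$. Then $\chi(G_2)=\chi(G_1)+1$; the nonedge $\{p,q\}$ is frozen, because in $G_2+\{p,q\}$ both $p$ and $q$ become universal and adjacent to each other, forcing $\chi$ up to $\chi(G_1)+2$; and $\{p,q\}$ is the \emph{only} problematic nonedge of $G_2$, since the nonedges inside $G_1$ remain unfrozen and every $p$–$G_1$ and $q$–$G_1$ pair is an edge. On the $H$-side I would merely pad: $H_1'=H_1+K_2$ is again unfrozen and has $\chi(H_1')=\chi(H)+D+2\ge 2$. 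Finally output the disjoint union $J=G_2\cup H_1'$, which is clearly computable in polynomial time.

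The claim is that $J$ is unfrozen if and only if $\chi(G)\le\chi(H)$, and I would verify it by a short case distinction over the nonedges of $J$, using $\chi(A\cup B)=\max\{\chi(A),\chi(B)\}$ throughout. Nonedges inside $H_1'$ and nonedges inside $G_2$ other than $\{p,q\}$ are harmless, because $H_1'$ is unfrozen and $G_2$ is unfrozen at all those nonedges, so the relevant $\max$ does not move. A cross nonedge $\{a,b\}$ with $a\in V(G_2)$ and $b\in V(H_1')$ is harmless too: since $\chi(J)\ge\chi(H_1')\ge 2$, take optimal colorings of $G_2$ and of $H_1'$ on disjoint-in-range palettes and permute the colors of one of them so that $a$ and $b$ get distinct colors, yielding a proper $\chi(J)$-coloring of $J+\{a,b\}$. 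The only remaining nonedge is $\{p,q\}$, for which $\chi(J+\{p,q\})=\max\{\chi(G_2)+1,\chi(H_1')\}$ whereas $\chi(J)=\max\{\chi(G_2),\chi(H_1')\}$; these two values coincide exactly when $\chi(G_2)+1\le\chi(H_1')$, that is, when $\chi(G)+D+2\le\chi(H)+D+2$, i.e.\ when $\chi(G)\le\chi(H)$. Hence $J$ is unfrozen precisely when $\chi(G)\le\chi(H)$, which is the desired many-one reduction and completes the \ThetaTwo-completeness.

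I expect the real work to be in the design rather than in the verification: one has to recognize that disjoint union (a $\max$) is the right combinator, that the single-nonedge defect must sit on the $G$-side (the side that ``wins'' the $\max$ in the bad case $\chi(G)>\chi(H)$), and that the padding must be calibrated so that the defect is neutralized exactly at equality $\chi(G)=\chi(H)$---this is the reason for the asymmetric ``$+1$'' on $G_2$ versus ``$+2$'' on $H_1'$. A secondary point is that, unlike the edge-level stabilizing gadgets elsewhere in the paper (Lemma~\ref{lem:StabilizingGadget} and Lemma~\ref{lem:BetaStabilizingGadget}), there is here no local way to unfreeze an edge---which is precisely why the theorem has to assume $f$---so the whole argument lives at the level of entire graphs, and a handful of degenerate cases ($G$ or $H$ a null graph, $D=g(G)=g(H)$, and so on) still have to be checked, though they are routine.
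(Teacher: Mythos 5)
Your proof is correct and follows essentially the same route as the paper: reduce from \textsc{CompareColorability}, use $f$ to unfreeze the $H$-side, plant a single frozen nonedge via a joined $I_2$ on the $G$-side, calibrate with clique joins so that $\chi(G)\le\chi(H)$ iff the $H$-component strictly wins the $\max$, and combine with disjoint union. The only (harmless) difference is that you also apply $f$ to $G$, which is not needed --- in the ``yes'' case the slack $\chi(G_2)+1\le\chi(H_1')$ already absorbs the addition of \emph{any} nonedge inside $G_2$, frozen or not, which is exactly how the paper avoids unfreezing the $G$-side.
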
 
\begin{proof}
	As usual, membership in \ThetaTwo is immediate. For the lower bound, assume the existence of functions $f$ and $g$ as described in the theorem. 
	We reduce from the problem $\{(G,H)\mid \chi(G)\le \chi(H)\}$, whose \ThetaTwo-hardness is stated in Theorem~\ref{thm:CompareColorability}. 
	
	Let two graphs $G$ and $H$ be given. 
	Our intermediate goal is to construct two graphs $G''$ and $H''$ such that only the latter is unfrozen and $\chi(G)\le \chi(H)\ \Longleftrightarrow\ \chi(G'')<\chi(H'')$. 
	Let $G'=G+ I_2$ and $H'=f(H)$. Note that $H'$ is unfrozen with $\chi(H')=\chi(H)+g(H)$, while $G'$ is not unfrozen -- due to the frozen nonedge that can be added to $I_2$ -- and $\chi(G')=\chi(G)+1$.
	Let $G''=G'+K_{\max\{0,g(H)-1\}}$ and $H''= H'+K_{1+\max\{1-g(H),0\}}$.  
	Then $H''$ is unfrozen with $\chi(H'')=\chi(H)+g(H)+1+\max\{1-g(H),0\}=\chi(H)+\max\{1,g(H)\}+1$, while $G''$ has a frozen edge with $\chi(G'')=\chi(G)+1+\max\{0,g(H)-1\}=\chi(G)+\max\{1,g(H)\}$. 
	Moreover, we have $\chi(G'')-\chi(G)=\max\{1,g(H)\}=\chi(H'')-\chi(H)-1$ and hence $\chi(G)\le \chi(H)\ \Longleftrightarrow\ \chi(G'')<\chi(H'')$, as desired. 
	
	Now let $U = G'' \cup H''$ be the disjoint union of the two constructed graphs. 
	We clearly have $\chi(U) = \max\{\chi(G''),\chi(H'')\}$. 
	Moreover, every nonedge between an arbitrary vertex $v\in V(G'')$ and an arbitrary vertex $w\in V(H'')$ is unfrozen: 
	Let an arbitrary optimal coloring of $U=G''\cup H''$ be given. If it assigns $v$ and $w$ different colors, we are done; otherwise, swap the two distinct colors of $v$ and $w$ for all vertices in $V(G'')$. 
	Recalling that $H''$ is unfrozen, we can conclude that $U$ is unfrozen if and only if all nonedges $e\in\coE(G'')=\coE(G')$ are unfrozen. 
	
	Assume first that $\chi(G)\le \chi(H)$; that is, $\chi(G'')<\chi(H'')$. For every $e\in\coE(G'')$, we then have $\chi(U+e)=\chi((G''+e)\cup H'')\le \max\{\chi(G'')+1,\chi(H'')\}=\chi(H'')=\chi(U)$. Therefore, $U$ is unfrozen. 
	Assume now that $\chi(G)>\chi(H)$; that is, $\chi(G'')\ge\chi(H'')$. Recall that $G''$ has a frozen edge $e'$. It follows that $\chi(G''+e')=\chi(G'')+1>\chi(H'')$ and thus  $\chi(U+e')=\max\{\chi(G''+e'),\chi(H'')\}=\chi(G'')+1>\max\{\chi(G''),\chi(H'')\}=\chi(U)$. Therefore, $e'$ is also a frozen edge of $U$, which is thus not unfrozen. This concludes the proof. 	
\end{proof}
\setcounter{theorem}{\value{tempcounter}}

\section{Proof of Theorem~\ref{thm:CompareColorability}}
\label{app:CompareColorability}
We will restate and prove Theorem~\ref{thm:CompareColorability} after stating in Lemmas~\ref{lem:equal} and~\ref{lem:compare} two sufficient criteria for \ThetaTwo-hardness. 
Both of them are consequences of the Wagner's criterion~\cite[Thm.~5.2]{wag:j:more-on-bh}, which we state in Lemma~\ref{lem:odd}. 
We assume without loss of generality that all problems are encoded over the same finite alphabet $\Sigma$.

\begin{lemma}\label{lem:odd}
	A problem $A$ is \ThetaTwo-hard if the following condition is satisfied:
	
	There is an \NP-complete problem $D$ and there is a polynomial-time computable function $f\colon \bigcup_{k=1}^\infty(\Sigma^\ast)^{2k}\to \Sigma^\ast$ 
	such that for every $k\in\N\setminus\{0\}$ and for all $x_1,\dots,x_{2k}\in\Sigma^\ast$ with $x_1\in D \Leftarrow \dots \Leftarrow x_{2k}\in D$ we have 
	\[f(x_1,\dots,x_{2k})\in A\quad \Longleftrightarrow\quad  \card{\{x_1,\dots,x_{2k}\}\cap D}\text{ is odd}.\]
\end{lemma}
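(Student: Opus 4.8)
The statement is Wagner's criterion \cite[Thm.~5.2]{wag:j:more-on-bh}, so the pragmatic route is to cite it. To argue it from scratch, I would take an arbitrary $L\in\ThetaTwo$ and show that $L$ polynomial-time many-one reduces to $A$, using the characterization $\ThetaTwo=\P^{\NP[\mathcal{O}(\log n)]}=\P^\NP_\|$ (the latter by \cite[Thm.~4.10]{hem:j:sky}). The plan is to route $L$ through an ``odd census over a nested list'' form: from $x$, produce a list of $D$-instances that is genuinely a $D$-chain (each later instance being in $D$ forces all earlier ones into $D$) whose census has the same parity as $[x\in L]$, and then feed that list to $f$, whose promised behavior on nested inputs does the rest. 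Here $D$ is our fixed \NP-complete problem.

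Concretely, fix a machine $M$ deciding $L$ with $\mathcal{O}(\log|x|)$ adaptive queries to \Sat. The core step is Wagner's normal form: compute, in polynomial time from $x$, a chain $y_1,\dots,y_r$ of $D$-instances with $y_{i+1}\in D\Rightarrow y_i\in D$ and $r$ polynomial in $|x|$, together with a polynomial-time predicate $h$ such that $x\in L$ iff $h(x,\card{\{i:y_i\in D\}})=1$. One gets this by binary-searching a suitable monotone \NP\ quantity that encodes the polynomially many reachable transcripts of oracle answers of $M$ on $x$, turning each threshold into an \NP-predicate and then into a $D$-instance via the \NP-completeness of $D$; the chain property is exactly the monotonicity of the thresholds. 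With such a chain and $h$ in hand, I would then do elementary parity bookkeeping: build a longer chain $w_1,\dots,w_{2k}$ (with $2k$ any sufficiently large even number) by first listing $n_0$ copies of a fixed $D$-instance that is always in $D$, then for $i=1,\dots,r$ listing $n_i$ copies of $y_i$, and finally padding with copies of a fixed $D$-instance that is never in $D$, where $n_0=h(x,0)\bmod 2$ and $n_i=\bigl(h(x,i)+h(x,i-1)\bigr)\bmod 2\in\{0,1\}$. This preserves the chain property (the $y_i$-copies sit between the always-in block and the never-in block, and consecutive $y_i$'s already form a chain), and if the census of $y_1,\dots,y_r$ is $c$, then the census of $w_1,\dots,w_{2k}$ equals $\sum_{j\le c}n_j\equiv h(x,c)=[x\in L]\pmod 2$, since the telescoping sum of the $n_j$ collapses modulo $2$.

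Finally, $x\mapsto f(w_1,\dots,w_{2k})$ is polynomial-time computable, the tuple $(w_1,\dots,w_{2k})$ satisfies the promise $w_1\in D\Leftarrow\cdots\Leftarrow w_{2k}\in D$, and by hypothesis $f(w_1,\dots,w_{2k})\in A$ iff $\card{\{i:w_i\in D\}}$ is odd iff $x\in L$; hence $L$ reduces to $A$. Since $L\in\ThetaTwo$ was arbitrary, $A$ is \ThetaTwo-hard. Everything after the normal form is routine, so the main obstacle is the normal form itself: the decision of a $\P^{\NP[\mathcal{O}(\log n)]}$ machine depends on the entire adaptive transcript of oracle answers, not on a single numerical census, so the \NP-chain must be engineered (by a lexicographic mind-change argument on the reachable transcripts) so that its census pins that transcript down. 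This is precisely the technical heart of Wagner's theorem, which is why invoking \cite[Thm.~5.2]{wag:j:more-on-bh} directly is the cleanest option.
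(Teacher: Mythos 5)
Your proposal matches the paper exactly: Lemma~\ref{lem:odd} is not proved in the paper at all but is imported as Wagner's criterion~\cite[Thm.~5.2]{wag:j:more-on-bh}, which is precisely the citation you recommend as the pragmatic route. Your supplementary from-scratch sketch (recovering the accepting transcript as the lexicographically maximal certifiable one to get the census normal form, then the telescoping parity bookkeeping) is also sound, but it goes beyond what the paper itself provides.
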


The following two lemmas are identical, except for the last line, where once we have an equality and once a nonstrict inequality. 

\begin{lemma}\label{lem:equal}
	A problem $A$ is \ThetaTwo-hard if the following condition is satisfied:
	
	There are \NP-complete problems $D_1$ and $D_2$ and a polynomial-time computable function $g\colon \bigcup_{k=1}^\infty(\Sigma^\ast)^{2k}\to \Sigma^\ast$ 
	such that for every $k\in\N\setminus\{0\}$ and for all $y_1,\dots,y_k,z_1,\dots,z_k\in\Sigma^\ast$ with $y_1\in D_1 \Leftarrow \dots \Leftarrow y_k\in D$ and $z_1\in D_2 \Leftarrow \dots \Leftarrow z_k\in D_2$ we have 
	\[g(y_1,\dots,y_k,z_1,\dots,z_k)\in A\quad \Longleftrightarrow\quad  \card{\{y_1,\dots,y_k\}\cap D_1}= \card{\{z_1,\dots,z_k\}\cap D_2}.\]
\end{lemma}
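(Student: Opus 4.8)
The plan is to deduce Lemma~\ref{lem:equal} from Wagner's criterion in the form of Lemma~\ref{lem:odd}, by encoding a parity test as an equality test. So suppose we are given \NP-complete problems $D_1$ and $D_2$ together with a polynomial-time function $g$ having the property stated in the lemma. Any \NP-complete problem may serve as the $D$ of Lemma~\ref{lem:odd}; for concreteness I would take $D := D_1$, so that instances of $D$ need no recoding to become instances of $D_1$, and fix a polynomial-time many-one reduction $r\colon D_1 \to D_2$ (which exists since $D_2$ is \NP-complete) together with one fixed string $N \notin D_1$ and one fixed string $Y \in D_2$. It then suffices to exhibit a polynomial-time $f$ such that, for every $k \ge 1$ and every tuple with $x_1 \in D_1 \Leftarrow \cdots \Leftarrow x_{2k} \in D_1$, we have $f(x_1,\dots,x_{2k}) \in A$ exactly when $\card{\{x_1,\dots,x_{2k}\} \cap D_1}$ is odd.

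The construction I would use is $f(x_1,\dots,x_{2k}) := g(y_1,\dots,y_{k+1},z_1,\dots,z_{k+1})$, invoking the property of $g$ with parameter $k+1$, where $y_i := x_{2i-1}$ for $i \in \{1,\dots,k\}$ and $y_{k+1} := N$, while $z_1 := Y$ and $z_i := r(x_{2i-2})$ for $i \in \{2,\dots,k+1\}$. This $f$ is plainly polynomial-time computable: the first $k$ of the $y$'s record which odd-indexed $x_\ell$ lie in $D_1$, the last $k$ of the $z$'s record which even-indexed $x_\ell$ do, and the two fixed strings $N$ and $Y$ supply a deliberate off-by-one shift between the two counts.

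To verify correctness, write $j := \card{\{x_1,\dots,x_{2k}\} \cap D_1}$; by the nesting hypothesis on the $x_\ell$ we have $x_\ell \in D_1$ exactly for $\ell \le j$. Hence $\{i : y_i \in D_1\} = \{1,\dots,\lceil j/2 \rceil\}$ (the trailing $N$ contributes nothing) and $\{i : z_i \in D_2\} = \{1,\dots,1+\lfloor j/2 \rfloor\}$ (the leading $Y$ always contributes one); both are prefixes, so the nesting hypotheses that $g$ requires of its arguments in Lemma~\ref{lem:equal} hold. Consequently $\card{\{y_1,\dots,y_{k+1}\} \cap D_1} = \lceil j/2 \rceil$ and $\card{\{z_1,\dots,z_{k+1}\} \cap D_2} = 1+\lfloor j/2 \rfloor$, and these two quantities are equal if and only if $\lceil j/2 \rceil - \lfloor j/2 \rfloor = 1$, which happens if and only if $j$ is odd. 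Feeding this into the assumed equivalence for $g$ yields $f(x_1,\dots,x_{2k}) \in A$ iff $j$ is odd, precisely the hypothesis of Lemma~\ref{lem:odd}, so $A$ is \ThetaTwo-hard.

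The one genuinely delicate point, and the reason the constants $N$ and $Y$ are needed, is arranging for \emph{equality of the two counts iff $j$ is odd} rather than \emph{iff $j$ is even}. The naive attempt---route the odd-indexed $x_\ell$ into the $y$'s and the even-indexed ones into the $z$'s with no extra coordinates---gives the counts $\lceil j/2 \rceil$ and $\lfloor j/2 \rfloor$, which agree exactly when $j$ is even; and no dummy-free variant can avoid this, because any two equal-length subsequences of $x_1,\dots,x_{2k}$ are both entirely contained in $D_1$ as soon as $j = 2k$, forcing agreement at that even value. Spending one extra coordinate on each side removes the obstruction; the only care required is to place the constants at the ends that keep the two membership sets prefixes, namely the yes-instance at the front and the no-instance at the back, so that Lemma~\ref{lem:equal}'s nesting conditions are preserved. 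The companion Lemma~\ref{lem:compare} then follows from exactly the same recipe with the constants swapped between the $y$- and $z$-sides, making its defining inequality equivalent to $1+\lfloor j/2 \rfloor \le \lceil j/2 \rceil$, which again holds precisely when $j$ is odd.
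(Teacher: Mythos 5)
Your proof is correct, and it reaches Lemma~\ref{lem:equal} by a slightly different mechanism than the paper. Both arguments share the same core decomposition: split the nested chain $x_1,\dots,x_{2k}$ into its odd- and even-indexed halves, push them through many-one reductions into $D_1$ and $D_2$ respectively, and observe that under the nesting hypothesis the two resulting membership counts are the prefix lengths $\lceil j/2\rceil$ and $\lfloor j/2\rfloor$. The difference lies in how the parity mismatch is resolved: these two quantities agree exactly when $j$ is \emph{even}, whereas Lemma~\ref{lem:odd} tests for \emph{odd}. The paper fixes this by invoking closure of \ThetaTwo under complementation, replacing $A$ by $\overline{A}$ and ``odd'' by ``even'' in Wagner's criterion, and then needs no padding. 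You instead keep the ``odd'' form of the criterion and spend one dummy coordinate on each side (a fixed non-instance of $D_1$ at the tail of the $y$'s, a fixed instance of $D_2$ at the head of the $z$'s) to shift one count by $1$, turning equality into a test for odd $j$; your verification that the padded tuples still satisfy the prefix/nesting conditions required of $g$ is the right thing to check and is done correctly. The trade-off is minor: the paper's route is shorter but relies on the complementation step, while yours works entirely inside the ``odd'' formulation at the cost of the padding construction and the (standard, harmless) observation that an \NP-complete set has both a member and a non-member. One small caveat on your closing remark: for Lemma~\ref{lem:compare}, merely ``swapping the constants'' does not by itself yield the inequality $1+\lfloor j/2\rfloor\le\lceil j/2\rceil$ you state --- you must also route the even-indexed inputs to the $y$-side and the odd-indexed ones to the $z$-side (otherwise you would be comparing $1+\lceil j/2\rceil$ against $\lfloor j/2\rfloor$, which never holds); the intended inequality is the correct one, but the recipe as phrased is imprecise. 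Since only Lemma~\ref{lem:equal} is at issue here, this does not affect the validity of your proof.
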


\begin{lemma}\label{lem:compare}
	A problem $A$ is \ThetaTwo-hard if the following condition is satisfied:
	
	There are \NP-complete problems $D_1$ and $D_2$ and a polynomial-time computable function $g\colon \bigcup_{k=1}^\infty(\Sigma^\ast)^{2k}\to \Sigma^\ast$ 
	such that for every $k\in\N\setminus\{0\}$ and for all $y_1,\dots,y_k,z_1,\dots,z_k\in\Sigma^\ast$ with $y_1\in D_1 \Leftarrow \dots \Leftarrow y_k\in D$ and $z_1\in D_2 \Leftarrow \dots \Leftarrow z_k\in D_2$ we have 
	\[g(y_1,\dots,y_k,z_1,\dots,z_k)\in A\quad \Longleftrightarrow\quad  \card{\{y_1,\dots,y_k\}\cap D_1}\le \card{\{z_1,\dots,z_k\}\cap D_2}.\]
\end{lemma}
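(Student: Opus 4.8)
The plan is to obtain Lemma~\ref{lem:compare} as a consequence of Wagner's criterion in the form of Lemma~\ref{lem:odd}, by a purely combinatorial reindexing reduction. Suppose we are given \NP-complete problems $D_1$ and $D_2$ together with a polynomial-time computable $g\colon\bigcup_{k\ge1}(\Sigma^\ast)^{2k}\to\Sigma^\ast$ satisfying the hypothesis of Lemma~\ref{lem:compare}. To apply Lemma~\ref{lem:odd} we must exhibit one \NP-complete problem $D$ and a polynomial-time computable $f\colon\bigcup_{k\ge1}(\Sigma^\ast)^{2k}\to\Sigma^\ast$ such that, for every monotone tuple $x_1\in D\Leftarrow\cdots\Leftarrow x_{2k}\in D$, we have $f(x_1,\dots,x_{2k})\in A$ if and only if $t=\card{\{x_1,\dots,x_{2k}\}\cap D}$ is odd. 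I would take $D=D_1$, and fix once and for all a polynomial-time many-one reduction $\rho$ from $D_1$ to $D_2$, a string $y_0\in D_1$, and a string $n_0\notin D_2$; the latter two exist because \NP-complete sets are neither empty nor equal to $\Sigma^\ast$.

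The point is the elementary identity that, for every $t\in\{0,\dots,2k\}$, one has $\lfloor t/2\rfloor+1\le\lceil t/2\rceil$ exactly when $t$ is odd (for even $t$ the left-hand side exceeds the right by one, while for odd $t$ both equal $(t+1)/2$). Thus it suffices to transform the monotone tuple $(x_1,\dots,x_{2k})$ into two monotone tuples of equal length $k+1$, one over $D_1$ containing exactly $\lfloor t/2\rfloor+1$ members of $D_1$ and one over $D_2$ containing exactly $\lceil t/2\rceil$ members of $D_2$, and then feed them to $g$. Concretely, I would set
\[
f(x_1,\dots,x_{2k})\;=\;g\bigl(y_0,\,x_2,\,x_4,\dots,x_{2k},\;\rho(x_1),\,\rho(x_3),\dots,\rho(x_{2k-1}),\,n_0\bigr),
\]
i.e., on the $D_1$-side we take the even-indexed $x_i$ (which are instances of $D=D_1$) with the fixed yes-instance $y_0$ prepended, and on the $D_2$-side we take the $\rho$-images of the odd-indexed $x_i$ with the fixed no-instance $n_0$ appended; both tuples then have length $k+1$.

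The verification then breaks into three routine checks. \emph{Monotonicity:} since the chain $x_1\in D\Leftarrow\cdots\Leftarrow x_{2k}\in D$ forces $\{i:x_i\in D\}$ to be an initial segment $\{1,\dots,t\}$, the even-indexed and odd-indexed subsequences inherit the implication chain, and because $\rho$ is a many-one reduction ($\rho(x)\in D_2\Leftrightarrow x\in D$) so does the $D_2$-side; prepending the unconditional yes-instance $y_0$ and appending the unconditional no-instance $n_0$ preserves the chains. \emph{Counts:} the $D_1$-side contains $y_0$ together with those $x_{2i}$ having $2i\le t$, hence $1+\lfloor t/2\rfloor$ members of $D_1$; the $D_2$-side contains those $\rho(x_{2i-1})$ having $2i-1\le t$, hence $\lceil t/2\rceil$ members of $D_2$ (at most $k$ since $t\le2k$), and nothing from $n_0$. \emph{Efficiency:} $f$ is polynomial-time computable because $g$, $\rho$, and the two hard-coded strings are. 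Invoking the hypothesis of Lemma~\ref{lem:compare} now yields $f(x_1,\dots,x_{2k})\in A\Leftrightarrow 1+\lfloor t/2\rfloor\le\lceil t/2\rceil\Leftrightarrow t\text{ odd}$, which is precisely what Lemma~\ref{lem:odd} requires; hence $A$ is \ThetaTwo-hard.

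The only part needing genuine care---the ``main obstacle,'' such as it is---is pinning down the index arithmetic and the placement of the two padding constants so that the two output tuples have equal length and hit the target counts $1+\lfloor t/2\rfloor$ and $\lceil t/2\rceil$ exactly for every $t\in\{0,\dots,2k\}$, including the boundary cases $t=0$, $t=2k$, and the base case $k=1$. I note in passing that the very same construction, read with ``$=$'' in place of ``$\le$,'' simultaneously proves Lemma~\ref{lem:equal}, since $1+\lfloor t/2\rfloor=\lceil t/2\rceil$ also characterizes the odd $t$; this is the common method alluded to just before the two lemmas.
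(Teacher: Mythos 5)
Your proof is correct and follows essentially the same route as the paper: both derive the lemma from Wagner's criterion (Lemma~\ref{lem:odd}) by splitting the monotone $2k$-chain into its odd- and even-indexed subchains, transporting them to $D_1$ and $D_2$ via many-one reductions, and observing that the resulting count comparison characterizes the parity of $\card{\{x_1,\dots,x_{2k}\}\cap D}$. The only divergence is how the parity is aligned with the ``$\le$'' condition: the paper passes to the complement of $A$ (using closure of \ThetaTwo under complementation, so that ``$\le$'' characterizes \emph{even}), whereas you pad the two sides with a fixed yes-instance and a fixed no-instance to shift one count by one so that ``$\le$'' characterizes \emph{odd} directly---both devices are valid, and your aside that the same padding also yields Lemma~\ref{lem:equal} is likewise correct.
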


We point out again that Lemma~\ref{lem:compare} is the corrected version of a slightly flawed lemma statement in the paper by Spakowski and Vogel~\cite[Lem.~9]{spa-vog:c:theta-two-classic}. Since said paper does not apply the problematic lemma anywhere, all other results derived in it remain valid.	

\begin{proof}[Proof of Lemmas~\ref{lem:equal} and~\ref{lem:compare}]
It suffices to make the following five observations. 

First, \ThetaTwo is closed under complementation since we can just invert the output of any algorithm witnessing the membership in $\P^\NP_\|$. Therefore, Lemma~\ref{lem:odd} remains true if we replace, on the one hand, the language $A$ by its complement $\bar{A}=\Sigma^\ast\setminus A$ in both occurrences and, on the other hand, ``odd'' by ``even.'' 

Second, due to the assumption $x_1\in D \Leftarrow x_2\in D \Leftarrow \dots \Leftarrow  x_{2k}\in D$, the following three conditions are now equivalent within the modified  Lemma~\ref{lem:odd}. 

\begin{enumerate}
	\item $\card{\{x_1,\dots,x_{2k}\}\cap D}\text{ is even}$,
	\item $\card{\{x_1,x_3,\dots,x_{2k-1}\}\cap D}=\card{\{x_2,x_4\dots,x_{2k}\}\cap D}$, and
	\item $\card{\{x_1,x_3,\dots,x_{2k-1}\}\cap D}\le  \card{\{x_2,x_4\dots,x_{2k}\}\cap D}$. 
\end{enumerate} 

Third, given two arbitrary \NP-complete problems $D_1$ and $D_2$, there are polynomial-time many-one reduction $h_1$ and $h_2$ from $D$ to $D_1$ and $D_2$, respectively. 
Letting 
\begin{align*}
y_1&=h_1(x_1),\ y_2=h_1(x_3),\ \dots,\ h_1(x_{2k-1})\text{ and}\\
z_1&=h_2(x_2),\ z_2=h_2(x_4),\ \dots,\ h_2(x_{2k}), 
\end{align*}
we have 
\begin{align*}
\card{\{x_1,x_3,\dots,x_{2k-1}\}\cap D}&=\card{\{y_1,y_2,\dots,y_k\}\cap D_1}\text{ and }\\
\card{\{x_2,x_4,\dots,x_{2k}\}\cap D}&=\card{\{z_1,z_2,\dots,z_k\}\cap D_2}.
\end{align*}

Forth, given a polynomial-time computable function $g\colon \bigcup_{k=1}^\infty(\Sigma^\ast)^{2k}\to \Sigma^\ast$, we obtain another such function $f$ that satisfies $f(x_1,\dots,x_{2k})\in A \Longleftrightarrow g(y_1,\dots,y_k,z_1,\dots,z_k)\in A$ by simply defining $f(x_1,\dots,x_{2k})=g(y_1,\dots,y_k,z_1,\dots,z_k)$.

Finally, $x_1\in D \Leftarrow \dots \Leftarrow x_{2k}\in D$ implies both $x_1\in D \Leftarrow x_3\in D \Leftarrow \dots \Leftarrow x_{2k-1}\in D$ and $x_2\in D \Leftarrow x_4\in D \Leftarrow \dots \Leftarrow x_{2k}\in D$, 
which in turn implies $y_1\in D_1 \Leftarrow \dots \Leftarrow y_k\in D_1$ and $z_1\in D_2 \Leftarrow \dots \Leftarrow z_k\in D_2.$
\end{proof}

Lemma~\ref{lem:equal} provides for several equality problems the proofs of \ThetaTwo-hardness (and thus \ThetaTwo-completeness), which Wagner asserted~\cite[page 79]{wag:j:more-on-bh} without spelling out the straightforward proofs. 
In particular, \alternativetextsc{EqualIndependentSet}, \alternativetextsc{EqualVertexCover}, \alternativetextsc{EqualColorability}, and \alternativetextsc{EqualClique} -- which ask whether two graphs have the same graph number $\alpha$, $\beta$, $\chi$, and $\omega$, respectively -- and \alternativetextsc{EqualMaxSat} -- which asks whether two formulas in \ThreeCNF have the same maximal number of simultaneously satisfiable clauses -- are all \ThetaTwo-complete. This is seen by applying Proposition~\ref{lem:equal} to the proofs of the corresponding theorems by Wagner~\cite[Thms. 6.1, 6.2, and 6.3]{wag:j:more-on-bh}.

By applying Lemma~\ref{lem:compare} instead, we immediately obtain \ThetaTwo-completeness for the comparison problems \alternativetextsc{CompareIndependentSet}, \alternativetextsc{CompareVertexCover}, \alternativetextsc{CompareColorability}, \alternativetextsc{CompareClique}, and \alternativetextsc{CompareMaxSat}. 

For all but \alternativetextsc{CompareColorability}, \ThetaTwo-hardness was also proved by Spakowski and Vogel~\cite[Thms.~2, 12 and 13]{spa-vog:c:theta-two-classic}. 
We now show how to apply concretely Lemma~\ref{lem:compare} to obtain the \ThetaTwo-hardness of \alternativetextsc{CompareColorability}. 

\setcounter{tempcounter}{\value{theorem}}
\setcounterref{theorem}{thm:CompareColorability}
\addtocounter{theorem}{-1}
\begin{theorem}
	$\alternativetextsc{CompareColorability}=\{(G,H)\in\mathcal{G}^2\mid \chi(G)\le\chi(H)\}$ is \ThetaTwo-hard. 
\end{theorem}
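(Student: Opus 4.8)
The plan is to invoke Lemma~\ref{lem:compare} with $D_1=D_2=\ThreeSat$. The only substantive ingredient is a way to package a list of formulas into a single graph whose chromatic number counts how many of them are satisfiable.

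First I would recall the classical gadget reduction $\phi$ from \ThreeSat to \ThreeCol and observe that it produces only $3$- and $4$-chromatic graphs: the palette triangle forces $\chi(\phi(\Psi))\ge 3$ always, a satisfying assignment of $\Psi$ extends to a proper $3$-coloring of $\phi(\Psi)$, and when $\Psi$ is unsatisfiable a single extra color always suffices to color the clause gadgets, so $\chi(\phi(\Psi))=4$. (Any reduction to \ThreeCol with this boundedness property would do.) Since \ThreeSat is \NP-complete, this is all Lemma~\ref{lem:compare} asks of $D_1=D_2=\ThreeSat$.

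Given $k$ and inputs $y_1,\dots,y_k,z_1,\dots,z_k$, read as \ThreeCNF-formulas, I would form the joins $G^\ast=\phi(y_1)+\cdots+\phi(y_k)$ and $H^\ast=\phi(z_1)+\cdots+\phi(z_k)$. Since $\chi(A+B)=\chi(A)+\chi(B)$, this gives $\chi(G^\ast)=4k-t$ and $\chi(H^\ast)=4k-s$ with $t=\card{\{y_1,\dots,y_k\}\cap\ThreeSat}$ and $s=\card{\{z_1,\dots,z_k\}\cap\ThreeSat}$ (to make the set cardinality coincide with the count one may first attach a disjoint $i$-vertex path to $\phi(y_i)$, which does not affect the chromatic number). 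Hence $\chi(H^\ast)\le\chi(G^\ast)$ iff $t\le s$, so I would let $g(y_1,\dots,y_k,z_1,\dots,z_k)$ output the pair $(H^\ast,G^\ast)$---swapped, to get the inequality in the direction Lemma~\ref{lem:compare} wants---which lies in \textsc{CompareColorability} exactly when $\card{\{y_1,\dots,y_k\}\cap\ThreeSat}\le\card{\{z_1,\dots,z_k\}\cap\ThreeSat}$. As $g$ is polynomial-time computable, Lemma~\ref{lem:compare} delivers the \ThetaTwo-hardness; in fact the equivalence holds for all inputs, so the monotone-chain hypothesis is not even used here.

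The main obstacle is that first step: one genuinely needs satisfiability to be encoded in a graph whose chromatic number is exactly $3$ or exactly $4$, never larger. Without this, distinct unsatisfiable formulas would raise $\chi(G^\ast)$ by distinct, unknown amounts and the join would stop counting satisfiable formulas. This is precisely the ingredient Wagner exploits for \textsc{OddColorability}; once it is in hand, the rest---the arithmetic of joins and the bookkeeping about which side of the inequality each graph sits on---is routine.
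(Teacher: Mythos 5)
Your proposal is correct and follows essentially the same route as the paper: apply Lemma~\ref{lem:compare} with $D_1=D_2=\ThreeSat$, push each formula through the standard \ThreeSat-to-\ThreeCol reduction (whose output is $3$- or $4$-chromatic according to satisfiability), and join the resulting graphs so that the chromatic numbers count satisfiable inputs. The only difference is your explicit swap of the output pair to align the inequality's direction, a bookkeeping step the paper handles implicitly; your arithmetic ($\chi=4k-t$) is in fact cleaner than the displayed formulas in the paper's proof.
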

\setcounter{theorem}{\value{tempcounter}}
\begin{proof}
	This proof is modeled after the one for Wagner's Theorem 6.3. 
	We apply Lemma~\ref{lem:compare} with $A=\alternativetextsc{CompareColorability}$ and $D_1=D_2=\ThreeSat$. 
	
	Let $k\in\N\setminus\{0\}$ and let $2k$ formulas $\Phi_1,\dots,\Phi_k,\Psi_1,\dots,\Psi_k\in\ThreeCNF$
	satisfying 
	\begin{align*}
	&\Phi_1\in\ThreeSat\Leftarrow\dots\Leftarrow\Phi_k\in\ThreeSat\text{ and }\\
	&\Psi_1\in\ThreeSat\Leftarrow\dots\Leftarrow\Psi_k\in\ThreeSat
	\end{align*} 
	be given. 
	Denote by $h$ the standard reduction from \ThreeSat to \ThreeCol by Garey et al.~\cite{gar-joh-sto:j:some-simplified-np-complete-graph-problems}; it maps a formula to a graph whose chromatic number is 3 if the formula is satisfiable and 4 otherwise. 
	Moreover, let    
	$G=h(\Phi_1)+\dots+h(\Phi_k)$ and $H=h(\Psi_1)+\dots+h(\Psi_k)$, where $+$ denotes the graph join. 
	It follows that 
	\begin{align*}
	\chi(G)=\sum_{i=1}^k\chi(h(\Phi_i))&=k+\card{\{\Phi_1,\dots,\Phi_k\}\cap\ThreeSat}\quad\text{ and }\\
	\chi(H)=\sum_{i=1}^k\chi(h(\Psi_i))&=3k+\card{\{\Psi_1,\dots,\Psi_k\}\cap\ThreeSat}.
	\end{align*}
	Thus we have
	$\chi(G)\le\chi(H)$ 
	if and only if
	$\card{\{\Phi_1,\dots,\Phi_k\}\cap\ThreeSat}\le \card{\{\Psi_1,\dots,\Psi_k\}\cap\ThreeSat}$. 	
	The map $g\colon (\Phi_1,\dots,\Phi_k,\Psi_1,\dots,\Psi_k)\mapsto (G,H)$ therefore satisfies all requirements of Lemma~\ref{lem:compare}, which concludes the proof. 
	\end{proof}

\section{Proof of Theorem~\ref{thm:twowaystability}}
\label{app:twowaystability}
We restate and prove Theorem~\ref{thm:twowaystability}.
\setcounter{tempcounter}{\value{theorem}}
\setcounterref{theorem}{thm:twowaystability}
\addtocounter{theorem}{-1}
\begin{theorem}
	The problem \TwoWayStability is \ThetaTwo-complete.
\end{theorem}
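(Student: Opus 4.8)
The plan is to dispatch the easy upper bound first and then reduce from \Stability.

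\emph{Membership.} Exactly as at the start of Section~\ref{sec:stability-for-colorability}, a polynomial number of parallel queries to the \NP-complete problem $\{(H,k)\in\mathcal{G}\times\N\mid \chi(H)\le k\}$---namely $(G,k)$, $(G-e,k)$ for every $e\in E(G)$, and $(G+e',k)$ for every $e'\in\coE(G)$, taken over all $k\in\{0,\dots,\card{V(G)}\}$---determine $\chi(G)$ together with the chromatic numbers of all one-edge modifications of $G$; the graph $G$ is two-way stable precisely when all these values agree, so $\TwoWayStability\in\ThetaTwo$.

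\emph{Hardness: the cheap ingredient.} I would reduce from \Stability, which is \ThetaTwo-complete by Theorem~\ref{thm:stability}. The disjoint union $G\cup G$ is \emph{always} stable, because $\chi(G\cup G)=\chi(G)$ and deleting an edge inside one copy leaves the other copy with chromatic number $\chi(G)$; and $G\cup G$ is unfrozen exactly when $G$ is, since adding a nonedge inside one copy changes $\chi$ iff that nonedge was frozen in $G$, while adding a nonedge between the two copies cannot raise $\chi$ once $\chi(G)\ge 2$ (the lone exceptional graph $I_1$ being sent to $I_1\in\TwoWayStability$ by hand). This already gives a polynomial-time many-one reduction from \Unfrozenness to \TwoWayStability; but since the complexity of \Unfrozenness itself is open, the real task is to produce an image that is unfrozen for \emph{every} input while still recording whether $G$ is stable.

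\emph{Hardness: the main obstacle.} These two demands pull against each other: for the edge-deletions of the constructed graph to report on whether $G$ is stable, the graph $G$ must sit inside it and contribute tightly to $\chi$, whereupon any frozen nonedge of $G$ survives as a frozen nonedge of the construction. My plan is to resolve this in the spirit of Lemma~\ref{lem:StabilizingGadget} and the construction underlying Theorem~\ref{thm:cai-meyer}: instead of embedding $G$ as an induced subgraph, place suitably entangled copies of $G$ inside a gadget so that the entire ``is $G$ stable?'' question is concentrated in a single designated edge, with every other edge stabilized as in Lemma~\ref{lem:StabilizingGadget}, and with the extra gadget structure arranged so that the whole graph has no frozen nonedge whenever $G$ is stable. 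When $G$ is \emph{not} stable the designated edge is already critical, so the construction is two-way stable exactly when $G$ is stable, which is the reduction we want. The hard part will be the unfrozenness bookkeeping in this last step; note that, unlike the general-purpose unfreezing operation whose existence is left open in Theorem~\ref{thm:unfrozenness}, here unfrozenness is only needed \emph{conditionally} (when $G$ is stable), so there is no conflict with the open status of plain \Unfrozenness. For $\beta$ in place of $\chi$, the analogous result follows the same way from the quadripartite gadget of Lemma~\ref{lem:quadripartite-gadget}, which $\beta$-stabilizes a prescribed set of edges without creating any new $\beta$-unfrozen nonedge.
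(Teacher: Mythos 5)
Your membership argument is fine, and your analysis of the map $G\mapsto G\cup G$ is careful and correct (including the corner case $G=I_1$, where the map actually fails and must be special-cased---a point the paper's own ``without loss of generality'' glosses over). The problem is the hardness half: it is a plan, not a proof. The entire burden lands on a construction you never exhibit, namely a polynomial-time map that keeps $G$ contributing tightly to the chromatic number (so that edge deletions still report whether $G$ is stable) while eliminating every frozen nonedge. Lemma~\ref{lem:StabilizingGadget} stabilizes edges against \emph{deletion} and says nothing about nonedges; the paper explicitly remarks, after Theorem~\ref{thm:CompareColorability}, that an unfreezing analogue of that gadget---if one exists at all---appears to require something substantially more complicated, and Theorem~\ref{thm:unfrozenness} is stated only conditionally for exactly this reason. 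Your observation that you only need unfrozenness \emph{conditionally} (when $G$ is stable) is a fair point, but a conditional unfreezing gadget is still not something you can assume into existence; as written, your reduction from \Stability is a research programme rather than an argument, so the proposal does not prove the theorem.

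That said, you have put your finger on something real. The paper's own proof is precisely the route you rejected: it shows that $G\mapsto G\cup G$ many-one reduces \Unfrozenness to \TwoWayStability and stops there. A reduction from $A$ transfers hardness only if $A$ is hard, and the paper itself leaves the \ThetaTwo-hardness of \Unfrozenness open (Table~\ref{tab:OverviewTable} marks that entry with a question mark, and Theorem~\ref{thm:unfrozenness} gives only a sufficient condition). So the appendix proof, taken at face value, establishes the hardness of \TwoWayStability only modulo that open problem---your diagnosis of why $G\cup G$ alone does not suffice applies verbatim to the paper's argument. What you have not done is close the resulting gap. (Your closing remark about $\beta$ concerns a different theorem, and there the chain is sound, since $\beta$-\Unfrozenness is unconditionally \ThetaTwo-hard by Theorem~\ref{thm:beta-unfrozenness} and Lemma~\ref{lem:quadripartite-gadget} supplies the stabilizer that creates no new unfrozen nonedges.)
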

\begin{proof}
	We have $\TwoWayStability = \Stability \cap \Unfrozenness$.
	The membership in \ThetaTwo is immediate.
	For \ThetaTwo-hardness, we show that the map $f(G)= G\cup G$ is a reduction from \Unfrozenness.
	First, $G\cup G$ is stable for any given graph $G$ since $\chi (G_1\cup G_2) = \max\{\chi(G_1) , \chi(G_2)\}$ for all graphs $G_1,G_2\in\mathcal{G}$.
	We conclude that $G\cup G$ is two-way-stable if and only if it is unfrozen.
	Moreover, $G\cup G$ is unfrozen if and only if $G$ is unfrozen: 
	A nonedge $e\in\coE(G)$ is unfrozen in $G$ exactly if it is unfrozen in $G\cup G$, again due to  $\chi (G_1\cup G_2) = \max\{\chi (G_1) , \chi(G_2)\}$.
	It remains to examine the nonedges that can be added to $G\cup G$ between the two copies of $G$. Let $\{v_1,v_2\}$ be such a nonedge. We prove that it is unfrozen.  Without loss of generality, assume that $G$ is nonempty, that is, $\chi (G)>1$. Given an optimal coloring for $G$, we obtain an optimal coloring for $G\cup G+\{v_1,v_2\}$ by coloring both copies according to the given coloring, just with the colors permuted appropriately for the second copy, that is, such that $v_2$ receives a color different from the one of $v_1$.
\end{proof}
\setcounter{theorem}{\value{tempcounter}}

\section{Proof of Lemma~\ref{lem:quadripartite-gadget}}
\label{app:quadripartite-gadget}
We restate and prove Lemma~\ref{lem:quadripartite-gadget}.

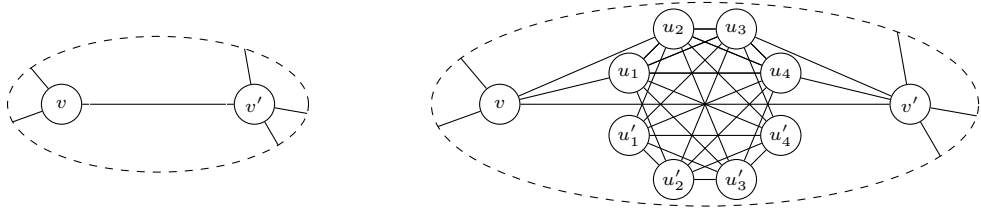
\begin{figure}[htbp]
	\begin{center}
		\begin{subfigure}[c]{0.39\textwidth}
			\begin{tikzpicture}[
			xscale=.9,yscale=.9,
			vertex/.style={draw,circle, inner sep=.0pt, minimum size=0.53cm},font=\scriptsize]
			
			\node[vertex] (v) at (-1.41,0) {$v$};
			\node[vertex] (v') at (1.41,0) {$v'$};
			
			\draw (v) -- (v');
			\draw[white] (0,0) ellipse (1cm and 1.5cm);
			
			\draw[dashed] (0,0) ellipse (2.2cm and 1cm);
			\clip (0,0) ellipse (2.2cm and 1cm);
			
			\draw[] (v) -- +(130:2);
			\draw[] (v) -- +(200:2);
			
			\draw[] (v') -- +(-60:2);
			\draw[] (v') -- +(-10:2);
			\draw[] (v') -- +(100:2);
			\end{tikzpicture}
			\captionsetup{width=1\linewidth}
			\subcaption{An edge to be stabilized.}
		\end{subfigure}
		\begin{subfigure}[c]{0.59\textwidth}
			\begin{tikzpicture}[
			xscale=.9,yscale=.9,
			vertex/.style={draw,circle, inner sep=.0pt, minimum size=0.53cm},font=\scriptsize]
			
			\node[vertex] (v) at (-3,0) {$v$};
			\node[vertex] (u1) at (22.5+3*45:1.2) {$u_1$};
			\node[vertex] (u2) at (22.5+2*45:1.2) {$u_2$};
			\node[vertex] (u3) at (22.5+1*45:1.2) {$u_3$};
			\node[vertex] (u4) at (22.5+0*45:1.2) {$u_4$};
			\node[vertex] (u1') at (22.5+4*45:1.2) {$u'_1$};
			\node[vertex] (u2') at (22.5+5*45:1.2) {$u'_2$};
			\node[vertex] (u3') at (22.5+6*45:1.2) {$u'_3$};
			\node[vertex] (u4') at (22.5+7*45:1.2) {$u'_4$};
			\node[vertex] (v') at (3,0) {$v'$};
			
			\draw (v) -- (v');	
			
			\draw (u1) -- (v) -- (u2);
			\draw (u3) -- (v') -- (u4);
			
			\draw (u2') -- (u1) -- (u2);
			\draw (u3') -- (u1) -- (u3);
			\draw (u4') -- (u1) -- (u4);
			
			\draw (u1') -- (u2) -- (u1);
			\draw (u3') -- (u2) -- (u3);
			\draw (u4') -- (u2) -- (u4);
			
			\draw (u1') -- (u3) -- (u1);
			\draw (u2') -- (u3) -- (u2);
			\draw (u4') -- (u3) -- (u4);
			
			\draw (u1') -- (u4) -- (u1);
			\draw (u2') -- (u4) -- (u2);
			\draw (u3') -- (u4) -- (u3);
			
			\draw (u1') -- (u2') -- (u3') -- (u4') -- (u1');
			\draw (u1') -- (u3');
			\draw (u2') -- (u4');
			
			\draw[dashed] (0,0) ellipse (4cm and 1.5cm);
			\clip (0,0) ellipse (4cm and 1.5cm);
			
			\draw[] (v) -- +(130:2);
			\draw[] (v) -- +(200:2);
			
			\draw[] (v') -- +(-60:2);
			\draw[] (v') -- +(-10:2);
			\draw[] (v') -- +(100:2);
			
			\end{tikzpicture}
			\captionsetup{width=1\linewidth}
			\subcaption{The same section after adding the stabilization gadget.}
			\label{fig:TwoWayGadgetB}
		\end{subfigure}
		\caption{How to stabilize an arbitrary edge $\{v,v'\}$ without introducing new unfrozen edges.}\label{fig:TwoWayGadget}
	\end{center}
\end{figure} 

\setcounter{tempcounter}{\value{theorem}}
\setcounterref{theorem}{lem:quadripartite-gadget}
\addtocounter{theorem}{-1}
\begin{lemma}
	Let a nonempty graph $G$ and an edge $e\in V(G)$ be given. Construct from $G$ a graph $G'$ by substituting for $e$ the constant-size gadget that consists of a clique on the new vertex set $\{u_1,u_2,u_3,u_4,u_1',u_2',u_3',u_4'\}$, with the four edges $\{u_i,u_i'\}$ for $i\in\{1,2,3,4\}$ removed and the four edges $\{v,u_1\}$, $\{v,u_2\}$, $\{v',u_3\}$, and $\{v',u_4\}$ added. (This gadget is displayed in Figure~\ref{fig:TwoWayGadgetB}.) 
	The graph $G'$ has the following properties. 
\begin{enumerate}
	\item 
	$\beta(G')=\beta(G)+6$, 
	\item 
	every edge $e'\in E(G)\setminus\{e\}$ is $\beta$-stable in $G$ exactly if it is in $G'$, 
	\item 
	all remaining edges 
	of $G'$ 
	are $\beta$-stable,
	\item 
	every nonedge $e'\in\coE(G)$ is $\beta$-unfrozen in $G$ exactly if it is in $G'$, and
	\item 
	all remaining nonedges 
	$e'\in\coE(G')\setminus\coE(G)$ 
	of $G'$ 
	are $\beta$-unfrozen. 
\end{enumerate} 
\end{lemma}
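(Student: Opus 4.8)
Note first that, as Figure~\ref{fig:TwoWayGadgetB} indicates, the edge $e=\{v,v'\}$ is retained in $G'$, so that $E(G)\subseteq E(G')$; this is exactly what will make $e$ itself become $\beta$-stable. Write $U=\{u_1,\dots,u_4,u_1',\dots,u_4'\}$ for the eight new vertices. The whole argument rests on two elementary facts about the induced subgraph $G'[U]$, which is $K_8$ with the perfect matching $\{u_iu_i'\}_{i=1}^4$ deleted: its complement is precisely that matching, so $\alpha(G'[U])=2$ and hence $\beta(G'[U])=6$; and deleting any single edge $f$ from $G'[U]$ just adds one edge to the complementary matching, turning it into a disjoint union of a $P_4$ and two edges, which is still triangle-free, so $\beta(G'[U]-f)=6$ as well. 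I will also use repeatedly the vertex covers of $G'$ of the form $C_G\cup(U\setminus\{u_l,u_l'\})$, where $C_G$ is a minimum vertex cover of $G$ and $l$ is any index such that $u_3,u_4\in U\setminus\{u_l,u_l'\}$ whenever $v'\notin C_G$ and, symmetrically, $u_1,u_2\in U\setminus\{u_l,u_l'\}$ whenever $v\notin C_G$; that these are vertex covers of $G'$ is checked directly on the four attachment edges. Since $e\in E(G)$, the cover $C_G$ contains $v$ or $v'$, so the admissible choices of $l$ always include all of $\{1,2\}$ or all of $\{3,4\}$.

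For Property~1, the cover above gives $\beta(G')\le\beta(G)+6$, and conversely, for any vertex cover $C$ of $G'$ the restriction $C\cap V(G)$ covers $G$ (because $E(G)\subseteq E(G')$) while $C\cap U$ covers $G'[U]$, so $|C|\ge\beta(G)+6$. The identical argument with $G-e'$ (for $e'\in E(G)\setminus\{e\}$) or $G+e''$ (for $e''\in\coE(G)$) in place of $G$ yields $\beta(G'-e')=\beta(G-e')+6$ and $\beta(G'+e'')=\beta(G+e'')+6$; here one uses that $e$ is still present in $G-e'$ and in $G+e''$, so their minimum vertex covers still contain $v$ or $v'$. Properties~2 and~4 are then immediate, since $\beta$-stability of $e'$ and $\beta$-unfrozenness of $e''$ are the equalities $\beta(G-e')=\beta(G)$ and $\beta(G+e'')=\beta(G)$, which survive adding $6$ to both sides.

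For Property~3 only the inequality $\beta(G'-f)\ge\beta(G')$ needs proof. If $f$ is an edge inside $G'[U]$, then for a vertex cover $C$ of $G'-f$ the restriction $C\cap V(G)$ still covers $G$ and $C\cap U$ covers $G'[U]-f$, so $|C|\ge\beta(G)+6$. If $f$ is one of the four attachment edges, then $G'[U]$ is untouched and the same bound holds. Finally, if $f=e$: a vertex cover $C$ of $G'-e$ containing neither $v$ nor $v'$ is forced by the attachment edges to contain $u_1,u_2,u_3,u_4$, and since $\{u_1',\dots,u_4'\}$ still induces a $K_4$ needing three more vertices, $|C\cap U|\ge 7$; then $(C\cap V(G))\cup\{v\}$ is a vertex cover of $G$ of size at most $|C|-6$, while if $v\in C$ or $v'\in C$ the set $C\cap V(G)$ is already such a cover. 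Either way $|C|\ge\beta(G)+6=\beta(G')$, so $e$ is $\beta$-stable in $G'$.

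For Property~5, the key observation is that every new non-edge $f\in\coE(G')\setminus\coE(G)$ has an endpoint in $U$: the new non-edges are exactly the four pairs $\{u_i,u_i'\}$ together with all pairs joining a vertex of $U$ to a vertex of $V(G)$ other than the four attachment edges (and $\{v,v'\}$ is an edge of $G'$, hence not among them). Given such an $f$ with endpoint $u_i$ (or $u_i'$) and a minimum vertex cover $C_G$ of $G$, I pick an admissible $l\in\{1,2\}\setminus\{i\}$ or $l\in\{3,4\}\setminus\{i\}$, which is nonempty in either case; then $C_G\cup(U\setminus\{u_l,u_l'\})$ is a vertex cover of $G'$ of size $\beta(G')$ containing $u_i$ (and $u_i'$), so $\beta(G'+f)=\beta(G')$ and $f$ is $\beta$-unfrozen in $G'$. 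The main obstacle in the whole plan is exactly this bookkeeping: ensuring that the admissibility conditions on $l$ are compatible at once with the four attachment-edge constraints and with avoiding the index $i$; once these are set up correctly, Properties~3 and~5 drop out by routine case checking.
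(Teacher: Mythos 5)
Your proof is correct and follows essentially the same route as the paper's: decompose any vertex cover of $G'$ into its $V(G)$-part and its part on the eight new vertices, use that the complement of the gadget (a perfect matching, or a matching plus one extra edge after an edge deletion) is triangle-free to obtain the constant $6$, and observe that the construction commutes with deleting or adding edges inside $G$ to get Properties~2 and~4. If anything, your handling of the case $f=e$ in Property~3 is more complete than the paper's one-line justification, which only invokes the triangle-freeness of the complement restricted to the gadget and leaves implicit the case distinction on whether the cover contains $v$ or $v'$ (needed because the restriction of a cover of $G'-e$ to $V(G)$ a priori only covers $G-e$).
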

\setcounter{theorem}{\value{tempcounter}}

\begin{proof}
	Let a graph $G$ and an edge $\{v,v'\}\in E(G)$ in it be given. 
	Let $Q
	$ be the quadripartite graph that is the join of the four empty graphs with vertex sets $\{u_1,u_1'\}$, $\{u_2,u_2'\}$, $\{u_3,u_3'\}$, and $\{u_4,u_4'\}$. 
	Let $G'$
	be the disjoint union of $G$ and $Q$ with the added edges $\{v,u_1\}$, $\{v,u_2\}$, $\{v',u_3\}$, and $\{v',u_4\}$. 
	See Figure~\ref{fig:TwoWayGadget} for a depiction of this construction. 
	
	We prove that $G'$ has the required properties. 
	\begin{enumerate}
		\item 
		Let $X$ be a vertex cover of $G$. It must contain $v$ or $v'$. If $v\in X$, then it follows that  $X\cup\{u_2,u_3,u_4,u_2',u_3',u_4'\}$ is a vertex cover of $G'$; if $v'\in X$, then $X\cup\{u_1,u_2,u_3,u_1',u_2',u_3'\}$ is one. This proves $\beta(G')\le \beta(G)+6$. 
		To obtain the inverse inequality, let $X'$ be a vertex cover of $G'$. Then $X'\setminus V(Q)$ is a vertex cover of $G$. Moreover, for any vertex $w\in V(Q)$, we have that, if $w\notin X'$, then $X'$ must contain the entire neighborhood of $w$, which contains exactly six vertices from $Q$. It follows that $\beta(G)\le \card{X'\setminus V(Q)}\le\card{X'}-6\le \beta(G')-6$. 
		\item 
		This is a consequence of the first property since, 
		for every edge $e'\in E(G)\setminus\{e\}$, our construction clearly commutes with deleting $e'$. 
		\item 
		Let $e'=e$ or $e'\in E(G')\setminus E(G)$. We need to show that $\beta(G'-e')\ge \beta(G)+6$. 
		It suffices to note that the complement graph of $G'-e'$ restricted to $V(Q)$ has no clique of size 3. 
		\item 
		The argument for the second property is valid for nonedges $e'\in\coE(G)$ as well.
		\item 
		Let $e'\in\coE(G')\setminus\coE(G)$ and let $X$ be a vertex cover of $G$. We show how to obtain a vertex cover for $G'+e'$ by adding six vertices to $X$. At least one endpoint of $e$ lies in $V(Q)$, call it $w$. If $w\in \{u_1,u_1',u_4,u_4'\}$, let $X'=X\cup\{u_1,u_1',u_4,u_4'\}$; otherwise, let $X'=X\cup\{u_2,u_2',u_3,u_3'\}$. 
		Let $X''=X'\cup\{u_3,u_3',u_4,u_4'\}$ if $v\in X$. Otherwise, we have $v'\in X$ and let $X''=X'\cup\{u_1,u_1',u_2,u_2'\}$. 
		It is easy to check that $X''$ is a vertex cover of $G'+e'$ and $\card{X''}=\card{X}+6$ in all cases.
\end{enumerate}
This concludes the proof. 
\end{proof}

\section{Proof of Theorem~\ref{thm:beta-twowaystability}}
\label{app:beta-twowaystability}
We restate and prove Theorem~\ref{thm:beta-twowaystability}.
\setcounter{tempcounter}{\value{theorem}}
\setcounterref{theorem}{thm:beta-twowaystability}
\addtocounter{theorem}{-1}
\begin{theorem}
	The problem $\beta$-\TwoWayStability is \ThetaTwo-complete.
\end{theorem}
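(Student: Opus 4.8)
The plan is to follow the template of the proofs of Theorems~\ref{thm:beta-stability} and~\ref{thm:beta-unfrozenness}, replacing their edge-stabilizing gadgets with the one from Lemma~\ref{lem:quadripartite-gadget}, which was designed precisely to stabilize an edge \emph{without} creating any new unfrozen nonedges. Membership in \ThetaTwo is immediate, since $\beta$-\TwoWayStability is the intersection of $\beta$-\Stability and $\beta$-\Unfrozenness, and polynomially many parallel queries to an oracle for the \NP-complete problem $\{(J,k)\mid\beta(J)\le k\}$ let us compute $\beta(G)$, every $\beta(G-e)$, and every $\beta(G+e')$, hence decide both conjuncts. For the lower bound I would give a polynomial-time many-one reduction from \textsc{CompareVertexCover}$=\{(G,H)\mid\beta(G)\le\beta(H)\}$, known to be \ThetaTwo-hard~\cite[Thm.~12]{spa-vog:c:theta-two-classic}.

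Given $(G,H)$ with $g=\card{V(G)}$ and $h=\card{V(H)}$, I would build the target graph $S$ in three stages. \emph{First}, as in the proof of Theorem~\ref{thm:beta-unfrozenness}, set $G_0=(G\cup I_h)+(G\cup I_h)$ and $H_0=(H+K_{g+h})\cup I_g$; then $G_0$ is $\beta$-unfrozen with $\beta(G_0)=\beta(G)+g+h$ and $\card{V(G_0)}=2(g+h)$, while $H_0$ contains a $\beta$-critical edge (inside $K_{g+h}$) and satisfies $\beta(H_0)=\beta(H)+g+h$. \emph{Second}, apply the gadget of Lemma~\ref{lem:quadripartite-gadget} to \emph{every} edge of $G_0$ in turn, obtaining $\widehat{G_0}$; iterating the five properties of the lemma, every edge of $\widehat{G_0}$ is then a gadget edge and hence $\beta$-stable, while every nonedge of $\widehat{G_0}$ is either an old nonedge of $G_0$ (still $\beta$-unfrozen because $G_0$ was) or a new one (which the lemma declares $\beta$-unfrozen), so $\widehat{G_0}$ is $\beta$-two-way-stable, with $\beta(\widehat{G_0})=\beta(G)+g+h+6M$ and $\card{V(\widehat{G_0})}=2(g+h)+8M$, where $M=\card{E(G_0)}$. \emph{Third}, pad the $H$-side so that the vertex counts agree and its $\beta$-offset exceeds that of $\widehat{G_0}$ by exactly one: set $H^\star=H_0\cup K_{6M+2}\cup I_{2M-2}$, so that $\beta(H^\star)=\beta(H)+g+h+6M+1$ and $\card{V(H^\star)}=\card{V(\widehat{G_0})}=:c$; note that $H^\star$ still has a $\beta$-critical edge and that $\beta(\widehat{G_0})<\beta(H^\star)$ if and only if $\beta(G)\le\beta(H)$. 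Finally, let $S$ be the join $\widehat{G_0}+H^\star$ with each of its $c^2$ join edges replaced by a copy of the Lemma~\ref{lem:quadripartite-gadget} gadget; $S$ is clearly computable in polynomial time.

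The verification rests on the identity $\beta(A+B)=c+\min\{\beta(A),\beta(B)\}$ when $\card{V(A)}=\card{V(B)}=c$, on the fact that perturbing the strictly larger of the two sides of such a join cannot change the minimum (so every edge of the larger side is $\beta$-stable and every nonedge of it is $\beta$-unfrozen in the join), and on Lemma~\ref{lem:quadripartite-gadget}, whose iterated use guarantees that substituting gadgets for the join edges raises $\beta$ by the constant $6c^2$, leaves the $\beta$-stability status of every edge of $\widehat{G_0}$ or $H^\star$ and the $\beta$-unfrozenness status of every nonedge of $\widehat{G_0}$ or $H^\star$ unchanged, and makes every newly created edge $\beta$-stable and every newly created nonedge $\beta$-unfrozen. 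If $\beta(G)\le\beta(H)$, then $\beta(\widehat{G_0})<\beta(H^\star)$, so $\widehat{G_0}$ is the ``active'' side: every edge and nonedge of $H^\star$ becomes stable resp.\ unfrozen, every edge and nonedge of $\widehat{G_0}$ keeps its status (stable resp.\ unfrozen, since $\widehat{G_0}$ is $\beta$-two-way-stable), and the gadget parts are stable resp.\ unfrozen, so $S$ is $\beta$-two-way-stable. If $\beta(G)>\beta(H)$, then $\beta(\widehat{G_0})\ge\beta(H^\star)$, and a $\beta$-critical edge $e'$ inside the clique component of $H^\star$ stays $\beta$-critical in the plain join (deleting it drops $\beta(H^\star)$ by one, which is then no larger than $\beta(\widehat{G_0})$ and hence drops the minimum), and by Lemma~\ref{lem:quadripartite-gadget} it remains $\beta$-critical in $S$; so $S$ is not $\beta$-stable and a fortiori not $\beta$-two-way-stable.

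I expect the difficulty to lie in the bookkeeping rather than in any new idea: one must choose the clique and isolated-vertex padding of $H^\star$ so that the two $\beta$-offsets differ by exactly one and the vertex counts agree, and one must check that every edge/nonedge class of $S$ --- edges and nonedges inside $\widehat{G_0}$, inside $H^\star$, inside a join gadget, and between a join gadget and the rest of the graph --- is covered by the right clause of Lemma~\ref{lem:quadripartite-gadget}, applied iteratively so that properties established at one substitution step survive the later ones. The only genuinely delicate point is the tie $\beta(\widehat{G_0})=\beta(H^\star)$, occurring exactly when $\beta(G)=\beta(H)+1$; it lands on the ``not two-way-stable'' side because the critical edge of the clique component of $H^\star$ still drops $\beta(S)$ as soon as $\beta(\widehat{G_0})\ge\beta(H^\star)$.
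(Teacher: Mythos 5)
Your proposal is correct in substance, but it takes a much longer route than the paper. The paper's proof is a two-line reduction from $\beta$-\Unfrozenness (already shown \ThetaTwo-hard in Theorem~\ref{thm:beta-unfrozenness}): replace \emph{every} edge of the input graph $G$ by the gadget of Lemma~\ref{lem:quadripartite-gadget}; by the lemma's five properties, iterated, the resulting graph $\widehat{G}$ has only gadget edges (hence is automatically $\beta$-stable) and its nonedges are the old ones of $G$ (whose unfrozenness status is preserved) plus new ones (all unfrozen), so $\widehat{G}$ is $\beta$-two-way-stable if and only if $G$ is $\beta$-unfrozen. You instead rebuild the entire \textsc{CompareVertexCover} machinery of Theorems~\ref{thm:beta-stability} and~\ref{thm:beta-unfrozenness} from scratch, with careful padding so that the two $\beta$-offsets differ by exactly one and the vertex counts agree, and then stabilize the join edges with the same gadget. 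Your bookkeeping checks out (the identity $\beta(A+B)=c+\min\{\beta(A),\beta(B)\}$ for equal vertex counts, the behavior of edges/nonedges on the larger versus smaller side of the join, and the iterated use of properties 2--5 of the lemma are all applied correctly), so your argument also works; it just buys nothing over reusing the already-established hardness of $\beta$-\Unfrozenness, which your own first stage reconstructs verbatim.

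One factual slip worth flagging: your parenthetical claim that $H_0=(H+K_{g+h})\cup I_g$ contains a $\beta$-critical edge inside $K_{g+h}$ is false in general. Since any vertex cover of a join contains one full side, $\beta(H+(K_{g+h}-e))=g+h+\min\{h-2,\beta(H)\}$, which equals $\beta(H_0)$ unless $\beta(H)=h-1$, i.e., unless $H$ is complete; so those edges are typically $\beta$-stable. This does not break your proof, because the critical edge you actually use in the verification lives in the disjoint clique component $K_{6M+2}$ of $H^\star$, where every edge genuinely is $\beta$-critical ($\beta(K_n-e)=n-2$); but the aside should be deleted or corrected.
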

\begin{proof}
	The upper bound is immediate. We now give a polynomial-time many-one reduction from $\beta$-\Unfrozenness, which is \ThetaTwo-hard by Theorem~\ref{thm:beta-unfrozenness}, to $\beta$-\TwoWayStability. 
	For given $G$, we replace each edge $e\in E(G)$ by the gadget displayed in Figure~\ref{fig:TwoWayGadgetB} and call the resulting graph $\widehat{G}$.  
	This is possible in polynomial time because the gadget has constant size. 
	By an iterated application of Lemma~\ref{lem:quadripartite-gadget}, all new edges in the resulting graph $\widehat{G}$ are $\beta$-stable and each pre-existing edge $e\in E(G)$ is $\beta$-unfrozen in $\widehat{G}$ if and only if it was $\beta$-unfrozen in $G$. Thus $\widehat{G}$ is $\beta$-two-way-stable if and only if $G$ is $\beta$-unfrozen.
\end{proof}
\setcounter{theorem}{\value{tempcounter}}
\end{appendix}

\end{document}